\documentclass[11pt,a4paper]{article}
\usepackage{geometry} 
\usepackage{amsmath,amsfonts,amsthm,amssymb} 
\usepackage[dvipsnames]{xcolor}  
\usepackage{enumitem} 
\usepackage{graphicx, float, subfig, overpic, comment} 
\usepackage{stmaryrd} 
\usepackage{faktor} 
\usepackage{mathrsfs,mathtools} 
\usepackage{titlesec} 
\usepackage[none]{hyphenat} 
\usepackage[font=small,labelfont=bf,tableposition=top]{caption} 
\usepackage{authblk} 
\usepackage{todonotes,color}
\usepackage{comment}

\usepackage{hyperref} 
\hypersetup{
	colorlinks=true,
	linkcolor=blue,
	citecolor=green,
}



\numberwithin{equation}{section}

\theoremstyle{plain}
\newtheorem{lemma}{Lemma}[section]
\newtheorem{corollary}[lemma]{Corollary}
\newtheorem{proposition}[lemma]{Proposition}
\newtheorem{remark}[lemma]{Remark}

\newtheorem{theorem}[lemma]{Theorem}

	\newcommand{\be}{\begin{equation}}
	\newcommand{\ee}{\end{equation}}
	\newcommand{\bes}{\begin{equation*}}
	\newcommand{\ees}{\end{equation*}}
	\newcommand{\bpm}{\begin{pmatrix}}
	\newcommand{\epm}{\end{pmatrix}}
	\newcommand{\qtext}[1]{\quad \mbox{ #1 } \quad}
	

	
	
	\newcommand{\dron}[2]  {\frac{\partial#1   }{\partial#2}}
	

	
	
	\newcommand{\modu}[1]{\left\vert#1 \right\vert}
	
	


	
	\newcommand\eps   {{\varepsilon}}
	
	\makeatletter
	\newcommand*\bigcdot{\mathpalette\bigcdot@{.6}}
	\newcommand*\bigcdot@[2]{\mathbin{\vcenter{\hbox{\scalebox{#2}{$\m@th#1\bullet$}}}}}
	\makeatother




	
	\DeclareTextSymbol{\degre}{OT1}{23}
	
	
	
		
		
	



		\newcommand\cA{{\mathcal A}}

		
		
		\newcommand\cB{{\mathcal B}}
		
		\newcommand\cBt{{\tilde{\cB}}}
			

		
		
		

		
		\newcommand\Dt  {{\tilde D }}

		


		\newcommand\rd {{\mathrm d }}

		
		

		



		




		
		
		


		
		
		\newcommand\rH  {{\mathrm H }}
		\newcommand\rHt  {\tilde{\rH }}

		\newcommand\cH  {{\mathcal H }}





		
		



		
		

		


		\newcommand\rK  {{\mathrm K }}



		
		
		


		
		\newcommand\rM  {{\mathrm M }}


		

		

		\newcommand\cO{{\mathcal O} }
		

		
		
		

		\newcommand\Qt  {\tilde{ Q }}
		
		
		




		\newcommand{\RR}{{\mathbb{R}}}
		
		

		
		\newcommand\rR  {{\mathrm R }}


		

		


		
		
		



		

		
		
		

		







		
		\newcommand\ry{{\mathrm y}}





		
		
		\newcommand\Gam   {\Gamma  }





		
		\newcommand\lam   {\lambda }









		\newcommand\psit{\tilde{\psi}}




\def\UDelaunay{\Upsilon^{\mathrm{Del}}}
\def\UPoincare{\Upsilon^{\mathrm{Poi}}}
\def\OmegaM{\Omega_{\mathrm{M}}}
\def\iM{i_{\mathrm{M}}}

\def\al{\alpha}

\def\eps{\varepsilon}

\def\phi{\varphi}

\def\RR{\mathbb R}

\def\DDD{\mathcal D}

\def\GG{\mathcal G}

\def\RRR{\mathcal R}

\def\TTT{\mathcal T}

\def\OO{\mathcal O}

\def\cH{\mathcal H}

\def\HH{\mathcal H}
\def\~{\tilde}

\def\de{\delta}

\def\eps{\varepsilon}
\def\pa{\partial}

\def\sl{m}
\def\gg{s}
\def\CP{\mathrm{CP}}
\def\AV{\mathrm{AV}}
 

\title{On the role of the fast oscillations in the secular dynamics of the lunar coplanar perturbation on Galileo satellites} 
\author[1]{Elisa Maria Alessi}
\author[2,5]{Inmaculada Baldom\'a}
\author[3]{Mar Giralt}
\author[4,5]{Marcel Guardia}
\author[1]{Alexandre Pousse}
\affil[1]{IMATI-CNR, Istituto di Matematica Applicata e Tecnologie informatiche ``E. Magenes'', Consiglio Nazionale delle Ricerche, Via Alfonso Corti 12, 20133 Milano, Italy}
\affil[2]{Departament de Matem\`atiques \& IMTECH, Universitat Polit\`ecnica de Catalunya, Diagonal 647, 08028 Barcelona, Spain}
\affil[3]{IMCCE, CNRS, Observatoire de Paris, Universit\'e PSL, Sorbonne Universit\'e, 77
Avenue Denfert-Rochereau, 75014 Paris, France}
\affil[4]{Departament de Matem\`atiques i Inform\`atica, Universitat de Barcelona, Gran Via, 585, 08007 Barcelona, Spain}
\affil[5]{Centre de Recerca Matem\`atica, Campus de Bellaterra, Edifici C, 08193 Barcelona, Spain}

\date{\today}

\begin{document}

\maketitle 

 \renewcommand{\thefootnote}{\roman{footnote}}
\footnotetext[0]{\emph{E-mail adresses:} 
	\href{mailto:em.alessi@mi.imati.cnr.it}{em.alessi@mi.imati.cnr.it} (E.M. Alessi),
	\href{mailto:immaculada.baldoma@upc.edu}{immaculada.baldoma@upc.edu} (I. Baldom\'a), 
	\href{mailto:mar.giralt@obspm.fr}{mar.giralt@obspm.fr} (M. Giralt),
	\href{mailto:guardia@ub.edu}{guardia@ub.edu} (M. Guardia).}
\renewcommand{\thefootnote}{\arabic{footnote}}

\tableofcontents

\begin{abstract}

Motivated by the practical interest in the third-body perturbation as a natural cleaning mechanism for high-altitude Earth orbits, we investigate the dynamics stemming from the secular Hamiltonian associated with the lunar perturbation, assuming that the Moon lies on the ecliptic plane. The secular Hamiltonian defined in that way is characterized by two timescales. We compare the location and stability of the fixed points associated with the secular Hamiltonian averaged with respect to the fast variable with the corresponding periodic orbits of the full system. Focusing on the orbit of the Galileo satellites, it turns out that the two dynamics cannot be confused, as the relative difference depends on the ratio between the semi-major axis of Galileo and the one of the Moon, that is not negligible. 
The result is relevant to construct rigorously the Arnold diffusion mechanism that can drive a natural growth in eccentricity that allows a satellite initially on a circular orbit in Medium Earth Orbit to reenter into the Earth's atmosphere.
\end{abstract}

\section{Introduction}

The third-body gravitational perturbation on a bounded orbit at the Earth is known to yield a long-term variation in eccentricity (e.g., \cite{Chao2005}). In the past years, a specific interest on this phenomenon has arisen, as a possible mechanism to facilitate the disposal of artificial satellites at the end-of-life and dilute the probability of collision (e.g., \cite{Jenkin2005}). Special emphasis in this regard is given to high-altitude orbits, in particular to Medium Earth Orbits (MEO), where the Global Navigation Satellite Systems (i.e., GPS, Galileo, Beidou and GLONASS) are located.  

Considering an equatorial geocentric reference system, the eccentricity growth has been investigated mostly associated with resonances that involve the argument of pericenter and the longitude of ascending node of the orbit of the satellite and the longitude of the ascending node of the orbit of the third body, namely, Moon or Sun. Pillar works on the identification of these resonances, the corresponding phase space and their practical role on high-altitude orbits are \cite{Musen1961,gC62,Hughes2000,sB01,Rossi2008, Daquin2016}. Several numerical studies (e.g., \cite{Alessi2014, Radtke2015, Pellegrino2021}) have highlighted that indeed there exist regions in the phase space such that a circular orbit can become very eccentric in the limit to reenter to the Earth's atmosphere without the need of any propulsion system. However, from the theoretical point of view, a firm conclusion on how to explain analytically the numerical results is still missing. 

Motivated by this need, in \cite{AlessiBGGP23} we show how an Arnold diffusion mechanism can be built to explain the phenomenon. That work lies in the same context of a series of works that are based on the analysis of the
Kaula's expansion of the third-body perturbation~\cite{Kaula1962}, that is averaged over the mean anomaly of the orbit of the spacecraft and the one of the orbit of the third body in order to bring out the secular dynamics. We can say that these studies follow two main different, but related, concepts: 1.) the eccentricity growth is due to a chaotic behavior because more than one resonance is important in a given region \cite{Rosengren2015, Daquin2016}; 2.) the growth is due to a single resonance, whose phase space is modified by other terms of the Kaula's expansion \cite{Gkolias2019, Daquin2022, Legnaro2023}. 
 
 In \cite{AlessiBGGP23} we follow the latter, by considering the full quadrupolar secular expansion of the lunar perturbation. The aim is to show how to construct an Arnold diffusion mechanism, starting from the analysis of an autonomous reduced model. That is, we analyze first what we call the  ``coplanar Hamiltonian'' (namely, assuming that the Moon lies in the ecliptic plane) and use it as a first order for the full model. Then, we analyze the secular Hamiltonian perturbatively taking  the inclination of the Moon as a small parameter. Note that Arnold diffusion can only take place for Hamiltonians systems of at least 3 degrees of freedom and therefore the coplanar Hamiltonian has too low dimension. However, it is a good first order to detect the hyperbolic structures that enhance the drift of eccentricity for the full secular Hamiltonian.
 
 
  The main difference between our work and previous ones is that the first order that we consider does not neglect the fast oscillations of the coplanar dynamics. Let us explain what we mean by that.

  As already mentioned, the mechanism considered in \cite{AlessiBGGP23} to achieve eccentricity growth is to drift along a resonance. More concretely we consider the $2:1$ resonance between the argument of the pericenter and the longitude of ascending node of the orbit of the satellite. Then, the model has two timescales: the slow one of the resonant angle and the fast one which is given by the oscillations of the argument of the pericenter or of the longitude of the ascending node.
  
  
  One can consider as an effective model the coplanar Hamiltonian 
  averaged over the fast oscillations along the resonance, which is integrable. This averaged model has been widely analyzed in literature (see \cite{Daquin2022, Legnaro2023, Legnaro2023b}). It already presents hyperbolic orbits and invariant manifolds which are the ``highways'' to achieve eccentricity growth.

  The purpose of the present paper is to analyze ``how good'' this averaged  model is. That is how the dynamics, and in particular, the hyperbolic invariant objects in phase space of the coplanar Hamiltonian deviate from those of the averaged coplanar model. In other words, how the fast oscillations in the coplanar Hamiltonian affect its global dynamics.

More concretely:
\begin{enumerate}
\item We analyze rigorously the dynamics of  the averaged Hamiltonian. It turns out that one can describe analytically in a rather precise way the location and character of the critical points for a rather large range of semi-major axis (which includes that of Galileo). Of major importance for \cite{AlessiBGGP23} are the critical points which lie at circular motion. 
\item We explain the need of overcoming the averaged approximation, that, as far as we understood, lay the foundations of the works \cite{Daquin2022, Legnaro2023, Legnaro2023b}. 
The main result that we will present  in this direction is to show that, at a fixed energy level, the distance between hyperbolic periodic circular orbits of the averaged coplanar system and that of the full coplanar Hamiltonian is proportional to $a^3/a_\rM^3$ with $a$ the semi-major axis of the orbit and $a_M$ the semi-major axis of the Moon. This is done through a perturbative analysis, that is, assuming $a^3/a_\rM^3$ small enough. Note, however, that for the semimajor axis of Galileo $a^3/a_\rM^3$ is not so small. That is, for this values one should expect that the dynamics of the averaged Hamiltonian is rather ``far'' from that of the full model even at short timescales. In particular, the inclination of the  periodic orbits presents a significant difference.


\item We analyze the eigenvalues of the circular periodic  orbits of the averaged Hamiltonian and show that they are hyperbolic in a small interval, parabolic at the endpoints of this interval and elliptic otherwise. We perform the same analysis for the coplanar model by carrying out  a perturbative analysis in $a^3/a_\rM^3$ of the eigenvalues of the periodic orbits. 
\end{enumerate}

The development presented is focused on the orbit of the Galileo constellation, but the results are general and can be applied to other cases.

The paper is structured as follows. In Section \ref{sec:secularHam} we introduce the models that we consider and a suitable set of coordinates to analyze it. We follow closely the exposition of the companion paper~\cite{AlessiBGGP23}. In Section~\ref{sec:mainresults}, we state the main results of the paper. In Section~\ref{sec:mainresults:critpoints} those concerning the existence and character of the critical points of the averaged system. Then, in Section~\ref{sec:mainresults:PO}, we present the results concerning the circular periodic orbits of the coplanar Hamiltonian emanating from the  critical points of the averaged model. Sections~\ref{app:AveragedHam} and~\ref{sec:eccentric_case} are devoted to prove, respectively, the results concerning to the circular and non-circular (eccentric) equilibrium points of the averaged system. Finally, in Section~\ref{appendix_monodromy} we prove the results related with the periodic orbits for the coplanar Hamiltonian system.

\section*{Acknowledgements}

E.M. Alessi and A. Pousse acknowledge support received by the project entitled ``{co-orbital motion and
three-body regimes in the solar system}",
funded by Fondazione Cariplo through the program ``{Promozione dell'attrattivit\`a e competitivit\`a
dei ricercatori su strumenti
dell'European Research Council -- Sottomisura rafforzamento}".

I. Baldom\'a has been supported by the grant PID-2021-
122954NB-100 funded by the Spanish State Research Agency through the programs
MCIN/AEI/10.13039/501100011033 and “ERDF A way of making Europe”.

M. Giralt has been supported by the European Union’s Horizon 2020 research and innovation programme under the Marie Sk\l odowska-Curie grant agreement No 101034255.
M. Giralt has also been supported by the research project PRIN 2020XB3EFL ``Hamiltonian and dispersive PDEs".

M. Guardia has been supported by the European Research Council (ERC) under the European Union's Horizon 2020 research and innovation programme (grant agreement No. 757802). M. Guardia is also supported by the Catalan Institution for Research and Advanced Studies via an ICREA Academia Prize 2019. This work is part of the grant PID-2021-122954NB-100 funded by MCIN/AEI/10.13039/501100011033 and ``ERDF A way of making Europe''. 

This work is also supported by the Spanish State Research Agency, through the Severo Ochoa and María de Maeztu Program for Centers and Units of Excellence in R\&D (CEX2020-001084-M).

\section{From the Secular to the Averaged Hamiltonians}\label{sec:secularHam}


Let us consider a spacecraft that moves under the gravitational attraction of the Earth, the perturbation due to the Earth's oblateness and the lunar gravitational perturbation. Assuming a geocentric equatorial reference system,  the orbital elements, namely, semi-major axis $a$, eccentricity $e$, inclination $i$, longitude of the ascending node $\Omega$ and argument of pericenter $\omega$, that define the ellipse where the motion occurs, change in time because of the perturbations. The orbit of the Moon is defined  in the geocentric ecliptic reference system by the corresponding orbital elements $(a_\rM, e_\rM, i_\rM, \Omega_\rM, \omega_\rM)$, where 
\[
a_\rM = 384400 \,\mbox{km}, \qquad  e_\rM =0.0549006, \qquad i_\rM		=	5.15^{\circ},
\]
while the longitude of the ascending node of the Moon with respect to the ecliptic plane varies approximately linearly with time in a period $T_{\Omega_\rM}$ of 1 Saros (about 6585.321347 days \cite{aR73, eP91}) due to the solar gravitational perturbation, namely,
\begin{equation}\label{eq:OmegaM}
    \Omega_\rM(t)=	\Omega_{\rM,0} + n_{\Omega_\rM}t, \quad \quad \quad n_{\Omega_\rM}=2\pi/T_{\Omega_\rM}.
\end{equation}

In Delaunay action-angle variables $(L,G,H,\ell,g,h)$, see, for instance, \cite{Szebehely},
 considering as small parameter of the problem 
 \[\alpha	=	\frac{a}{a_\rM},\]
		which characterizes the distance of the Moon with respect to the orbit of the satellite,
  the secular dynamics, see \cite{Daquin2016}, is described by the non-autonomous (recall~\eqref{eq:OmegaM}) Hamiltonian 
\begin{equation}\label{def:fullHamiltonian}
\mathtt{H}(L,G,H, g, h, \OmegaM;\iM)=    \rH_\rK(L)	
		+	\rHt_{0}(L, G, H) +	\alpha^3 \rHt_1(L,G,H, g, h, \OmegaM;\iM),	
\end{equation}
where each term is defined as follows.

The first term, 
\begin{equation*} 
\rH_\rK(L)=	-\frac{1}{2}\frac{\mu^2}{L^2},
\end{equation*}
 is the constant term associated with the Earth's monopole, being 
 \begin{equation}\label{def:mu}
 \mu=398600.44 \,\mbox{km}^3/\mbox{s}^2
 \end{equation}
 the mass parameter of the Earth.
 
 The second term, 
 	\be \label{def:initialHamiltonians_J2}
			\rHt_0(L, G, H)		=		\frac{1}{4}\frac{\rho_0}{L^3} \frac{G^2	-3H^2}{G^5},
	\ee
  is the perturbative term associated with the Earth's oblateness,  averaged over the orbital period of the spacecraft, being $\rho_0		=	\mu^4J_2 R^2$, with $J_2 = 1.08 \times 10^{-3}$ the coefficient of the second zonal harmonic in the geopotential and
		$R = 6378.14\,\mbox{km}$ the mean equatorial radius of the Earth. 
  
  Finally, the (truncated) secular perturbative term due to the Moon (see \cite{Kaula66, Hughes1980}) corresponds to  
 \begin{equation}\label{def:H1tilde}	
 \begin{split}
		\rHt_1	&(L,G,H, g, h, \OmegaM; \iM)\\
				=&	-\frac{\rho_1}{L^2} 
					\sum_{m=0}^2	
						\sum_{p=0}^2 
							\Dt_{m,p}(L,G,H) \sum_{s=0}^2
									c_{m,s}F_{2,s,1}(i_\rM) 
							\\
			&\times
							\left[
								U_2^{m,-s}(\epsilon)\cos\left(\psit_{m, p, s}(g,h,\OmegaM)\right)
							+	U_2^{m,s}(\epsilon)\cos\left(\psit_{m, p, -s}(g,h,\OmegaM)\right)
							\right].
		\end{split}
	\end{equation}
 In the expression above, 
 \begin{itemize}
     \item 	the function $U_2^{m, \mp s}(\epsilon)$ corresponds to the Giacaglia function (see Tab.~\ref{tab:U} in Appendix~\ref{appendix:HamiltonianDelaunay}) being $\epsilon =	23.44^{\circ}$ the obliquity of the ecliptic with respect to the equatorial plane.
\item One has that
	\begin{equation}\label{def:Dtilde}
	\Dt_{m,p}(L, G,H)	=	\tilde{F}_{m,p}(G,H)\tilde{X}_{p}(L,G),
	\end{equation}
 	with
  \[
  \begin{split}
  \tilde{F}_{m,p}(G,H)= F_{2,m,p}\left (\arccos\frac{H}{G} \right ),\quad
  \tilde{X}_p(L,G) = X_0^{2,2-2p}\left(\sqrt{1-\frac{G^2}{L^2}} \right),
  \end{split}
  \]
  where $F_{2, m, p}(i)$ are the Kaula's inclination functions (see~\cite{Kaula66}),  $X_0^{2,2-2p}(e)$
  are the zero-order Hansen coefficients (see~\cite{Hughes1980}) and we have used the definitions of  inclination and eccentricity in terms of Delaunay coordinates, that is 
\begin{equation}\label{def:GH}
H=G\cos i\qquad \text{and}\qquad G=L\sqrt{1-e^2}.
\end{equation}
  \item The angle 
 \begin{equation}\label{def:psitilde}
	\psit_{m,p, s}(g,h, \OmegaM)=	2(1 -p)g	
										+	m h 	+ s\left(\Omega_\rM	-	\frac{\pi}{2}\right)	
										-	\ry_{\modu{s}}\pi
	\end{equation}
 defines the relative orientation satellite-Moon.
\item The other  constants are defined as
\begin{equation}\label{def:auxiliaryFunctionsOriginal}
	\begin{aligned}
		\rho_1 	&= &	\frac{\mu\mu_\rM}{(1	-	e_\rM^2)^{3/2}},\quad \quad \quad  \quad \quad  \quad  \quad  \quad
		&\eps_n	&=& 	\left\{
		\begin{array}{ll}
			1		&	\mbox{if $n=0$}\\
			2		&	\mbox{if $n\neq 0$}
		\end{array}\right.,\\
		c_{m,s}	 & = &	(-1)^{\lfloor m/2\rfloor} \frac{\eps_m\eps_s}{2}\frac{(2-s)!}{(2 + m)!},\quad \quad \quad ~
		&\ry_s	&=&	\left\{
		\begin{array}{ll}
			0		&	\mbox{if $s$ is even}\\
			1/2		&	\mbox{if $s$ is odd}
		\end{array}\right. ,
	\end{aligned}
\end{equation}
where $\mu_\rM=	4902.87 \,\mbox{km}^3/\mbox{s}^2$ is the mass parameter of the Moon.
\end{itemize}

\begin{remark}\label{rem:auton}
By Kaula's inclination functions (see \cite{Kaula66}), one obtains that
\begin{align*} 
F_{2,s,1}(0) = \left\{\begin{array}{ll}-\frac12	
   &	\mbox{if $s=0$,}\\[0.5em]
	0			
 & 	\mbox{if $s=1,2$.}
\end{array}\right.
\end{align*}
That is, assuming that the Moon lies on the ecliptic plane (i.e., $\iM=0$), the Hamiltonian $\mathtt{H}$ in \eqref{def:fullHamiltonian} is autonomous, because the angle $\psit_{m,p,s}(g,h, \OmegaM)$ does not depend on $\OmegaM$ for $s=0$ (recall \eqref{eq:OmegaM}). 
\end{remark}

\begin{remark}\label{rmk:units}
	In the numerical computations that will be presented throughout the text, all the variables will be taken in non-dimensional units defined in such a way that the semi-major axis (equal to 29600 km, because we focus on Galileo) is the unit of distance and the corresponding orbital period is 2$\pi$.
\end{remark}

In what follows, we will omit the Keplerian term of the Hamiltonian $\rH_{\rK}$, since it does not contribute to the variation of the orbital elements.
Similarly, we omit the dependence of the Hamiltonian on the variable $L$, since it is a constant of motion.

\subsection{A good system of coordinates
}

The Galileo constellation orbits at an inclination, $i\approx 56^\circ$, such that the dominant term in the lunar perturbation is the $2g+h$ resonance.
This resonance is defined  by the unperturbed  Hamiltonian $ \rHt_0$ in \eqref{def:initialHamiltonians_J2} (i.e., $\alpha = 0$) and the condition  $2\dot{g}+\dot h= 0$. Namely, $2\pa_g \rHt_0+\pa_h\rHt_0=0$.
This is satisfied provided that 
\begin{equation*}
5H^2 - G^2 - HG	=	0,
\end{equation*}
that is, for all $G\neq 0$ (i.e., $e<1$) and $H=G\cos i_\star$ that satisfy
\begin{equation*}
5 X^2 - 1 - X = 0, \quad \quad \quad X = \cos i_\star.
\end{equation*}
Hence, we can distinguish two situations: 
\begin{itemize}
\item the prograde case for $i_{\star}	=	\arccos\left(\frac{1 + \sqrt{21}}{10}\right) \simeq 56,06^{\circ}$,
\item the retrograde case for $i_{\star}	=	\arccos\left(\frac{1 - \sqrt{21}}{10}\right)	\simeq 110,99^{\circ}$.
\end{itemize}
In what follows, $i_\star$ will be mentioned as the inclination of the ``exact $2g+h$''-resonance and we will focus on the prograde case.

The unperturbed Hamiltonian highlights that $x=2g+h$ is constant when $i =i_\star$, while it circulates for $i\neq i_\star$.
Moreover, in a small enough neighborhood of the exact resonance  
the angular variables evolve at different rates: $g$ and $h$ are ``fast'' angles compared to $x$ which undergoes a ``slow'' drift of order $\cO(i-i_\star)$.
 
\paragraph{Slow-fast Delaunay coordinates $(y,x)$.}
Instead of using the Delaunay action-angle variables, in order to take advantage of the timescales separation, we introduce the symplectic transformation
$  
{(G, H, g, h)=\UDelaunay(y, \Gam, x, h) }
$
where
\begin{equation}\label{eq:xy}  
y = \frac{G}2, \qquad \qquad \Gam = H-\frac{G}2, \qquad \qquad x = 2g + h.	
\end{equation}
Note that, the resonant action $y$ does not depend on the inclination.
Hence, the action-angle variables $(y,x)$ are associated with the variation in eccentricity.

In slow-fast Delaunay variables, the secular Hamiltonian  can be written as
\begin{equation*}
\rH(y,\Gam, x, h, \OmegaM;\iM) = \rH_{0}(y, \Gam)+\alpha^3 \rH_1 (y,\Gam, x, h, \OmegaM; \iM)
\end{equation*}
where
\begin{equation}\label{def:hamiltonianDelaunayH1}
\begin{split}
\rH_0(y , \Gam)	&=	(\rHt_0 \circ \UDelaunay)(y,\Gam)= \frac{\rho_0}{128}\frac{y^2  -6y\Gam	 - 3\Gam^2}{L^3y^5} \\
\rH_1(y , \Gam, x, h, \OmegaM;\iM)&=(\rHt_1 \circ \UDelaunay) (y , \Gam, x, h, \OmegaM;\iM).
 \end{split}
\ee
 
In  the coordinates just defined, the $2g + h$–resonance becomes the $x$–resonance, which is defined by
\bes	
\omega(y, \Gam)	=\dron{\rH_0(y, \Gam)}{y}	= \frac{3}{128}\frac{\rho_0}{L^3}\frac{   5\Gam^2 + 8y\Gam - y^2}{y^6} = 0.
\ees
This resonance takes place at the two lines
\begin{equation}\label{def:resonance}
 \Gam=y\frac{-4 + \sqrt{21}}{5} \qquad \text{and}\qquad \Gam	=	y\frac{-4 - \sqrt{21}}{5},
\end{equation}
with $(y, \Gam) \neq (0,0)$, that are associated with prograde and retrograde orbits, respectively.

\paragraph{Poincar\'e coordinates $(\eta,\xi)$.}
The slow-fast variables~\eqref{eq:xy} derived from the Delaunay variables (as happens with the original Delaunay variables) are singular at $e=0$. Thus, in order to study the dynamics of the secular Hamiltonian in a neighborhood of circular orbits, i.e., $0<e\ll 1$, we introduce the  Poincar\'e coordinates
$
{ (y, \Gam, x, h) = \UPoincare(\eta, \Gam, \xi, h) }
$
where
\begin{equation*}
\xi	=\sqrt{2L-4y}\cos \left(\frac{x}2\right), \qquad \qquad \qquad
\eta=\sqrt{2L	-	4y}\sin \left(\frac{x}2\right),
\end{equation*}
which are symplectic. Notice that $\xi$ and $\eta$ are respectively equivalent to $e\cos (x/2)$ and $e\sin (x/2)$ for quasi-circular orbits.

In these coordinates, the secular Hamiltonian becomes
\begin{equation}\label{def:Ham:H:Poinc}
\HH(\eta,\Gam, \xi, h, \OmegaM;\iM) = \HH_{0}(\eta, \Gam,\xi)	+	\alpha^3 \HH_1(\eta,\Gam, \xi, h, \OmegaM;\iM)
\end{equation}
where
\begin{align}
\HH_{0}(\eta, \Gam,\xi)
&=
(\rH_0 \circ \UPoincare)(\eta, \Gam,\xi) \label{def:hamiltonianPoincareH0}\\
&=
\frac{\rho_0}{2} 
\frac{(2L-\xi^2-\eta^2)^2 - 24(2L-\xi^2-\eta^2) \Gam - 48\Gam^2}{L^3(2L-\xi^2-\eta^2)^5}\notag\\
	\HH_1(\eta , \Gam, \xi, h, \OmegaM;\iM) &	=	(\rH_1 \circ \UPoincare) (\eta , \Gam, \xi, h, \OmegaM;\iM).\label{def:hamiltonianPoincareH1}
\end{align}

\subsection{The hierarchy of models}\label{section:hierarchy}


In the present paper, we consider a hierarchy of models which stems from the secular Hamiltonian \eqref{def:Ham:H:Poinc}.
They are what we call the \emph{coplanar secular Hamiltonian} and the \emph{$h$-averaged (coplanar secular) Hamiltonian}.
To construct these ``intermediate models'', we rely on the fact that the model depends on two parameters: the inclination of the Moon with respect to the ecliptic plane,~$i_\rM$, and the semi-major axis ratio between the satellite and the Moon, $\al$.




In the companion paper~\cite{AlessiBGGP23}, using the same hierarchy of models, we construct Arnold diffusion orbits for the secular Hamiltonian \eqref{def:Ham:H:Poinc}, that can lead to a drastic increase of the eccentricity of the satellite. A crucial point in our construction is the analysis of certain hyperbolic structures (normally hyperbolic invariant manifolds, stable and unstable invariant manifolds) of the coplanar secular Hamiltonian which are persistent in \eqref{def:Ham:H:Poinc}. 
Comparing the coplanar model with the full one allows us to construct the drifting orbits. 


On the contrary, in the present paper we focus on the comparison between the coplanar and the $h$-averaged Hamiltonians. There is now an extensive literature (see, e.g.,  \cite{Daquin2022, Legnaro2023b}) analyzing the hyperbolic critical points of the  $h$-averaged Hamiltonian and its invariant manifolds as a sign of existence of unstable motions for the full Hamiltonian \eqref{def:Ham:H:Poinc}. In the present paper, we compare analytically certain dynamics of the $h$-averaged and coplanar models focusing both on what dynamics persist and which ``deviations'' has one model with respect to the other.

%

\paragraph{First reduction: the Coplanar Model.}

Since the inclination of the Moon with respect to the ecliptic plane is relatively small (that is, $i_\rM = 5.15^\circ$), the first reduction that one can do to have an intermediate model is to take $i_\rM=0$, which  corresponds to assume that the orbit of the Moon is coplanar to that of the Earth.

When doing this reduction, the Hamiltonian $\HH$ in \eqref{def:Ham:H:Poinc} becomes $\Omega_\rM$ independent and thus autonomous (see Remark~\ref{rem:auton}).
Starting from \eqref{def:Ham:H:Poinc}, we define the 
$2$ degrees of freedom Hamiltonian
\begin{equation}\label{def:Ham:Coplanar} 
\HH_\CP(\eta, \Gam, \xi, h)=\HH(\eta, \Gam, \xi, h, \Omega_M;0),
\end{equation}
where the subscript $\CP$ stands for coplanar. Since this Hamiltonian is autonomous, $\HH_\CP$ is a first integral of the system. Moreover, it can be written as
\[
\HH_\CP(\eta, \Gam, \xi, h)= \HH_{0}(\eta, \Gam,\xi)	+	\alpha^3 \HH_{\CP,1}(\eta,\Gam, \xi, h),
\]
where $\HH_0$ is the Hamiltonian introduced in \eqref{def:hamiltonianPoincareH0} and $\HH_{\CP,1}$ is the Hamiltonian $\HH_1$ in~\eqref{def:hamiltonianPoincareH1} with $i_\rM=0$, that is,
\begin{equation}\label{def:Hcoplanar:perturb}
\HH_{\CP,1}(\eta,\Gam, \xi, h)=
\HH_1(\eta,\Gam,\xi,h,\OmegaM;0).
\end{equation}
See Appendix~\ref{app:tables}, and in particular~\eqref{eq:expressionPoincareHCP1}, for the explicit expression of $\HH_{\CP,1}$.
	
\paragraph{Second reduction: the $h$-averaged problem.}

When $\al$ is also taken as a small parameter, the autonomous Hamiltonian $\HH_\CP$ in \eqref{def:Ham:Coplanar}  has a timescale separation between the slow and fast angles. Indeed,  
\begin{equation*}
\dot \eta,\dot \xi\sim \alpha^3 \qquad \text{whereas}\qquad \dot h\sim 1.
\end{equation*}
A classical way to exploit this feature is to simplify the Hamiltonian $\HH_\CP$  by averaging out the fast oscillations 
with respect to the longitude of the ascending node $h$. 
 That is,
\[	\HH_\AV(\eta,\Gam, \xi)	=	\frac{1}{2\pi} \int_0^{2\pi}\HH_\CP(\eta, \Gam, \xi, h)\rd h.
\]
Note that, since $\HH_0$ in~\eqref{def:hamiltonianPoincareH0} is $h$-independent, $\HH_\AV$ can be written as 
\begin{equation}\label{def:HamAV}
\HH_\AV(\eta,\Gam, \xi)	=	\HH_{0}(\eta,\Gam,\xi)	+	\alpha^3 \HH_{\AV,1}(\eta,\Gam, \xi),
\end{equation}
with
\begin{equation*}	
\HH_{\AV,1}	(\eta,\Gam, \xi)=\frac{1}{2\pi} \int_0^{2\pi}\HH_{\CP,1}(\eta,\Gam, \xi, h)\rd h.
\end{equation*} 
More specifically, the particular form  of $\HH_{\CP,1}$ in~\eqref{eq:expressionPoincareHCP1} implies that 
\begin{equation} \label{def:Ham:AV1} 
\begin{split}
\HH_{\AV,1}(\eta,\Gam, \xi)=\frac{\rho_1}{L^2} \bigg[&
\frac{1}{2}U_2^{0,0} \DDD_{0,1}(\eta,\Gam,\xi) \left(8L^2+12 L (\xi^2+\eta^2)-3(\xi^2+\eta^2)^2\right) \\
&+ 	\frac{1}{6}U_2^{1,0} \DDD_{1,0}(\eta,\Gam,\xi) (\xi^2-\eta^2)	
\bigg],
\end{split}
\end{equation}
where the constants $U_2^{m,0}$ and the functions $\DDD_{m,p}$ are given in Tables~\ref{tab:U} and~\ref{tab:functionsPoincare}, respectively, in  Appendix \ref{app:tables}. See also this appendix for the deduction of this expression.

To analyze the dynamics of the $h$-averaged Hamiltonian, it will be also convenient to consider it in the slow-fast Delaunay coordinates \eqref{eq:xy}. In these variables, it reads
\begin{equation}\label{eq:expansionsH1averagedDelaunay}
	\rH_{\AV}	(y,\Gam, x)	=\rH_{0}	(y,\Gam)	+\rH_{\AV,1}	(y,\Gam, x)	
	\end{equation}
where $\rH_0$ is the Hamiltonian introduced in \eqref{def:hamiltonianDelaunayH1},
\begin{equation*}
	\rH_{\AV,1}	(y,\Gam, x)	
	=	\frac{\rho_1}{L^2} \left[
	\frac{1}{2}U_2^{0,0} D_{0,1}( y ,\Gam) 
	+ 	\frac{1}{3}U_2^{1,0} D_{1,0}(y,\Gam)\cos x			
	\right],
\end{equation*}
and the functions $D_{m,p}$ are given in Table~\ref{tab:functionsDelaunay}.
We also refer to Appendix \ref{app:tables}, for the deduction of the expression of $\rH_{\AV,1}$.

\section{Main results}\label{sec:mainresults}

In this section, we present the main results of the paper,  which deal with both the $h$-averaged Hamiltonian and the coplanar Hamiltonian. They are stated respectively in Sections~\ref{sec:mainresults:critpoints} and~Section~\ref{sec:mainresults:PO}. 

\begin{remark}\label{rmk_unities}
Along the paper we also present some numerical results. As a rule, the variables will be taken in non-dimensional units defined in such a way that the semi-major axis of the satellite (equal to 29600 km) is the unit of distance and the corresponding orbital period is 2$\pi$. 

However, in some places, with the purpose of keeping the exposition clear, we have kept the standard unit system. It will be pointed out along the text. 
\end{remark}

The results for the $h$-averaged Hamiltonian are related with the existence and stability character of the stationary points. To be more precise,  
\begin{itemize}
\item We study the stability character of the origin of the $h$-averaged Hamiltonian  in Theorem~\ref{thm:origin_average}. Then, in Proposition~\ref{prop:origin_modulus_eigenvalue}, we provide some properties of the eigenvalues of the linearization of $\HH_\AV$ at the origin (in Poincar\'e variables).  
\item Theorem~\ref{thm:AveragedHam_generalcase} deals with the existence and the linear  stability behavior of the eccentric critical points  of the $h$-averaged Hamiltonian $\rH_\AV$. We state these results in Delaunay coordinates. 
\end{itemize}

These results are proven for a specific range of the semi-major axis $a$, namely, we prove the results for values of $a\in [a_{\min}, a_{\max}]$ with 
\begin{equation}\label{defaminmax}
a_{\min}=6378.14 \text{ km}\qquad\text{and}\qquad a_{\max}=30000 \text{ km}.
\end{equation}
The minimum value corresponds to a collision orbit with the Earth assuming $e=0$. Note also that this interval contains the semi-major axis of Galileo $a= 29600 \text{ km}$.

The coplanar Hamiltonian $\HH_\CP$ can be seen as a perturbation of the $h$-averaged Hamiltonian $\HH_\AV$. From this point of view and recalling that the origin $\xi=\eta=0$ is a fixed point of $\HH_\AV$, in Section~\ref{sec:mainresults:PO}, we state the results for the coplanar Hamiltonian $\HH_{\CP}$ related to the periodic orbits emanating from the origin. Some of these results are perturbative results with respect to the parameter $\delta$ defined as 
\begin{equation}\label{def:delta:mainresults}
    \delta = \rho \alpha^3 L^4, \qquad \mathrm{with}\qquad  \alpha= \frac{a}{a_{\rM}}, \qquad \rho = \frac{\rho_1}{\rho_0}
\end{equation}
with $\alpha, \rho_0,\rho_1$ introduced in Section~\ref{sec:secularHam}. 

Next, we study the existence and main properties of the periodic orbits of the coplanar Hamiltonian:
\begin{itemize}
    \item In Theorem~\ref{thm:existence_periodic_orbits}, we characterize the energy levels of the coplanar Hamiltonian $\HH_\CP$ containing a periodic orbit lying in $\{\xi=\eta=0\}$.    
    \item For these values of energy, $\mathbf{E}$, Corollary~\ref{cor:gammaCPAV} relates the $\Gamma$ value of the point of the periodic orbit at the energy level $\{\HH_\CP=\mathbf{E}\}$ at the transverse section $\{h=0\}$ with the corresponding parameter $\Gam$ such that the $h$-averaged Hamiltonian has the same energy level, namely $\{\HH_\AV = \mathbf{E}\}$. We see that, for small enough $\delta$, the two values are order $\delta$ separated. Since for Galileo $\delta\sim 0.11$ (computed in the unities specified in Remark~\ref{rmk_unities}), this shows that $h$-averaged model differs significantly from the coplanar one. At circular motions, deviations in $\Gamma$ are equivalent to deviations in (the cosine of the) inclination.
    \item Proposition~\ref{prop:firstaproximation_periodic} computes a first order approximation with respect to $\delta$ of the periodic orbits and their corresponding periods. 
\end{itemize}
Even when there is no physical meaning for values of $a<a_{\min}$, these results hold true for values of $a\in (0,a_{\max}]$.

The analysis of the stability character of the periodic orbits is the only one that needs the parameter $\delta$ to be small enough. 
\begin{itemize}
    \item Theorem~\ref{prop:eigenvalues_monodromy}, and its refinement Theorem~\ref{thm:eigenvalues_monodromy_more}, give an approximation, at least of order $\mathcal{O}(\delta^{3/2})$, of the eigenvalues of the monodromy matrix associated to the periodic orbits by means of the eigenvalues  of the linearized part at the origin for the $h$-averaged Hamiltonian. As a consequence we are able to elucidate, for small values of $\delta$, the character of the periodic orbits. 
\end{itemize}

\subsection{The critical points of the averaged system}\label{sec:mainresults:critpoints}
We now present the results about the critical points of the $h$-averaged Hamiltonian. We notice that, to simplify the exposition, we state some results referred to the $h$-averaged Hamiltonian in Poincar\'e variables (see~\eqref{def:HamAV}), and others referred to the Hamiltonian expressed in slow-fast Delaunay variables (see~\eqref{eq:expansionsH1averagedDelaunay}).

The specific range of the semi-major axis $[a_{\min},a_{\max}]$ is given in~\eqref{defaminmax}.
We also introduce the notation
\begin{equation}\label{def:Lalfadelta_max}
L_{\min}=\sqrt{\mu a_{\min}}, \qquad L_{\max}=\sqrt{\mu a_{\max}}, \qquad \alpha_{\max}= \frac{a_{\max}}{a_\rM}, \qquad \delta_{\max}= \rho \alpha_{\max}^3 L_{\max}^4,
\end{equation}
where $\mu$ has been introduced in~\eqref{def:mu}. Finally, we define  
\begin{equation}\label{def:pendentresonance}
\sl=\frac{-4+\sqrt{21}}{5} \sim 0.116515138991168,
\end{equation}
the slope of the resonance prograde line is $\Gamma=m y$ (see~\eqref{def:resonance}).

In addition, we notice that, since we want to study the prograde resonance we keep our analysis to $\Gam>0$ and, from the  expressions of $\DDD_{m,p}$ in Table~\ref{tab:functionsPoincare} , we deduce that 
$$0\leq \Gam \leq \frac{L}{2}.$$ 

The first result deals with the character of the origin in Hamiltonian~\eqref{def:HamAV}. Its proof is carried out in Section~\ref{sec:circular_case_Appendix}.

\begin{theorem}\label{thm:origin_average}
The origin is a critical point of the $h$-averaged Hamiltonian $\HH_{\AV}$, see~\eqref{def:HamAV}. 
Moreover, there exist two functions ${\Gamma}_{-}, {\Gamma}_{+}: [L_{\min},L_{\max}] \to \left (0,\frac{L}{2}\right ) $, such that the character of the critical point is described as follows for $L\in [L_{\min}, L_{\max}]$:
\begin{itemize}
    \item if either $\Gamma\in \big (0, \Gamma_-(L) \big )$ or $\Gamma\in \left (\Gamma_+(L), \frac{L}{2} \right )$, then $(0,0)$ is a center. 
    \item If $\Gamma\in \big (\Gamma_-(L),\Gamma_+(L) \big )$, then $(0,0)$ is a saddle.
    \item If $\Gamma=\Gamma_-(L)$ or $\Gamma=\Gamma_+(L)$, the origin is a degenerated equilibrium point with nilpotent linearization. 
\end{itemize}
In addition, $L^{-1} \Gam_\pm(L)$ can be expressed as
$$
\frac{\Gam_{\pm}(L)}{L} = \hat \Gam_{\pm}(\de),\qquad \de=\rho \alpha^3 L^4,
$$
where $\hat \Gam_{\pm} :[0, \delta_{\max}] \to \mathbb{R}$, see~\eqref{def:delta:mainresults} and~\eqref{def:Lalfadelta_max}, and there exists a constant $C_*$ such that for all $L\in [L_{\min},L_{\max}]$,  
\begin{equation}\label{thm:exprGamma12}
\left | \hat \Gam_{\pm}(\de) - \frac{\sl}{2} \mp \frac{5 U_{2}^{1,0}(\sl +3)\sqrt{3- 2 \sl -4 \sl ^2}}{{3(16+20\sl)}}  \de  \right | \leq C_*\de^2.
\end{equation} 
%
\end{theorem}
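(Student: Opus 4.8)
The plan is to carry out a direct linearization analysis of $\HH_\AV$ at the origin $\xi=\eta=0$ and track the sign of the discriminant of the quadratic part as a function of $\Gamma$ (equivalently $\delta$). First I would verify that the origin is a critical point: since $\HH_0$ in \eqref{def:hamiltonianPoincareH0} has only even-degree terms in $(\xi,\eta)$ near the origin, and from \eqref{def:Ham:AV1} the perturbation $\HH_{\AV,1}$ likewise contains only even powers of $\xi,\eta$ (there are no linear terms — note the factors $(\xi^2-\eta^2)$ and $8L^2+12L(\xi^2+\eta^2)-3(\xi^2+\eta^2)^2$, and $\DDD_{m,p}$ depend on $\xi^2+\eta^2$), the gradient vanishes at $(\xi,\eta)=(0,0)$ for every value of $\Gamma$. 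Then I would expand $\HH_\AV$ to second order in $(\xi,\eta)$ at the origin, obtaining a quadratic form $\tfrac12 A\xi^2 + \tfrac12 B\eta^2$ (the Hessian is diagonal by the $\xi\leftrightarrow\eta$ parity structure), whose coefficients $A=A(L,\Gamma)$, $B=B(L,\Gamma)$ are explicit rational/algebraic functions coming from the Taylor coefficients of \eqref{def:hamiltonianPoincareH0} and \eqref{def:Ham:AV1} evaluated at $\xi=\eta=0$.

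The character of the origin is then governed by the sign of the product $AB$: it is a center when $AB>0$, a saddle when $AB<0$, and the degenerate (nilpotent) case occurs exactly when $AB=0$ with $A,B$ not both zero. So the core of the proof is to show that, as a function of $\Gamma\in(0,L/2)$ at fixed $L$, the product $AB$ changes sign exactly twice, at two values $\Gamma_-(L)<\Gamma_+(L)$, being positive near the endpoints and negative in between. Writing $AB$ as a polynomial in $\Gamma$ (after clearing denominators, which are nonzero for $\xi=\eta=0$, i.e.\ powers of $2L$), I would argue that the relevant factor is quadratic in $\Gamma$ with positive leading coefficient and negative value at the conjectured interior points, so it has exactly two real roots; the nondegeneracy of these roots (simple zeros) gives that at $\Gamma=\Gamma_\pm$ one eigenvalue-pair coefficient vanishes while the Hessian is not identically zero, yielding the nilpotent linearization. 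The rescaling $\Gamma_\pm(L)/L=\hat\Gamma_\pm(\delta)$ should drop out of a homogeneity check: substituting $\Gamma=L\gamma$, $y$-type variables scale so that $A,B$ become $L$-independent up to an overall factor times a function of $\gamma$ and of the single combination $\delta=\rho\alpha^3L^4$ (this is precisely the grouping already anticipated in \eqref{def:delta:mainresults}); hence the roots in $\gamma$ depend only on $\delta$.

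Finally, for the expansion \eqref{thm:exprGamma12} I would treat $\delta$ as the small parameter. At $\delta=0$ the Hamiltonian is just $\HH_0$, and a direct computation of the Hessian of \eqref{def:hamiltonianPoincareH0} at the origin shows $AB=0$ collapses to the condition defining the resonance line, giving the double root $\hat\Gamma_\pm(0)=m/2$ with $m$ as in \eqref{def:pendentresonance} — consistent with \eqref{def:resonance}. Since this is a double root at $\delta=0$, the two roots $\hat\Gamma_\pm(\delta)$ split like $m/2 \pm c_1\sqrt{\delta}\cdot(\cdots)$ a priori; but the structure of the $\delta$-linear perturbation term (coming from the $U_2^{1,0}\DDD_{1,0}(\xi^2-\eta^2)$ piece, which contributes with opposite sign to $A$ and $B$) makes the perturbation act at first order in $\delta$ on the splitting itself, so $\hat\Gamma_\pm(\delta) - m/2 = \pm c_1\delta + O(\delta^2)$ with $c_1 = \dfrac{5U_2^{1,0}(m+3)\sqrt{3-2m-4m^2}}{3(16+20m)}$; I would pin down $c_1$ by matching the linear-in-$\delta$ coefficient of the quadratic-in-$\gamma$ polynomial and applying the implicit function theorem to the factored form $(\gamma-m/2)^2 = \delta\cdot(\text{stuff}) + O(\delta^2)$. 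The uniform constant $C_*$ comes from compactness of $L\in[L_{\min},L_{\max}]$ (equivalently $\delta\in[0,\delta_{\max}]$) and smoothness of the implicitly defined roots away from further degeneracies, which one checks by verifying the leading coefficient of the quadratic stays bounded away from zero on that range.

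The main obstacle I anticipate is the bookkeeping in extracting the Hessian coefficients $A,B$ cleanly from \eqref{def:hamiltonianPoincareH0} and \eqref{def:Ham:AV1} — in particular keeping track of the $\DDD_{m,p}$ and $U_2^{m,0}$ values from the appendix tables and confirming the precise homogeneity in $L$ that produces the clean $\delta$-dependence — and then verifying that the sign-changes of $AB$ are exactly two (no extra real roots in $(0,L/2)$ from the other polynomial factors) over the whole parameter range, rather than only perturbatively near $\delta=0$.
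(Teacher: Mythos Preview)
Your overall strategy---linearize at the origin, obtain a diagonal Hessian, and track the sign of the product $AB$---is exactly the paper's approach. You also correctly identify the crucial structural point: the $U_2^{1,0}\DDD_{1,0}(\xi^2-\eta^2)$ term contributes with \emph{opposite} signs to $A$ and to $B$. But you do not exploit this observation fully, and this is where your argument has a genuine gap.

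The product $AB$ is \emph{not} quadratic in $\Gamma$: after the scaling $\Gamma=L\sigma$ and clearing constants, it equals (up to a positive factor) $\beta(L)^2\mathbf{a}(\sigma)^2-\tilde\beta(L)^2\mathbf{b}(\sigma)^2$ with $\mathbf{a}(\sigma)=1-16\sigma-20\sigma^2$ and $\mathbf{b}(\sigma)=(2\sigma+3)\sqrt{3-4\sigma-4\sigma^2}$, which is a quartic polynomial in $\sigma$. So the ``quadratic with two simple roots'' argument for exactly two sign changes in $(0,\tfrac12)$ does not go through as stated. What the paper does instead is to use your opposite-sign observation to \emph{factor} the product as $\mathcal{X}_+(\sigma;L)\cdot\mathcal{X}_-(\sigma;L)$ with $\mathcal{X}_\pm=\beta\,\mathbf{a}\pm\tilde\beta\,\mathbf{b}$, and then prove (Lemma~\ref{lem:previsigmapm}) that each factor is strictly monotone on the relevant subinterval and hence has exactly one zero $\sigma_\pm(L)$, with $\sigma_-\in(0,\tfrac{m}{2})$ and $\sigma_+\in(\tfrac{m}{2},\tfrac14)$. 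This monotonicity step (which uses explicit numerical bounds on the coefficients for $L\le L_{\max}$) is precisely what handles the obstacle you flag at the end of your proposal.

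For the expansion \eqref{thm:exprGamma12}, the factorization also bypasses the double-root issue entirely: each $\mathcal{X}_\pm$ has a \emph{simple} zero at $\sigma=\tfrac{m}{2}$ when $\delta=0$ (since $\partial_\sigma\mathbf{a}(\tfrac{m}{2})=-(16+20m)\neq 0$), so the ordinary implicit function theorem applied to $\mathcal{X}_\pm(\sigma;\delta)=0$ gives $\sigma_\pm(\delta)=\tfrac{m}{2}\pm c_1\delta+O(\delta^2)$ directly, with $c_1$ read off from $-\partial_\delta\mathcal{X}_\pm/\partial_\sigma\mathcal{X}_\pm$ at $(\tfrac{m}{2},0)$. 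Your route via the product would need $(\gamma-\tfrac{m}{2})^2=\delta^2\cdot(\text{stuff})+O(\delta^3)$, not $\delta\cdot(\text{stuff})$ as you wrote; it works but is less clean.
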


We denote by $\pm \lambda_{\AV}(\Gamma;L)$ the eigenvalues of the linearization around the origin of the Hamiltonian system associated to~\eqref{def:HamAV}. We assume that, in the saddle case, $\lambda_{\AV}(\Gamma;L)>0$ and in the center case, $-i \lambda_{\AV}(\Gamma;L)>0$. Notice that $\lambda_{\AV}(\Gamma_-(L);L)=\lambda_{\AV}(\Gamma_+(L);L)=0$ where $\Gam_-(L)$ and $\Gam_+(L)$ are the functions defined in Theorem~\ref{thm:origin_average}.
 
\begin{proposition}\label{prop:origin_modulus_eigenvalue}
For any $L\in [L_{\min},L_{\max}]$ we have the following.
\begin{itemize}
\item There exists a unique $\Gamma_*(L) \in \big (\Gamma_-(L), \Gamma_+(L)\big )$ such that for all
$\Gamma \in  
\big (\Gamma_-(L),\Gamma_+(L) \big )$, $
\lambda_\AV(\Gamma;L) \leq \lambda_\AV(\Gamma_*(L);L)$ and 
$$
 \frac{5 \rho_0}{16L^{7}}U_{2}^{1,0} \de(\sl+3) \sqrt{3-2\sl -\sl^2}\leq \lambda_\AV(\Gamma_*(L);L)\leq \frac{ 15 \sqrt{3}\rho_0}{16 L^7} U_2^{1,0} \de.
$$
\item For $\Gamma \in \big (0, \Gamma_-(L))$ 
the function $|\lambda_\AV(\Gamma;L)|$ is smooth and decreasing with respect to $\Gamma$ (and $L$ fixed).
\item For $\Gamma \in  \left (\Gamma_+(L), \frac{L}{2}\right )$, the function  $|\lambda_\AV(\Gamma;L)|$ is smooth and increasing with respect to $\Gamma$. 
\end{itemize}
In addition, there exists a function $\hat \lambda_\AV: \left [0,\frac{1}{2} \right ]\times [0,\delta_{\max}] \to \mathbb{C}$ such that 
$$
\lambda_\AV(\Gam;L)=\frac{3 \rho_0}{4 L^7 } \hat \lambda_\AV \left (\frac{\Gam}{L}; \delta\right ), \qquad \delta = \rho \alpha^3 L^4.
$$
\end{proposition}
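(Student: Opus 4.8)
The plan is to reduce the eigenvalue $\lambda_\AV$ around the origin to a single scalar function of $w=\Gam/L$ and of $\delta$, and then to carry out elementary one-variable analysis. First, since $\HH_\AV$ in~\eqref{def:HamAV} does not depend on $h$, the action $\Gam$ is a first integral, so (the origin being a critical point by Theorem~\ref{thm:origin_average}) the linearization at the origin is, for each fixed $(\Gam,L)$, a $2\times2$ Hamiltonian system in $(\eta,\xi)$. Inspecting~\eqref{def:hamiltonianPoincareH0} and~\eqref{def:Ham:AV1}, and using that the functions $\DDD_{m,p}(\eta,\Gam,\xi)$ depend on $(\eta,\xi)$ only through $\eta^2+\xi^2$ — at the origin $G=L-\frac12(\eta^2+\xi^2)$, $H=\Gam+\frac14(2L-\eta^2-\xi^2)$, and the Hansen coefficient $X_0^{2,2}$ entering $\DDD_{1,0}$ vanishes quadratically in $e$, so the factor $\xi^2-\eta^2$ in~\eqref{def:Ham:AV1} is the only source of anisotropy — the quadratic part of $\HH_\AV$ at the origin has no $\xi\eta$ term and equals $(P+R)\xi^2+(P-R)\eta^2$, where $P=P(\Gam;L)$ gathers the radially symmetric contributions and $R=R(\Gam;L)$ is the value at the origin of $\alpha^3$ times the coefficient of $\xi^2-\eta^2$ in~\eqref{def:Ham:AV1}. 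Hence the eigenvalues are $\pm\lambda_\AV$ with $\lambda_\AV^2=4(R^2-P^2)$: the origin is a saddle when $R^2>P^2$, a center when $R^2<P^2$, and degenerate when $R^2=P^2$; the last locus is exactly $\Gam=\Gam_\pm(L)$ of Theorem~\ref{thm:origin_average}, where $P=\pm R$.

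Next I would make $P$ and $R$ explicit. A direct computation from~\eqref{def:hamiltonianPoincareH0} gives that the quadratic part of $\HH_0$ at the origin is $P_0\,(\eta^2+\xi^2)$ with $P_0=\frac{3\rho_0}{32L^7}p_0(w)$, $w=\Gam/L$, $p_0(w)=1-16w-20w^2$; this $p_0$ is strictly decreasing on $[0,\frac12]$ (since $p_0'(w)=-16-40w\le-16$) and vanishes exactly at $w=\frac{\sl}{2}$, the resonant slope~\eqref{def:pendentresonance}, equivalently $5\sl^2+8\sl-1=0$, so that $p_0'(\frac{\sl}{2})=-(16+20\sl)=-4\sqrt{21}$. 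Evaluating the $h$-averaged perturbation~\eqref{def:Ham:AV1} and its radial derivative at the origin through the Tables of Appendix~\ref{app:tables}, one checks that the quadratic part of $\alpha^3\HH_{\AV,1}$ at the origin is $\frac{3\rho_0}{32L^7}\delta\bigl(p_1(w)(\eta^2+\xi^2)+r_1(w)(\xi^2-\eta^2)\bigr)$, with $\delta=\rho\alpha^3L^4$ and $p_1,r_1$ explicit functions of $w$ built from $U_2^{1,0}$, the Kaula inclination functions at $\cos i=\frac12+w$, and the leading Hansen coefficient. Thus $P=\frac{3\rho_0}{32L^7}(p_0+\delta p_1)$ and $R=\frac{3\rho_0}{32L^7}\delta r_1$; this proves the last displayed assertion with $\hat\lambda_\AV(w;\delta)=\frac14\sqrt{\delta^2r_1(w)^2-(p_0(w)+\delta p_1(w))^2}$ (principal branch, purely imaginary in the center region), and reduces everything else to the study of the scalar function $g(w;\delta):=(p_0(w)+\delta p_1(w))^2-\delta^2r_1(w)^2$ on $w\in[0,\frac12]$, $\delta\in[0,\delta_{\max}]$, via $|\lambda_\AV|=\frac{3\rho_0}{16L^7}\sqrt{|g|}$.

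The heart of the argument is the factorization $g=q_-q_+$ with $q_\pm(w;\delta)=p_0(w)+\delta(p_1(w)\pm r_1(w))$. Because $p_0'\le-16$ uniformly and $p_1,r_1$ are bounded, for every $\delta\in[0,\delta_{\max}]$ each $q_\pm(\cdot;\delta)$ is strictly decreasing, positive at $w=0$ and negative at $w=\frac12$, hence has a unique zero; and since $g=0$ exactly when $P^2=R^2$, i.e.\ when the origin is degenerate, these zeros are the two values $\hat\Gam_-(\delta)<\hat\Gam_+(\delta)$ of Theorem~\ref{thm:origin_average}. Consequently $g>0$ (center) on $(0,\hat\Gam_-(\delta))\cup(\hat\Gam_+(\delta),\frac12)$ and $g<0$ (saddle) on $(\hat\Gam_-(\delta),\hat\Gam_+(\delta))$. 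On $(0,\hat\Gam_-(\delta))$ both $q_\pm>0$ and $q_\pm'<0$, so $g'=q_-'q_++q_-q_+'<0$ and $|\lambda_\AV|$ is smooth and strictly decreasing in $\Gam$; on $(\hat\Gam_+(\delta),\frac12)$ both $q_\pm<0$ and $q_\pm'<0$, so $g'>0$ and $|\lambda_\AV|$ is smooth and strictly increasing — this gives the second and third items. On the saddle interval $\lambda_\AV=\frac{3\rho_0}{16L^7}\sqrt{-g}$, so it is maximal where $-g$ is; at $\delta=0$, $-g=-p_0^2$ has on $[0,\frac12]$ the single critical point $w=\frac{\sl}{2}$, nondegenerate since $(-g)''(\frac{\sl}{2};0)=-2p_0'(\frac{\sl}{2})^2\ne0$, so by the implicit function theorem there is, for $\delta$ in the admissible range, a unique critical point $w_*(\delta)=\frac{\sl}{2}+\mathcal{O}(\delta)$ of $-g(\cdot;\delta)$; it is a nondegenerate maximum and, comparing leading orders, lies strictly between $\hat\Gam_-(\delta)$ and $\hat\Gam_+(\delta)$ (to first order their midpoint). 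Setting $\Gam_*(L)=L\,w_*(\delta)$ gives the first item; the upper bound follows from $-g\le\delta^2r_1(w)^2$ and an explicit bound on $|r_1|$ over $[0,\frac12]$, the lower bound from evaluating $\lambda_\AV$ at a convenient point of the saddle interval (e.g.\ $w=\frac{\sl}{2}$, where $p_0=0$) and simplifying with $5\sl^2+8\sl-1=0$ (so $3-2\sl-\sl^2=(\sl+3)(1-\sl)$); finally $\lambda_\AV(\Gam_\pm(L);L)=0$ because $g$ vanishes there.

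The hard part will be the bookkeeping of the second step: extracting $p_1$ and $r_1$ from~\eqref{def:Ham:AV1} via the Appendix tables — in particular handling the removable singularity at $e=0$ of the $\xi^2-\eta^2$ coefficient and verifying the precise homogeneity in $(w,\delta)$ — and then nailing down the numerical constants in the two bounds. A secondary, quantitative point is to check that the qualitative picture of the third step (each $q_\pm$ strictly monotone, $-g$ with a single interior critical point, hence a unique $w_*$) genuinely persists on the entire physical range $a\in[a_{\min},a_{\max}]$ and not merely for infinitesimally small $\delta$; this relies on the estimate $|p_0'|\ge16$ dominating $\delta\,|p_1'\pm r_1'|$ on $[0,\frac12]$ for $\delta\le\delta_{\max}$.
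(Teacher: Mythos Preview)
Your overall strategy---reduce to the scalar $w=\Gam/L$, write $\lambda_\AV^2$ as a difference of squares, factor as $g=q_-q_+$ with $q_\pm=p_0+\delta(p_1\pm r_1)$, and read off signs---is exactly the paper's route (their $q_\pm$ are the $\mathcal X_\pm$ of Lemma~\ref{lem:previsigmapm}, and your $p_0,r_1$ are their $\mathbf a,\tfrac{5U_2^{1,0}}{3}\mathbf b$). The scaling claim and the two bounds on the maximum are handled just as you propose: the upper bound from $-g\le \delta^2 r_1^2$ and $\mathbf b(0)=3\sqrt3$, the lower bound by evaluating at $w=\tfrac{\sl}{2}$ where $p_0=0$.

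There is, however, a genuine gap in your monotonicity and uniqueness arguments. Your claim that each $q_\pm$ is strictly decreasing on $[0,\tfrac12]$ via ``$|p_0'|\ge16$ dominates $\delta|p_1'\pm r_1'|$'' cannot hold: since $\mathbf b(\sigma)=(2\sigma+3)\sqrt{3-4\sigma-4\sigma^2}$, one has $\mathbf b'(\sigma)\to-\infty$ as $\sigma\to\tfrac12^-$, so $r_1'$ is unbounded and $q_-'=p_0'+\delta p_1'-\delta r_1'$ becomes positive near $\tfrac12$. This breaks your sign analysis $g'=q_-'q_++q_-q_+'>0$ on the right center interval $(\hat\Gam_+(\delta),\tfrac12)$ (there $q_\pm<0$, $q_+'<0$, but $q_-'>0$ near the endpoint, so the two summands have opposite signs). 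The same issue undermines the IFT argument for a \emph{unique} critical point of $-g$: IFT gives you one critical point near $\tfrac{\sl}{2}$ for small $\delta$, but does not rule out additional ones, nor does it cover the full range $\delta\in[0,\delta_{\max}]$ with $\delta_{\max}\approx0.12$.

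The paper circumvents both issues in one stroke by working with $\lambda_\AV^2$ itself rather than its factors: although $\mathbf b$ has a square-root singularity, $\mathbf b^2$ is a polynomial, so $\mathcal M(\sigma):=\lambda_\AV^2\propto-\bigl[\beta^2\mathbf a^2-\widetilde\beta^2\mathbf b^2\bigr]$ is a genuine quartic in $\sigma$. Its derivative $\partial_\sigma\mathcal M$ is then a cubic whose second derivative (in $\sigma$) is a quadratic with all positive coefficients for $\sigma\ge0$, hence $\partial_\sigma\mathcal M$ has at most one positive zero; Rolle on $[\sigma_-,\sigma_+]$ forces exactly one, say $\sigma_*$, and the signs $\partial_\sigma\mathcal M(0)>0$, $\partial_\sigma\mathcal M(\tfrac{\sl}{2})<0$ locate it. This single polynomial argument simultaneously yields the unique maximum on the saddle interval \emph{and} the monotonicity of $|\lambda_\AV|$ on both center intervals, uniformly in $\delta\in[0,\delta_{\max}]$, with no perturbative step.
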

\begin{remark}
The eigenvalues $\lambda_\AV(\Gam;L)$ can be explicitly computed, see Remark~\ref{rmk:first_expression_eigenvalues}. 
\end{remark}

This result is proven in Section~\ref{sec:modulus}. We also provide some numerical computations.

To finish with the results related to the equilibrium points of the $h$-averaged system, we state the following theorem concerning ``non-circular'' critical points, namely, those critical points of the $h$-averaged system~\eqref{eq:expansionsH1averagedDelaunay} (in Delaunay coordinates) satisfying $e\neq 0$. 
We consider slow-fast Delaunay variables and the Hamiltonian in~\eqref{eq:expansionsH1averagedDelaunay}.

Note that we are interested in a region of the phase space satisfying the following conditions:
\begin{itemize}
\item We want the satellite to be in a prograde ``elliptic regime''. Equivalently, we want the Delaunay coordinates to be well-defined and, therefore, we consider
\begin{equation}\label{def:DomainDelaunay}
(x,y)\in\mathbb{T}\times(0,L/2).
\end{equation}
\item We want the actions $y$ and $\Gamma$ to be close to the resonance (see, for instance, the first resonance  \eqref{def:resonance}), namely the prograde resonance, so that the $h$-averaged Hamiltonian is a good approximation of the coplanar model. For this reason, we consider
\begin{equation}\label{def:nbdResonance}
\frac{\Gamma}{y}\in \left (0,\frac{1}{2} \right ).
\end{equation}
We recall that the resonance prograde line is $\Gamma=m y$ with 
$\sl$ defined in~\eqref{def:pendentresonance} and we notice that, from  Table~\ref{tab:functionsDelaunay} the values of $\Gam,y$ are restricted to $\Gam<y$. 
\item To avoid collision, the semi-major axis $a$ and the eccentricity $e$ have to satisfy  $a(1-e)\geq R=a_{\min}$. Then $e\leq 1- \frac{R}{a}$. Moreover, since
$\frac{y}{L}= \frac{1}{2} \sqrt{1-e^2}$, we are only interested in values 
\begin{equation}\label{remark_col}
\frac{y}{L} \geq  \frac{1}{2} \sqrt{2\frac{R}{a} - \frac{R^2}{a^2}} \geq \frac{1}{2} \sqrt{2 \frac{R}{a_{\max}} - \frac{R^2}{a_{\max}^2} }=:\hat{y}_{\mathrm{col}}= 0.308224183446436.
\end{equation}
In conclusion, we consider values of $\frac{y}{L} \in \left [\hat{y}_{\mathrm{col}}, \frac{1}{2}\right ]$. 
\end{itemize}

\begin{theorem}\label{thm:AveragedHam_generalcase} 
There exist two constants\footnote{The value $L_*$ (in international unities) corresponds to a semi-major axis $a_* \sim 23893.56218133389\, \text{km}$.} 
$$
L_* \sim 97590.90325766560,\qquad \hat{\Gamma}_0\sim 0.013584391815073,
$$
two decreasing functions $\hat{\Gamma}_{1}(L)$, $\hat{\Gamma}_{2}(L)$ and one increasing function $\hat{\Gamma}_3(L)$ defined for $L\in [L_{\min},L_{\max}]$ such that the Hamiltonian~\eqref{eq:expansionsH1averagedDelaunay} has the following critical points   satisfying \eqref{def:DomainDelaunay}, \eqref{def:nbdResonance}, \eqref{remark_col}: 
\begin{itemize}
\item For $L\in [L_{\min},L_*]$ and $\Gamma\in (0,L/2]$, 
\begin{enumerate}
\item if $\Gamma \notin [ L \hat{\Gamma}_1(L), L \hat{\Gamma}_3(L)]$, has no critical points.
    \item If $\Gamma \in [L \hat{\Gamma}_1(L), L \hat{\Gamma}_0)$, has a unique critical point that is a center and of the form $(\pi,y)$.
    \item If $\Gamma \in [ L \hat{\Gamma}_0, L \hat{\Gamma}_2(L)]$, has only two critical points $(0,y_1), (\pi,y_2)$, which are of saddle and center type respectively, except when  $L=L_*$ and $\Gamma  =L_* \hat{\Gamma}_2(L_*)$ that $(\pi,y_2)$ is parabolic. 
    \item If $\Gamma \in ( L \hat{\Gamma}_2(L), L \hat{\Gamma}_3(L)]$, has a unique critical point that is a saddle of the form $(0,y)$.
\end{enumerate}
\item If $L\in (L_*,L_{\max}]$  and $\Gamma\in (0,L/2]$, there exists a function $\hat{\Gamma}_*(L) \in [\hat{\Gamma}_2(L), \hat{\Gamma}_3(L)]$ such that, 
\begin{enumerate}
\item if $\Gamma \notin [ L \hat{\Gamma}_1(L), L \hat{\Gamma}_3(L)]$, has no critical points.
    \item If $\Gamma \in [L \hat{\Gamma}_1(L), L \hat{\Gamma}_0)$, has a unique critical point that is a center and of the form $(\pi,y)$.
    \item If $\Gamma \in [ L \hat{\Gamma}_0, L \hat{\Gamma}_2(L))$, has only two critical points $(0,y_1), (\pi,y_2)$, of saddle and center type respectively. 
    \item If $\Gamma \in [ L \hat{\Gamma}_2(L), L \hat{\Gamma}_*(L))$, has only three critical points $(0,y), (\pi,y_{1,2})$ where  $y_2>y_{1}>0$ with $(0,y)$, $(\pi,y_2)$ saddles and $(\pi,y_1)$ a center. 
    \item If $\Gamma= L\hat \Gam_*(L)$, has only two critical points $(0,y)$, $(\pi,y_*)$ with $(0,y)$ a saddle and $(\pi,y_*)$ parabolic. 
    \item If $\Gamma \in ( L \hat{\Gamma}_*(L), L \hat{\Gamma}_3(L)]$, has a unique critical point that is a saddle and of the form $(0,y)$.
\end{enumerate}
\end{itemize}
\end{theorem}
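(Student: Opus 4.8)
Since the Hamiltonian $\rH_{\AV}$ in~\eqref{eq:expansionsH1averagedDelaunay} does not depend on $h$, its conjugate action $\Gam$ (like $L$) is a parameter, and we deal with a one-degree-of-freedom system in $(x,y)\in\mathbb{T}\times(0,L/2)$. The plan is to rescale by $w=y/L$, $\gamma=\Gam/L$ and to replace $L$ by the single parameter $\delta=\rho\alpha^3L^4$ of~\eqref{def:delta:mainresults}: after multiplying $\nabla\rH_{\AV}=0$ by a nonvanishing factor, the critical point equation depends on $L$ only through $\delta$, and since $\delta$ is strictly increasing in $a$ (hence in $L$), fixing $L$ is the same as fixing $\delta\in(0,\delta_{\max}]$. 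One then counts the solutions lying in the admissible region~\eqref{def:DomainDelaunay}--\eqref{remark_col}, that is $\{\hat{y}_{\mathrm{col}}\le w<\tfrac12,\ 0<\gamma<\tfrac{w}{2}\}$.

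\textbf{Locating the critical points.} From~\eqref{eq:expansionsH1averagedDelaunay} one has $\partial_x\rH_{\AV}=-\tfrac{\rho_1}{3L^2}U_2^{1,0}D_{1,0}(y,\Gam)\sin x$, and, reading off $D_{1,0}$ from Table~\ref{tab:functionsDelaunay} (a Kaula inclination function of $\cos i=(\Gam+y)/(2y)$ times the Hansen coefficient with $e^2=1-4y^2/L^2$), one checks that $D_{1,0}(y,\Gam)\ne 0$ for $0<\Gam<y$, $0<y<L/2$; together with $U_2^{1,0}\ne0$ this forces $\sin x=0$, so every critical point has $x_\star\in\{0,\pi\}$. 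Fixing $x_\star$ and putting $g_{x_\star}(y):=\partial_y\rH_{\AV}(y,\Gam,x_\star)$, I would use $\partial_y\rH_0=\tfrac{3\rho_0}{128L^3}(5\Gam^2+8y\Gam-y^2)y^{-6}$ and Table~\ref{tab:functionsDelaunay}, clear denominators, and rationalise the square root carried by $D_{1,0}$ (whose sign along the branch is fixed by $\cos x_\star=\pm1$, which also controls the spurious roots introduced by squaring), to bring $g_{x_\star}(y)=0$ to the form
\[
\mathcal A(w,\gamma)+\delta\,\mathcal B_{x_\star}(w,\gamma)=0,\qquad \mathcal A(w,\gamma):=5\gamma^2+8w\gamma-w^2,
\]
where $\mathcal A$ is a positive multiple of the resonance polynomial of~\eqref{def:resonance} (so $\mathcal A=0$ is exactly the prograde line $\gamma=\sl w$, $\sl$ as in~\eqref{def:pendentresonance}) and $\mathcal B_{x_\star}$ is an explicit rational function, real-analytic on the admissible region.

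\textbf{Counting and character.} I would then solve for $\delta$, $\delta=-\mathcal A/\mathcal B_{x_\star}=:\Delta_{x_\star}(w,\gamma)$ (where $\mathcal B_{x_\star}\ne0$), so that the critical points on branch $x_\star$ at parameter $\delta$ form the level set $\{\Delta_{x_\star}=\delta\}$; equivalently, for fixed $\delta$ one studies the graph $w\mapsto\gamma$ implicitly defined by $\mathcal A+\delta\mathcal B_{x_\star}=0$ near $\gamma=\sl w$. For $x_\star=0$ this graph is strictly monotone in $w$ on $[\hat{y}_{\mathrm{col}},\tfrac12]$, so there is at most one critical point $(0,y)$; its behaviour at the two ends of the $w$-range, together with the constraint $\gamma<w/2$ and the resonance line, yields the $L$-independent constant $\hat{\Gamma}_0$ and the increasing function $\hat{\Gamma}_3(L)$. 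For $x_\star=\pi$ the graph has at most one interior critical point, a maximum, so there are at most two critical points $(\pi,y_1),(\pi,y_2)$: the endpoint values give the decreasing functions $\hat{\Gamma}_1(L),\hat{\Gamma}_2(L)$, the maximum gives $\hat{\Gamma}_*(L)$, and $L_*$ is the unique $L$ (equivalently $\delta$) at which this maximum first enters $[\hat{y}_{\mathrm{col}},\tfrac12]$, a tangency condition whose uniqueness, along with the monotonicity of $\hat{\Gamma}_1,\hat{\Gamma}_2,\hat{\Gamma}_3$, I would obtain from the implicit function theorem plus a sign study of $\partial_w\Delta_{x_\star}$ and $\partial_\gamma\Delta_{x_\star}$. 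For the character: at $(x_\star,y_0)$ the Hessian of $\rH_{\AV}$ is diagonal (the mixed entry is proportional to $\sin x_\star=0$), with $\partial_x^2\rH_{\AV}(x_\star,y_0)=-\cos x_\star\cdot\tfrac{\rho_1}{3L^2}U_2^{1,0}D_{1,0}(y_0,\Gam)$ of a sign fixed on the whole admissible region, and $\partial_y^2\rH_{\AV}(x_\star,y_0)=g_{x_\star}'(y_0)$; hence $(x_\star,y_0)$ is a centre when $\det\mathrm{Hess}>0$, a saddle when $\det\mathrm{Hess}<0$, and parabolic exactly when $g_{x_\star}'(y_0)=0$, i.e.\ at the fold points, which produces the parabolic points at $\Gam=L_*\hat{\Gamma}_2(L_*)$ and at $\Gam=L\hat{\Gamma}_*(L)$. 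Since $g_{x_\star}'(y_0)$ has, up to a positive factor, the sign distinguishing a ``rising'' from a ``falling'' crossing of the graph, a simple root on a monotone stretch has a prescribed type, while the two roots beyond the fold of the $x_\star=\pi$ branch are a centre and a saddle coalescing at the parabolic point; matching the resulting bifurcation diagram interval by interval with the statement, and fixing the ``centre''/``saddle'' labels once and for all by the sign of $U_2^{1,0}D_{1,0}$ from Tables~\ref{tab:U} and~\ref{tab:functionsDelaunay}, completes the proof.

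\textbf{Main obstacle.} The hard part is that this is a genuinely non-perturbative count: $\delta_{\max}$ (corresponding to $a_{\max}$) is not small, so one cannot simply bifurcate from the $\delta=0$ resonance line; one must control the number of intersections of $\{\mathcal A+\delta\mathcal B_{x_\star}=0\}$ with the admissible region for every $\delta\in(0,\delta_{\max}]$, prove the monotonicity of the threshold curves, and, most delicately, establish the existence and uniqueness of $L_*$, i.e.\ that the fold of the $x_\star=\pi$ branch enters the admissible $w$-range at a single value of $L$. Treating cleanly the square root carried by the Kaula function in $D_{1,0}$, so that after rationalisation one works with an honest polynomial whose roots can be counted by elimination-theoretic or Sturm-type arguments, is the remaining technical nuisance.
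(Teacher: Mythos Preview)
Your overall strategy matches the paper's: critical points sit at $x\in\{0,\pi\}$; one studies, for each sign, the graph $y\mapsto\Gamma$ implicitly defined by $\partial_y\rH_{\AV}=0$; the $+$ branch is monotone, the $-$ branch has at most one fold, and the linear character at a critical point is governed by the sign of $\partial_y$ of the defining function (equivalently, by whether one is on the rising or falling side of the fold). Your Hessian/parabolic-at-the-fold argument is exactly how the paper reads off centre/saddle/parabolic.

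Where your proposal stays vague, the paper makes one decisive move you do not make: instead of the variables $(w,\gamma)=(y/L,\Gamma/L)$, it reparameterises by $\hat y=y/L$ and $s=\Gamma/y$, so that the critical-point equation becomes
\[
\mathcal G_\pm(s,\hat y;L)=\mathbf a(s)+2\rho\alpha^3L^4\bigl[\mathbf b_\pm(s)\hat y^3+\mathbf c_\pm(s)\hat y^5\bigr]=0,
\qquad \mathbf a(s)=5s^2+8s-1,
\]
with $\mathbf b_\pm,\mathbf c_\pm$ explicit functions of $s$ alone (carrying the square root $\sqrt{3-2s-s^2}$, which is \emph{not} rationalised away). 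The whole proof then rests on the elementary Lemma that $\mathbf a,\mathbf b_\pm,\mathbf c_\pm$ are strictly increasing on $[0,\tfrac12]$, hence $\partial_s\mathcal G_\pm>0$; this immediately yields a unique $s_\pm(\hat y;L)$ via Bolzano, and the graph you want is $\hat\Gamma_\pm(\hat y;L)=\hat y\,s_\pm(\hat y;L)$. Monotonicity of $\hat\Gamma_+$ and the single-fold structure of $\hat\Gamma_-$ are then obtained by explicit sign checks on combinations of $\mathbf a,\mathbf b_\pm,\mathbf c_\pm$ evaluated at a few concrete values of $s$ (e.g.\ $s=\tfrac25 m,\ m,\ \tfrac32 m$), not by Sturm sequences or elimination after squaring. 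Your proposed rationalisation would instead produce a degree-six polynomial in $\Gamma$ with spurious roots to track; the paper avoids this entirely.

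Two further points where the paper is more concrete than your sketch. First, the $L$-independence of $\hat\Gamma_0$ is not a generic endpoint phenomenon: the paper takes the lower end of the $\hat y$-range to be $\hat y_{\min}=\sqrt{-\mathbf b_+(m)/\mathbf c_+(m)}$ (strictly smaller than $\hat y_{\mathrm{col}}$), chosen so that $s_+(\hat y_{\min};L)=m$ for every $L$ because $\mathbf a(m)=0$ annihilates the $\delta$-term; this is what makes $\hat\Gamma_0=\hat y_{\min}m$ an honest constant. Using $\hat y_{\mathrm{col}}$ as you do would give an $L$-dependent threshold instead. Second, for the fold of the $\pi$-branch the paper eliminates $\delta$ between $\mathcal G_-=0$ and $s\partial_s\mathcal G_--\hat y\partial_{\hat y}\mathcal G_-=0$ to obtain explicit one-variable formulas $\hat y_*(s)$ and $a_*(s)$; the monotonicity of these (hence the uniqueness of $L_*$) is then read off directly, with a short numerical/graphical check replacing the ``implicit function theorem plus sign study'' you allude to.
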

 
The complete proof of this result relies on elementary
analytic tools. However, due to the complexity of the expression of the functions
involved in the definition of the Hamiltonian~\eqref{def:HamAV}, it is not straightforward at all and it is postponed to Section~\ref {sec:eccentric_case}. 
 
\begin{remark}  
Along the proof of this result we check some extra properties of the functions involved. In particular we deduce that,
$$
\hat \Gamma_1(L) \in \left [\frac{2}{5} \hat \Gamma_0,\hat \Gamma_0\right ],\qquad 
\hat{\Gamma}_{2}(L) \in \left [ \frac{1}{5} \sl, \frac{1}{2}\sl \right ], \qquad 
\hat{\Gamma}_{3}(L) \in \left [\frac{1}{2}\sl, \frac{3}{4} \sl \right ].
$$
The constant $\hat \Gamma_0$ is defined as $\hat \Gamma_0 = \sl  \hat{y}_{\min}$ with $\hat{y}_{\min}$ defined below in~\eqref{def:ymin}.

 The exact value of $L_*$ is not used along the proof of Theorem~\ref{thm:AveragedHam_generalcase}; it has been computed numerically just for completeness. 
\end{remark}





\subsection{The coplanar Hamiltonian}\label{sec:mainresults:PO}

The coplanar Hamiltonian ${\HH}_{\CP}= \HH_0 + \al^3 {\HH}_{\CP,1}$ (see~\eqref{def:Ham:Coplanar}) can be seen, when $\al^3$ is small enough, as a perturbation of the averaged one. Enlarging the dimension, one can ask if, in some way, the equilibrium points of the $h$-averaged system, survive as periodic orbit in the coplanar system and, in this case, if they conserve the character (saddle or elliptic type) of the equilibrium point.

From expressions of $\HH_0$ in \eqref{def:hamiltonianPoincareH0} and ${\HH}_{\CP,1}$ in  \eqref{eq:expressionPoincareHCP1} in Appendix~\ref{app:tables}, $\Pi=\{\eta=\xi=0\}$ is invariant by the flow associated to the coplanar Hamiltonian ${\HH}_{\CP}$. From now on, we focus our analysis on these types of periodic orbits, which are the ones used in \cite{AlessiBGGP23} to obtain drift in eccentricity.  

The results presented in this section attempt to give a theoretical framework of the previous numerical study in~\cite{AlessiBGGP23} by means of perturbative arguments with respect to the (small) parameter $\delta= \rho \alpha^3 L^4$ defined in~\eqref{def:delta:mainresults}. 

The following result  proves the existence of periodic orbits in suitable energy level of the Hamiltonian $\HH_\CP$. It is proven in Section~\ref{sec:proof_periodicorbits}.

\begin{theorem}\label{thm:existence_periodic_orbits}
For $L\in (0,L_{\max}]$, the functions $\mathbf{E}_{\min,\max}$ defined by
$$
\mathbf{E}_{\max}(L) = {\HH}_\CP (0,0,0,0), \qquad \mathbf{E}_{\min} (L) = {\HH}_\CP(0,0.49 L,0,\pi) 
$$
can be written as 
$$
\mathbf{E}_{\max}(L)= \frac{\rho_0}{16L^6}\widehat{\mathbf{E}}_{\max}(\delta), \qquad 
\mathbf{E}_{\min}(L)= \frac{\rho_0}{16L^6}\widehat{\mathbf{E}}_{\min}(\delta),
$$
with $\delta= \rho \alpha^3 L^4$ introduced in~\eqref{def:delta:mainresults}, and  $\widehat{\mathbf{E}}_{\min,\max}:[0,\delta_{\max}] \to \mathbb{R}$.

Then, for any energy level $\mathbf{E}$ such that $\mathbf{E}_{\min}(L)  \leq \mathbf{E} \leq \mathbf{E}_{\max}(L) $, there exists a periodic orbit $(0, {\Gam}(t;L),0,h(t;L))$ satisfying that $h(0;L)=0$, $\Gam(t;L) \in [0, 0.49 L]$ for all $t\geq 0$, 
$$
\HH_\CP(0, {\Gam}(t;L),0,h(t;L))= \mathbf{E}
$$
and the differential equations
\begin{align*}
\dot{h} =\frac{dh}{dt}= & -\frac{3\rho_0(L+2\Gam)}{4 L^8}
- \al^3 \frac{3\rho_1}{8L^4}\Big(
 U_2^{0,0}(L + 2\Gam)
+
\frac{4}{3} U_2^{1,0}
\frac{L^2-4L\Gam-4\Gam^2}{\sqrt{(L-2\Gam)(3L+2\Gam)}} \cos h
\\
& -\frac{1}{3} U_{2}^{2,0}(L+2\Gam)\cos(2h) \Big)\\
\dot{\Gam} =  \frac{d\Gam}{dt}=&  -\al^3 \frac{\rho_1}{16 L^4}
\Big(
2 U_2^{1,0}
\sqrt{(L-2\Gam)(3L+2\Gam)}(2\Gam+L) 
\sin h 
\\
&+ U_2^{2,0} 
(L-2\Gam)(3L+2\Gam)\sin(2h)
\Big).
\end{align*} 
\end{theorem}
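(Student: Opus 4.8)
The plan is to recognise that the invariant set $\Pi=\{\eta=\xi=0\}$ carries a one–degree–of–freedom Hamiltonian system, to write this system explicitly, and then to analyse its level curves by elementary means. As recalled just above the statement, $\Pi$ is invariant for $\HH_\CP$ (because neither $\HH_0$ nor $\HH_{\CP,1}$ has terms linear in $(\xi,\eta)$, so $\partial_\xi\HH_\CP=\partial_\eta\HH_\CP=0$ on $\Pi$); since $(\Gam,h)$ is a symplectic pair, the restricted motion is the Hamiltonian flow of $F(\Gam,h):=\HH_\CP(0,\Gam,0,h)$, with $\dot h=\partial_\Gam F$ and $\dot\Gam=-\partial_h F$. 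Evaluating \eqref{def:hamiltonianPoincareH0} and \eqref{eq:expressionPoincareHCP1} at $\xi=\eta=0$ (where $G=2y=L$, $e=0$, so that only the $p=1$ terms of $\HH_{\CP,1}$ survive) gives
\[
F(\Gam,h)=\frac{\rho_0}{16L^8}\bigl(L^2-12L\Gam-12\Gam^2\bigr)+\al^3\bigl(c_0(\Gam)+c_1(\Gam)\cos h+c_2(\Gam)\cos(2h)\bigr),
\]
with $c_0,c_1,c_2$ explicit and proportional to $\rho_1U_2^{0,0}$, $\rho_1U_2^{1,0}$, $\rho_1U_2^{2,0}$; differentiating in $\Gam$ and in $h$ produces exactly the two differential equations of the statement (the first summand of $\dot h$ being $\partial_\Gam$ of the $\HH_0$–part). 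In particular $F(\Gam,\cdot)$ is a polynomial of degree $\le 2$ in $\cos h$, hence even in $h$. Using $\al^3\rho_1=\delta\rho_0/L^4$ with $\delta=\rho\al^3L^4$ and collecting powers of $L$ yields $F(\Gam,h)=\tfrac{\rho_0}{16L^6}\,\widehat F(\Gam/L,h;\delta)$ for a function $\widehat F$ defined for $\Gam/L\in[0,\tfrac12)$, $h\in\mathbb{T}$, $\delta\in[0,\delta_{\max}]$; hence $\mathbf E_{\max}(L)=F(0,0)$ and $\mathbf E_{\min}(L)=F(0.49L,\pi)$ have the asserted form with $\widehat{\mathbf E}_{\max}(\delta)=\widehat F(0,0;\delta)$ and $\widehat{\mathbf E}_{\min}(\delta)=\widehat F(0.49,\pi;\delta)$.

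Next I would establish two monotonicity facts on the strip $S:=\{(\Gam,h):0\le\Gam\le 0.49L,\ h\in\mathbb{T}\}$. \emph{(i)} On $S$, $\dot h=\partial_\Gam F<0$: after factoring, the leading term $-\tfrac{3\rho_0(L+2\Gam)}{4L^8}\le-\tfrac{3\rho_0}{4L^7}$ is negative and bounded away from zero, while the $\al^3$–correction equals $-\tfrac{3\rho_0}{4L^7}\cdot\tfrac{\delta}{2}P(\Gam/L,h)$ with $P$ bounded on $S$ (its only potentially singular factor $((L-2\Gam)(3L+2\Gam))^{-1/2}$ is controlled since $\Gam\le 0.49L$); as $\delta\le\delta_{\max}$ and $\delta_{\max}$ is numerically small, $|\tfrac{\delta}{2}P|<1$ on $S$, so $\dot h<0$ there for every $L\in(0,L_{\max}]$, and thus $\Gam\mapsto F(\Gam,h)$ is, for each $h$, a strictly decreasing continuous bijection of $[0,0.49L]$ onto $[F(0.49L,h),F(0,h)]$. \emph{(ii)} From the $\dot\Gam$–equation, $\partial_hF(\Gam,h)=\al^3\tfrac{\rho_1}{16L^4}\sin h\,Q(\Gam,h)$ with $Q(\Gam,h)=2U_2^{1,0}\sqrt{(L-2\Gam)(3L+2\Gam)}\,(2\Gam+L)+2U_2^{2,0}(L-2\Gam)(3L+2\Gam)\cos h$; using the values of $U_2^{1,0},U_2^{2,0}$ in Table~\ref{tab:U} one checks $U_2^{1,0}>0$ and $U_2^{1,0}>\sqrt3\,|U_2^{2,0}|$, which together with $\tfrac{2\Gam+L}{\sqrt{(L-2\Gam)(3L+2\Gam)}}\ge\tfrac1{\sqrt3}$ on $[0,0.49L]$ gives $Q>0$ on $S$. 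Since $\al^3\rho_1>0$, $\partial_hF(\Gam,\cdot)$ has the sign of $\sin h$, and (as $F$ is even in $h$) $\min_hF(\Gam,h)=F(\Gam,0)$ and $\max_hF(\Gam,h)=F(\Gam,\pi)$ for every $\Gam\in[0,0.49L]$; in particular $\mathbf E_{\max}(L)=\min_hF(0,h)$ and $\mathbf E_{\min}(L)=\max_hF(0.49L,h)$, and by evaluating at $\delta=0$ (where $\widehat F$ is $h$–independent) together with continuity and smallness of $\delta_{\max}$ one also gets $\mathbf E_{\min}(L)<\mathbf E_{\max}(L)$, so the energy interval in the statement is genuinely nonempty.

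Finally, fix $\mathbf E\in[\mathbf E_{\min}(L),\mathbf E_{\max}(L)]$; then $F(0.49L,h)\le\mathbf E_{\min}(L)\le\mathbf E\le\mathbf E_{\max}(L)\le F(0,h)$ for every $h$, so by (i) there is a unique $\Gam(h)\in[0,0.49L]$ with $F(\Gam(h),h)=\mathbf E$, and $h\mapsto\Gam(h)$ is smooth by the implicit function theorem (using $\partial_\Gam F\ne0$). Hence $\{F=\mathbf E\}\cap S=\{(\Gam(h),h):h\in\mathbb{T}\}$ is a smooth embedded circle in $S$, invariant by the flow and (since $\dot h<0$ on it) free of equilibria, i.e.\ a single periodic orbit with finite period $\oint|\partial_\Gam F|^{-1}\,\rd h$. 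Parametrising it by the flow with $h(0;L)=0$ yields $(0,\Gam(t;L),0,h(t;L))$ with $\Gam(t;L)=\Gam(h(t;L))\in[0,0.49L]$ for all $t$, lying on $\{\HH_\CP=\mathbf E\}$ and solving the stated equations; at the endpoint energies the curve merely grazes $\Gam=0$ at $h=0$ (resp.\ $\Gam=0.49L$ at $h=\pi$), where $\partial_h^2F(0,0)=\al^3\tfrac{\rho_1}{16L^4}Q(0,0)>0$ (resp.\ $\partial_h^2F(0.49L,\pi)<0$) keeps the orbit admissible. The only genuine subtlety is to run all estimates uniformly over $a\in(0,a_{\max}]$ without a perturbative hypothesis: this succeeds because $\delta_{\max}$ is numerically small—making the $O(\delta)$ correction in (i) subordinate—and, more essentially, because the sign of $\partial_hF$ in (ii) is independent of $\delta$, so the whole $h$–extremum analysis collapses to the single closed inequality $U_2^{1,0}>\sqrt3\,|U_2^{2,0}|$ on the Giacaglia constants; producing the explicit $F$ and verifying the scaling $F=\tfrac{\rho_0}{16L^6}\widehat F(\Gam/L,h;\delta)$ is routine but somewhat lengthy bookkeeping.
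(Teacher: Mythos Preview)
Your proposal is correct and follows essentially the same route as the paper: restrict to the invariant set $\Pi$, show $\dot h<0$ on the strip $\hat\Gamma\in[0,0.49]$ using that the $\delta$–correction is numerically dominated by the leading term, show $\dot\Gamma=0$ only at $h=0,\pi$ via the inequality $U_2^{1,0}>\sqrt3\,U_2^{2,0}$, and conclude that each admissible energy level is a graph over $h$. The only presentational difference is that the paper packages the $\dot h<0$ estimate as the positivity of $f(h)=A+B\cos h+C\cos 2h$ and checks it by a short case split ($|B|\lessgtr 4|C|$, culminating in the bound $\tfrac{5}{4}|B|<A$ of Lemma~6.1), whereas you argue more heuristically that the bracketed correction is $O(\delta)$ with a bounded coefficient; both reductions lead to the same numerical verification at $\hat\Gamma=0.49$.
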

  
\begin{remark}\label{rmk:energyvalues}
    The values $\mathbf{E}_{\min}(L), \mathbf{E}_{\max}(L)$ can be  easily computed. For the case $a=29600\,\mathrm{km}$ these are  
    $$  
    \mathbf{E}_{\min}(1)= -2.558100888960067 \cdot 10^{-5}, \qquad 
     \mathbf{E}_{\max}(1)=2.477266122798186 \cdot 10^{-6} .
    $$ 
    Recall that the units are  the ones in  Remark~\ref{rmk:units}, namely the semi-major axis of Galileo is $a=1$ and the period of Galileo is $2\pi$ (this implies $L=1$). Compare with the numerical results in~\cite{AlessiBGGP23} for hyperbolic periodic orbits in Figure~\ref{fig:PO_example}.
    \begin{figure}[h]
\begin{center}
\hspace{-3cm}
\includegraphics[width=8.2cm]{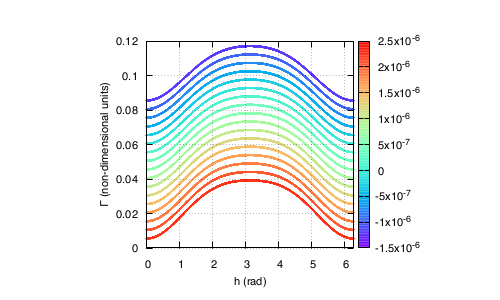}\hspace{-0.1cm}
\end{center}
\caption{Examples of periodic orbits for $\HH_\CP$ in the case $a=29600\, \mathrm{km}$ (Galileo semi-major axis) in non-dimensional units. The colorbar reports the value of ${\HH}_{\CP}$.  }
\label{fig:PO_example}
\end{figure}
\end{remark}  

We emphasize that, by Theorem~\ref{thm:existence_periodic_orbits}, the initial conditions  $\Gam(0;L)=\Gam_0$ for a given value $L\in (0,L_{\max}]$ for a periodic orbit are the ones satisfying
\begin{equation}\label{initial_condition_property}
\mathbf{E}_{\min}(L)\leq\HH_\CP (0,\Gam_0,0,0) \leq \mathbf{E}_{\max}(L),
\end{equation}
with $\mathbf{E}_{\min,\max}(L)$ defined in Theorem~\ref{thm:existence_periodic_orbits}.

For $\sigma~\in \left [0,\frac{1}{2}\right )$, we define 
\begin{align}\label{defhata0c0}
\hat a_0 (\sigma;\delta)&= -1-2\sigma
- \delta \frac{ 1 }{2 } 
 U_2^{0,0}(1 + 2 \sigma),   \\ 
\hat c_0(\sigma;\delta) & =  - \frac{1}{24 \hat a_0(\sigma;\delta)}
\left [
4 U_2^{1,0}
\sqrt{(1-2\sigma)(3+2\sigma)}(2\sigma+1) 
 +   U_2^{2,0} 
(1-2\sigma)(3+2\sigma) 
\right]. \notag 
\end{align}
\begin{remark}
After some tedious computations (see  Section~\ref{proof:corollary_periodic}) one can check that
 \begin{align*}
 \hat a_0 \left ( \frac{\Gam}{L}; \rho\alpha^3 L^4 \right ) & = \frac{4L^7} {3\rho_0} \partial_\Gam \HH_{\AV} (0, \Gam,0) , \\ 
 \hat{c}_0\left ( \frac{\Gam}{L}; \rho \alpha^3 L^4 \right )  &= \frac{\rho}{L^5 }\frac{\HH_{\CP,1}(0,\Gam,0,0)- \HH_{\AV,1}(0,\Gam,0)}{  \partial_\Gam \HH_\AV (0,\Gam,0)}.
 \end{align*}
\end{remark}


The next result is a corollary of Theorem~\ref{thm:existence_periodic_orbits}. 
\begin{corollary}\label{cor:gammaCPAV}
There exists a constant $C_*>0$ such that for a given $L\in (0,L_{\max}]$ and an energy level $\mathbf{E} \in [\mathbf{E}_{\min}(L), \mathbf{E}_{\max}(L)]$, if $\Gam_\AV^{\mathbf{E}}, \Gam_\CP^{\mathbf{E}}$ satisfy
$$
\mathbf{E} = \HH_\AV(0,\Gam_\AV^{\mathbf{E}},0)= \HH_\CP(0,\Gam_\CP^{\mathbf{E}},0,0)
$$
then there exists a function $\gamma_2:\left (0, \frac{1}{2}\right ) \times [0,\delta_{\max}]\to \mathbb{R}$ such that,  
$$
\Gam_\AV^{\mathbf{E}}  = 
\Gam_\CP^{\mathbf{E}} + L \delta  
\hat c_0 \big (\widehat \Gam_{\CP}^{\mathbf{E}}; \delta  \big ) + L \delta^2 \gamma_2(\widehat \Gam_{\CP}^{\mathbf{E}};\delta) , \quad \widehat \Gam_\CP^{\mathbf{E}} = L^{-1} \Gam_\CP^{\mathbf{E}}, \quad \delta = \rho\alpha^3L^4,
$$ 
with $\hat c_0$ in~\eqref{defhata0c0} and $|\gamma_2(\hat \Gam_{\CP}^{\mathbf{E}};\delta)|\leq C_*$.
\end{corollary}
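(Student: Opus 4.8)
The plan is to obtain the relation between $\Gam_\AV^{\mathbf{E}}$ and $\Gam_\CP^{\mathbf{E}}$ by comparing, at the same energy level $\mathbf{E}$, the defining equations $\HH_\AV(0,\Gam_\AV^{\mathbf{E}},0)=\mathbf{E}$ and $\HH_\CP(0,\Gam_\CP^{\mathbf{E}},0,0)=\mathbf{E}$, and then using a Taylor expansion of $\HH_\AV$ around $\Gam_\CP^{\mathbf{E}}$. Concretely, subtracting the two identities gives
\begin{equation*}
\HH_\AV(0,\Gam_\AV^{\mathbf{E}},0) - \HH_\AV(0,\Gam_\CP^{\mathbf{E}},0) = \HH_\CP(0,\Gam_\CP^{\mathbf{E}},0,0) - \HH_\AV(0,\Gam_\CP^{\mathbf{E}},0) = \alpha^3\big(\HH_{\CP,1}(0,\Gam_\CP^{\mathbf{E}},0,0) - \HH_{\AV,1}(0,\Gam_\CP^{\mathbf{E}},0)\big),
\end{equation*}
where on the right we used $\HH_\CP = \HH_0 + \alpha^3\HH_{\CP,1}$, $\HH_\AV = \HH_0 + \alpha^3\HH_{\AV,1}$ (see~\eqref{def:Ham:Coplanar}, \eqref{def:HamAV}) and cancelled the common $\HH_0(0,\Gam_\CP^{\mathbf{E}},0)$ term. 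First I would invoke the Remark preceding the statement to recognise this right-hand side, after scaling by $\rho/L^5$, as $\delta\,\hat c_0(\widehat\Gam_\CP^{\mathbf{E}};\delta)\cdot\partial_\Gam\HH_\AV(0,\Gam_\CP^{\mathbf{E}},0)$ up to the overall factor $\tfrac{3\rho_0}{4L^7}$; that is, the difference equals $\big(\tfrac{3\rho_0}{4L^7}\big)\,\hat a_0(\widehat\Gam_\CP^{\mathbf{E}};\delta)\,L\delta\,\hat c_0(\widehat\Gam_\CP^{\mathbf{E}};\delta)$, since $\hat a_0(\widehat\Gam_\CP^{\mathbf{E}};\delta)=\tfrac{4L^7}{3\rho_0}\partial_\Gam\HH_\AV(0,\Gam_\CP^{\mathbf{E}},0)$.

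Next I would Taylor-expand the left-hand side: $\HH_\AV(0,\Gam_\AV^{\mathbf{E}},0) - \HH_\AV(0,\Gam_\CP^{\mathbf{E}},0) = \partial_\Gam\HH_\AV(0,\Gam_\CP^{\mathbf{E}},0)\,(\Gam_\AV^{\mathbf{E}}-\Gam_\CP^{\mathbf{E}}) + R$, with $R$ the integral remainder controlled by $\sup|\partial_\Gam^2\HH_\AV|\cdot(\Gam_\AV^{\mathbf{E}}-\Gam_\CP^{\mathbf{E}})^2$. Dividing through by $\partial_\Gam\HH_\AV(0,\Gam_\CP^{\mathbf{E}},0)$ — which is nonzero away from the critical points, a fact one must check on the relevant $\Gam$-range using the explicit form in~\eqref{def:Ham:AV1} and $\hat a_0$ in~\eqref{defhata0c0} — yields
\begin{equation*}
\Gam_\AV^{\mathbf{E}} - \Gam_\CP^{\mathbf{E}} = L\delta\,\hat c_0(\widehat\Gam_\CP^{\mathbf{E}};\delta) - \frac{R}{\partial_\Gam\HH_\AV(0,\Gam_\CP^{\mathbf{E}},0)}.
\end{equation*}
A bootstrap argument then shows $\Gam_\AV^{\mathbf{E}}-\Gam_\CP^{\mathbf{E}} = \mathcal{O}(\delta)$ to leading order (since $\hat c_0$ is $\mathcal{O}(1)$ and the remainder term is higher order), hence $R = \mathcal{O}(\delta^2)$, and absorbing $-R/\partial_\Gam\HH_\AV$ into the term $L\delta^2\gamma_2(\widehat\Gam_\CP^{\mathbf{E}};\delta)$ gives the claimed expansion. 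The smoothness and the factorisation of $\gamma_2$ as a function of $(\widehat\Gam_\CP^{\mathbf{E}},\delta)$ follow from the scaling structure already recorded in Theorem~\ref{thm:existence_periodic_orbits} (the dependence on $L$ enters only through $\delta$ after rescaling by powers of $L$), which I would make explicit by writing every quantity in the hatted, $\delta$-parametrised form.

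The main obstacle I anticipate is \emph{twofold}: first, verifying uniformly in $L\in(0,L_{\max}]$ and on the admissible $\Gam$-range that $\partial_\Gam\HH_\AV(0,\Gam,0)$ is bounded away from zero (equivalently $\hat a_0(\sigma;\delta)\neq 0$), so that dividing by it and claiming a uniform constant $C_*$ is legitimate — near the circular critical points $\Gamma_\pm(L)$ of Theorem~\ref{thm:origin_average} this derivative vanishes, so one has to argue that the energy interval $[\mathbf{E}_{\min},\mathbf{E}_{\max}]$ and the section $\{\eta=\xi=0,\,h=0\}$ keep $\Gam_\CP^{\mathbf{E}}$ away from that degenerate set, or else restrict the statement accordingly; and second, the bookkeeping needed to convert all $L$-dependence into $\delta$-dependence cleanly, so that $\gamma_2$ genuinely depends only on $(\widehat\Gam_\CP^{\mathbf{E}},\delta)$ and $|\gamma_2|\le C_*$ holds uniformly. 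The actual Taylor estimates are routine once these two points are settled.
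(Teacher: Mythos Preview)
Your approach is correct and is essentially the paper's own: subtract the two energy identities to get $\HH_\AV(0,\Gam_\AV^{\mathbf{E}},0)-\HH_\AV(0,\Gam_\CP^{\mathbf{E}},0)=\alpha^3\big(\HH_{\CP,1}-\HH_{\AV,1}\big)$, identify the right-hand side via the Remark as $\partial_\Gam\HH_\AV\cdot L\delta\,\hat c_0$, Taylor-expand the left-hand side in $\Gam$, divide, and absorb the quadratic remainder into $L\delta^2\gamma_2$. The paper carries this out after first passing to the rescaled hatted quantities, which takes care of your second anticipated bookkeeping obstacle automatically.

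One correction to your first anticipated obstacle: the derivative $\partial_\Gam\HH_\AV(0,\Gam,0)$ does \emph{not} vanish at the $\Gamma_\pm(L)$ of Theorem~\ref{thm:origin_average}; those values are where the \emph{eigenvalues} of the linearisation of $\HH_\AV$ at $(\xi,\eta)=(0,0)$ vanish, which is a different function. From~\eqref{defhata0c0} one has $\hat a_0(\sigma;\delta)=-(1+2\sigma)\big(1+\tfrac12\delta\,U_2^{0,0}\big)$, which is strictly negative on $[0,\tfrac12)\times[0,\delta_{\max}]$, so $\partial_\Gam\HH_\AV(0,\Gam,0)\neq 0$ uniformly and the division step presents no difficulty at all.
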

The proof of this result is postponed to Section~\ref{proof:corollary_periodic}.

\begin{remark}
Corollary~\ref{cor:gammaCPAV} gives us a first order in $\delta$ correction of the constant value of $\Gam$ to consider in the $h$-averaged Hamiltonian with respect to the initial condition of the periodic orbit in order to conserve the same energy level. Such a correction will have a relevant role when the stability of the periodic orbit is analyzed.
\end{remark}

As already mentioned, for Galileo $\delta\sim 0.11$. Then, Corollary \ref{cor:gammaCPAV} shows that the two periodic orbits have a significant separation. Moreover, since at circular motions $G=L$ (see \eqref{def:GH}), taking \eqref{eq:xy} into account, one has that 
\[
\Gamma=H-\frac{L}{2}=L\cos i-\frac{L}{2}
\]
Then, since $L$ is a constant for the secular Hamiltonian, deviations in $\Gamma$ correspond to deviations of $\cos i$. Namely, Corollary \ref{cor:gammaCPAV} implies that the inclination between the two periodic orbits is significantly different.

Our next goal is to provide a first order approximation, with respect to the parameter $\delta$, defined in~\eqref{def:delta:mainresults},
of the  periodic orbits $(0,\Gam(t;L),0,h(t;L))$ of the coplanar Hamiltonian system $\HH_{\CP}$ emanating from the origin of the $h$-averaged Hamiltonian~\eqref{def:HamAV}. 

To keep our analysis as concrete as possible, before stating the result, we introduce the functions defined for $\sigma\in \left [0,\frac{1}{2} \right )$
\begin{equation}\label{defhata1c1d1}
\begin{aligned}  
\hat c_1(\sigma;\delta) & = \frac{1}{6 \hat a_0(\sigma;\delta)} U_2^{1,0} \sqrt{(1-2\sigma)(3+2\sigma)} (2\sigma+1),
\\ 
\hat c_2 (\sigma;\delta) & = \frac{1}{24 \hat a_0(\sigma;\delta)} U_2^{2,0} (1-2\sigma)(3+2\sigma),   
\\ 
\hat d_1(\sigma;\delta) &= 
\frac{1}{3 \hat a_0(\sigma;\delta)} U_{2}^{1,0} \frac{5-12 \sigma-12\sigma^2}{ \sqrt{(1-2\sigma)(3+ 2\sigma)}},
\\
\hat d_2(\sigma;\delta) & = 
\frac{1}{24 \hat a_0(\sigma;\delta)} U_{2}^{2,0} \frac{1-12 \sigma-12\sigma^2}{ 1+2\sigma}.
\end{aligned}
\end{equation}


 
\begin{proposition}\label{prop:firstaproximation_periodic} 
There exists a constant $C_*>0$ such that if $L\in (0,L_{\max}]$ and $\Gam_0$
satisfy~\eqref{initial_condition_property}, then the periodic orbit $(0,\Gamma (t;L),0,h (t;L))$ of $\HH_\CP$ with initial condition $\Gam(0;L)=\Gam_0:=L \hat \Gam_0$ is of the form
$$ 
\Gam(t;L)= L \hat \Gam \left ( {\frac{3 \rho_0}{4L^7}} t; \delta \right ), \qquad 
h(t;L)= \hat{h}\left ( {\frac{3 \rho_0}{4L^7}} t ; \delta \right )
$$
where $\delta=\rho \alpha^3 L^4 $ as in~\eqref{def:delta:mainresults} and the functions $\hat \Gam, \hat h:[0,\delta_{\max}] \to \mathbb{R}$ are of the form
$$
\hat \Gam(s;\delta)= \hat \Gam_0  + \delta \hat \Gam_1(s;\delta)+ \delta^2 \hat \Gam_2(s;\delta),\qquad \hat h(s) = \hat h_0(s;\delta) + \delta \hat h_1(s;\delta) + \delta^2 \hat h_2(s;\delta)
$$
with  
$|\hat \Gam_2(s;\delta) |, |\hat h_2(s;\delta) |\leq C_*$ and 
\begin{align*}
\hat \Gam_1(s;\delta) & = \hat c_0(\Gam_0;\delta) + \hat c_1(\hat \Gam_0;\delta) \cos \big (\hat a_0(\hat \Gam_0;\delta) s\big ) + \hat c_2(\Gam_0;\delta) \cos \big (2 \hat a_0(\hat \Gam_0;\delta) s\big ), \\
\hat h_0(s;\delta ) &= \hat a_0(\hat \Gam_0;\delta) s, \\
\hat h_1(s;\delta) & = \partial_{\hat \Gam} \hat a_0(\hat \Gam_0;\delta) \hat c_0(\hat \Gam_0;\delta ) s + \hat d_1(\hat \Gam_0;\delta ) \sin \big (\hat a_0(\hat \Gam_0;\delta) s\big ) + \hat d_2(\hat \Gam_0;\delta) \sin \big (2 \hat a_0(\hat \Gam_0;\delta) s\big ) .
\end{align*}
The constants $\hat a_0,\hat c_{0,1,2}$ and $\hat d_{1,2}$ are defined in~\eqref{defhata0c0} and~\eqref{defhata1c1d1}.

The period $\mathcal{T}(\Gam_0;L) $ of the periodic orbit satisfies that
\begin{equation}\label{formula_period_theorem} 
\mathcal{T}(\Gam_0;L)= \frac{4 L^7 }{3 \rho_0} \widehat{\mathcal{T}}(\hat \Gam_0;\delta),\qquad 
\widehat{\mathcal{T}}(\hat \Gam_0;\delta) = \frac{2\pi + \delta^2 \widehat{\mathcal{T}_2}(\hat \Gam_0;\delta)}{\big | \hat a_0(\hat \Gam_0;\delta) + \delta \partial_{\hat \Gam} \hat a_0(\hat \Gam_0;\delta)  \hat c_0(\hat \Gam_0;\delta) \big |}
\end{equation}
with $\widehat{\mathcal{T}}: \left (0 , \frac{1}{2}\right ) \times [0,\delta_{\max}] \to \mathbb{R}$ and  
$|\widehat{\mathcal{T}}_2(\hat \Gam_0;\delta)|\leq M$. 
\end{proposition}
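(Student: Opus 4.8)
The plan is to use Theorem~\ref{thm:existence_periodic_orbits} to turn this into a routine regular--perturbation problem. By that theorem the periodic orbit lies on the invariant plane $\Pi=\{\eta=\xi=0\}$, on which the coplanar dynamics reduces to the autonomous planar system for $(\Gam,h)$ displayed there; moreover the explicit formula for $\dot h$ in Theorem~\ref{thm:existence_periodic_orbits} is bounded away from zero on the relevant range (its Keplerian/$J_2$ part dominates the $\mathcal O(\alpha^3)$ correction), so the orbit is a graph over $h\in\mathbb T$ and $h$ decreases by exactly $2\pi$ per period. I would first pass to the variables $s=\tfrac{3\rho_0}{4L^7}t$, $\hat\Gam=\Gam/L$ and substitute $\delta=\rho\alpha^3L^4$ from~\eqref{def:delta:mainresults}; a direct computation turns the two equations of Theorem~\ref{thm:existence_periodic_orbits} into
\begin{equation*}
\frac{d\hat\Gam}{ds}=\delta\,F(\hat\Gam,h),\qquad
\frac{dh}{ds}=\hat a_0(\hat\Gam;\delta)+\delta\,G(\hat\Gam,h),
\end{equation*}
where $\hat a_0$ is exactly the function in~\eqref{defhata0c0} (its $\delta$--term absorbs the $h$--independent $U_2^{0,0}$ part of $\dot h$), and $F,G$ are finite trigonometric sums containing only $1,\cos h,\cos 2h,\sin h,\sin 2h$, real--analytic in $\hat\Gam$ on $[0,1/2)$. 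Since $\hat a_0(\hat\Gam;\delta)=-(1+2\hat\Gam)\bigl(1+\tfrac\delta2 U_2^{0,0}\bigr)\le-1+\mathcal O(\delta)<0$, the variable $h$ is a monotone fast angle while $\hat\Gam$ moves at rate $\mathcal O(\delta)$, and by Theorem~\ref{thm:existence_periodic_orbits} an admissible datum~\eqref{initial_condition_property} keeps $\hat\Gam\in[0,0.49]$, a compact set bounded away from the singular locus $\hat\Gam=1/2$.

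\textbf{First--order expansion.} Next I would substitute the ansatz
\begin{equation*}
\hat\Gam(s)=\hat\Gam_0+\delta\hat\Gam_1(s;\delta)+\delta^2\hat\Gam_2(s;\delta),\qquad
h(s)=\omega s+\delta\hat h_1(s;\delta)+\delta^2\hat h_2(s;\delta),\qquad
\omega:=\hat a_0(\hat\Gam_0;\delta),
\end{equation*}
into the rescaled system, peeling off the \emph{exact} base rotation $\omega s$ (which is why the reported $\hat\Gam_1,\hat h_1$ carry the $\delta$--dependent frequency $\hat a_0(\hat\Gam_0;\delta)$). Matching the $\mathcal O(\delta)$ terms gives the linear problems $\hat\Gam_1'(s)=F(\hat\Gam_0,\omega s)$ and $\hat h_1'(s)=\partial_{\hat\Gam}\hat a_0(\hat\Gam_0;\delta)\,\hat\Gam_1(s)+G(\hat\Gam_0,\omega s)$, with zero initial data. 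Integrating the first, using $\int_0^s\sin(k\omega s')\,ds'=\tfrac{1-\cos(k\omega s)}{k\omega}$, reproduces exactly $\hat\Gam_1=\hat c_0+\hat c_1\cos(\omega s)+\hat c_2\cos(2\omega s)$ with the constants of~\eqref{defhata0c0}--\eqref{defhata1c1d1} (the mean $\hat c_0$ being the ``$1$'' contribution of those antiderivatives). Plugging this into the $\hat h_1$--equation and integrating again, the nonzero mean $\hat c_0$ of $\hat\Gam_1$ yields the secular term $\partial_{\hat\Gam}\hat a_0(\hat\Gam_0;\delta)\hat c_0(\hat\Gam_0;\delta)\,s$, while the oscillatory part of $\partial_{\hat\Gam}\hat a_0\,\hat\Gam_1$ and the $\cos h,\cos 2h$ terms of $G$ recombine into $\hat d_1\sin(\omega s)+\hat d_2\sin(2\omega s)$; this is precisely the claimed $\hat h_1$.

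\textbf{Remainder and period.} Finally, setting $R=\hat\Gam-\hat\Gam_0-\delta\hat\Gam_1$ and $S=h-\omega s-\delta\hat h_1$, one obtains a system $(R',S')=\delta\cdot\mathcal O(|R|+|S|)+\mathcal O(\delta^2)$ with $\mathcal O$--constants uniform, since $F,G,\hat a_0$ and their derivatives are bounded on $\{\hat\Gam\in[0,0.49]\}\times[0,\delta_{\max}]$. As one period is $\tfrac{2\pi}{|\omega|}+\mathcal O(\delta)=\mathcal O(1)$ in $s$ (because $|\omega|\ge1$), Gronwall's inequality gives $|R|,|S|\le C_*\delta^2$ there, hence $|\hat\Gam_2|,|\hat h_2|\le C_*$ uniformly in $L\in(0,L_{\max}]$ and in the admissible $\Gam_0$. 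For the period, $\mathcal T(\Gam_0;L)=\tfrac{4L^7}{3\rho_0}\widehat{\mathcal T}$ with $\widehat{\mathcal T}$ determined by $\hat h(\widehat{\mathcal T})=-2\pi$; writing $\hat h(s)=\bigl(\hat a_0(\hat\Gam_0;\delta)+\delta\,\partial_{\hat\Gam}\hat a_0(\hat\Gam_0;\delta)\hat c_0(\hat\Gam_0;\delta)\bigr)s+\delta\bigl(\hat d_1\sin(\hat a_0 s)+\hat d_2\sin(2\hat a_0 s)\bigr)+\delta^2\hat h_2(s)$ and noting that at $s=\widehat{\mathcal T}=2\pi/|\hat a_0|+\mathcal O(\delta)$ one has $\sin(k\hat a_0 s)=\mathcal O(\delta)$, the two $\sin$ terms are $\mathcal O(\delta^2)$; absorbing them and $\delta^2\hat h_2(\widehat{\mathcal T})$ into $\widehat{\mathcal T}_2$ and solving for $\widehat{\mathcal T}$ gives~\eqref{formula_period_theorem}.

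\textbf{Main obstacle.} Conceptually this is a textbook regular perturbation of a fast rotation, so the difficulty is almost entirely bookkeeping: (i) the unglamorous reduction of the cumbersome vector field of Theorem~\ref{thm:existence_periodic_orbits} to the normalized system above and the verification that every integration constant matches~\eqref{defhata0c0}--\eqref{defhata1c1d1}; and (ii)---the genuinely delicate point---making the $\mathcal O(\delta^2)$ bounds \emph{uniform}: uniformity in $L$ is free since everything factors through $\delta=\rho\alpha^3L^4\in[0,\delta_{\max}]$, but uniformity in $\Gam_0$ forces one to stay on the compact admissible range $\Gam_0\in[0,0.49\,L]$ supplied by Theorem~\ref{thm:existence_periodic_orbits} (to keep $\hat\Gam$ clear of the poles of $\hat c_j,\hat d_j$ at $\hat\Gam=1/2$) and to check that $\delta_{\max}$ lies inside the perturbative regime.
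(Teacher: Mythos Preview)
Your proposal is correct and follows essentially the same route as the paper: rescale by $\hat\Gam=\Gam/L$, $s=\tfrac{3\rho_0}{4L^7}t$ to obtain the normalized system $\hat h'=\hat a_0(\hat\Gam;\delta)+\delta\,a_1(\hat\Gam,\hat h)$, $\hat\Gam'=\delta\,(\ldots)$, expand $\hat\Gam,\hat h$ in powers of $\delta$, solve the resulting linear first--order equations by direct integration of the trigonometric right--hand sides, and read off the period from $\hat h(\widehat{\mathcal T})=-2\pi$. The only cosmetic difference is that the paper dispenses with your Gronwall step by simply invoking analyticity of the rescaled vector field in $\delta$ to conclude that the $\delta^2$--remainders $\hat\Gam_2,\hat h_2$ are bounded on one period; both arguments yield the same uniform bound once one notes, as you do, that the dependence on $L$ factors through $\delta\in[0,\delta_{\max}]$ and that $\hat\Gam$ stays in the compact set $[0,0.49]$ away from the singularity at $\hat\Gam=\tfrac12$.
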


This proposition is proven in Section~\ref{sec:perturbative analysis_periodic}.
\begin{remark}
    We notice that the dependence on $L$ of the periodic orbit comes through the dependence on suitable functions with respect to the ``small" parameter $\delta$. In~\cite{AlessiBGGP23}, the period of the hyperbolic periodic orbits are numerically computed by some values of $\Gamma/L$ for Galileo ($L=1$). Indeed, in Fig.~\ref{fig:PO_example_periods} there is the comparison between the numerically computed value of the period and the approximated value, $\mathcal{T}^{\mathrm{approx}},$ given by formula~\eqref{formula_period_theorem} with $\widehat{\mathcal{T}}_2\equiv 0$, namely 
    $$
    \mathcal{T}^{\mathrm{approx}} (\Gam_0;L)= \frac{4L^7}{3\rho_0} \frac{2\pi}{\big |\hat{a}_0(\hat{\Gam}_0; \delta) + \delta \partial_{\hat \Gam} \hat a_0 (\hat \Gam_0;\delta) \hat c_0 (\hat \Gam_0;\delta)\big |} 
    $$
\begin{figure}[h]
\begin{center}
\hspace{-1cm}
\includegraphics[width=5cm]{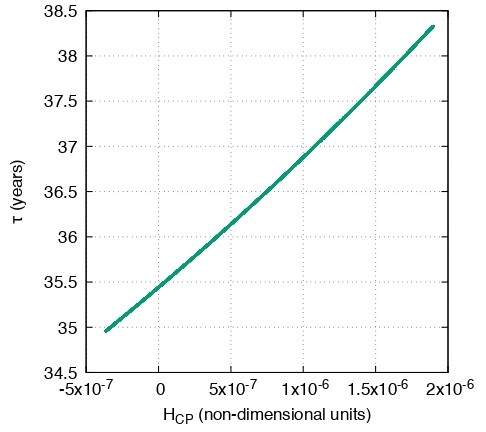}   \hspace{2cm}
\includegraphics[width=4.7cm]{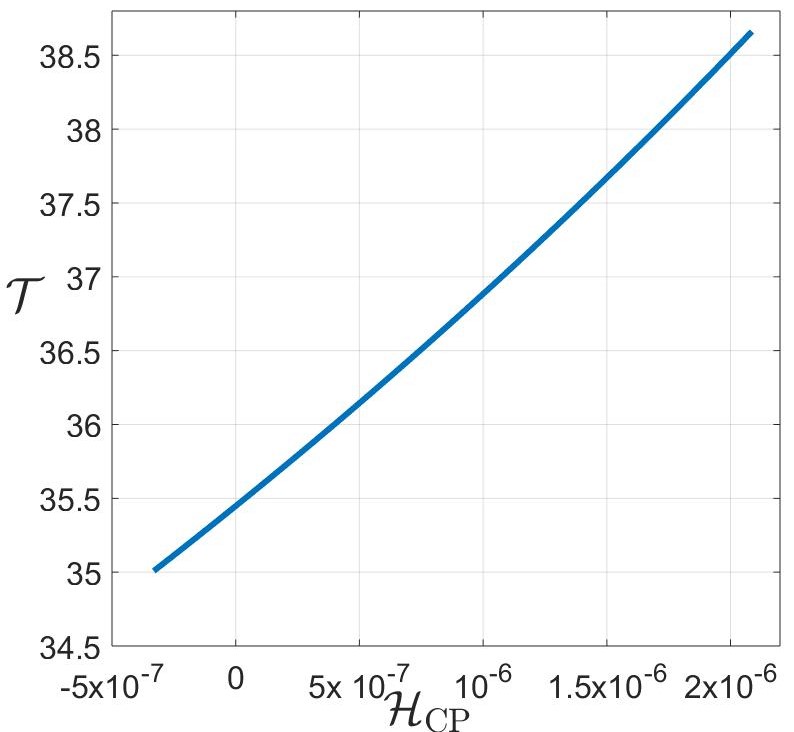}
\end{center}
\caption{On the right, the numerically computed period (left) $ \TTT$ of the hyperbolic periodic orbits as a function of ${\HH}_{\CP}$. On the left the approximated period, $\mathcal{T}^{\mathrm{aprox}}$ computed discarding the $\mathcal{O}(\delta^2)$ terms in~\eqref{formula_period_theorem}.}
\label{fig:PO_example_periods}
\end{figure}

\end{remark}

\begin{remark}\label{rem:period} From definition~\eqref{defhata0c0} of $\hat a_0$, we obtain that, if $\delta$ is small enough 
    $$
     \widehat{\mathcal{T}} (\hat \Gam_0;\delta) =  \frac{2\pi}{2\hat \Gam_0+1} +\mathcal{O}(\delta) \in \big (\pi + \mathcal{O}(\delta), 2\pi +  \mathcal{O}(\delta)\big ). 
    $$
\end{remark}


Our last result  is the comparison between  the stability of a periodic orbit of the coplanar Hamiltonian $\HH_\CP$ and the stability of the origin in the $h$-averaged Hamiltonian
belonging to the same energy level (see Corollary~\ref{cor:gammaCPAV}).

In order to analyze the character of the periodic orbits, we denote
by $X_{\CP}$ the vector field of the Hamiltonian $\HH_\CP$. For $(0, \Gam(t;L),0,h(t;L))$, a periodic orbit, the variational equation around the periodic orbit is given by
$$
\dot{z} = D X_\CP (0, \Gam(t;L),0, h(t;L)) z.
$$
Let $\Phi(t;\Gam_0,L)$ with $\Gam_0=\Gam(0;L)$, be the fundamental matrix satisfying the initial condition $\Phi(0;\Gam_0,L)=\mathrm{Id}$. The monodromy matrix is defined just by 
\begin{equation}\label{def:monodromy_matrix}
 \Phi (\mathcal{T} (\Gam_0;L);\Gam_0,L)
\end{equation}
with $\Gam_0=\Gam(0;L)$ and $\mathcal{T}(\Gam_0;L)$, the period of the considered periodic orbit. Since $\HH_\CP$ is a 2-degrees-of-freedom Hamiltonian, the eigenvalues of the monodromy matrix are of the form 
\begin{equation}\label{def:muCP}
\left ( \mu_\CP(\Gam_0;L)  , \big (\mu_\CP (\Gam_0;L) \big )^{-1}, 1,1 \right ). 
\end{equation}


We are interested now in comparing 
the eigenvalues of the monodromy matrix with the ones coming from the constant linear part around the origin of $\HH_\AV$,  the $h$-averaged Hamiltonian~\eqref{def:HamAV}. That is, we want to compare $\mu_\CP$ with 
$e^{\lambda_\AV   \mathcal{T}}$ where $\lambda_\AV$ has been analyzed in Proposition~\ref{prop:origin_modulus_eigenvalue}.  

In the following result, we provide a first order (up to an error of order $\mathcal{O}(\delta^{3/2})$) approximation of the eigenvalues of the monodromy matrix.

\begin{theorem} \label{prop:eigenvalues_monodromy} 
For $L\in (0,L_{\max}]$ and $\Gam \in \left (0,\frac{L}{2} \right )$, we define 
\begin{equation}\label{def:GammaAV}
\Gam_\CP^{(0)}(\Gam;L) = L \hat \Gam_\CP^{(0)}( \hat \Gam; \delta):= L \big (\hat \Gam+ \delta \hat c_0(\hat \Gam;\delta)), 
\end{equation}
with $\hat c_0$ defined in~\eqref{defhata0c0} and $\delta = \rho \alpha^3 L^4$.
We also introduce  
\begin{equation}\label{def:muCP0}
\mu_{\CP}^{(0)} (\Gam;L) = \mathrm{exp} \big (\mathcal{T}^{(0)}( \Gam;L) \lambda_{\AV} ( \Gam;L) \big ), \qquad \mathcal{T}^{(0)} (\Gam;L)= \frac{3\rho_0}{4L^7} \frac{2\pi}{\big |\hat a_0 (\hat \Gam;\delta) \big |}, 
\end{equation}
where  $\hat a_0$ is defined in~\eqref{defhata0c0} and $\lambda_\AV(\Gam;L)$ (see Proposition~\ref{prop:origin_modulus_eigenvalue} ) is an eigenvalue of the linearization around the origin of the $h$-averaged Hamiltonian $\HH_\AV$ for constant values $\Gam,L$ .

Then, there exist $\delta_0>0$ small enough and a constant $C_*>0$ such that, if  $L \in (0,L_{\max}]$, $\Gam_0= L \hat \Gam_0$ satisfy~\eqref{initial_condition_property} and $\delta=\rho \alpha^3 L^4 \in [0,\delta_0]$,  we have that
\begin{equation*}
|\mu_\CP(\Gam_0;L) - \mu_\CP^{(0)} (\Gam_\CP^{(0)} (\Gam_0;L);L)| \leq C_* \delta^{3/2},
\end{equation*}
where $\mu_\CP (\Gam_0;L)$ is an eigenvalue of the monodromy matrix in~\eqref{def:monodromy_matrix} (see also~\eqref{def:muCP}) corresponding to the linearized system around $(0,\Gam(t;L), 0, h(t;L))$, the periodic orbit of $\HH_\CP$ with initial condition $\Gam(0;L)=\Gam_0$, given in Theorem~\ref{thm:existence_periodic_orbits}.
\end{theorem}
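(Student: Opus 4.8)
The plan is to reduce the monodromy matrix \eqref{def:monodromy_matrix} to a two–dimensional area–preserving problem and then treat that problem as a perturbation of the linearized $h$–averaged flow by averaging along the fast angle $h$.

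\emph{Step 1: reduction to the normal variational equation.} Since $\Pi=\{\eta=\xi=0\}$ is invariant under the flow of $\HH_\CP$ and $\HH_\CP$ is even in $(\eta,\xi)$ (in \eqref{eq:expressionPoincareHCP1} the harmonics of $\HH_{\CP,1}$ carry only even multiples of $g$, cf. \eqref{def:psitilde} with $s=0$, so the mixed Hessian block $\partial^2_{(\eta,\xi),(\Gam,h)}\HH_\CP$ vanishes on $\Pi$), the variational equation along the periodic orbit $\gamma(t)=(0,\Gam(t;L),0,h(t;L))$ of Theorem~\ref{thm:existence_periodic_orbits} splits as the direct sum of a \emph{tangential} $2\times 2$ block in the $(\Gam,h)$–directions and a \emph{normal} $2\times 2$ Hamiltonian block $\dot w=J_2S(t)\,w$, with $w=(\eta,\xi)^{\mathrm{tr}}$, $S(t)=\partial^2_{(\eta,\xi)}\HH_\CP(\gamma(t))$ and $J_2$ the standard $2\times2$ symplectic matrix. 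The tangential block is the variational equation of the one–degree–of–freedom autonomous Hamiltonian obtained by restricting $\HH_\CP$ to $\Pi$, whose monodromy along a periodic orbit has the double eigenvalue $1$ (accounting for the pair $(1,1)$ in \eqref{def:muCP}). Hence $\mu_\CP(\Gam_0;L)$ is the nontrivial eigenvalue of the normal monodromy $M_\CP:=W(\mathcal{T}(\Gam_0;L))$, $\dot W=J_2S(t)W$, $W(0)=\Id$; as $\Tr(J_2S)\equiv 0$ we have $M_\CP\in \mathrm{SL}(2,\RR)$, so $\mu_\CP$ is determined by $\Tr M_\CP$, equivalently by the squared Floquet exponent $\nu_\CP^2$.

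\emph{Step 2: identifying the averaged part.} Inserting the first–order description of $\gamma$ from Proposition~\ref{prop:firstaproximation_periodic}, write $J_2S(t)=\Lambda_0(\Gam(t;L))+\alpha^3\Lambda_1(\Gam(t;L),h(t;L))$, with $\Lambda_0$ coming from the $\HH_0$–Hessian (a scalar multiple of $J_2$) and $\Lambda_1$ from the $\HH_{\CP,1}$–Hessian. Taylor–expand $\Lambda_0$ around the mean value $\Gam_\CP^{(0)}(\Gam_0;L)=L(\hat\Gam_0+\delta\hat c_0(\hat\Gam_0;\delta))$ from \eqref{def:GammaAV}, and use that by Proposition~\ref{prop:firstaproximation_periodic} the quantity $\Gam(t;L)-\Gam_\CP^{(0)}$ has a mean–zero oscillating leading part of size $\mathcal{O}(\delta)$ (the $\hat c_1,\hat c_2$ terms) while $h(t;L)$ performs one full revolution over the period. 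The $h$–average of $\alpha^3\Lambda_1$ is, by construction of $\HH_{\AV,1}$ in \eqref{def:HamAV}, the $\HH_{\AV,1}$–Hessian, and the mean of $\Lambda_0'(\Gam_\CP^{(0)})(\Gam(t;L)-\Gam_\CP^{(0)})$ absorbs exactly the $\mathcal{O}(\delta)$ drift of $\Gamma$ — this is precisely the content of Corollary~\ref{cor:gammaCPAV} together with \eqref{defhata0c0}. One obtains $J_2S(t)=\mathcal{M}_\AV(\Gam_\CP^{(0)}(\Gam_0;L))+\delta\,\widetilde{\mathcal{B}}(t)+\mathcal{O}(\delta^2)$, where $\mathcal{M}_\AV(\Gam)$ is the linearization of $\HH_\AV$ at the origin with parameter $\Gam$ (traceless, eigenvalues $\pm\lambda_\AV(\Gam;L)$, cf. Proposition~\ref{prop:origin_modulus_eigenvalue}) and $\widetilde{\mathcal{B}}$ is $\mathcal{O}(1)$, traceless and has zero mean over the leading period $2\pi/|\hat a_0(\hat\Gam_0;\delta)|$.

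\emph{Step 3: averaging and extraction of the eigenvalue.} Conjugating $w=e^{t\mathcal{M}_\AV}v$ and using variation of constants, $M_\CP=e^{\mathcal{T}\mathcal{M}_\AV}(\Id+\delta N_1+\delta^2 N_2+\cdots)$ with $N_1,N_2$ bounded and traceless. Since $\widetilde{\mathcal{B}}$ has zero mean over the \emph{leading} period, $\int_0^{\mathcal{T}}\widetilde{\mathcal{B}}=\mathcal{O}(\delta)$ (the residual coming from the $\mathcal{O}(\delta)$ mismatch between $\mathcal{T}$ and $2\pi/|\hat a_0|$), so the first averaging correction enters the effective Floquet matrix only at order $\delta^2$; consequently the true Floquet exponent $\nu_\CP$ satisfies $\nu_\CP^2=\lambda_\AV(\Gam_\CP^{(0)}(\Gam_0;L);L)^2+\mathcal{O}(\delta^3)$ in the range where $\lambda_\AV=\mathcal{O}(\delta)$ (near the resonance center $\Gam/L=\sl/2$, where $\lambda_\AV$ vanishes — there the cross term in $\det(\mathcal{M}_\AV+\mathcal{O}(\delta^2))$ is $\mathcal{O}(\delta)\cdot\mathcal{O}(\delta^2)$), and with relative error $\mathcal{O}(\delta^2)$ elsewhere. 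Simultaneously, evaluating $\mathcal{T}^{(0)}$ at the \emph{shifted} argument $\Gam_\CP^{(0)}$ rather than at $\Gam_0$ makes $\mathcal{T}^{(0)}(\Gam_\CP^{(0)}(\Gam_0;L);L)$ coincide with the true period $\mathcal{T}(\Gam_0;L)$ up to $\mathcal{O}(\delta^2)$ (compare $\hat a_0(\hat\Gam_0+\delta\hat c_0;\delta)$ with the denominator in \eqref{formula_period_theorem}). Finally, extracting $\mu_\CP=e^{\mathcal{T}\nu_\CP}$ from $\nu_\CP^2$ requires a square root: since $\nu\mapsto\nu^2$ is only $\tfrac12$–Hölder at $\nu=0$, the $\mathcal{O}(\delta^3)$ error in $\nu_\CP^2$ yields $|\nu_\CP-\lambda_\AV(\Gam_\CP^{(0)};L)|=\mathcal{O}(\delta^{3/2})$ (the worst cases $|\lambda_\AV|\gtrsim\delta^{3/2}$ and $|\lambda_\AV|\lesssim\delta^{3/2}$ both give $\mathcal{O}(\delta^{3/2})$), hence $|\mu_\CP(\Gam_0;L)-\mu_\CP^{(0)}(\Gam_\CP^{(0)}(\Gam_0;L);L)|\le C_*\delta^{3/2}$, uniformly for $\Gam_0$ in the admissible set \eqref{initial_condition_property}.

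\emph{Main obstacle.} The delicate part is Step 3: one must carry the averaging to second order and verify that the second–order correction perturbs the \emph{squared} Floquet exponent only at order $\delta^3$ precisely in the parameter range where $\lambda_\AV$ itself degenerates to order $\delta$; this is what keeps the unavoidable square–root loss from spoiling the estimate, and it is the reason the final bound is $\mathcal{O}(\delta^{3/2})$ rather than the $\mathcal{O}(\delta^2)$ valid away from the parabolic transitions. The accompanying bookkeeping point — that the comparison must use the \emph{corrected} value $\Gam_\CP^{(0)}$ and the \emph{corrected} period $\mathcal{T}^{(0)}(\Gam_\CP^{(0)};\cdot)$, so that the $\mathcal{O}(\delta)$ drift of $\Gamma$ along the orbit and the $\mathcal{O}(\delta)$ period correction cancel out of the leading comparison — is what ties the statement back to Corollary~\ref{cor:gammaCPAV} and Proposition~\ref{prop:firstaproximation_periodic}.
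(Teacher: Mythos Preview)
Your approach is essentially the one the paper takes: block–reduce to the $(\xi,\eta)$ normal variational equation, recenter at the corrected value $\sigma_0=\hat\Gam_0+\delta\hat c_0(\hat\Gam_0;\delta)$ so that the constant part is exactly $\mathcal{M}_\AV(\Gam_\CP^{(0)})$, exploit that the remaining oscillating piece has vanishing mean over the leading period, and then pay a square root when extracting the eigenvalue. The paper organizes Step~3 slightly differently: instead of the squared Floquet exponent it works with the trace $\tau=\mathrm{tr}\,\widehat\Psi$ and the explicit formula $\mu=\tfrac12\tau+\tfrac12\sqrt{\tau^2-4}$, and it introduces the single quantity
\[
\mathbf{L}=\min_{k\in\NN}\big|\,\widehat{\mathcal T}\,|\hat\lambda_0|-k\pi\,\big|
\]
to measure proximity to \emph{all} collisions $\mu_0=\pm1$. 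The determinant constraint $\det\widehat\Psi=1$ together with the zero–mean property is what upgrades the trace estimate to $\tau=\tau_0+\mathcal O(\mathbf L\,\delta^2)+\mathcal O(\delta^3)$, and then a three–case analysis in $\mathbf L$ (large, intermediate, $\lesssim\delta^{3/2}$) gives the uniform $\mathcal O(\delta^{3/2})$ bound on $\Delta\mu$.

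One point worth flagging in your write–up: you locate the square–root loss only at the $k=0$ degeneracy $\lambda_\AV\to0$ (near $\hat\Gam=\sl/2$), but the discriminant $\tau_0^2-4=4\sin^2(\widehat{\mathcal T}|\hat\lambda_0|)$ also vanishes at the $k=1,2$ points $\hat\Gam_0^{(1)},\hat\Gam_0^{(2)}$ of \eqref{def:Gamma0k}, where $|\lambda_\AV|$ is of order one yet the two monodromy eigenvalues still collide at $\pm1$. Your Floquet–exponent formulation sidesteps this (the exponent itself is regular there), but the passage from exponent to multiplier is not globally single–valued, and the trace route makes the uniform treatment cleaner. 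For Theorem~\ref{prop:eigenvalues_monodromy} the bound is $\mathcal O(\delta^{3/2})$ regardless, so this is a presentational rather than a logical gap; it becomes essential, however, for the refined statement in Theorem~\ref{thm:eigenvalues_monodromy_more}, where the excluded sets $I^{(k)}_\nu$ around $\hat\Gam_0^{(k)}$ must appear alongside $I^{(0)}_{\pm,\nu}$.
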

The proof of this result is postponed to Section~\ref{sec:sub_eigenvalues_perturbative} and relies on a first order perturbation analysis. 
\begin{corollary} 
For any $\hat \Gam_0 \in \left (0,\frac{1}{2}\right )$ there exists $\delta_0>0$ such that if $L, \Gam_0=\hat \Gam_0L$ satisfy~\eqref{initial_condition_property} and $\delta=\rho \alpha^3 L^4 \in [0,\delta_0]$, then the periodic orbit with initial condition $\Gam_0=\hat \Gam_0 L$ is of saddle type if $\hat \Gam_0 \in (\hat \Gam_-(\delta), \hat \Gam_+(\delta))$, with $\hat \Gam_{\pm}(\delta)$ defined in Theorem~\ref{thm:origin_average}, and of elliptic type otherwise.
%
%

In addition, there exists  $\delta_0>0$, uniform for $\hat \Gam_0 \in \left \{   \frac{\sl}{2} \right \} \cup \left ( 0,\frac{\sl}{4} \right ] \cup \left [m, \frac{1}{2} \right ) $, such that the periodic orbit with initial condition $\Gam_0=L\hat \Gam_0$  is of saddle type if $\hat \Gam_0=\frac{\sl}{2} $ and of elliptic type if $\hat\Gam_0 \in 
\left ( 0,\frac{\sl}{4} \right ] \cup \left [m, \frac{1}{2} \right ) $. 
\end{corollary}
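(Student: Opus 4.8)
The plan is to read off the character of the periodic orbit from the approximation of its monodromy multiplier furnished by Theorem~\ref{prop:eigenvalues_monodromy} (and its refinement Theorem~\ref{thm:eigenvalues_monodromy_more}). Since $\HH_\CP$ has two degrees of freedom, the orbit with initial datum $\Gam_0=L\hat\Gam_0$ is of saddle type iff the nontrivial multiplier $\mu_\CP(\Gam_0;L)$ is real and $\neq\pm1$, and of elliptic type iff $|\mu_\CP(\Gam_0;L)|=1$ and $\mu_\CP(\Gam_0;L)\neq\pm1$. By Theorem~\ref{prop:eigenvalues_monodromy}, $\mu_\CP(\Gam_0;L)=\mu_\CP^{(0)}\big(\Gam_\CP^{(0)}(\Gam_0;L);L\big)+\mathcal{O}(\delta^{3/2})$ with $\mu_\CP^{(0)}(\Gam;L)=\exp\big(\mathcal{T}^{(0)}(\Gam;L)\lambda_\AV(\Gam;L)\big)$; by Proposition~\ref{prop:origin_modulus_eigenvalue}, $\mathcal{T}^{(0)}\lambda_\AV$ depends only on $(\hat\Gam,\delta)$, equal to $2\pi\,\hat\lambda_\AV(\hat\Gam;\delta)/|\hat a_0(\hat\Gam;\delta)|$, and a direct computation of the linearization of $\HH_0$ at the origin shows it equals, at $\delta=0$, the purely imaginary number $i\,\theta_0(\hat\Gam)$ with $\theta_0(\hat\Gam)=\pi(1-16\hat\Gam-20\hat\Gam^2)/\big(2(1+2\hat\Gam)\big)$. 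Thus the whole proof reduces to locating $\mu_\CP^{(0)}\big(\Gam_\CP^{(0)}(\Gam_0;L);L\big)$ precisely enough to absorb the $\mathcal{O}(\delta^{3/2})$ remainder, using Theorem~\ref{thm:origin_average} (its $\hat\Gam_\pm$) to know when $\lambda_\AV$ is real positive (the saddle range of $\HH_\AV$) or purely imaginary (the center range).

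First the saddle statement. By Theorem~\ref{thm:origin_average}, $\sl/2\in(\hat\Gam_-(\delta),\hat\Gam_+(\delta))$ for all small $\delta$, and by \eqref{thm:exprGamma12} this interval has half-width $c\,\delta+\mathcal{O}(\delta^2)$ with $c=5U_2^{1,0}(\sl+3)\sqrt{3-2\sl-4\sl^2}/\big(3(16+20\sl)\big)>0$; a short computation from \eqref{defhata0c0} gives $|\hat c_0(\sl/2;0)|<c$, so the shifted value $\hat\Gam_\CP^{(0)}(\sl/2;\delta)=\sl/2+\delta\hat c_0(\sl/2;\delta)$ stays in $(\hat\Gam_-(\delta),\hat\Gam_+(\delta))$ at distance $\Theta(\delta)$ from its endpoints. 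Then $\lambda_\AV(\Gam_\CP^{(0)};L)>0$ is real and, by Proposition~\ref{prop:origin_modulus_eigenvalue}, $\mathcal{T}^{(0)}\lambda_\AV\ge c'\delta$ for some $c'>0$; hence $\mu_\CP=\exp(\mathcal{T}^{(0)}\lambda_\AV)+\mathcal{O}(\delta^{3/2})$ is real and $>1$, and the orbit is of saddle type. The same computation covers any $\hat\Gam_0$ whose shift $\hat\Gam_\CP^{(0)}(\hat\Gam_0;\delta)$ stays inside $(\hat\Gam_-(\delta),\hat\Gam_+(\delta))$ bounded away from the endpoints, which is the content of the ``$\hat\Gam_0\in(\hat\Gam_-(\delta),\hat\Gam_+(\delta))$'' clause (for $\hat\Gam_0=\sl/2$ it holds for all small $\delta$; for $\hat\Gam_0\neq\sl/2$ a choice of $\delta_0=\delta_0(\hat\Gam_0)$ makes this clause vacuous and one is in the elliptic case below).

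For the elliptic statements, assume $\hat\Gam_\CP^{(0)}(\hat\Gam_0;\delta)$ lies outside $[\hat\Gam_-(\delta),\hat\Gam_+(\delta)]$ and bounded away from it; this holds for any fixed $\hat\Gam_0\neq\sl/2$ once $\delta\le\delta_0(\hat\Gam_0)$ (the interval shrinks to $\{\sl/2\}$ and the shift is near-identity), and uniformly for $\hat\Gam_0\in(0,\sl/4]$ once $\delta_0$ is small (as $\sl/4$ is uniformly left of $\sim(\sl/2-c\delta,\sl/2+c\delta)$). Then $\lambda_\AV(\Gam_\CP^{(0)};L)$ is purely imaginary, so $\mathcal{T}^{(0)}\lambda_\AV=i\theta$ with $\theta$ bounded away from $0$; on $(0,\sl/4]$ one further has $\theta_0\in[\theta_0(\sl/4),\pi/2)\subset(0,\pi)$, so by continuity in $\delta$ also $\theta$ is bounded away from $\pi\ZZ$. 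Consequently $\mu_\CP^{(0)}=e^{i\theta}$ is bounded away from $\pm1$ and from the real line; as $\mu_\CP$ is $\mathcal{O}(\delta^{3/2})$-close to it and must be real or of unit modulus, it is of unit modulus and $\neq\pm1$, i.e.\ the orbit is elliptic. This disposes of $(0,\sl/4]$ and of every fixed $\hat\Gam_0\neq\sl/2$.

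The hard part is $[\sl,1/2)$. Here $\theta_0$ is monotone decreasing (one has $\tfrac{d}{d\hat\Gam}\tfrac{1-16\hat\Gam-20\hat\Gam^2}{1+2\hat\Gam}=-\tfrac{18+40\hat\Gam+40\hat\Gam^2}{(1+2\hat\Gam)^2}<0$) from $\theta_0(\sl)\approx-0.46\pi$ to $\theta_0(1/2)=-3\pi$, so it crosses $-\pi$ and $-2\pi$, and near those values $\mu_\CP^{(0)}=\pm1$: the soft argument above is then inconclusive and one must go one order further. I would invoke Theorem~\ref{thm:eigenvalues_monodromy_more} to extract the $\mathcal{O}(\delta)$ term of the multiplier near such a $1{:}k$ resonance: the transverse block of the monodromy is then $(\pm\Id)+\delta M+\mathcal{O}(\delta^{3/2})$ with $M\in\mathfrak{sl}(2,\RR)$, and the orbit is elliptic exactly when $\det M>0$. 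The entries of $M$ are obtained by integrating over one period the $h$-independent and the resonant quadratic-in-$(\xi,\eta)$ parts of $\al^3\HH_{\CP,1}$, i.e.\ explicit combinations of $U_2^{0,0},U_2^{1,0},U_2^{2,0}$ and of the inclination/Hansen functions of Tables~\ref{tab:U}--\ref{tab:functionsPoincare}; the crux is to verify that $\det M>0$ on all of $[\sl,1/2)$, which excludes (inverse-)hyperbolicity and gives ellipticity with $\delta_0$ uniform over that range. Putting the three cases together — the band $(\sl/4,\sl/2)\cup(\sl/2,\sl)$ being omitted from the uniform statement precisely because there $\hat\Gam_0$ may or may not lie in the shrinking saddle interval $(\hat\Gam_-(\delta),\hat\Gam_+(\delta))$ (or in the parabolic zone bordering it) depending on $\delta$ — proves both assertions. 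The two genuinely technical points are this determinant sign through the $1{:}k$ resonances of $[\sl,1/2)$ and the elementary but delicate bookkeeping of the first-order shift $\hat c_0$ (notably $|\hat c_0(\sl/2;0)|<c$); everything else is the near-identity estimate already contained in Theorems~\ref{prop:eigenvalues_monodromy}--\ref{thm:eigenvalues_monodromy_more}.
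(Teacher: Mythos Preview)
Your approach is the same as the paper's---reduce via Theorem~\ref{prop:eigenvalues_monodromy} to the approximating multiplier $\mu_\CP^{(0)}$ evaluated at the shifted point $\hat\Gam_\CP^{(0)}(\hat\Gam_0;\delta)$, then invoke Theorem~\ref{thm:origin_average} to determine when the averaged eigenvalue $\lambda_\AV$ is real (saddle) or imaginary (center). The paper's own proof is a single sentence (``direct consequence of Theorems~\ref{prop:eigenvalues_monodromy} and~\ref{thm:origin_average}''), so you have in fact gone much further in unpacking the argument.

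Two of your observations are substantive details the paper does not spell out. First, the saddle conclusion at $\hat\Gam_0=\sl/2$ does require the shifted value $\sl/2+\delta\hat c_0(\sl/2;\delta)$ to remain inside $(\hat\Gam_-(\delta),\hat\Gam_+(\delta))$, i.e.\ $|\hat c_0(\sl/2;0)|<C_0$; your numerical check ($\hat c_0\approx0.176<C_0\approx0.256$) is correct and genuinely needed. Second, your identification of the resonances $\hat\Gam_0^{(1)}\approx0.19$ and $\hat\Gam_0^{(2)}\approx0.34$ lying in $[\sl,1/2)$, where $\mu_\CP^{(0)}=\pm1$ and the bare $\mathcal{O}(\delta^{3/2})$ bound of Theorem~\ref{prop:eigenvalues_monodromy} is inconclusive, is a real subtlety the one-line proof glosses over.

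For the \emph{pointwise} statement (first paragraph of the Corollary) this is harmless: since $\hat c_0(\sigma;0)>0$ on $(0,1/2)$, the shift moves $\sigma_0$ off the resonance by $\Theta(\delta)$, so $\Im\mu_\CP^{(0)}(\sigma_0)=\Theta(\delta)\gg\delta^{3/2}$, hence $\mu_\CP$ is non-real and therefore of unit modulus. (Your phrasing ``bounded away from $\pm1$'' is slightly loose here; it should read ``at distance $\Theta(\delta)$''.) For the \emph{uniform} statement on $[\sl,1/2)$, however, your concern is legitimate: for each $\delta$ there is a $\mathcal{O}(\delta^{3/2})$-window of values $\hat\Gam_0$ near $\hat\Gam_0^{(k)}-\delta\hat c_0(\hat\Gam_0^{(k)};0)$ whose shift lands within $\mathcal{O}(\delta^{3/2})$ of $\hat\Gam^{(k)}(\delta)$, and there neither Theorem~\ref{prop:eigenvalues_monodromy} nor Theorem~\ref{thm:eigenvalues_monodromy_more} (nor the trace estimate of Lemma~6.3) decides the character. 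Your proposed resolution---compute the first-order block $M$ at the $1{:}k$ resonance and verify $\det M>0$---is a reasonable way to close this, though you leave the verification as a claim. The paper does not address this point at all.
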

This result is a direct consequence of Theorems~\ref{prop:eigenvalues_monodromy} and~\ref{thm:origin_average}.

\begin{figure}[h]
\begin{center}
\includegraphics[width=7.8cm]{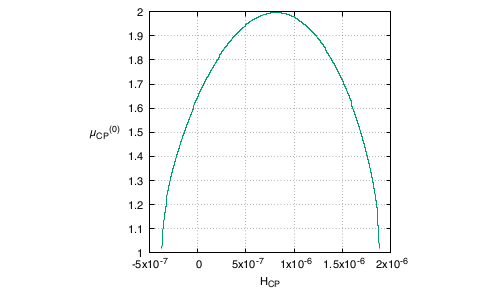}  
\includegraphics[width=5cm]{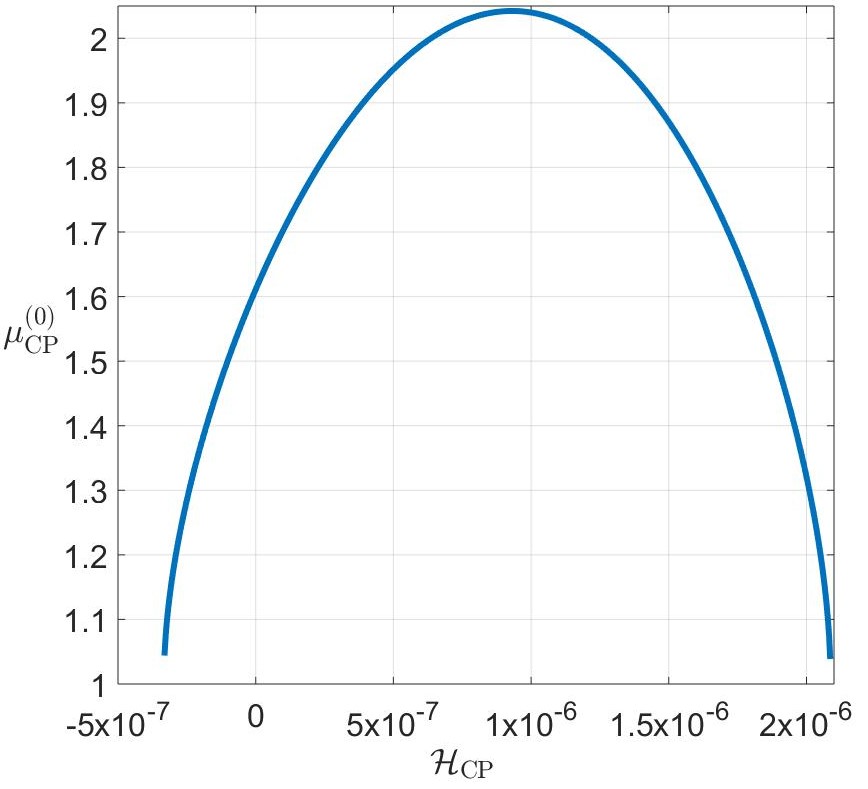}
\end{center}
\caption{The value of $\mu_\CP$ (the eigenvalue greater than 1) as a function of  ${\HH}_{\CP}$ for the hyperbolic periodic orbits. On the right, the numerical computations performed in~\cite{AlessiBGGP23}, and on the left, the corresponding approximated value $\mu_\CP^{(0)}$ in Theorem~\ref{prop:eigenvalues_monodromy}.}
\label{fig:approximated_eigenvalues}
\end{figure}

The theoretical approximation result in Theorem~\ref{prop:eigenvalues_monodromy}, agrees with the ones computed numerically in~\cite{AlessiBGGP23}. Indeed, in  Figure~\ref{fig:approximated_eigenvalues}, we present the comparison between the first order approximated eigenvalue $\mu_\CP^{(0)}$ and the numerical computation of $\mu_\CP$ performed in ~\cite{AlessiBGGP23}. The analysis is restricted to values of $\hat \Gam_0$ close to $\frac{\sl}{2}\sim 0.058257569495584$ so that the periodic orbit could be (and from the numerical point of view is) hyperbolic. 

\begin{remark}
By Corollary~\ref{cor:gammaCPAV}, we obtain that 
$$
\HH_\CP(0,\Gam_0,0,0)= \HH_\AV(0,\Gam_\CP^{(0)}(\Gam_0;L),0) + \mathcal{O}(\delta^2 )
$$
with $\Gam_{\CP}^{(0)}$ defined in~\eqref{def:GammaAV}. 
Therefore, as a straightforward consequence of Theorem~\ref{prop:eigenvalues_monodromy}
we conclude that the eigenvalues of the monodromy matrix of a periodic orbit of period $\mathcal{T}_{\mathbf{E}}$ lying in an energy level $\mathbf{E}$ are well approximated by 
$e^{\mathcal{T}_{\mathbf{E}} \lambda_{\mathbf{E}}}$ with $\lambda_{\mathbf{E}}$ an eigenvalue associated to the origin as critical point of the $h$-averaged hamiltonian in the energy level $\{\HH_{\AV} = \mathbf{E}\}$ .  

\begin{figure}[h]
\begin{center}
\includegraphics[width=5cm]{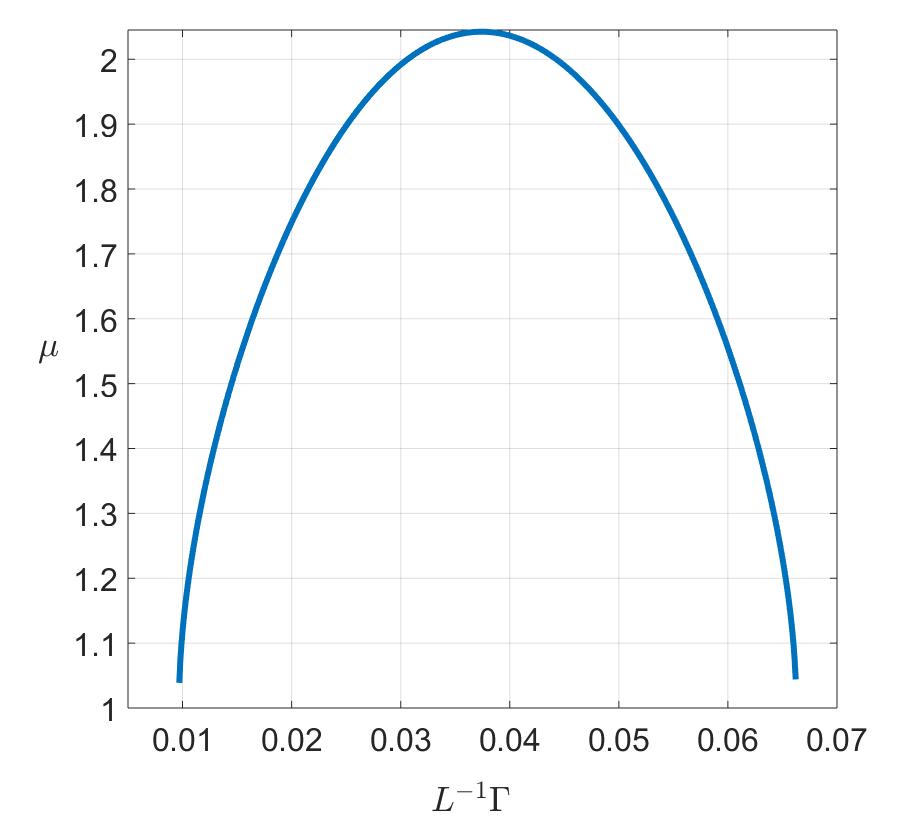}
\hspace{0.1cm}
\includegraphics[width=5cm]{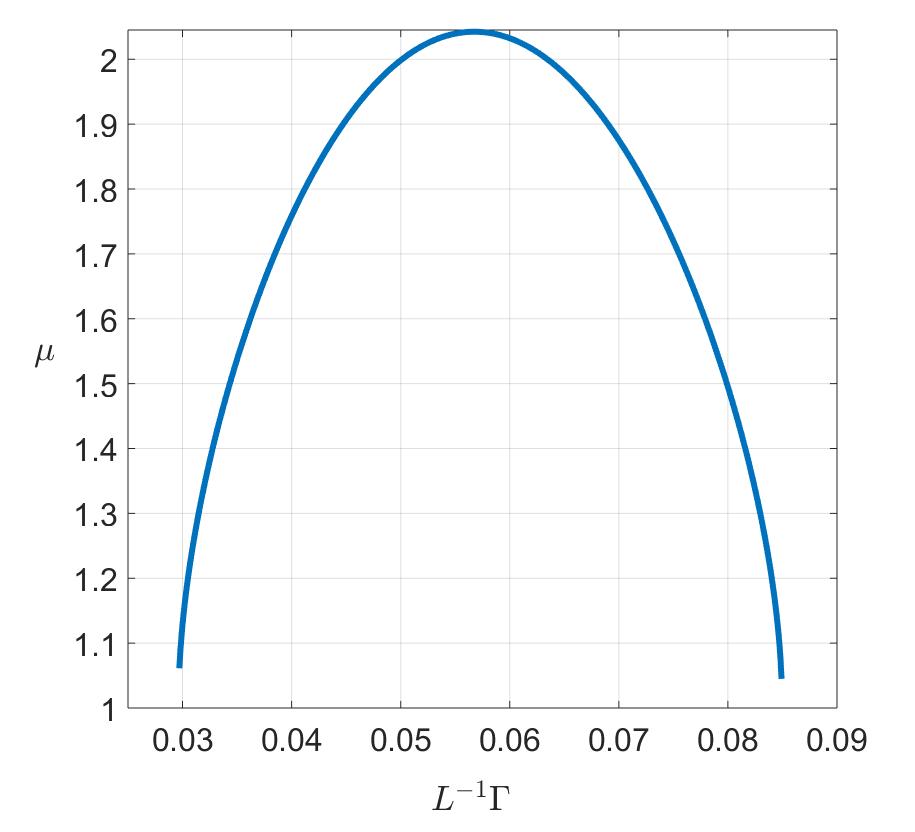}
\end{center}
\caption{Approximated value of the eigenvalue greater than $1$ associated to the monodromy matrix as a function of $L^{-1}  {\Gamma}$. On the left for the Hamiltonian $\HH_\CP$ and on the right the corresponding one for the $h$-averaged Hamiltonian $\HH_\AV$.}
\label{fig:eigenvalues_Gamma}
\end{figure} 

Note, however, that, in terms of the initial condition of the periodic orbit, namely $\Gam(0;L)=\Gam_0$, the information about the eigenvalues of the monodromy matrix associated to a periodic orbit with initial condition $\Gam_0$ of the coplanar Hamiltonian comes from the $h$-averaged Hamiltonian corresponding to the \textit{corrected} value of the parameter $\Gam$ given by $\Gam_\CP^{(0)}(\Gam_0;L)$ in~\eqref{def:GammaAV}. 
To show this discrepancy, for $L=1$, in Figure~\ref{fig:eigenvalues_Gamma} they are represented the eigenvalues of the monodromy matrix associated to the periodic orbit with respect to the initial condition $\Gam/L$ as well as $e^{\mathcal{T}(\Gam;L) \lambda_\AV( \Gam;L)}$ the eigenvalues of the linearization of $\HH_\AV$ corresponding to the same parameter $\Gam/L$.  

For Galileo, which corresponds to $\HH_\CP=0$ and semi-major axis $a=1$ (in the normalized units). The value 
$\hat \Gam_0=0.058788134221194$ corresponding to the initial condition $\HH_\CP (0,L \hat \Gam_0,0,0)=0$, can be numerically computed. Then the corrected value for $\Gam$ is 
$ 
\Gam_\CP^{(0)}(\Gam_0;L) = 0.079342370619096. 
$ 
Therefore, we have that  the approximated eigenvalue is  
$$
\mu_\CP^{(0)}(\Gam_\CP^{(0)}(\Gam_0;L);L)=  1.610955038638576.
$$
However, if the correction in $\Gamma$ is not considered in the $h$-averaged Hamiltonian and the period is only approximated by the averaged Hamiltonian, we obtain
$$  
e^{\lambda_\AV(\Gam_0;L) \TTT(\Gam_0;L)} = 2.038395173606618.
$$ 
Therefore, the difference between the eigenvalues of the monodromy and the corresponding orbit in the $h$-averaged system, namely with the same initial condition $\Gam_0$, can not be \emph{a priori} neglected. 
\end{remark}

The last result of this work can be seen as a refinement of Theorem~\ref{prop:eigenvalues_monodromy}. It gives a more accurate bound for the eigenvalues of the monodromy matrix for the most part of the values of $L, \Gam_0$. In order to state such a result, we introduce the values $\widehat \Gam^{(k)}_0  \in \left (0,\frac{1}{2}\right )$ satisfying that 
\begin{equation}\label{def:Gamma0k}
 \frac{2\pi}{\hat a_0\big (\widehat \Gam^{(k)}_0;0 \big )}   |1- 16 \widehat \Gam^{(k)}_0 - 20 (\widehat \Gam^{(k)}_0)^2|  = k \pi,\qquad k=1,2. 
\end{equation}
That is $  |1- 16 \widehat \Gam^{(k)}_0 - 20 (\widehat \Gam^{(k)}_0)^2| = 2 (1+2\widehat \Gam^{(k)}_0)k$. 
These values, $\widehat \Gam^{(k)}_0 $, can be analytically computed (see Table~\ref{table_sigma} for the numerical value) because they are the zeros of suitable polynomial of degree $2$. 
\begin{table}[h!]
\centering
\begin{tabular}{c |c  } 
$k$ &   $\hat \Gam^{(k)}_0 $ \\
\hline
$1$ & $0.189897948556636 $   \\
$ 2$ &   $0.338516480713450$  
\end{tabular}
\caption{
The values of $\widehat{\Gam}_0^{(k)}$. For $k=3$, $\widehat{\Gam}_0^{(3)}=1/2$ and for $k\geq 4$, the values of $\widehat{\Gam}_0^{(k)}$ are out of the interval $\left (0,\frac{1}{2}\right )$. 
}
\label{table_sigma}
\end{table}

We also introduce the constant $C_0$ defined   through~\eqref{thm:exprGamma12} in Theorem~\ref{thm:origin_average}, as
\begin{equation}\label{def:Cm}
  C_0 = \frac{5 U_{2}^{1,0}(\sl +3)\sqrt{3- 2 \sl -4 \sl ^2}}{3(16+20 \sl)}   
\end{equation}
in such a way that $\Gam_{\pm}(L) L^{-1} = \frac{\sl}{2} \mp C_0 \delta + \mathcal{O}(\delta^2)$. We recall that $\lambda_\AV (\Gam_{\pm}(L);L)=0$. 

\begin{lemma} \label{lem:Gamkpm} There exists $\delta_0>0$ such that if $L$ satisfies that $\delta= \rho \alpha^3 L^4 \in [0,\delta_0]$ then,
there exist $\Gam^{(k)}(L)$, $k=1,2$ such that 
$$ 
\mu_\CP^{(0)} ( \Gam^{(k)}(L);L)  = (-1)^{k},
$$   
where $\Gam^{(k)}(L)= L \widehat{\Gam}^{(k)}(\delta)$ (with $\delta = \rho \alpha^3 L^4$) satisfy
$$
\big | \widehat{\Gam}^{(k)}(\delta) - \widehat{\Gam}^{(k)}_0  \big |\leq M \delta^2,
$$
with the value $\hat \Gam^{(k)}_0$ defined by~\eqref{def:Gamma0k} (see also Table~\ref{table_sigma}) and the constant $M$ is independent of $L$. 
 
In addition, if either $\Gam \neq \Gam^{(k)}(L)$ or $\Gam \neq \Gam_{\pm}(L)$, then $\mu_\CP^{(0)}(\Gam;L) \neq \pm 1$.
\end{lemma}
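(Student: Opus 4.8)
The plan is to work with the explicit first-order expression of $\mu_\CP^{(0)}(\Gam;L)$ coming from \eqref{def:muCP0}, namely $\mu_\CP^{(0)}(\Gam;L)=\exp\bigl(\mathcal T^{(0)}(\Gam;L)\,\lambda_\AV(\Gam;L)\bigr)$, and to determine when this equals $\pm 1$. Since $\mathcal T^{(0)}(\Gam;L)=\tfrac{3\rho_0}{4L^7}\,\tfrac{2\pi}{|\hat a_0(\hat\Gam;\delta)|}$ is real, strictly positive and bounded away from zero and infinity (using $\hat\Gam\in(0,\tfrac12)$ and Remark \ref{rem:period}), the equation $\mu_\CP^{(0)}=\pm1$ forces $\lambda_\AV(\Gam;L)$ to be purely imaginary, i.e. we are in the \emph{center} regime of Theorem \ref{thm:origin_average}, where $\hat\Gam\notin(\hat\Gam_-(\delta),\hat\Gam_+(\delta))$. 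Writing $-i\lambda_\AV(\Gam;L)=:\omega_\AV(\Gam;L)>0$ there, the condition $\mu_\CP^{(0)}=(-1)^k$ becomes
\[
\mathcal T^{(0)}(\Gam;L)\,\omega_\AV(\Gam;L)=k\pi,\qquad k\in\ZZ_{\geq 1}.
\]
First I would substitute the explicit formula for $\lambda_\AV$ (Remark \ref{rmk:first_expression_eigenvalues}, invoked via Proposition \ref{prop:origin_modulus_eigenvalue}) together with the scaling $\lambda_\AV(\Gam;L)=\tfrac{3\rho_0}{4L^7}\hat\lambda_\AV(\Gam/L;\delta)$; the prefactors $\tfrac{3\rho_0}{4L^7}$ cancel against the ones in $\mathcal T^{(0)}$, leaving a relation purely between $\hat\Gam=\Gam/L$ and $\delta$:
\[
\frac{2\pi}{|\hat a_0(\hat\Gam;\delta)|}\,\bigl|\hat\lambda_\AV(\hat\Gam;\delta)\bigr|=k\pi .
\]

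Next I would analyze this at $\delta=0$. At leading order $\hat a_0(\hat\Gam;0)=-(1+2\hat\Gam)$ and, from the explicit eigenvalue formula at $\delta=0$, $|\hat\lambda_\AV(\hat\Gam;0)|$ reduces (up to the same overall constant absorbed above) to a multiple of $|1-16\hat\Gam-20\hat\Gam^2|$ — this is precisely the polynomial appearing in \eqref{def:Gamma0k}. Hence the $\delta=0$ equation is
\[
\frac{2\pi}{\hat a_0(\widehat\Gam_0^{(k)};0)}\,\bigl|1-16\widehat\Gam_0^{(k)}-20(\widehat\Gam_0^{(k)})^2\bigr|=k\pi,
\]
equivalently $|1-16\widehat\Gam_0^{(k)}-20(\widehat\Gam_0^{(k)})^2|=2(1+2\widehat\Gam_0^{(k)})k$, a quadratic in $\widehat\Gam_0^{(k)}$ on each sign-branch, whose relevant roots in $(0,\tfrac12)$ are exactly those tabulated (for $k=1,2$; $k=3$ gives the endpoint $\tfrac12$ and $k\geq4$ leaves the interval, as noted). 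One must check that these roots lie in the center regime, i.e. outside $(\tfrac{\sl}{4\phantom{}},\dots)$ — more precisely outside the interval $[\hat\Gam_-(0),\hat\Gam_+(0)]=\{\tfrac{\sl}{2}\}$ at $\delta=0$ — which is immediate since $\widehat\Gam_0^{(1)},\widehat\Gam_0^{(2)}>\tfrac{\sl}{2}$.

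Then I would upgrade to $\delta>0$ by the implicit function theorem. Define $F(\hat\Gam;\delta):=\tfrac{2\pi}{|\hat a_0(\hat\Gam;\delta)|}|\hat\lambda_\AV(\hat\Gam;\delta)|-k\pi$; it is smooth near $(\widehat\Gam_0^{(k)},0)$ because there $\hat\lambda_\AV\neq 0$ (we are strictly inside a center region, away from the parabolic points $\hat\Gam_\pm$), so the absolute value is smooth. I would compute $\partial_{\hat\Gam}F(\widehat\Gam_0^{(k)};0)$ from the quadratic and check it is nonzero — this is where one uses that $\widehat\Gam_0^{(k)}$ is a simple root of the relevant quadratic, which can be verified directly from the two tabulated numerical values (they are not double roots). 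The IFT then yields a unique smooth branch $\hat\Gam=\widehat\Gam^{(k)}(\delta)$ with $\widehat\Gam^{(k)}(0)=\widehat\Gam_0^{(k)}$, and Taylor's theorem with the $\delta$-derivatives bounded uniformly in $L$ (all functions in play are analytic in $(\hat\Gam,\delta)$ on the relevant compact set, by Proposition \ref{prop:origin_modulus_eigenvalue} and \eqref{defhata0c0}) gives $|\widehat\Gam^{(k)}(\delta)-\widehat\Gam_0^{(k)}|\leq M\delta^2$ with $M$ independent of $L$; here one also uses that, by symmetry of the construction, the linear-in-$\delta$ correction vanishes — alternatively, if it does not vanish one simply states the bound as $O(\delta)$, but the structure of $\hat a_0$ and the eigenvalue formula (both even in the relevant combination) forces it, and I would verify this by differentiating $F$ once in $\delta$ at $(\widehat\Gam_0^{(k)},0)$. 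Setting $\Gam^{(k)}(L)=L\,\widehat\Gam^{(k)}(\delta)$ gives the claimed points, and $\mu_\CP^{(0)}(\Gam^{(k)}(L);L)=(-1)^k$ by construction.

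Finally, for the last assertion — that $\mu_\CP^{(0)}(\Gam;L)\neq\pm1$ unless $\Gam\in\{\Gam^{(k)}(L),\Gam_\pm(L)\}$ — I would argue by cases on the three regimes of Theorem \ref{thm:origin_average}. In the saddle regime $\hat\Gam\in(\hat\Gam_-,\hat\Gam_+)$, $\lambda_\AV$ is real and positive, $\mathcal T^{(0)}>0$, so $\mu_\CP^{(0)}=e^{\mathcal T^{(0)}\lambda_\AV}>1$, never $\pm1$. At the endpoints $\Gam=\Gam_\pm(L)$, $\lambda_\AV=0$ so $\mu_\CP^{(0)}=1$ (these are the excluded points). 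In the center regime, $\mu_\CP^{(0)}=e^{i\mathcal T^{(0)}\omega_\AV}$ lies on the unit circle and equals $\pm1$ exactly when $\mathcal T^{(0)}\omega_\AV\in\pi\ZZ$; by the monotonicity statements in Proposition \ref{prop:origin_modulus_eigenvalue} ($|\lambda_\AV|$ is monotone on each of $(0,\hat\Gam_-)$ and $(\hat\Gam_+,\tfrac12)$) together with the fact that $\mathcal T^{(0)}(\Gam;L)=\tfrac{3\rho_0}{4L^7}\tfrac{2\pi}{|\hat a_0(\hat\Gam;\delta)|}$ is also monotone in $\hat\Gam$, the product $\mathcal T^{(0)}\omega_\AV$ is strictly monotone on each center interval, so on each interval it hits each value $k\pi$ at most once; counting the attainable range of $\mathcal T^{(0)}\omega_\AV$ (which, from the $\delta=0$ computation, runs over values comparable to $\tfrac{2\pi}{1+2\hat\Gam}|1-16\hat\Gam-20\hat\Gam^2|$, staying below $3\pi$ on $(0,\tfrac12)$, cf.\ Table \ref{table_sigma}) shows the only solutions are $\Gam^{(1)}(L)$ and $\Gam^{(2)}(L)$, for $\delta$ small enough. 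The main obstacle is the bookkeeping in this last step — establishing the monotonicity of $\mathcal T^{(0)}\omega_\AV$ on the center intervals and ruling out spurious solutions $\mu_\CP^{(0)}=\pm1$ — since it requires combining the qualitative monotonicity of $|\lambda_\AV|$ from Proposition \ref{prop:origin_modulus_eigenvalue} with explicit control of the period factor uniformly in $L$; everything else is a routine IFT plus quadratic-formula computation.
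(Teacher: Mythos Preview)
Your approach via the implicit function theorem is essentially the paper's. The paper makes one simplification you do not: it squares the equation and works with
\[
H_k(\sigma;\delta)=\mathbf a^2(\sigma)\bigl(1+\delta\,U_2^{0,0}\bigr)+\delta^2\mathbf c_+(\sigma)\mathbf c_-(\sigma)-4k^2\hat a_0^2(\sigma;\delta),
\]
which is polynomial in $\sigma$ (no absolute values or square roots). This pays off when justifying the $O(\delta^2)$ bound: one simply computes $\partial_\delta H_k(\widehat\Gam_0^{(k)};0)$ and sees it vanishes because $\mathbf a^2(\widehat\Gam_0^{(k)})\,U_2^{0,0}$ exactly cancels $-8k^2\hat a_0\,\partial_\delta\hat a_0$ at the root (both equal $4k^2(1+2\widehat\Gam_0^{(k)})^2U_2^{0,0}$, using the defining relation $\mathbf a^2=4k^2(1+2\sigma)^2$). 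Your appeal to ``symmetry'' or being ``even in the relevant combination'' is not the reason; it is this algebraic coincidence, and you would discover it upon carrying out the differentiation you promise.

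There is, however, a genuine gap in your uniqueness argument. You assert that $\mathcal T^{(0)}\omega_\AV$ is strictly monotone on each center interval because $|\lambda_\AV|$ and $\mathcal T^{(0)}$ are each monotone. But on $(\hat\Gam_+,\tfrac12)$---where both $\widehat\Gam_0^{(1)}$ and $\widehat\Gam_0^{(2)}$ lie---Proposition~\ref{prop:origin_modulus_eigenvalue} gives $|\lambda_\AV|$ \emph{increasing} while $\mathcal T^{(0)}\propto 1/|\hat a_0|$ is \emph{decreasing}; the product's monotonicity does not follow. The claim is nonetheless true: at $\delta=0$ the product is $\tfrac{\pi}{2(1+2\sigma)}|1-16\sigma-20\sigma^2|$, whose $\sigma$-derivative on $\sigma>\tfrac{\sl}{2}$ equals $\tfrac{\pi}{2}(40\sigma^2+40\sigma+18)/(1+2\sigma)^2>0$ (this very computation appears in the paper's proof of Theorem~\ref{thm:eigenvalues_monodromy_more}). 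You need this explicit derivative, not the individual monotonicities, and it then perturbs to small $\delta$.
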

This lemma is proven in Section~\ref{subsec:lastlema}. 

\begin{theorem}\label{thm:eigenvalues_monodromy_more} 
Assume that we are under the conditions of Theorem~\ref{prop:eigenvalues_monodromy} and denote  $\Gam_\CP^{(0)} (\Gam_0;L) = L\hat \Gam_\CP^{(0)}(\hat \Gam_0;\delta)$, as in~\eqref{def:GammaAV}. 

Fix $C \in (0,C_0)$ (see \eqref{def:Cm}), $\nu \in [1,2]$ and consider the sets, for $k=1,2$, defined for $\delta>0$ as
$$
I_{\pm,\nu}^{(0)} (\delta) = \left \{ \sigma \in \mathbb{R},\, \left |\sigma - \frac{\sl}{2} \mp C_0 \delta  \right |\leq C \delta^\nu  \right \}, \;
I^{(k)}_\nu (\delta)= \left \{  \sigma \in \mathbb{R},\, \left |\sigma -  \widehat{\Gam}_0^{(k)} \right |\leq C \delta^{\frac{1+\nu}{2}} \right \}.
$$
There exists a constant $C_*>0$  and $\delta_0>0$ small enough, such that, for $\delta \in [0,\delta_0]$, 
if  $\hat \Gam_\CP^{(0)}(\hat \Gam_0;\delta) \notin I_{\pm,\nu}^{(0)}(\delta) \cup I_\nu^{(1)}(\delta) \cup I_\nu ^{(2)}(\delta)$  then 
$$
\big |\mu_\CP(\Gam_0;L) - \mu_\CP^{(0)} (\Gam^{(0)}_\CP(\Gam_0;L);L) \big | \leq C_* \delta^{\frac{5-\nu}{2}}.
$$ 
As a consequence when $
\hat   \Gam_\CP^{(0)}(\hat \Gam_0;\delta) \notin I_{\pm,1}^{(0)}(\delta) \cup I_1^{(1)}(\delta)
\cup I_1^{(2)}(\delta)$, then 
$$
\big |\mu_\CP(\Gam_0;L) - \mu_\CP^{(0)} (\Gam^{(0)}_\CP(\Gam_0;L);L) \big | \leq C_* \delta^{2}.
$$
\end{theorem}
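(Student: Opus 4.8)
The plan is to refine the perturbative estimate of Theorem~\ref{prop:eigenvalues_monodromy} by performing a more careful analysis of the monodromy matrix. Recall that, by Proposition~\ref{prop:firstaproximation_periodic}, the periodic orbit $(0,\Gam(t;L),0,h(t;L))$ of $\HH_\CP$ has an explicit first-order-in-$\delta$ expansion, with $\hat h_0(s;\delta)=\hat a_0(\hat\Gam_0;\delta)s$ and the $\mathcal{O}(\delta)$ corrections $\hat\Gam_1$, $\hat h_1$ given by trigonometric polynomials in $\hat a_0 s$. The variational equation $\dot z = DX_\CP(0,\Gam(t;L),0,h(t;L))z$ decouples into a block acting on the $(\eta,\xi)$ plane (transverse to the invariant plane $\Pi$) and a block tangent to $\Pi$; the latter contributes the trivial eigenvalue pair $(1,1)$. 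The nontrivial eigenvalue $\mu_\CP(\Gam_0;L)$ comes from the $2\times2$ transverse block, whose coefficients are $2\pi/\hat a_0$-periodic (in rescaled time $s$) functions depending on $\Gam(t;L)$ and $h(t;L)$. First I would write this transverse linear system as $\dot w = \big(A_0(\hat\Gam_\CP^{(0)};\delta) + \delta\, B_1(s;\delta) + \delta^2 B_2(s;\delta)\big)w$, where $A_0$ is the constant matrix obtained by freezing $\Gam$ at the corrected value $\Gam_\CP^{(0)}(\Gam_0;L)$ — this is precisely the linearization of $\HH_\AV$ at the origin, whose eigenvalues are $\pm\lambda_\AV$ — and $B_1$, $B_2$ are the $s$-periodic perturbations arising from the oscillation of $\Gam$ and $h$ along the orbit. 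The choice of $\Gam_\CP^{(0)}$ rather than $\Gam_0$ is what kills the would-be $\mathcal{O}(\delta)$ shift: by the Remark after Corollary~\ref{cor:gammaCPAV}, the $h$-average of the $\mathcal{O}(\delta)$ part of the vector field along the orbit is absorbed into this correction, so $\langle B_1\rangle = 0$ in the appropriate averaged sense (or more precisely its average is already accounted for in $A_0$).

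The core of the argument is then a second-order averaging / Magnus-type expansion for the monodromy $\Phi(\mathcal{T};\Gam_0,L)$. Writing the fundamental solution as $\Phi = e^{sA_0}\,\Psi(s)$ with $\Psi$ solving $\dot\Psi = \delta\, \widetilde B_1(s)\Psi + \mathcal{O}(\delta^2)$ where $\widetilde B_1 = e^{-sA_0}B_1 e^{sA_0}$, the first-order term in $\log\Phi$ over one period is $\delta\int_0^{\mathcal{T}^{(0)}}\widetilde B_1(s)\,ds$ and the $\mathcal{O}(\delta^2)$ term involves the iterated integral of $\widetilde B_1$. Because $B_1(s)$ is a trigonometric polynomial in the single frequency $\hat a_0$ with zero mean and $A_0$ has eigenvalues $\pm\lambda_\AV$, the integral $\int_0^{\mathcal{T}^{(0)}}\widetilde B_1(s)\,ds$ is controlled by resonance denominators of the form $\hat a_0 \pm 2\lambda_\AV$, $2\hat a_0 \pm 2\lambda_\AV$, etc. Away from the bad sets, these denominators are bounded below, so the $\mathcal{O}(\delta)$ contribution to $\mu_\CP - \mu_\CP^{(0)}$ vanishes identically (not just is small) and the error is genuinely $\mathcal{O}(\delta^2)$. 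The sets $I^{(0)}_{\pm,\nu}$ and $I^{(k)}_\nu$ are exactly where these small divisors appear: $I^{(0)}_{\pm,\nu}$ sits near $\Gam_\pm(L)$ where $\lambda_\AV\to 0$ (so $\mu_\CP^{(0)}\to 1$ and the eigenvalues collide — a parabolic degeneration), and $I^{(k)}_\nu$ sits near $\Gam^{(k)}(L)$ where, by Lemma~\ref{lem:Gamkpm}, $\mu_\CP^{(0)} = (-1)^k$, i.e. $\mathcal{T}^{(0)}\lambda_\AV \in i\pi\mathbb{Z}$, which is the resonance $2|\lambda_\AV|\,\mathcal{T}^{(0)}/(2\pi) \in \mathbb{Z}$ between the oscillation frequency of the orbit and the (imaginary) eigenvalue spacing — this is where the monodromy eigenvalues cross $\pm 1$ and the naive estimate fails. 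Tracking the scaling: on the complement of $I^{(0)}_{\pm,\nu}$ the divisor $|\lambda_\AV|$ is $\gtrsim \delta^\nu$, and near-resonant terms of size $\delta^2/\delta^{(1+\nu)/2} = \delta^{(3-\nu)/2}$... — one has to be careful. Let me restate: the $\mathcal{O}(\delta^2)$ Magnus term, when divided by a divisor $\gtrsim \delta^{(\nu-1)/2}$ (the distance-to-resonance scale defining $I^{(k)}_\nu$) contributes $\delta^{2 - (\nu-1)/2} = \delta^{(5-\nu)/2}$, which matches the claimed bound; and the degeneration near $\Gam_\pm$ contributes, after dividing by $|\lambda_\AV|\gtrsim\delta^\nu$ in the appropriate place, again an $\mathcal{O}(\delta^{(5-\nu)/2})$ error. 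Choosing $\nu=1$ recovers the clean $\mathcal{O}(\delta^2)$ statement on the larger complement.

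Concretely, the steps are: (i) derive the transverse variational system in rescaled time and identify $A_0$ with the $\HH_\AV$-linearization at the corrected $\Gam_\CP^{(0)}$, using Proposition~\ref{prop:firstaproximation_periodic} for the explicit orbit and the Remark after Corollary~\ref{cor:gammaCPAV} to see the $\mathcal{O}(\delta)$ mean is captured by the correction; (ii) diagonalize $A_0$ (eigenvalues $\pm\lambda_\AV$, with the caveat that near $\Gam_\pm$ it is nilpotent, requiring a separate Jordan-block treatment on $I^{(0)}_{\pm,\nu}$'s complement with quantitative control of the diagonalizing change of variables, whose norm blows up like $|\lambda_\AV|^{-1/2}$); (iii) apply the variation-of-constants / Magnus expansion to second order and compute $\int_0^{\mathcal{T}^{(0)}}\widetilde B_1\,ds$ and the leading iterated integral explicitly using the trigonometric-polynomial structure from~\eqref{defhata1c1d1}; (iv) bound the resulting expression by isolating resonant denominators, invoking Lemma~\ref{lem:Gamkpm} to locate them and Theorem~\ref{thm:origin_average} (via~\eqref{thm:exprGamma12},~\eqref{def:Cm}) to locate the $\lambda_\AV = 0$ points, and checking that outside $I^{(0)}_{\pm,\nu}\cup I^{(1)}_\nu\cup I^{(2)}_\nu$ all denominators are bounded below by the stated powers of $\delta$; (v) translate the bound on $\log\mu_\CP - \log\mu_\CP^{(0)}$ back to $|\mu_\CP - \mu_\CP^{(0)}|$, using that $\mu_\CP^{(0)}$ stays in a bounded region away from $0$. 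The main obstacle I expect is step (ii)–(iv) near the parabolic points $\Gam_\pm(L)$: there the matrix $A_0$ degenerates to a nilpotent Jordan block, the naive diagonalization is singular, and one must instead use a carefully chosen $\delta$-dependent normalization (essentially rescaling one eigendirection by $|\lambda_\AV|^{1/2}$) so that the perturbation $\delta B_1$ remains genuinely subordinate; getting the error exponent $(5-\nu)/2$ rather than something worse requires tracking how this normalization interacts with both the periodicity of $B_1$ and the resonances $\hat a_0 \pm 2\lambda_\AV$ simultaneously, and this bookkeeping — not any single hard idea — is where the real work lies. A secondary subtlety is ensuring $\delta_0$ can be taken uniform in $L\in(0,L_{\max}]$ and uniform on the relevant $\hat\Gam_0$-range, which follows because everything has been written in terms of the rescaled quantities $\hat\Gam = \Gam/L$ and $\delta = \rho\alpha^3L^4$ with smooth dependence, as already exploited throughout Section~\ref{sec:mainresults}.
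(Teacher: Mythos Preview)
Your setup matches the paper's Lemma~\ref{lem:scaled_eigenvalues}: the transverse variational block is the constant matrix $\mathbf X(\sigma_0)$ (the $\HH_\AV$-linearization at the corrected $\sigma_0=\hat\Gam_\CP^{(0)}(\hat\Gam_0;\delta)$) plus a $\widehat{\mathcal T}$-periodic $\delta\widehat B(s)$ with $\int_0^{\widehat{\mathcal T}}\widehat B=\mathcal O(\delta)$, and you correctly read the excluded sets as the locus $\mu_\CP^{(0)}=\pm1$. But the paper does not run a Magnus expansion or track small divisors. It works directly with the trace: since $\det\widehat\Psi=1$, the eigenvalue is $\mu=\tfrac12\tau+\tfrac12\sqrt{\tau^2-4}$, so everything reduces to $|\tau-\tau_0|$ and $|\tau_0^2-4|\sim\mathbf L^2$, where $\mathbf L:=\min_k\bigl|\widehat{\mathcal T}\,|\hat\lambda_0|-k\pi\bigr|$. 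Lemma~\ref{lem:trace_fondamental_matrix} gives $\tau-\tau_0=\mathcal O(\mathbf L\delta^2)+\mathcal O(\delta^3)$, Lemma~\ref{lem:mathbfL} converts this into $|\mu-\mu_0|\lesssim\delta^2$ when $\mathbf L\gtrsim\delta$ and $\lesssim\delta^3/\mathbf L$ otherwise, and Theorem~\ref{thm:eigenvalues_monodromy_more} is then the single lower bound $\mathbf L\gtrsim\delta^{(1+\nu)/2}$ on the complement of the excluded sets, checked case by case (distance to $\sigma_k$ controls $|\widehat{\mathcal T}|\hat\lambda_0|-k\pi|$ directly; near $\sigma_\pm$ one factors $|\hat\lambda_0|^2$ and uses $|\sigma-\sigma_\pm|\geq C\delta^\nu$).

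Two mechanisms that drive Lemma~\ref{lem:trace_fondamental_matrix} are not isolated in your outline. First, the off-diagonal part of $\delta\widehat\Psi_1$ in the eigenbasis of $\mathbf X(\sigma_0)$ carries an extra factor $\mathbf L$, because integrating $e^{ij\hat a_0 s}e^{\mp2i|\hat\lambda_0|s}$ over one full period of $\hat a_0$ leaves a numerator $e^{\mp2i|\hat\lambda_0|\widehat{\mathcal T}}-1=\mathcal O(\mathbf L)$; this is the correct form of your resonance intuition --- the gain sits in the numerator, not in a denominator. Second, and this is what your sketch misses, the constraint $\det\widehat\Psi=1$ forces the diagonal $\delta^2$-contribution to the trace to cancel to one further order in $\mathbf L$ (respectively in $\delta$, in the near-resonant regime $\mathbf L\lesssim\delta$). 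Without that cancellation a straightforward count gives only $|\tau-\tau_0|=\mathcal O(\delta^2)$, hence $|\mu-\mu_0|\lesssim\delta^2/\mathbf L\lesssim\delta^{(3-\nu)/2}$, one full power of $\delta$ short of the claim; your own scaling discussion stalls at exactly this point. A Magnus expansion does encode $\det=1$ via tracelessness of $\log\widehat\Psi$, so your route is not wrong in principle, but extracting the bound from it requires precisely the bookkeeping you defer, whereas the paper's trace-plus-determinant argument is a short algebraic identity once $\mathbf L$ is on the table.
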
  

This result is proven in Section~\ref{subsec:lasttheorem}.

The remaining part of this work is devoted to prove all the results stated in this section. 

\section{The circular critical points of the $h$-averaged Hamiltonian}\label{app:AveragedHam}

In Section~\ref{sec:goodexpresion}, we  rewrite the averaged system in a more suitable way to analyze the critical points. This new expression will be used both to study  the circular critical points and the eccentric ones. After that, in Sections~\ref{sec:circular_case_Appendix} and \ref {sec:modulus}, we prove Theorem~\ref{thm:origin_average} and Proposition~\ref{prop:origin_modulus_eigenvalue} respectively.  

\subsection{Rewriting the $h$-averaged system}\label{sec:goodexpresion}

We start with the $h$-averaged Hamiltonian in slow-fast Delaunay variables $(y,x)$. That is, we consider  (see~\eqref{eq:expansionsH1averagedDelaunay})
\begin{equation*} 
\begin{aligned}
\rH_{\AV}(y,\Gam,x;L) & = \rH_0(y,\Gamma;L)  + \alpha^3 \rH_{\AV,1}	(y,\Gam, x;L)\\
&=	\frac{\rho_0}{128} \frac{y^2 - 6y\Gam-3\Gamma^2}{L^3 y^5}+ \alpha^3 \frac{\rho_1}{L^2} \left[
\frac{1}{2}U_2^{0,0} D_{0,1}( y ,\Gam) + \frac{1}{3}U_2^{1,0} D_{1,0}(y,\Gam)\cos x			
\right].
\end{aligned}
\end{equation*}
with $D_{0,1}$ and $D_{1,0}$ as given in Table~\ref{tab:functionsDelaunay}.
In order to rewrite it in a more concise form, we introduce the following polynomials in $(y, \Gam)$, which depend implicitly on the parameter $L$,
\begin{equation*}
\begin{array}{lllll}
P_0(y,\Gamma)&=	y^2 - 6y\Gam - 3\Gam^2, &\quad&
Q_{0}(y;L)&=(5L^2 - 12y^2)y^3,\\
P_1(y,\Gamma)&=	(y - \Gam)(3y + \Gam), &\quad&	
Q_{1}(y,\Gamma;L)&= (3y + \Gam)(L - 2y)(L + 2y)y^3,
\end{array}		
\end{equation*}
and their derivatives with respect to $y$,
\bes
\begin{array}{lllll}
P_0'(y,\Gamma)&= 2(y - 3\Gam),&\quad&
Q_{0}'(y;L)	&= 15(L - 2y)(L + 2y)y^2,\\
P_1'(y,\Gamma)&= 2(3y- \Gam),&\quad&
Q_{1}'(y,\Gamma;L)&= y^2\bigg[3(L-2y)(L+2y)(4y+\Gam) - 
8y^2(3y+\Gam) \bigg].
\end{array}		
\ees
Hence,the $h$-averaged Hamiltonian as well as its derived equations of motion can be written as follows
\begin{equation*} 
\begin{split}
\rH_{\AV}(y,\Gamma,x;L)	=\frac{1}{128}\frac{\rho_0}{L^4y^5}		\left(	\cA_{0,\alpha}(y, \Gam) + \alpha^3 	\sqrt{P_1(y,\Gam)}\cA_{1}(y, \Gam;L)\cos x
\right)	
\end{split}
\end{equation*}
and
\be 
\begin{split}
\dot{x}	&= 	- \frac{1}{256}\frac{\rho_0}{L^4y^6}\frac{1}{\sqrt{P_1(y, \Gam)}}
\left[
\sqrt{P_1(y, \Gam)}\cB_{0,\alpha}(y, \Gam;L) + \alpha^3 \cB_{1}(y, \Gam;L)\cos x 
\right]\\
\dot{y}	&=	\phantom{-}\frac{1}{128}\frac{\rho_0\alpha^3}{L^4y^5} \sqrt{P_1(y, \Gam)}\cA_{1}(y, \Gam;L)\sin x, \end{split}
\label{Eq:EMotMoy}
\ee 
where the functions
\bes
\begin{split}
\cA_{0,\alpha}&=P_0
\left(
L + \alpha^3 d_0 Q_{0}
\right),\\
\cA_{1} &= d_1Q_{1},\\
\cB_{0,\alpha}&=10P_0 \left(L + \alpha^3 d_0 Q_{0}\right) 
- 2y(L + \alpha^3 d_0 Q_{0}) P_0'-2\alpha^3y d_0 P_0 Q_0',\\
\cB_{1}&=d_1(10P_1 Q_1-yQ_{1}P_1'-2yP_1Q_1')
\end{split}
\ees
are polynomial in $(y, \Gam)$, with
\begin{equation}\label{def:d0d1} 
d_0=2 U_2^{0,0}\frac{\rho_1}{\rho_0},		
\quad
d_1=20 U_2^{1,0}\frac{\rho_1}{\rho_0}.
\end{equation}

To analyze the circular critical points (i.e., at  $e=0$), we work with Poincar\'e variables. In this case, the $h$-averaged Hamiltonian~\eqref{def:HamAV} as well as its derived equations of motion can be written as follows
\begin{equation*}
\begin{split}
\HH_\AV (\xi,\eta)	
&=	\frac{1}{128}\frac{\rho_0}{L^4y^5}
\left( \cA_{0,\alpha}+ \alpha^3 d_1\sqrt{P_1}\Qt_1(\xi^2 -\eta^2)\right)	
\end{split} 
\end{equation*}
and
\be
\begin{split}
\dot{\xi}&= -\frac{1}{512}\frac{\rho_0}{L^4y^6}\frac{\eta} {\sqrt{P_1}} \left[
\sqrt{P_1}\cB_{0,\alpha}-8 y \alpha^3 d_1P_1\Qt_1+\alpha^3 \cBt_{1}\times(\xi^2 - \eta^2)
\right]\\
\dot{\eta}&=\frac{1}{512}\frac{\rho_0}{L^4y^6}\frac{\xi}{\sqrt{P_1}}
\left[
\sqrt{P_1}\cB_{0,\alpha}+8y\alpha^3 d_1P_1\Qt_1+\alpha^3 \cBt_{1}\times(\xi^2 - \eta^2) 
\right]\\
\end{split}
\label{Eq:EMotMoyorigin}
\ee
with 
\bes
y =	\frac{L}{2}	-	\frac{\xi^2	+	\eta^2}{4},
\ees
and 
\bes
\begin{split}
\Qt_1&=	\frac{Q_1}{2L-4y}= \frac{1}{2}(3y + \Gam)(L+2y)y^3,	\\
\Qt_1'&=\frac{1}{2}y^2\bigg[3L(\Gam +4y) + 2y(15y + 4\Gam)\bigg],	\\ 
\cBt_{1}&=	d_1(10P_1 \Qt_1 -y\Qt_{1}P_1' -	2yP_1\Qt_1').
\end{split}
\ees
For future purposes, we also decompose 
\begin{equation}\label{decompositioncalB0alpha}
\cB_{0,\alpha}=B_{0} + \alpha^3 B_1
\end{equation}
with
\[
    B_0=L(10 P_0 -2y P_0'), \qquad B_1=d_0 (10P_0 Q_0 - 2y Q_0 P_0'-2yP_0 Q_0')
\]
and we recall that
\begin{equation}\label{def:rho}
U_{2}^{0,0}=0.762646  ,\qquad U_2^{1,0}= 0.547442,\qquad \rho:=\frac{\rho_1}{\rho_0} = 1.762157978551987 \cdot 10^{-18}.
\end{equation}
We also recall that $a_{\max}=30000$ km is the maximum semi-major axis we are going to consider and that $a_{\min}=6378.14$ km (the radius of the Earth) is the minimum one.

Along the proof of the results this notation 
will be used extensively without an explicit mention. 

\subsection{Circular critical points: Proof of Theorem~\ref{thm:origin_average}} \label{sec:circular_case_Appendix} 

The origin, $(\xi,\eta)=(0,0)$, is clearly an equilibrium point of system~\eqref{Eq:EMotMoyorigin}. To prove the statements in Theorem~\ref{thm:origin_average},
we only need to study the linear part of the averaged system at the origin and elucidate the values of the parameters $L,\Gamma$ for which the origin is either a saddle, an elliptic point or a degenerated (parabolic) fixed point.  

We observe that, since
$$
y=y(\xi,\eta)=\frac{L}{2} - \frac{1}{4} (\xi^2 + \eta^2)
$$
satisfies $y(0,0)=L/2$ and $\partial_\xi y(0,0)=\partial_\eta y(0,0)=0$, the variational equation of the averaged system around $(\xi,\eta)=(0,0)$ is $\dot{z}=M(\Gamma;L)z$ where
\begin{equation}\label{origin_critical_variational}
M (\Gamma;L)= g(L) \left (\begin{array}{cc} 0 & -X^{\eta}(\Gamma;L) \\ X^{\xi}(\Gamma;L)  & 0 \end{array}\right ) 
\end{equation}
and 
\begin{align*}
    g(L) & = \frac{\rho_0}{8 L^{10}} \\
    X^{\xi}(\Gamma;L)& = B_0\left (\frac{L}{2}, \Gamma;L\right ) + \alpha^3 B_1 \left (\frac{L}{2}, \Gamma;L\right ) + 4 \alpha^3 d_1 L \sqrt{P_1\left (\frac{L}{2} , \Gamma \right )} \Qt_1 \left (\frac{L}{2}, \Gamma;L\right ) \\ 
    X ^{\eta}(\Gamma;L)& = B_0\left (\frac{L}{2}, \Gamma;L\right ) + \alpha^3 B_1 \left (\frac{L}{2}, \Gamma;L\right ) - 4 \alpha^3 d_1 L\sqrt{P_1\left (\frac{L}{2} , \Gamma \right )} \Qt_1 \left (\frac{L}{2}, \Gamma;L\right ).
\end{align*}
Therefore the eigenvalues $\lambda$ of $M(\Gamma;L)$ satisfy
\begin{equation}\label{expr:eigenvaluese=0}
\lambda^2 = -g^2 (L) X^{\xi}(\Gamma;L) \cdot X^{\eta}(\Gamma;L)
\end{equation}
so that, to determine the character of $(\xi,\eta)=(0,0)$, we need to study the sign of the product $X^{\xi}(\Gamma;L) \cdot X^{\eta}(\Gamma;L)$ with respect to the parameters $\Gamma, L$. To this end, we perform the scaling
$$
\Gamma=L \sigma, \qquad \text{with}\qquad 0< \sigma< \frac{1}{2}
$$
and compute $X^{\xi}(L\sigma;L) \cdot X^{\eta}(L\sigma;L)$. 
One can check (after some tedious computations) that 
\begin{equation}\label{Xxietasigma}
\begin{aligned}
X^{\xi}(L\sigma;L) = \frac{3}{2} L^3 \left [   \beta(L) \mathbf{a}(\sigma) +
     \frac{1}{12} d_1 \alpha^3 L^4 \mathbf{b}(\sigma)
     \right  ] \\ 
X^{\eta}(L\sigma;L) = \frac{3}{2} L^3 \left [   \beta(L) \mathbf{a}(\sigma) -
     \frac{1}{12} d_1 \alpha^3 L^4 \mathbf{b}(\sigma)
     \right  ]
\end{aligned}
\end{equation}
with 
\begin{equation} \label{defaborigin_critical}
\mathbf{a}(\sigma)= 1 - 16 \sigma - 20 \sigma^2,\qquad \mathbf{b}(\sigma)= (2 \sigma +3) \sqrt{3- 4 \sigma -4 \sigma ^2}
\end{equation}
(note that $p(\sigma):=3-4\sigma - 4 \sigma^2 >0$ if $0<\sigma<\frac{1}{2}$ and $p\left (\frac{1}{2}\right )=0$)
and
$$
\beta(L)= 1+ \frac{d_0 \alpha^3 L^4}{4}.
$$ 
\begin{remark}\label{rmk:first_expression_eigenvalues}
Notice that the eigenvalues $\lambda= \lambda(\sigma;L)$ can be explicitly computed. Indeed,  using~\eqref{expr:eigenvaluese=0} and definitions~\eqref{def:d0d1} and~\eqref{def:rho} for $d_0,d_1$ and $\rho$ respectively  
$$
\lambda^2 = -\frac{9\rho_0^2}{256 L^{14}} \left [ \mathbf{a}(\sigma) + \rho \alpha^3 L^4 \mathbf{c}_-(\sigma)\right ] \left [ \mathbf{a}(\sigma) + \rho \alpha^3 L^4 \mathbf{c}_+(\sigma)\right ] 
$$
with 
$$
\mathbf{c}_{\pm}(\sigma) = 
 \frac{U_{2}^{0,0}}{2}\mathbf{a}(\sigma)  \pm  \frac{5U_{2}^{1,0}}{3}  \mathbf{b}(\sigma).
$$
\end{remark}

\begin{lemma}\label{lem:previsigmapm}
Consider 
$$
\mathcal{X}_\pm(\sigma;L):=\beta(L)  \mathbf{a}(\sigma) \pm
     \frac{1}{12} d_1 \alpha^3 L^4 \mathbf{b}\left (\sigma\right).
     $$
For $L\in [0,L_{\max}]$, the function  $\mathcal{X}_+(\cdot;L)$ is strictly decreasing for $\sigma\in \left (0,\frac{1}{2}\right )$ and $\mathcal{X}_-(\cdot ;L)$ 
is strictly decreasing for $\sigma\in \left ( 0, \frac{\sl}{2}\right )$. 

Moreover, there exist two functions $\sigma_\pm : [0, L_{\max}] \to (0,\frac{1}{2})$ such that for any $L\in [0, L_{\max}]$ $\mathcal{X}_\pm (\sigma;L) = 0$ if and only if $\sigma = \sigma_\pm (L)$. In addition, $\sigma_\pm(0)=\frac{\sl}{2}$ and for $L>0$
     $$
     \sigma_-(L) \in \left (0, \frac{\sl}{2} \right ),\qquad \sigma_+(L) \in \left (\frac{\sl}{2} , \frac{1}{4}\right ),$$ 
where $\sl$, defined in~\eqref{def:pendentresonance}, is the slope of the prograde resonance (see also~\eqref{def:resonance}). 
\end{lemma}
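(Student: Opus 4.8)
The argument is elementary single-variable calculus: the whole point is that $\tfrac{1}{12}d_1\alpha^3L^4\,\mathbf{b}(\sigma)$ is a \emph{small, strictly monotone} correction to the leading term $\beta(L)\,\mathbf{a}(\sigma)$, and that $\mathbf{a}(\sl/2)=0$ — the latter being just a restatement of the resonance condition $5\sl^2+8\sl-1=0$ defining $\sl$ in~\eqref{def:pendentresonance}.

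\emph{Step 1 (monotonicity of the building blocks).} On $(0,\tfrac12)$ one has $\mathbf{a}'(\sigma)=-16-40\sigma<0$ and, writing $\mathbf{b}(\sigma)=(2\sigma+3)\sqrt{p(\sigma)}$ with $p(\sigma)=3-4\sigma-4\sigma^2>0$,
\[
\mathbf{b}'(\sigma)=\frac{4p(\sigma)+(2\sigma+3)p'(\sigma)}{2\sqrt{p(\sigma)}}=\frac{-8\sigma(2\sigma+3)}{\sqrt{p(\sigma)}}<0 ,
\]
so $\mathbf{a},\mathbf{b}$ are strictly decreasing on $(0,\tfrac12)$ (and $\mathbf{b}$ is smooth on $[0,\tfrac12)$). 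Since $\beta(L)=1+\tfrac14 d_0\alpha^3L^4\ge1$ and $d_1\alpha^3L^4>0$, the function $\mathcal{X}_+(\cdot;L)$ is immediately strictly decreasing on $(0,\tfrac12)$. For $\mathcal{X}_-$ the two pieces compete, and one estimates the offending term: on $(0,\sl/2)$, using that $\sigma\mapsto\sigma(2\sigma+3)/\sqrt{p(\sigma)}$ is increasing,
\[
\mathcal{X}_-'(\sigma;L)=-\beta(L)(16+40\sigma)+\tfrac{2}{3}d_1\alpha^3L^4\,\frac{\sigma(2\sigma+3)}{\sqrt{p(\sigma)}}<-16+\tfrac{2}{3}d_1\alpha^3L^4\,\frac{(\sl/2)(\sl+3)}{\sqrt{p(\sl/2)}},
\]
which is negative because $d_1\alpha^3L^4=20\,U_2^{1,0}\rho\,\alpha^3L^4\le20\,U_2^{1,0}\delta_{\max}$ is, by the numerical values in~\eqref{def:rho} and~\eqref{def:Lalfadelta_max}, far too small to reach $16$. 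This gives the two monotonicity assertions.

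\emph{Step 2 (existence, uniqueness and location of the zeros).} Evaluate at the relevant points: $\mathbf{a}(0)=1$, $\mathbf{b}(0)=3\sqrt3$, $\mathbf{a}(\tfrac12)=-12$, $\mathbf{b}(\tfrac12)=0$, $\mathbf{a}(\tfrac14)=-\tfrac{17}{4}$, $\mathbf{b}(\tfrac14)=\tfrac{7\sqrt7}{4}$, and $\mathbf{a}(\sl/2)=0$ with $\mathbf{b}(\sl/2)>0$. Then $\mathcal{X}_\pm(0;L)=\beta(L)\pm\tfrac{\sqrt3}{4}d_1\alpha^3L^4>0$ (again by the smallness of $d_1\alpha^3L^4$) and $\mathcal{X}_+(\tfrac12;L)=-12\,\beta(L)<0$, so the intermediate value theorem plus the strict decrease of $\mathcal{X}_+$ produce a unique zero $\sigma_+(L)\in(0,\tfrac12)$. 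From $\mathbf{a}(\sl/2)=0$ one gets $\mathcal{X}_\pm(\sl/2;L)=\pm\tfrac{1}{12}d_1\alpha^3L^4\,\mathbf{b}(\sl/2)$, which for $L>0$ is $>0$ for the $+$ sign and $<0$ for the $-$ sign; hence $\sigma_+(L)>\sl/2$, and $\mathcal{X}_-$ has a zero $\sigma_-(L)\in(0,\sl/2)$, unique there by Step 1. It is the only zero of $\mathcal{X}_-(\cdot;L)$ on $(0,\tfrac12)$, because on $[\sl/2,\tfrac12)$ one has $\beta(L)\mathbf{a}(\sigma)\le0$ and $-\tfrac{1}{12}d_1\alpha^3L^4\mathbf{b}(\sigma)<0$ (for $L>0$), so $\mathcal{X}_-<0$ there. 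When $L=0$ the correction vanishes, $\mathcal{X}_\pm(\cdot;0)=\mathbf{a}(\cdot)$, whose unique zero in $(0,\tfrac12)$ is $\sl/2$, giving $\sigma_\pm(0)=\sl/2$. Finally, $\mathcal{X}_+(\tfrac14;L)=\tfrac14\big(-17\beta(L)+\tfrac{7\sqrt7}{12}d_1\alpha^3L^4\big)<0$ (once more via $d_1\alpha^3L^4\le20\,U_2^{1,0}\delta_{\max}$), so $\sigma_+(L)<\tfrac14$; the bound $\sigma_-(L)\in(0,\sl/2)$ was obtained above. Smoothness of $\sigma_\pm$ in $L$ follows from the implicit function theorem since $\partial_\sigma\mathcal{X}_\pm(\sigma_\pm(L);L)<0$ and $\mathbf{b}$ is smooth near $\sigma_\pm(L)\in(0,\tfrac12)$.

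The only genuine obstacle is bookkeeping: one must carry the explicit bound $d_1\alpha^3L^4\le20\,U_2^{1,0}\delta_{\max}$, together with the constants from~\eqref{def:rho} and~\eqref{def:Lalfadelta_max}, through the two places where the perturbation could a priori spoil the conclusion — the sign of $\mathcal{X}_-'$ on $(0,\sl/2)$ and the sign of $\mathcal{X}_+(\tfrac14;L)$ — and one must take the small extra step of upgrading the local uniqueness of the zero of $\mathcal{X}_-$ on $(0,\sl/2)$ to global uniqueness on $(0,\tfrac12)$ via the sign argument on $[\sl/2,\tfrac12)$.
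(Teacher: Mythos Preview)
Your proof is correct and follows essentially the same approach as the paper's: compute $\mathbf{a}',\mathbf{b}'$ to get monotonicity, use $\mathbf{a}(\sl/2)=0$ to locate the zeros relative to $\sl/2$, apply the intermediate value theorem with explicit numerical endpoint checks (the paper uses $\sigma=\tfrac14$ for $\mathcal{X}_+$ and $\sigma=0$ for $\mathcal{X}_-$, exactly as you do), and handle global uniqueness of $\sigma_-$ via the sign argument on $[\sl/2,\tfrac12)$. The only cosmetic differences are that the paper records the actual numerical values ($\mathcal{X}_+(\tfrac14;L)\le-3.744\ldots$, $\mathcal{X}_-(0;L)\ge0.4778\ldots$) rather than leaving them as ``far too small to reach $16$'', and phrases the monotonicity of $\mathcal{X}_-$ as an assertion rather than your explicit bound on $\mathcal{X}_-'$.
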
  
\begin{proof} 
First we notice that $\mathbf{a}, \mathbf{b}$ are decreasing functions for $\sigma\in(0,1/2)$. Indeed, we only need to compute 
$$
\partial_\sigma \mathbf{b}(\sigma) = - 8\frac{\sigma(2 \sigma+3)}{ \sqrt{3 - 4 \sigma -4\sigma^2}}<0
$$
(for $\mathbf{a}$ is obvious). 
Therefore, $ \mathcal{X}_+(\sigma;L)$ is an strictly decreasing function (with respect to $\sigma$) and has at most a unique zero. Using that $\mathbf{a}\left (\frac{\sl}{2}\right )=0$ we obtain that 
$$
     \mathcal{X}_+\left (\frac{\sl}{2} ;L\right )=   \frac{1}{12}d_1 \alpha^3 L^4 \mathbf{b}\left (\frac{\sl}{2}\right )>0
$$
and 
$$
     \mathcal{X}_+\left (\frac{1}{4};L \right )\leq \mathbf{a}\left (\frac{1}{4}\right ) + \frac{1}{12}d_1 \alpha_{\max}^3 L_{\max}^4 \mathbf{b}\left (\frac{1}{4}\right )= -3.744000070111437<0,
     $$
with $\alpha_{\max}=a_{\max}/a_{\rM}$, 
and by Bolzano theorem we obtain the conclusion for $\mathcal{X}_+$.  
 
Concerning $\mathcal{X}_-(\sigma;L)$, we have that 
     $$
     \mathcal{X}_-\left (\frac{\sl}{2};L\right )=-\frac{1}{12}d_1 \alpha^3 L^4 \mathbf{b}\left (\frac{\sl}{2}\right )<0
     $$
and
$$
\mathcal{X}_-(0;L)=\beta(L) - \frac{1}{12}d_1 \alpha^3L^4 3 \sqrt{3} =1+ \frac{\alpha^3 L^4}{4}  \left (  d_0 - d_1  \sqrt{3} \right ).
$$
Therefore, since $d_0-d_1 \sqrt{3} \sim -17.4387  \rho <0$, 
$$
\mathcal{X}_-(0;L) \geq 1 + \frac{\alpha_{\max}^3 L_{\max}^4}{4 }  
\left (  d_0 - d_1  \sqrt{3} \right )\geq  0.477808830620940>0.
$$ 
Again there exists $\sigma_-(L) \in \left (0,\frac{\sl}{2}\right )$ such that $\mathcal{X}_-(\sigma_-(L);L)=0$. In addition, one can easily check that 
$\partial_\sigma \mathcal{X}_- (\sigma;L)<0$ if $\sigma \in \left (0,\frac{\sl}{2}\right )$
so that $\sigma= \sigma_-(L)$ is the only solution of $\mathcal{X}_-(\sigma;L)=0$ belonging to $\left (0 ,\frac{\sl}{2}\right )$. 

To finish with this analysis, we point out that, for $\sigma\in \left ( \frac{\sl}{2}, \frac{1}{2}\right )$,
$$
\mathcal{X}_-(\sigma;L) = \beta(L) \mathbf{a}(\sigma) - \frac{1}{12} d_1 \alpha^3 L^4 \mathbf{b}(\sigma)<0
$$
provided $\beta(L)>0$, $\mathbf{a}(\sigma)<0$ and $\mathbf{b}(\sigma)>0$ if $\sigma\in \left ( \frac{\sl}{2}, \frac{1}{2}\right )$.
\end{proof}

From formula~\eqref{expr:eigenvaluese=0} of the eigenvalues and the previous lemma, recalling that $\Gamma =L \sigma$, we have that 
\begin{itemize}
    \item If either $\Gamma \in (0, L \sigma_-(L))$ or $\Gamma \in \left (L \sigma_+(L), \frac{L}{2} \right )$, then $(\xi,\eta)=(0,0)$ is a center equilibrium point.
    \item If  $\Gamma \in (L \sigma_-(L), L \sigma_+(L))$, then $(\xi,\eta)=(0,0)$ is a saddle equilibrium point.
    \item Using that $\mathcal{X}_+(\cdot;L)$ and $\mathcal{X}_-(\cdot;L)$ are not zero simultaneously, we deduce that, when $\Gamma=L \sigma_-(L)$ or $\Gamma=L\sigma_+ (L)$, then $(\xi,\eta)=(0,0)$ is a degenerated  equilibrium point with nilpotent linear part.  
\end{itemize}
Therefore, in order to prove Theorem~\ref{thm:origin_average}, we only need to prove the expansions of $\sigma_{\pm}(L)$ in~\eqref{thm:exprGamma12}. Indeed, $\sigma_{\pm}(L)$ satisfy
\begin{equation}\label{proof:eqsigma}
\mathcal{X}_{\pm}(\sigma_\pm(L);L)=\beta(L) \mathbf{a}(\sigma_{\pm}(L))\pm \frac{1}{12} d_1 \alpha^3 L^4 \mathbf{b}(\sigma_{\pm}(L))=0. 
\end{equation}
We rewrite condition~\eqref{proof:eqsigma} in a more suitable way. To do so, denoting $\delta=\rho \alpha^3 L^4$, we introduce the functions
$$
\mathbf{A}_{\pm}(\sigma ; \delta):= \mathbf{a}(\sigma) \left (1+ \frac{U_{2}^{0,0}}{2} \delta \right ) \pm \frac{5 U_{2}^{1,0}}{3} \delta \mathbf{b}(\sigma),
$$
which are smooth for $(\sigma,\delta) \in \left (0,\frac{1}{2}\right ) \times (0,\infty)$. Then, by the definition of $d_0,d_1$ in~\eqref{def:d0d1} and also~\eqref{def:rho}, we have that, 
$\mathcal{X}_\pm (\sigma;L)= \mathbf{A}_{\pm} (\sigma;\delta)$ so that
expression~\eqref{proof:eqsigma} is equivalent to
$$
\mathbf{A}_{\pm}(\sigma_{\pm}(L); \delta) =0.
$$ 
Since $\mathbf{A}_{\pm}\left (\frac{\sl}{2};0\right)=0 $ and
$$
\partial_\sigma \mathbf{A}_{\pm} 
\left (\frac{\sl}{2};0\right) = \partial_\sigma \mathbf{a} \left (\frac{\sl}{2} \right ) = -16 -20 \sl  \neq 0,
$$
by the Implicit Function Theorem, $\sigma_{\pm}(L)$ is, in fact, a smooth function of $\delta = \rho \alpha^3 L^4$, namely $\sigma_{\pm}(L)= \tilde{\sigma}_{\pm}(\delta)$ with $\tilde{\sigma}(0)= \frac{\sl}{2}$. Therefore, 
$$
\sigma_{\pm}(L)= \tilde{\sigma}_{\pm}(\delta)= \frac{\sl}{2} + c_{\pm} \delta + \mathcal{O}(\delta^2)
$$
with 
\begin{align*}
c_{\pm} &= \partial_\delta \tilde{\sigma}_{\pm}(0) = -\frac{\partial_\delta \mathbf{A}_{\pm}\left (\frac{\sl}{2};0\right )}{\partial_\sigma\mathbf{A}_{\pm}\left (\frac{\sl}{2};0\right )} = \frac{1}{16 +20 \sl } \left (
\mathbf{a}\left (\frac{\sl}{2}\right )  \frac{U_{2}^{0,0}}{2} \pm \frac{5 U_{2}^{1,0}}{3}  \mathbf{b}\left (\frac{\sl}{2} \right ) \right ) \\
&= \pm \frac{1}{16+20 \sl} \frac{5 U_{2}^{1,0}}{3} \mathbf{b} \left (\frac{\sl}{2} \right ).
\end{align*}
 
Using that, by Lemma~\ref{lem:previsigmapm},  $\mathcal{X}_+$ is a strictly decreasing function (with respect to $\sigma$) and that $\mathcal{X}_-$ is also a strictly decreasing function if $\sigma\in \left (0,\frac{\sl}{2}\right )$, we conclude that $\tilde{\sigma}_{\pm}$ are well defined for $\delta \in \big [0, \rho \alpha_{\max} L_{\max}^4 \big ]$ and the proof of Theorem~\ref{thm:origin_average} is complete.

\subsection{Modulus of the eigenvalues: Proof of Proposition~\ref{prop:origin_modulus_eigenvalue}}\label{sec:modulus}   

We use, along this section, the notation introduced in Section~\ref{sec:circular_case_Appendix}. 
We first define, from~\eqref{expr:eigenvaluese=0} and~\eqref{Xxietasigma},  
\begin{equation}\label{defMcalmodulus} 
\mathcal{M}(\sigma;L) :=    
- g^2(L) \frac{9L^6}{4} \left [ \beta^2 (L) \mathbf{a}^2(\sigma) -  \widetilde{\beta}^2(L) \mathbf{b}^2 (\sigma) \right ] ,
\end{equation}
with $g(L)=\frac{\rho_0}{8L^{10}}$, $\widetilde{\beta}(L)=\frac{d_1 \alpha^3 L^4}{12}$ and $\mathcal{X}_\pm$ were introduced in Lemma~\ref{lem:previsigmapm}. 
It is not difficult to check that 
\begin{align*}
\partial_\sigma \mathcal{M}(\sigma;L) & = 
\frac{-9 \rho_0^2}{256 L^{14}} \left [ 2 \beta^2(L) \mathbf{a}(\sigma) \mathbf{a}'(\sigma) - 2\widetilde{\beta}^2(L)
 \mathbf{b}(\sigma) \mathbf{b}'(\sigma) \right ]
\\ & =\frac{-9 \rho_0^2}{256 L^{14}} \left [ \beta^2(L) (1600 \sigma^3 + 1920 \sigma^2 + 432 \sigma - 32) + \widetilde{\beta}^2(L) (16 \sigma (2\sigma +3)^2)\right ]. 
\end{align*}
Therefore $\partial_\sigma \mathcal{M}$, for any fixed $L$, is a degree three polynomial such that its derivative has no positive zero (all its coefficients are positive). That implies that $\partial_\sigma \mathcal{M}(\cdot;L)$ can at most have one zero for any value of $L$. 
Since, by definition, $\mathcal{M}(\sigma_+(L);L)=\mathcal{M}(\sigma_-(L);L)=0$, by Rolle's theorem, there exists $\sigma_*(L)\in \big (\sigma_-(L), \sigma_+(L) \big )$ such that $\partial_\sigma \mathcal{M}(\sigma_*(L);L)=0$. Moreover, since $\mathbf{a} \left (\frac{\sl}{2}\right )=0$, 
$$
\partial_\sigma \mathcal{M}(0;L) >0,\qquad \partial_{\sigma} \mathcal{M}\left (\frac{\sl}{2};L  \right )<0,
$$
we conclude that, for any $L_{\min}\leq L \leq L_{\max}$, the function $\mathcal{M}(\sigma;L)$ has a maximum at $\sigma_*(L)$. Therefore, for a fixed value of $L$, 
\begin{itemize}
\item on the interval $[\sigma_-(L ), \sigma_+(L )]$, the function $ \lambda_+(\sigma;L)=\sqrt{\mathcal{M}(\sigma;L)}$ has a maximum at $\sigma=\sigma_*(L)$.  
\item When $\sigma\in (0,\sigma_-(L))$ the function 
$\sqrt{\big |\mathcal{M} (\sigma;L)\big |}$ is decreasing with respect to $\sigma$
\item and when $\sigma \in \left (\sigma_+(L),\frac{1}{2} \right )$,  $\sqrt{\big |\mathcal{M} (\sigma;L)\big |}$ is an increasing function. 
\end{itemize}

To finish the proof of Proposition~\ref{prop:origin_modulus_eigenvalue}, we notice that, from~\eqref{defMcalmodulus} and using that $\mathbf{b}$ is a decreasing positive function, we deduce that, if $\sigma \in (\sigma_-(L), \sigma_+(L))$, then the corresponding eigenvalues $\pm \lambda(\sigma;L)$ satisfy
$$
0<\lambda (\sigma;L) \leq g(L) \frac{3 L^3}{24 } d_1  \alpha^3 L^4 \mathbf{b}(0) \leq  \frac{\rho_0}{8L^{10}}  \frac{3 L^3}{24} d_1  \alpha^3 L^4 3\sqrt{3}
= \frac{ 15 \sqrt{3}\rho_0}{16 L^7} U_2^{1,0} \rho \alpha^3 L^4, 
$$
with $\rho=\rho_1/\rho_0$. 
In addition, since $\lambda(\sigma_*(L);L)>\lambda\left (\sigma\left (\frac{\sl}{2}\right );L\right )$,
$$
\lambda (\sigma_*(L);L) \geq  g(L) \frac{3 L^3}{24 } d_1  \alpha^3 L^4 \mathbf{b}\left (\frac{\sl}{2}\right ) = \frac{5 \rho_0}{16L^{7}}U_{2}^{1,0} \rho \alpha^3 L^4 (\sl+3) \sqrt{3-2\sl -\sl^2}.
$$

\begin{remark} As a consequence of our study, if we want to analyze the value of $\mathcal{M}(\sigma;L )$ at some interval $\sigma \in [\sigma_{\min}, \sigma_{\max}] \subset [\sigma_-(L), \sigma_+(L)]$, its minimum value is located either at $\sigma= \sigma_{\min}$ or $\sigma= \sigma_{\max}$. 
\end{remark}

\begin{remark}
Using $L=\sqrt{\mu a}$ and $\alpha= \frac{a}{a_M}$, we write from~\eqref{expr:eigenvaluese=0}, $\widetilde{\mathcal{M}}(\sigma;a):= \mathcal{M}(\sigma;\sqrt{\mu a})$ as 
\begin{equation*}
\begin{aligned}
\widetilde{\mathcal{M}}(\sigma;a)&:= -\frac{9 \rho_0^2 }{256 L^{14}} \left [ \left (1+ \frac{d_0\alpha^3 L^4}{4} \right )^2\mathbf{a}^2 (\sigma) - \frac{1}{144} d_1^2 \alpha^6 L^8 \mathbf{b}^2 (\sigma) \right ]\\ 
&= 
-\frac{9 \rho_0^2}{256 \mu^7 a^7} \left [ 
\mathbf{a}^2(\sigma) \left (1 + \frac{d_0 a^5 \mu^2}{2 a_M^3} + \frac{d_0^2 a^{10} \mu^4}{16 a_M^6}\right ) 
- \frac{1}{144}d_1^2 \mathbf{b}^2 (\sigma) \frac{a^{10} \mu^4}{a_M^6} \right ]
 \\
&= -\frac{9 \rho_0^2}{256 \mu^7 } 
\left [ 
\mathbf{a}^2(\sigma)   
\left ( \frac{1}{a^7} +\frac{d_0\mu^2 }{2 a^2 a_M^3} + \frac{d_0^2 a^3 \mu^4}{16 a_M^6} \right ) 
- \frac{1}{144}d_1^2 \mathbf{b}^2 (\sigma) \frac{a^3 \mu^4}{a_M^6} \right ].
\end{aligned}
\end{equation*}
We have that $\partial_a \widetilde{\mathcal{M}}(\sigma;a)>0$ provided that, for $a\leq a_{\max}$,
$$
-\frac{7}{a^8} - \frac{d_0 \mu^2}{  a^3  a_M^3} + \frac{3d_0^2 a^2 \mu^4}{16 a_M^6 }  = -\frac{1}{a^8} \left (7 + \frac{d_0 \mu^2}{ a_M^3} a^5 - \frac{3d_0^2  \mu^4}{16 a_M^6 } (a^5)^2  \right )<0
$$
(the associated degree two polynomial has a unique positive zero which is greater than $a_{\max}$). 
Therefore, for $a_{\min}\leq a\leq a_{\max}=30000$ we have that 
$$
0<\widetilde{\mathcal{M}}(\sigma;a_{\min}) \leq \widetilde{\mathcal{M}}(\sigma;a) \leq \widetilde{\mathcal{M}}(\sigma;a_{\max}).
$$
\end{remark}

We present now some numerical results. In Figure \ref{fig:lambdas} it is depicted $\lambda_+(\sigma;L)>0$ as a function of   $\sigma$ for some values of $a$.


We can also  compute numerically $\sigma_\pm$ and $\sigma_*$. We compute $\sigma_\pm$ using that $\mathcal{X}_\pm (\sigma_\pm;L)=0$ and $\sigma_*$ as the unique positive zero of 
$$
\beta^2(L) (1600 \sigma^3 + 1920 \sigma^2 + 432 \sigma- 32) + \widetilde{\beta}^2(L) (16\sigma (2\sigma+3)^2).
$$
We show its values in Table \ref{table:sigmas}.

\begin{center}
\begin{figure}
\begin{center}
\subfloat{\includegraphics[width=0.3\textwidth]{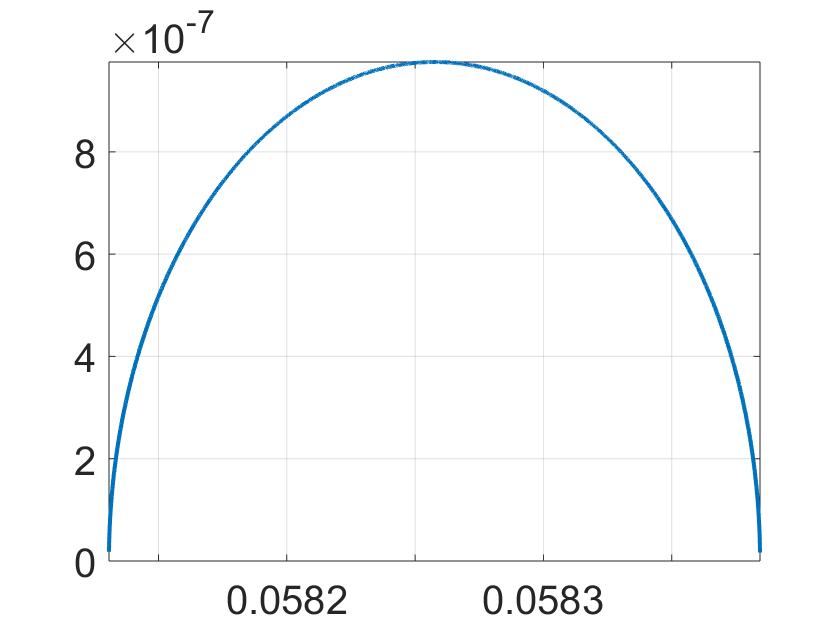}} 
\subfloat{\includegraphics[width=0.3 \textwidth]{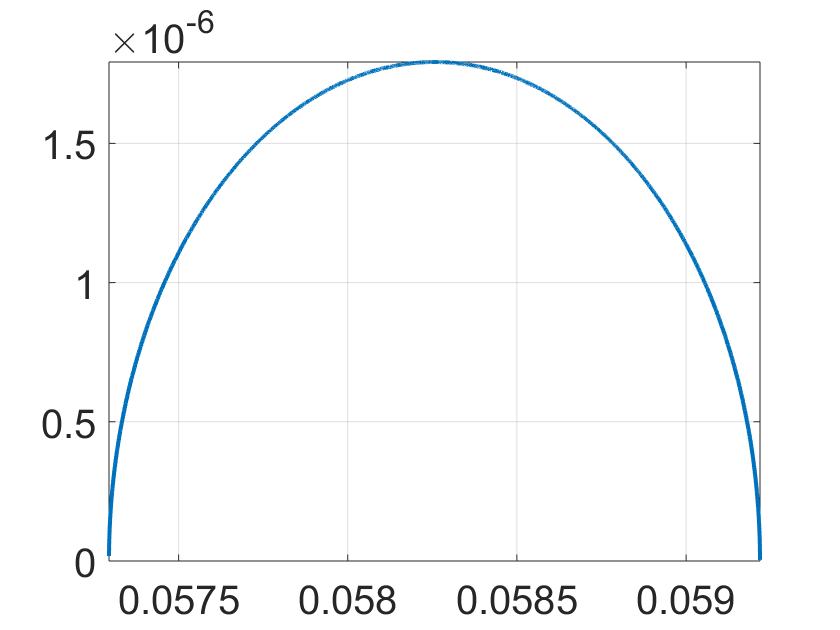}} 
\subfloat{\includegraphics[width=0.3\textwidth]{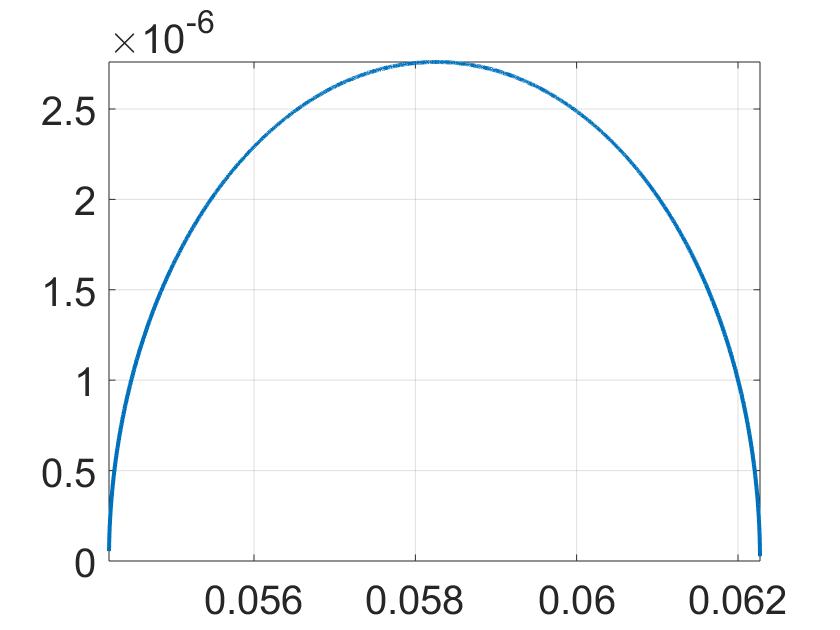}}\\
\subfloat{\includegraphics[width=0.3\textwidth]{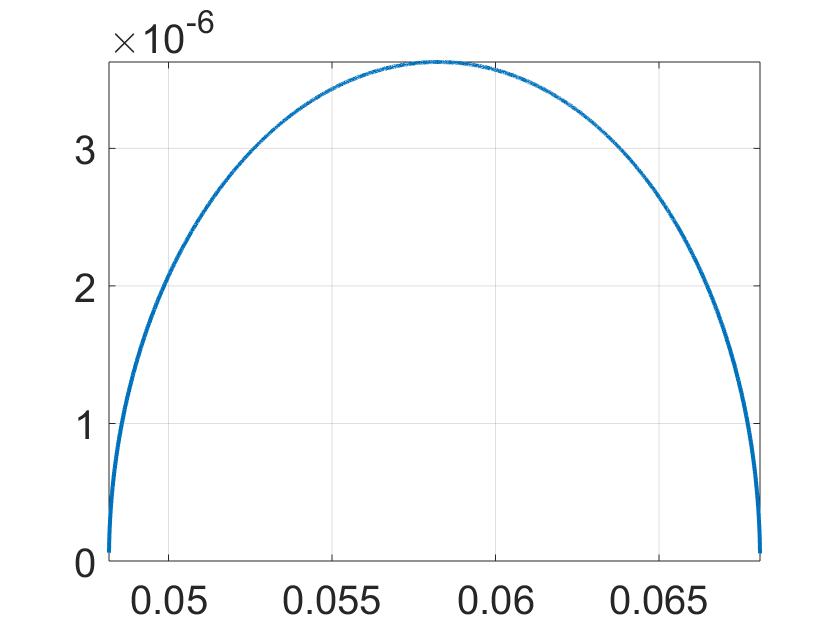}}
\subfloat{\includegraphics[width=0.3 \textwidth]{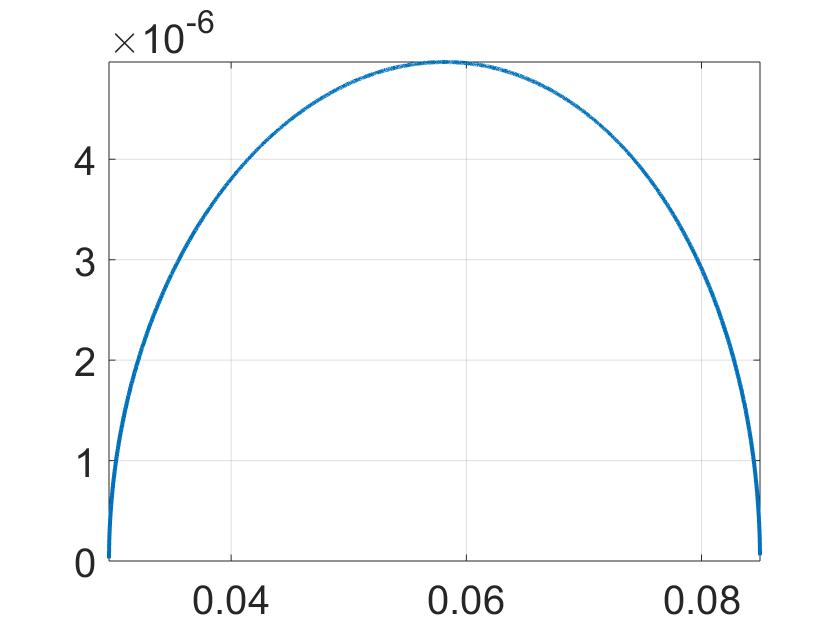}} 
\subfloat{\includegraphics[width=0.3\textwidth]{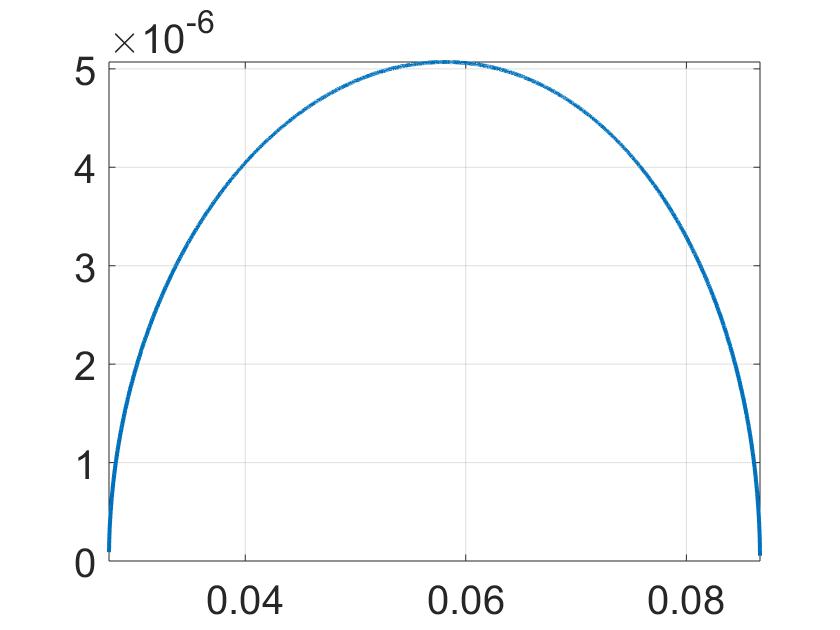} } 
\end{center}
\caption{These figures show $\lambda_+(\sigma;L)>0$ as functions of $\sigma$ for $a=\{10000, 15000, 20000, 24000, 29600, 30000\}\, \text{km}$, when the origin is a saddle point (the units are the ones made explicit in Remark~\ref{rmk_unities}).}
\label{fig:lambdas}
\end{figure}
\end{center}

\begin{table}[ht]
\centering
\begin{tabular}{c|c|c|c}
$a$ & $ \sigma_-(L)$ & $ \sigma_+(L)$ & $\sigma_*(L)$ \\ 
$ 10000$ & 0.058130693719535  & 0.058384404707503 & 0.058257569492108
\\ 
$ 15000$ & 0.057294284878483  &  0.059218520762404 & 0.058257569455989
\\
$ 20000$ & 0.054201084378365  & 0.062272995357407 & 0.058257569273126
\\ 
$ 24000$ & 0.048177348389369  &  0.068087985206979 & 0.058257568831379
\\ 
 29600 & 0.029613649805289   & 0.084971418151141 & 0.058257567158286
\\ 
$ 30000$ & 0.027639200647529 & 0.086680627913961 & 0.058257566962293
\end{tabular}
\caption{Values of $\sigma_\pm$ and $\sigma_*$ for some choices of semi-major axis.}
  \label{table:sigmas}
\end{table}

\begin{remark}
We emphasize that the values of $\sigma_*(L)$ are close to  $\frac{\sl}{2}=0.058257569495584
 $ which is the only positive zero of the polynomial $ \mathbf{a}(\sigma) \mathbf{a}'(\sigma)=1600 \sigma^3 + 1920 \sigma^2 + 432 \sigma -32$. This is because $\beta(L) \sim 6$ for all the values of semi-major axis $a$ considered, meanwhile the range of values for $\widetilde{\beta}(L)$ goes from $\mathcal{O}(10^{-10})$ to $\mathcal{O}(10^{-7})$. Therefore, the zeros of $\partial_\sigma \mathcal{M}$ are close to the ones of $\mathbf{a}(\sigma) \mathbf{a}'(\sigma)$.  
\end{remark}

To finish, in Table \ref{table:lambdamax}, we present the maximum value of the positive eigenvalue in the saddle case, namely $\lambda_+(\sigma_*(L);L)$.
\begin{table}[ht]
\begin{center}
\begin{tabular}{c|c} 
$a$ & $\lambda_+(\sigma_*(L);L)$  \\  \hline

$10000$ &  $0.97568857909377 \cdot 10^{-6}$
\\ 
$15000$ & $1.79245437499162 \cdot 10^{-6}$
\\ 
$20000$ & $2.75966404251617 \cdot 10^{-6}$\\ 
$24000$ & $3.62767259360406 \cdot 10^{-6}$\\
$29600$ & $4.96876878785117 \cdot 10^{-6}$\\ 
$30000$ & $5.06982657623624 \cdot 10^{-6}$.
\end{tabular}
\end{center}
\caption{Maximum value of the positive eigenvalue in the saddle case for some values of the semi-major axis $a$.}
  \label{table:lambdamax}
\end{table}

\section{Eccentric critical points of the $h$-averaged system}\label{sec:eccentric_case}  

We devote this section to prove Theorem  \ref{thm:AveragedHam_generalcase}   and analyze numerically the results obtained in this theorem. Along this section, we will use the notation introduced in Section~\ref {sec:goodexpresion} without any explicit mention. 

The first step  to  compute the location of the equilibrium points of the  $h$-averaged problem~\eqref{Eq:EMotMoy} is to analyze the equation $\dot{y} = 0$. It  implies that a fixed point exists if it satisfies one of the following conditions:	
	\be
		x=k\pi,		\quad k=0,1
			\qtext{or}	
		\sqrt{P_1}Q_1 = 0.
\label{Eq:condFP}
	\ee 
However, since we are assuming that $y\in(0,L/2)$ the  latter condition cannot happen. That is, all critical points of~\eqref{Eq:EMotMoy} must be of the form 
\[
(k\pi,y) \quad \text{with}\quad y\in(0,L/2).
\]
\subsection{Numerical approach}
For  given $L$ and  $\Gam$, equation~\eqref{Eq:condFP} implies that an eccentric orbit can be a critical point if it satisfies the condition  $x=0,\pi$. The location of the  critical point (equivalently its associated eccentricity for a given $\Gam$) is obtained by solving the equations
\be
\begin{split}
\sqrt{P_1(y, \Gam)}\cB_{0,\alpha}(y,\Gam;L) 
+ \alpha^3 \cB_{1}(y,\Gam;L)&= 0, \qtext{for $x = 0$,}\\
\sqrt{P_1(y,\Gam)}\cB_{0,\alpha}(y,\Gam;L) 
- \alpha^3 \cB_{1}(y,\Gam;L)&= 0,\qtext{for $x = \pi$,}
\end{split}
\label{eq:LocEq}
\ee
where $P_1$, $\cB_{0, \alpha}$ and $\cB_{1}$ are polynomials. In other words, they are roots of the following polynomial in $(y,\Gam)$
\begin{equation*}
\begin{split}
P_1(y,\Gam)(\cB_{0,\alpha}(y,\Gam;L))^2	-\alpha^6(\cB_{1}(y,\Gam;L))^2	=	0.								
\end{split}
\end{equation*}
A way to obtain the solutions is to fix $y$ such that  $0< y/L < 1/2$, compute the roots of a polynomial of degree 6 in $\Gam$ and select the ones for which $0<\Gam<L/2$ and~\eqref{eq:LocEq} is satisfied. The results, computed for $a = 19000\, \mbox{km}$, $a = 24000\, \mbox{km}$, $a = 25450\, \mbox{km}$ and $a = 29600\, \mbox{km}$, are depicted in the left panel of the Figure~\ref{fig:Gamhy} in the action space $(\Gamma/L, y/L)$-plane. 

The stability of the two families of equilibria can be computed from the Hessian matrix of the Hamiltonian. More precisely, for fixed $L$ and $\Gam$ and a given fixed point $(x_0, y_0)$, the variational equations read
\begin{equation*}
\begin{pmatrix}
\dot{x}\\
\dot{y}
\end{pmatrix}
=	
M^{\Gam,L}(x_0, y_0) \begin{pmatrix}
x\\ y \end{pmatrix} \qtext{with} M^{\Gam,L} = 			\begin{pmatrix}
\frac{\partial^2\cH^{\Gam,L}}{\partial y \partial x}
&
\frac{\partial^2\cH^{\Gam,L}}{\partial y^2}\\
-\frac{\partial^2\cH^{\Gam,L}}{\partial x^2}
&-\frac{\partial^2\cH^{\Gam,L}}{\partial x\partial y}
\end{pmatrix}.
\end{equation*}
Since $\mathrm{tr}{M^{\Gam,L}} = 0$, $M^{\Gam,L}$ possesses two eigenvalues $(\lam, -\lam)$ with either $\lam\in i\RR$ or $\lam \in \RR$. Hence, the sign of $\det{M^{\Gam,L}} = -\lam^2$ characterizes the linear stability of the fixed point:
\begin{itemize}
\item hyperbolic fixed point (saddle) for $\det{M^{\Gam,L}}<0$;
\item elliptic fixed point (center) for $\det{M^{\Gam,L}}>0$.
\end{itemize} 	
The results are depicted in the right panel of Figure~\ref{fig:Gamhy}. 

 \begin{figure}
\begin{center}
\includegraphics[width=6.5cm]{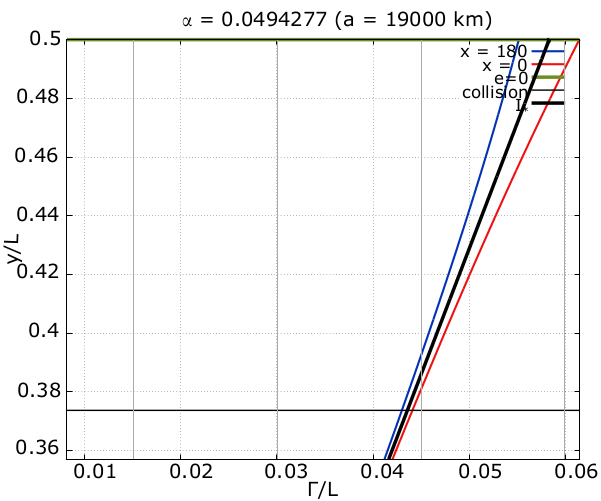}
\includegraphics[width=6.5cm]{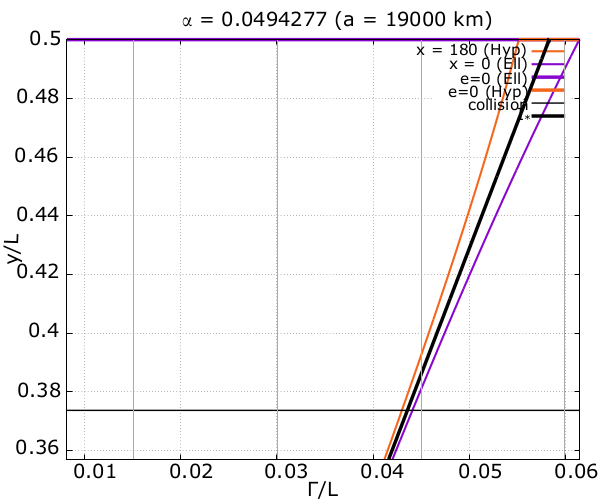}
\includegraphics[width=6.5cm]{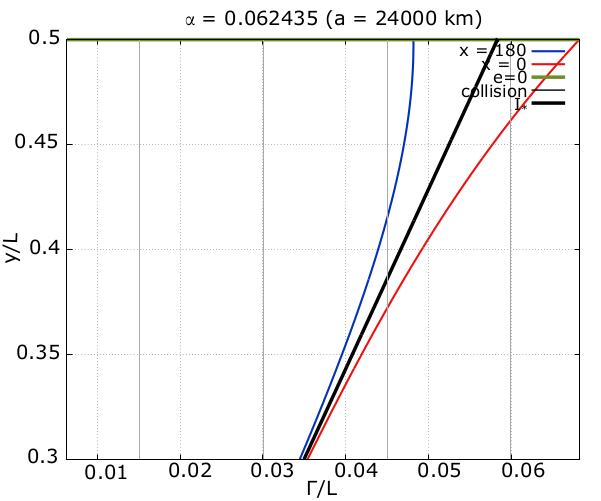}
\includegraphics[width=6.5cm]{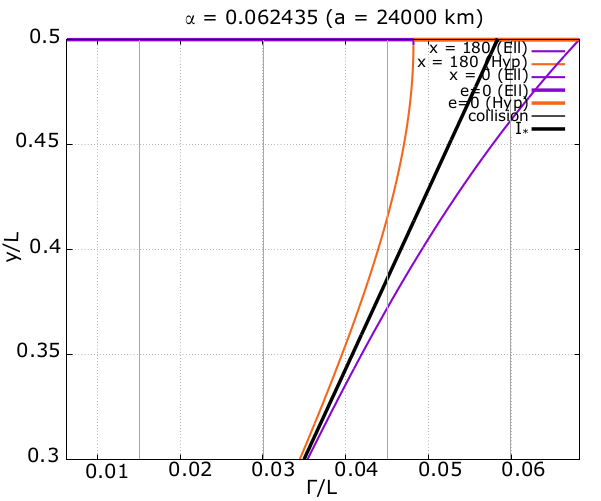}
\includegraphics[width=6.5cm]{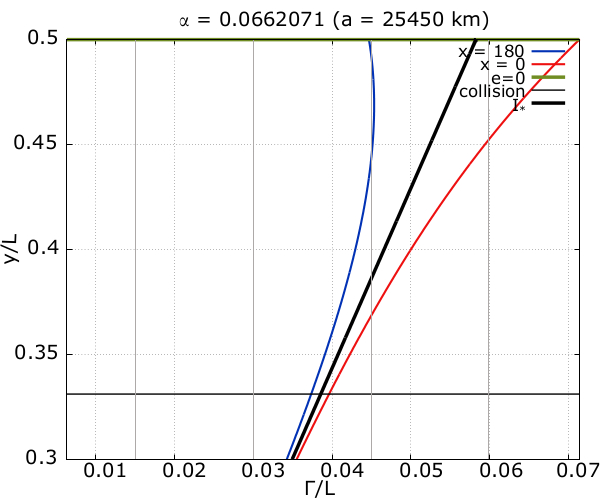}
\includegraphics[width=6.5cm]{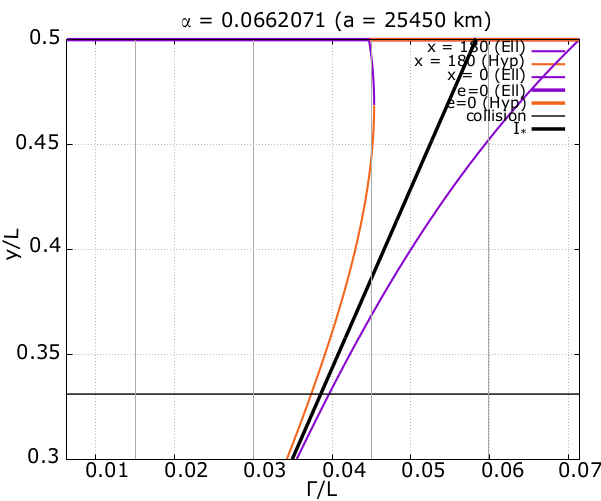}
\includegraphics[width=6.5cm]{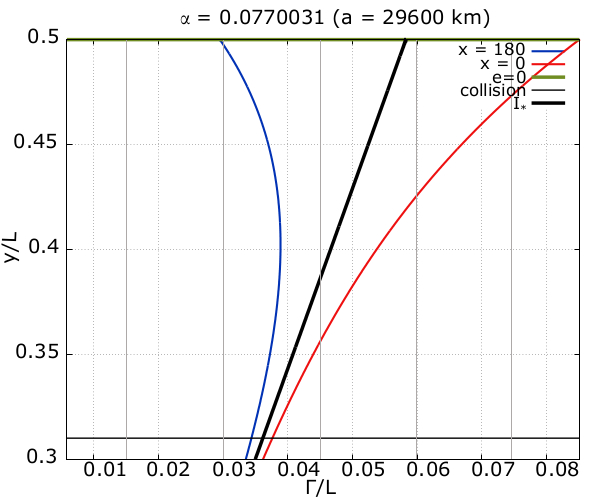}
\includegraphics[width=6.5cm]{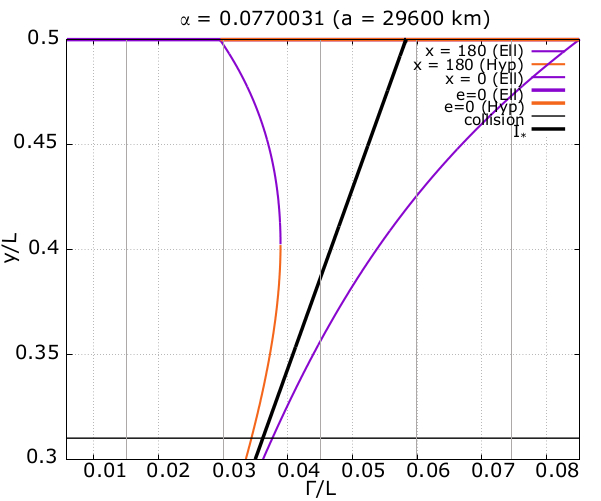}
\end{center}
\caption{Evolution of the fixed point location (left) and stability (right) in the $(\Gamma/L, y/L)$-plane (Action space). In the plots in the first column, the blue and red curves correspond respectively to the fixed point families for $x=0$ and $x=\pi$. In the plots in the second column, the violet and orange curves correspond respectively to the elliptic and hyperbolic part of the $x=0$ and $x=\pi$ fixed point families. 
\label{fig:Gamhy}
}
\end{figure}	

\subsection{Existence of eccentric critical points: {Formulation of the problem}}
To prove Theorem~\ref{thm:AveragedHam_generalcase}, we have to  analyze the existence and character of the fixed points of the form $(0,y)$ and $(\pi,y)$ with $0<y<\frac{L}{2}$.

First, let us now  rewrite condition~\eqref{eq:LocEq} on the fixed points. Using  the decomposition in~\eqref{decompositioncalB0alpha},  the condition~\eqref{eq:LocEq} can be rephrased as
\begin{equation}\label{exprmathcalFfixedpoints}
\mathcal{F}_{\pm}(y,\Gamma;L):=B_0(y,\Gamma;L) + \alpha^3 B_1(y,\Gamma;L) \pm \alpha^3 \frac{\cB_1(y,\Gamma;L)}{\sqrt{P_1(y,\Gamma)}} =0
\end{equation}
where the sign $+$ corresponds to $x=0$ and  the $-$ sign corresponds to $x=\pi$. 

We introduce the scaling 
$$
y=L\hat{y},\qquad \Gamma=y \gg = L \hat{y}\gg.
$$
The values of the variables and parameters we are interested in are $0<\hat{y}  < \frac{1}{2}$, $a_{\min}\leq a\leq a_{\max}$ and $\gg \in (0,\frac{1}{2} )$ 

We rewrite $\mathcal{F}$ in these new variables $(s,\hat{y})$, namely $\widehat{\mathcal{F}}_{\pm}(\gg,\hat{y};L):=\mathcal{F}_{\pm}(L\hat{y},L\hat{y}\gg;L)$. 
Tedious but easy computations lead to
\begin{align*}
\widehat{\mathcal{F}}_\pm (s,\hat{y};L)= &- 6L^3\hat{y}^2( 5 \gg^2 + 8\gg -1)
+12 d_0 \alpha^3 L^7 \hat{y}^5 (4 \hat{y}^2 - 5 \gg - 12 \gg\hat{y}^2 - 5 \gg^2) \\
&\pm 4 d_1 \alpha^3 L^7 \hat{y}^5 \frac{(3+\gg) (12 \hat{y}^2 - 8 \gg \hat{y}^2 - \gg^2)}{\sqrt{3- 2\gg - \gg^2}}
\end{align*} 
where $d_0$ and $d_1$ were introduced in~\eqref{def:d0d1}. 
We define 
\begin{align*}
\mathcal{G}_\pm(\gg,\hat{y};L):= & 5\gg^2 + 8\gg -1  +  2 d_0 \alpha^3 L^4 \hat{y}^3 (5 \gg^2 + 12 \gg\hat{y}^2 + 5 \gg - 4 \hat{y}^2) \\ &  \pm 2 d_1 \alpha^3 L^4 \hat{y}^3 
\frac{(3+\gg)(\gg^2 + 8 \gg\hat{y}^2 - 12 \hat{y}^2)}{3\sqrt{3- 2\gg - \gg^2}} .
\end{align*}
Then, recalling that $y=L\hat{y}$ and $\Gamma=ys=L\hat{y}s$, 
\begin{equation}\label{equivFG_averaged}
 \mathcal{F}_{\pm}(y,\Gam;L)=0 \Longleftrightarrow  \widehat{\mathcal{F}}_{\pm}(s,\hat{y};L)=0     \Longleftrightarrow \mathcal{G}_{\pm}(s,\hat{y};L)=0. 
\end{equation}
The function $\mathcal{G}_\pm$ can be expressed as 
\begin{equation}\label{expr:mathcalG}
\mathcal{G}_{\pm}(\gg,\hat{y};L)=\mathbf{a}(\gg) + 2\alpha^3 L^4 \rho [\mathbf{b}_\pm (\gg) \hat{y}^3 + \mathbf{c}_\pm(\gg) \hat{y}^5 ] 
\end{equation}
with 
\begin{align*}
    \mathbf{a}(\gg)&= 5  {\gg}^2 + 8 {\gg} -1     \\
    \mathbf{b}_\pm (\gg)&=  10U_2^{0,0}  {\gg} (1+  {\gg}) \pm \frac{20 U_2^{1,0} (3+ {\gg})}{3 \sqrt{3 - 2{\gg}-{\gg}^2}}  {\gg}^2 
\\ 
    \mathbf{c}_\pm(\gg) &=  8U_2^{0,0} (3 {\gg}-1 ) \pm \frac{20 U_2^{1,0} (3+  {\gg})}{3 \sqrt{3 - 2{\gg} -  {\gg}^2 }} ( 8 {\gg}-12). 
\end{align*}

\begin{lemma}\label{lem:bcposneg}
The functions $\mathbf{a}$, $\mathbf{b}_{\pm}$, $\mathbf{c}_{\pm}$ are strictly increasing functions at $\left [0,\frac{1}{2} \right ]$. As a consequence, for $\hat{y},L>0$, we have that $\partial_\gg \mathcal{G}_{\pm}(\gg,\hat{y};L)>0$.
 
In addition, $\mathbf{a}(m)=0$, $\mathbf{a}(\gg)<0$ if $\gg\in [0,m)$ and $\mathbf{a}(\gg)>0$ if $\gg >m$. 
\end{lemma}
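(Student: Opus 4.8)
The plan is to reduce the whole statement to the monotonicity of the three one–variable functions $\mathbf{a},\mathbf{b}_\pm,\mathbf{c}_\pm$ on $\left[0,\frac12\right]$. Indeed, differentiating~\eqref{expr:mathcalG} gives $\partial_\gg\mathcal{G}_{\pm}(\gg,\hat{y};L)=\mathbf{a}'(\gg)+2\alpha^3L^4\rho\left[\mathbf{b}'_\pm(\gg)\hat{y}^3+\mathbf{c}'_\pm(\gg)\hat{y}^5\right]$, and since $\alpha,L,\rho,\hat{y}>0$, once $\mathbf{a}',\mathbf{b}'_\pm,\mathbf{c}'_\pm$ are shown to be positive on $\left[0,\frac12\right]$ the bound $\partial_\gg\mathcal{G}_{\pm}>0$ is immediate. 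The statements on $\mathbf{a}$ are elementary: $\mathbf{a}'(\gg)=10\gg+8>0$ on $[0,\infty)$, so $\mathbf{a}$ is strictly increasing; the unique nonnegative root of $5\gg^2+8\gg-1=0$ is $\frac{-4+\sqrt{21}}{5}=\sl$, whence $\mathbf{a}(\sl)=0$ and monotonicity yields $\mathbf{a}<0$ on $[0,\sl)$ and $\mathbf{a}>0$ on $(\sl,\infty)$.

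For $\mathbf{b}_\pm$ and $\mathbf{c}_\pm$ I would first differentiate explicitly. Setting $p(\gg)=3-2\gg-\gg^2=(3+\gg)(1-\gg)$, a short computation gives
\[
\mathbf{b}'_\pm(\gg)=10 U_2^{0,0}(1+2\gg)\pm\frac{40 U_2^{1,0}}{3}\cdot\frac{\gg\,(9-4\gg^2-\gg^3)}{p(\gg)^{3/2}},\qquad \mathbf{c}'_\pm(\gg)=24 U_2^{0,0}\mp\frac{160 U_2^{1,0}}{3}\cdot\frac{\gg^2}{(1-\gg)\sqrt{p(\gg)}}.
\]
On $\left[0,\frac12\right]$ one has $p(\gg)\ge\frac74$ and $9-4\gg^2-\gg^3>0$ (this polynomial is decreasing on $[0,\infty)$ and positive at $\gg=\frac12$). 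Hence for $\mathbf{b}_+$ and for $\mathbf{c}_-$ the two summands of the derivative are both nonnegative with the leading one strictly positive, so $\mathbf{b}'_+>0$ and $\mathbf{c}'_->0$ at once.

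The remaining two cases, $\mathbf{b}_-$ and $\mathbf{c}_+$, require an estimate showing the $U_2^{1,0}$–term cannot overcome the $U_2^{0,0}$–term on $\left[0,\frac12\right]$. For $\mathbf{c}_+$ a crude bound suffices: since $\gg^2\le\frac14$, $1-\gg\ge\frac12$ and $\sqrt{p(\gg)}\ge\frac{\sqrt7}{2}$, one gets $\frac{\gg^2}{(1-\gg)\sqrt{p(\gg)}}\le\frac{1}{\sqrt7}$, and $\frac{1}{\sqrt7}\approx0.378<\frac{72 U_2^{0,0}}{160 U_2^{1,0}}\approx0.627$, so $\mathbf{c}'_+>0$. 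For $\mathbf{b}_-$ the margin is tighter and the factor $1+2\gg$ must be retained: using $\frac{\gg}{1+2\gg}=\frac12-\frac{1}{2(1+2\gg)}\le\frac14$, $9-4\gg^2-\gg^3\le9$ and $p(\gg)^{3/2}\ge\bigl(\frac74\bigr)^{3/2}=\frac{7\sqrt7}{8}$ on $\left[0,\frac12\right]$, one obtains $\frac{\gg\,(9-4\gg^2-\gg^3)}{(1+2\gg)\,p(\gg)^{3/2}}\le\frac{18}{7\sqrt7}\approx0.972$; since $\frac{18}{7\sqrt7}<\frac{3 U_2^{0,0}}{4 U_2^{1,0}}\approx1.045$, this gives $\mathbf{b}'_-(\gg)\ge(1+2\gg)\bigl(10 U_2^{0,0}-\frac{40 U_2^{1,0}}{3}\cdot\frac{18}{7\sqrt7}\bigr)>0$. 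Collecting these five monotonicity facts and substituting into the formula for $\partial_\gg\mathcal{G}_{\pm}$ completes the argument; the sign statement for $\mathbf{a}$ is recorded along the way.

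The only genuinely delicate point is the $\mathbf{b}_-$ inequality: the gap between the estimate and the threshold $\frac{3 U_2^{0,0}}{4 U_2^{1,0}}$ is small, so one cannot afford the lossy bound $\frac{\gg}{1+2\gg}\le\frac12$ and must keep the exact factor $1+2\gg$. Everything else — the two derivative computations, the factorization $p=(3+\gg)(1-\gg)$, and the final sign book-keeping for $\partial_\gg\mathcal{G}_{\pm}$ — is routine.
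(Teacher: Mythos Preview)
Your argument is correct and follows the same overall strategy as the paper: differentiate $\mathbf{b}_\pm,\mathbf{c}_\pm$ explicitly and dominate the $U_2^{1,0}$–term by the $U_2^{0,0}$–term via explicit numerical bounds on $\left[0,\tfrac12\right]$.

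The only methodological difference is in the handling of $\mathbf{b}'_\pm$. The paper bounds the absolute value of the $U_2^{1,0}$–term and, because a single crude bound on all of $\left[0,\tfrac12\right]$ does not quite suffice, splits the interval at $s_*=0.35$: on $\left[0,s_*\right]$ it uses that $s(9-4s^2-s^3)$ is increasing to bound the second term by about $6.73<10\,U_2^{0,0}$, and on $\left[s_*,\tfrac12\right]$ it exploits the factor $1+2s\ge 1.7$ in the first term. You instead observe that $\mathbf{b}'_+$ and $\mathbf{c}'_-$ are sums of two nonnegative terms and hence trivially positive, and for $\mathbf{b}'_-$ you keep the factor $1+2s$ throughout, bounding $\frac{s}{1+2s}\le\tfrac14$, which yields a single estimate $\frac{18}{7\sqrt7}\approx 0.972<\frac{3U_2^{0,0}}{4U_2^{1,0}}\approx 1.045$ on the whole interval. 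Your route is a bit tidier (no interval split), while the paper's route treats both signs of $\mathbf{b}_\pm$ in one stroke; both are elementary and the margins are comparably tight.
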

\begin{proof}
The statements related to $\mathbf{a}$ are immediate. We compute now $\partial_\gg \mathbf{b}_\pm, \partial_\gg \mathbf{c}_\pm$,
\begin{equation*}
\begin{aligned}
    \partial_\gg \mathbf{b}_\pm (\gg)&=   10U_2^{0,0}  (1+  2{\gg}) \pm 20 U_2^{1,0} \frac{ 2 \gg (9-4 \gg^2 -\gg^3)}{3(3 - 2{\gg}-{\gg}^2)^{3/2}}  
\\ 
    \partial_\gg \mathbf{c}_\pm(\gg) &=   24 U_2^{0,0}   \mp 20 U_2^{1,0} \frac{ 8\gg^2 }{3(1-\gg) \sqrt{3 - 2{\gg} -  {\gg}^2 }}   .
\end{aligned}
\end{equation*}
The function $\gg(9-4\gg^2 - \gg^3)$ is an increasing function for $\gg\in \left (0,0.7 \right )$ provided 
\[
9-12 \cdot 0.7^2 -4 \cdot 0.7^3 = 1.748 >0.
\]
Therefore, denoting $s_*=0.35$, if $s\in (0,s_*]$,
\begin{align*}
\left | \pm 20 U_2^{1,0} \frac{ 2 \gg (9-4 \gg^2 -\gg^3)}{3(3 - 2{\gg}-{\gg}^2)^{3/2}} \right | & \leq 
20 U_2^{1,0}  \frac{ 2 \gg_* (9-4 \gg_*^2 -\gg_*^3)}{3(3 - 2{\gg_*}-{\gg_*}^2)^{3/2}}\\ & =6.731992451918273.
\end{align*}
Since $10 U_2^{0,0}  =7.62646$, we conclude that $\partial_\gg \mathbf{b}_\pm(\gg)>0$ if $s\in (0,s_*]$. When $s\in \left (\gg_*, \frac{1}{2} \right ]$,
\[
\left | \pm 20 U_2^{1,0} \frac{ 2 \gg (9-4 \gg^2 -\gg^3)}{3(3 - 2{\gg}-{\gg}^2)^{3/2}} \right | \leq 12.414817621987043
\]
and $10 U_2^{0,0} (1+2s) \geq 10 U_2^{0,0} (1+2s_*)= 12.96498200000000$. Therefore, we conclude that $\partial_\gg \mathbf{b}_\pm (\gg) >0$ if $s\in \left [0,\frac{1}{2}\right ]$. 
On the other hand, if $0<s\leq \frac{1}{2}$, then
\[
    \left |\mp 20 U_2^{1,0} \frac{ 8\gg^2 }{3(1-\gg) \sqrt{3 - 2{\gg} -  {\gg}^2 }} \right |  \leq 20 U_2^{1,0} \frac{ 2 }{\frac{3}{2}  \sqrt{\frac{7}{4}}}   =11.035393441766260
\]
and $24U_2^{0,0} = 18.303504$. Therefore $\partial_\gg \mathbf{c}_{\pm}(\gg)>0$. 

When $s\geq 0$, $\partial_\gg \mathbf{a}(s) = 8 + 10 \gg>0$ and thus $\partial_\gg \mathcal{G}_{\pm} (\gg,\hat{y};L)>0$ provided $\hat{y},L>0$. 
\end{proof}

\subsection{A preliminary existence result}
The next step to compute the eccentric critical points is to solve the equations in \eqref{equivFG_averaged}. We look for zeros of the functions $\GG_\pm$ in $s$.
 
\begin{lemma}\label{lem:existencefixedpointsG}
Let $L$ be satisfying $0<L<L_{\max}$, $L=\sqrt{\mu a}$, and
\begin{equation}\label{def:ymin}
\hat{y}_{\min}= \sqrt{-\frac{\mathbf{b}_+(\sl)}{\mathbf{c}_+(\sl)}}= 0.116589071022807.
\end{equation}
We have that $\mathcal{G}_+(m,\hat{y}_{\min};L)=0$ and 
\begin{enumerate}
\item For any $0<\hat{y}\leq \frac{1}{2}$, there exists a unique analytic function $\gg_{+}(\hat{y};L)$, which belongs to  $\left (0, \frac{3}{2} \sl\right )$, satisfying  $\mathcal{G}_+(\gg_+(\hat{y};L),\hat{y};L)=0$.
In addition, for $\hat{y}_{\min}< \hat{y}\leq \frac{1}{2}$ the function $\gg_{+}(\hat{y};L)$ belongs to $\left (\sl, \frac{3}{2} \sl\right )$ and when $0<\hat{y}<\hat{y}_{\min}$, $\gg_{+}(\hat{y};L) \in (0,m)$.
\item For any $0<\hat{y}\leq \frac{1}{2}$, there exists a unique analytic function $\gg_{-}(\hat{y};L)$, which belongs to  $(\frac{2}{5} \sl ,\sl)$, satisfying
$\mathcal{G}_-(\gg_-(\hat{y};L), \hat{y};L)=0$. 
\end{enumerate}
\end{lemma}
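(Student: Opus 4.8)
The plan is to combine the strict monotonicity of $\mathcal{G}_{\pm}$ in the variable $\gg$ supplied by Lemma~\ref{lem:bcposneg} --- which gives $\partial_{\gg}\mathcal{G}_{\pm}(\gg,\hat{y};L)>0$ for all $\hat{y},L>0$ --- with Bolzano's theorem and the analytic implicit function theorem. Monotonicity alone already shows that, for fixed $\hat{y},L$, the map $\gg\mapsto\mathcal{G}_{\pm}(\gg,\hat{y};L)$ has at most one zero in $[0,1/2]$, so the whole statement reduces to: (i) determining the sign of $\mathcal{G}_{\pm}$ at the rational endpoints $\gg\in\{0,\tfrac{2}{5}\sl,\sl,\tfrac{3}{2}\sl\}$; and (ii) the regularity of the zero. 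For (ii), note that $\mathcal{G}_{\pm}$ is real-analytic in $(\gg,\hat{y})$ on $[0,1/2]\times(0,1/2]$ (the only non-polynomial factor, $(3-2\gg-\gg^{2})^{-1/2}$, is analytic for $\gg<1$, and $\tfrac{3}{2}\sl<1$) and also analytic in $L$, since $\delta=\rho\alpha^{3}L^{4}$ with $\alpha=L^{2}/(\mu a_{\rM})$ is polynomial in $L$; because $\partial_{\gg}\mathcal{G}_{\pm}\neq0$ at any of its zeros, the analytic implicit function theorem, together with the uniqueness just noted, produces a single analytic function $\gg_{\pm}(\hat{y};L)$ on the whole parameter range.

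The core of the argument is then a handful of explicit sign estimates, all based on the formula $\mathcal{G}_{\pm}(\gg,\hat{y};L)=\mathbf{a}(\gg)+2\delta\big[\mathbf{b}_{\pm}(\gg)\hat{y}^{3}+\mathbf{c}_{\pm}(\gg)\hat{y}^{5}\big]$ from~\eqref{expr:mathcalG} and on the bound $0<\delta<\delta_{\max}=\rho\alpha_{\max}^{3}L_{\max}^{4}$ (numerically $\delta_{\max}\approx0.12$): \textbf{(a)} $\mathbf{b}_{+}(0)=0$ and $\mathbf{c}_{+}(0)<0$, hence $\mathcal{G}_{+}(0,\hat{y};L)=-1+2\delta\,\mathbf{c}_{+}(0)\hat{y}^{5}<-1<0$; \textbf{(b)} $\mathbf{a}(\sl)=0$, $\mathbf{b}_{+}(\sl)>0$, $\mathbf{c}_{+}(\sl)<0$, so $-\mathbf{b}_{+}(\sl)/\mathbf{c}_{+}(\sl)>0$ (which makes $\hat{y}_{\min}$ in~\eqref{def:ymin} well defined) and $\mathcal{G}_{+}(\sl,\hat{y};L)=2\delta\,\hat{y}^{3}\,\mathbf{c}_{+}(\sl)\big(\hat{y}^{2}-\hat{y}_{\min}^{2}\big)$, which is positive, zero, or negative according as $\hat{y}<\hat{y}_{\min}$, $\hat{y}=\hat{y}_{\min}$, or $\hat{y}>\hat{y}_{\min}$ (in particular $\mathcal{G}_{+}(\sl,\hat{y}_{\min};L)=0$ for every $L$); \textbf{(c)} at $\gg=\tfrac{3}{2}\sl$ one has $\mathbf{b}_{+}>0$, $\mathbf{c}_{+}<0$ with $|\mathbf{c}_{+}(\tfrac{3}{2}\sl)|>4\,\mathbf{b}_{+}(\tfrac{3}{2}\sl)$, so $\hat{y}\mapsto\mathbf{b}_{+}(\tfrac{3}{2}\sl)\hat{y}^{3}+\mathbf{c}_{+}(\tfrac{3}{2}\sl)\hat{y}^{5}$ attains its minimum over $(0,1/2]$ at $\hat{y}=1/2$, whence $\mathcal{G}_{+}(\tfrac{3}{2}\sl,\hat{y};L)\geq\mathbf{a}(\tfrac{3}{2}\sl)+2\delta_{\max}\big(\tfrac{1}{8}\mathbf{b}_{+}(\tfrac{3}{2}\sl)+\tfrac{1}{32}\mathbf{c}_{+}(\tfrac{3}{2}\sl)\big)>0$; \textbf{(d)} $\mathbf{a}(\tfrac{2}{5}\sl)<0$ while $\mathbf{b}_{-}(\tfrac{2}{5}\sl)>0$ and $\mathbf{c}_{-}(\tfrac{2}{5}\sl)>0$, so $\mathcal{G}_{-}(\tfrac{2}{5}\sl,\hat{y};L)\leq\mathbf{a}(\tfrac{2}{5}\sl)+2\delta_{\max}\big(\tfrac{1}{8}\mathbf{b}_{-}(\tfrac{2}{5}\sl)+\tfrac{1}{32}\mathbf{c}_{-}(\tfrac{2}{5}\sl)\big)<0$; \textbf{(e)} $\mathbf{a}(\sl)=0$, $\mathbf{b}_{-}(\sl)>0$, $\mathbf{c}_{-}(\sl)>0$, hence $\mathcal{G}_{-}(\sl,\hat{y};L)=2\delta\,\hat{y}^{3}\big(\mathbf{b}_{-}(\sl)+\mathbf{c}_{-}(\sl)\hat{y}^{2}\big)>0$. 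Each of these amounts to evaluating a polynomial and a single square root at a rational point, together with the numerical value of $\delta_{\max}$.

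With these facts, the conclusions follow mechanically. For $\mathcal{G}_{+}$: by~(a),~(c) and monotonicity there is a unique $\gg_{+}(\hat{y};L)\in(0,\tfrac{3}{2}\sl)$; comparing with~(b) via the intermediate value theorem on $[0,\sl]$ (resp.\ $[\sl,\tfrac{3}{2}\sl]$) places $\gg_{+}\in(0,\sl)$ when $\hat{y}<\hat{y}_{\min}$, $\gg_{+}=\sl$ when $\hat{y}=\hat{y}_{\min}$, and $\gg_{+}\in(\sl,\tfrac{3}{2}\sl)$ when $\hat{y}_{\min}<\hat{y}\leq1/2$, which is exactly item~(1) of the statement (recall $\sl=m$); and~(b) yields $\mathcal{G}_{+}(m,\hat{y}_{\min};L)=0$. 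For $\mathcal{G}_{-}$: by~(d),~(e) and monotonicity there is a unique $\gg_{-}(\hat{y};L)\in(\tfrac{2}{5}\sl,\sl)$, which is item~(2). The analyticity of $\gg_{\pm}$ in $(\hat{y},L)$ is the implicit function theorem argument of the first paragraph.

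\emph{The main obstacle} is step~(c), and to a lesser extent step~(d): the coefficients $\mathbf{c}_{\pm}$ at the endpoints $\tfrac{3}{2}\sl$ and $\tfrac{2}{5}\sl$ are of order $70$--$80$, so a crude estimate such as $|\mathbf{c}_{+}(\tfrac{3}{2}\sl)|\hat{y}^{5}\leq|\mathbf{c}_{+}(\tfrac{3}{2}\sl)|/32$ is already too lossy to conclude $\mathcal{G}_{+}(\tfrac{3}{2}\sl,\cdot)>0$. One must instead identify the true minimizer $\hat{y}=1/2$ of $\mathbf{b}_{+}\hat{y}^{3}+\mathbf{c}_{+}\hat{y}^{5}$ on $(0,1/2]$ (this is why $|\mathbf{c}_{+}(\tfrac{3}{2}\sl)|>4\,\mathbf{b}_{+}(\tfrac{3}{2}\sl)$ is recorded) and use the sharp value $\delta<\delta_{\max}\approx0.12$ forced by $a_{\max}=30000$ km; the resulting positivity margin at $\gg=\tfrac{3}{2}\sl$ is small but strictly positive. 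All the remaining inequalities are routine numerical bookkeeping.
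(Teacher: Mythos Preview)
Your proof is correct and follows essentially the same route as the paper: strict monotonicity in $\gg$ from Lemma~\ref{lem:bcposneg}, Bolzano at the same endpoints $\{0,\sl,\tfrac{3}{2}\sl\}$ and $\{\tfrac{2}{5}\sl,\sl\}$, and the analytic implicit function theorem. The numerical sign checks in (a)--(e) are exactly the ones the paper performs, and your identification of the minimizer $\hat{y}=1/2$ in step~(c) via $|\mathbf{c}_{+}(\tfrac{3}{2}\sl)|>4\,\mathbf{b}_{+}(\tfrac{3}{2}\sl)$ is in fact slightly cleaner than the paper's argument, which lists the interior critical point $\hat{y}=\sqrt{-3\widehat{\mathbf{b}}_+/(5\widehat{\mathbf{c}}_+)}$ among the candidates for the minimum even though it is a local maximum of the bracket.
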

\begin{proof}
To prove the existence of solutions of the equation $\mathcal{G}_{\pm}(\gg,\hat{y};L)=0$, we  fix $\hat{y}, L$ (or $a$) in the ranges $0<\hat{y}\leq \frac{1}{2}$ and $0\leq a\leq a_{\max}$ and we use a Bolzano argument (with respect to the $\gg$ variable). 

We deal first with the $+$ case. Since $\mathbf{b}_+(0)=0$, 
$$
\mathcal{G}_+(0,\hat{y};L)=-1 + 2 \alpha^3 L^4 \hat{y}^5 \mathbf{c}_+(0).
$$
It is clear from the definition of $\mathbf{c}_+$ that 
$\mathbf{c}_+(0)<0$ so that $\mathcal{G}_+(0,\hat{y};L)<0$. 
For fixed $L$, now we prove that the function
$\mathcal{G}_+\left (\frac{3}{2}\sl,\hat{y};L\right )$ is positive. We introduce 
$$
C_+(\hat{y}):=\mathcal{G}_+\left (\frac{3}{2}\sl,\hat{y};L\right )= \hat{\mathbf{a}}  + 2 \alpha^3 L^4 \rho \left  [\hat{\mathbf{b}}_+ \hat{y}^3   + \hat{\mathbf{c}}_+  \hat{y}^5 \right ]
$$
with
\begin{equation}\label{value:bposm32} 
\widehat{\mathbf{a}} =\mathbf{a} \left (\frac{3}{2} \sl\right )=0.550909166052992 ,\quad \widehat{\mathbf{b}}_+=\mathbf{b}_+\left (\frac{3}{2} \sl\right )=1.784508217583372,
\end{equation}
\begin{equation}\label{value:cposm32}
\widehat{\mathbf{c}}_+=\mathbf{c}_+\left (\frac{3}{2} \sl\right )=-78.794344798569341.
\end{equation}
For a fixed $L$, the function $C_+(\hat{y})$ has the minimum value either at the points $\hat{y}=0 ,\hat{y}=\sqrt{\frac{-3\widehat{\mathbf{b}}_+}{5\widehat{\mathbf{c}}_+}}= 0.116570155823739$ or $\hat{y}=\frac{1}{2}$. Evaluating at these points, we obtain that
$C_+(0)=\widehat{\mathbf{a}} >0$, from the fact that $\widehat{\mathbf{c}}_+<0$:
\[
C_+\left (\sqrt{\frac{-3\widehat{\mathbf{b}}_+}{5\widehat{\mathbf{c}}_+}} \right )=\widehat{\mathbf{a}} - 2 \alpha^3L^4 \rho \frac{6 \widehat{\mathbf{b}}_+^2}{25 \widehat{\mathbf{c}}_+ }\sqrt{\frac{-3\widehat{\mathbf{b}}_+}{5\widehat{\mathbf{c}}_+}}>0
\]
and finally, using that $4 \widehat{\mathbf{b}}_+ + \widehat{\mathbf{c}}_+<0$, 
\begin{align}  
C_+ \left (\frac{1}{2}\right ) & = \widehat{\mathbf{a}} + \alpha^3 L^4  \rho \frac{1}{16 } (4 \widehat{\mathbf{b}}_+ + \widehat{\mathbf{c}}_+) \geq \widehat{\mathbf{a}}+ \rho \frac{  a_{\max}^5 \mu^2}{16 a_M^3}   (4\widehat{\mathbf{b}}_+ + \widehat{\mathbf{c}}_+) \label{def:C12}\\ 
&= 0.014481715549738>0. \notag 
\end{align} 
The previous analysis proves that 
\[\mathcal{G}_+(0,\hat{y};L) \cdot \mathcal{G}_+\left (\frac{3}{2} \sl , \hat{y};L \right ) <0
\]
and therefore, there exists $\gg_+ =\gg_+(\hat{y};L)\in \left (0,\frac{3}{2} \sl \right )$ such that $\mathcal{G}_+(\gg_+,\hat{y};L)=0$.

By Lemma~\ref{lem:bcposneg}, $\partial_\gg \mathcal{G}_+(\gg, \hat{y};L)>0$ 
so that, for fixed $\hat{y},L$, $\gg_+=\gg_+(\hat{y};L)$ is the unique solution of $\mathcal{G}_+(\gg,\hat{y};L)=0$. By the implicit function theorem, the dependence on $\hat{y},L$ of $\gg_+(\hat{y};L)$ is analytic. 

To finish the result related to the $+$ case, we consider now
\[
\mathcal{G}_{+}(m,\hat{y};L)=2\alpha^3 L^4 \rho \left [ \mathbf{b}_+(m) \hat{y}^3 + \mathbf{c}_+(m) \hat{y}^5\right].
\]
Notice that its sign does not depend on $L$. 
We have that 
\begin{equation*}
\mathbf{b}_+(m)=1.085189654741836,\qquad  \mathbf{c}_+(m)=-79.834380790596938. 
\end{equation*}
Therefore, for $0<\hat{y}< \hat{y}_{\min}$, $\mathcal{G}_+(m,\hat{y};L)>0$ and when $\hat{y}>\hat{y}_{\min}$, then $\mathcal{G}_+(m,\hat{y};L)<0$. We conclude that for any
$0\leq a \leq a_{\max}$ and $\hat{y}_{\min}<\hat{y}\leq \frac{1}{2}$ the function, $\gg_+(\hat{y};L)\in (\sl,\frac{3}{2} \sl)$ and 
when $0<\hat{y}<\hat{y}_{\min}$, we have that $\gg_+ (\hat{y};L)\in (0,\sl)$.

For the $-$ case, we proceed analogously. We study the function $\mathcal{G}_-(\sl,\hat{y};L)$. 
Using that $\mathbf{a} ( \sl)=0$, 
\[
\mathcal{G}_-(\sl,\hat{y};L) = 2 \alpha^3 L^4 \rho\left [ {\mathbf{b}}_-(m)  \hat{y}^3 + {\mathbf{c}}_-(m) \hat{y}^5 \right ]
\]
with 
\[
\mathbf{b}_- (\sl )=0.899076688965464, \qquad \mathbf{c}_-(\sl )=71.897315415767750,
\]
and hence $\mathcal{G}_-(\sl,\hat{y};L)> 0$ provided $\hat{y},L>0$. Moreover, we have that 
\begin{equation}\label{value:bneg04m}
\mathbf{a} \left(\frac{2}{5}\sl\right) = -0.616290933136957
, \qquad 
\mathbf{b}_-\left(\frac{2}{5}\sl\right)= 0.357833740529904
\end{equation}
and
\begin{equation}\label{value:cneg04m}
\mathbf{c}_-\left(\frac{2}{5}\sl\right) = 70.608271414680971.
\end{equation}
From these values we have that 
\[
\mathcal{G}_-\left(\frac{2}{5}\sl, \hat{y};L\right) \leq \mathcal{G}_- \left (\frac{2}{5}\sl , \frac{1}{2}; L_{\max} \right )= -0.076994095997652 <0.
\]
As a consequence, for any $0<\hat{y}\leq \frac{1}{2}$ and $0<a\leq a_{\max}$, there exists $\gg_-=\gg_-(\hat{y};L) \in (\frac{2}{5}\sl,\sl)$ such that $\mathcal{G}_-(\gg_-,\hat{y};L)=0$. Then, for fixed $\hat{y},L$, since by Lemma~\ref{lem:bcposneg}, $\partial_\gg \mathcal{G}_->0$, $\gg_-(\hat{y};L)$ is the unique zero of $\mathcal{G}_-(\gg,\hat{y};L)$ and by the Implicit Function Theorem $\gg_-$ is an analytic function. 
\end{proof}
 
\begin{remark}
We stress that if the semi-major axis $a\to \infty$ (that is $L\to \infty$), $C_+\left (\frac{1}{2}\right )$ in~\eqref{def:C12} is negative provided $4 \widehat{\mathbf{b}}_+ + \widehat{\mathbf{c}}_+<0$. In order to be a negative quantity, the maximum semi-major axis has to satisfy
\[
\alpha^3 L^4 < \frac{16\widehat{\mathbf{a}} }{ |4 \widehat{\mathbf{b}}_+ + \widehat{\mathbf{c}}_+| \rho }  \Leftrightarrow
a < \left ( \frac{16\hat{\mathbf{a}}\ a_M^3}{ |4 \hat{\mathbf{b}}_+ + \hat{\mathbf{c}}_+| \rho \mu^2 }\right )^{1/5} = 30160.25822948035 \, \mathrm{km},
\]
that corresponds to $a<1.018927642887850
$ in the units specified in Remark~\ref{rmk_unities}. 
This fact leads us to think that $a_{\max}$ is (almost) the optimal value for our arguments to be true.
\end{remark}

\begin{remark} We have that $\mathbf{b}_-(\frac{1}{2} \sl), \mathbf{c}_-(\frac{1}{2} \sl)>0$. Then for semi-major axis satisfying
\[
a\leq \left ( -\frac{16 \mathbf{a}(\frac{1}{2} \sl)  a_M^3   }{\mu^2 \rho (4\mathbf{b}_-(\frac{1}{2}m) + \mathbf{c}_-(\frac{1}{2}m))} \right )^{\frac{1}{5}} = 29700.20301662558\, \mathrm{km}
\]
(corresponding to $a\leq 1.003385237048162$ in the units in Remark~\ref{rmk_unities}) we have that
$\mathcal{G}_-(\frac{\sl}{2} ,\hat{y};L)<0$ and therefore $\gg_-(\hat{y};L) \in \left (\frac{\sl}{2} , \sl \right )$.
\end{remark}  

A straightforward corollary of this preliminary analysis is the following result, see~\eqref{equivFG_averaged}. 
\begin{corollary}\label{prop:existencefixedpointselliptic} 
Let $a$ be a semi-major axis satisfying $0< a \leq a_{\max}$. Then, given $L\hat{y}_{\min} < y \leq \frac{L}{2}$ with $L=\sqrt{\mu a} \in (0,L_{\max}]$, there exist two analytic functions $\Gamma_\pm(y;L)$ such that 
$$
\frac{\Gamma_{+}(y;L)}{ \sl y} \in \left (1, \frac{3}{2} \right  ), \;\frac{\Gamma_{-}(y;L)}{ \sl y} \in \left(\frac{2}{5}   ,1 \right),\qquad 
\mathcal{F}_{\pm}(y,\Gamma_{\pm}(y;L);L) =0.
$$
For $y,L$ fixed, $\Gamma_\pm:=\Gamma_{\pm}(y;L)$ are the unique solutions of $\mathcal{F}_\pm(y,\Gamma_\pm;L)=0$ belonging to $\left (0, \frac{y}{2}  \right)$.  
We also have that  $\partial_\Gamma \mathcal{F}_\pm(y,\Gamma;L)<0$ for $\Gamma\in \left (0, \frac{y}{2}  \right )$. 
\end{corollary}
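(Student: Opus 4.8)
The plan is to reduce the whole statement to Lemmas~\ref{lem:existencefixedpointsG} and~\ref{lem:bcposneg} by carrying the two rescalings $\hat{y}:=y/L$ and $s:=\Gamma/y$ introduced just before~\eqref{equivFG_averaged}. Since $L=\sqrt{\mu a}$ with $0<a\le a_{\max}$, we have $L\in(0,L_{\max}]$; the hypothesis $L\hat{y}_{\min}<y\le L/2$ is precisely $\hat{y}\in(\hat{y}_{\min},\tfrac{1}{2}]$, and $\Gamma\in(0,y/2)$ corresponds bijectively and increasingly to $s\in(0,\tfrac{1}{2})$. Comparing the displayed formula for $\widehat{\mathcal F}_\pm$ with that for $\mathcal G_\pm$, and recalling $\widehat{\mathcal F}_\pm(s,\hat{y};L)=\mathcal F_\pm(L\hat{y},L\hat{y}s;L)$, one reads off the exact identity
\[
\mathcal F_\pm(y,\Gamma;L)\;=\;-\,6\,L\,y^{2}\;\mathcal G_\pm\!\Bigl(\tfrac{\Gamma}{y},\,\tfrac{y}{L};\,L\Bigr),
\]
that is, $\mathcal F_\pm$ equals $\mathcal G_\pm$ (in the rescaled variables) times the strictly negative factor $-6Ly^{2}$. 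This is the quantitative form of the equivalence~\eqref{equivFG_averaged} that the proof exploits.

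Next I would invoke Lemma~\ref{lem:existencefixedpointsG}: since $\hat{y}_{\min}<\hat{y}\le\tfrac{1}{2}$, it provides unique analytic functions $s_\pm(\hat{y};L)$ with $\mathcal G_\pm(s_\pm(\hat{y};L),\hat{y};L)=0$, and moreover $s_+(\hat{y};L)\in(\sl,\tfrac{3}{2}\sl)$ and $s_-(\hat{y};L)\in(\tfrac{2}{5}\sl,\sl)$. Define $\Gamma_\pm(y;L):=y\,s_\pm(y/L;L)$. By the identity above, $\mathcal F_\pm\bigl(y,\Gamma_\pm(y;L);L\bigr)=-6Ly^{2}\,\mathcal G_\pm\bigl(s_\pm(y/L;L),y/L;L\bigr)=0$, while $\Gamma_\pm(y;L)/(\sl y)=s_\pm(y/L;L)/\sl$ lies in $(1,\tfrac{3}{2})$, respectively $(\tfrac{2}{5},1)$, as required. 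Analyticity of $\Gamma_\pm(y;L)$ is immediate: it is the product of $y$ with the composition of the analytic map $y\mapsto y/L$ and the analytic function $s_\pm(\cdot;L)$ supplied by Lemma~\ref{lem:existencefixedpointsG}.

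Finally, fix $y,L$ with $\hat{y}=y/L\in(\hat{y}_{\min},\tfrac{1}{2}]$. The substitution $\Gamma\mapsto s=\Gamma/y$ maps $(0,y/2)$ bijectively onto $(0,\tfrac{1}{2})$, and by the identity, $\mathcal F_\pm(y,\Gamma;L)=0$ on this interval if and only if $\mathcal G_\pm(s,\hat{y};L)=0$. By Lemma~\ref{lem:bcposneg}, $\partial_s\mathcal G_\pm(\cdot,\hat{y};L)>0$ throughout $[0,\tfrac{1}{2}]$, so $\mathcal G_\pm(\cdot,\hat{y};L)$ has at most one zero in $(0,\tfrac{1}{2})$, necessarily $s_\pm(\hat{y};L)$; hence $\Gamma_\pm(y;L)$ is the unique zero of $\mathcal F_\pm(y,\cdot;L)$ in $(0,y/2)$. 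Differentiating the identity in $\Gamma$ gives $\partial_\Gamma\mathcal F_\pm(y,\Gamma;L)=-6Ly\,\partial_s\mathcal G_\pm(\Gamma/y,y/L;L)<0$ for $\Gamma\in(0,y/2)$, which settles all claims.

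I do not expect a genuine obstacle here: the argument is essentially bookkeeping layered on the two preceding lemmas. The points deserving care are (i) keeping the two rescalings $\hat{y}$ and $s$ apart without conflating them, and (ii) justifying that uniqueness of $\Gamma_\pm$ holds on all of $(0,y/2)$, not merely on the subinterval where $s_\pm$ is a priori located --- this is legitimate only because $\mathcal G_\pm$ is globally monotone in $s$ on $[0,\tfrac{1}{2}]$, which is exactly the content of Lemma~\ref{lem:bcposneg}. Note also that the strict inequality $y>L\hat{y}_{\min}$ is necessary: at $\hat{y}=\hat{y}_{\min}$ one has $s_+=\sl$ exactly, so $\Gamma_+/(\sl y)=1$ would fall outside the open interval $(1,\tfrac{3}{2})$.
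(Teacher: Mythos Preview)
Your argument is correct and follows essentially the same route as the paper's proof: define $\Gamma_\pm(y;L)=y\,s_\pm(y/L;L)$, use the identity $\mathcal F_\pm=-6L y^{2}\,\mathcal G_\pm$ to transfer zeros, and deduce $\partial_\Gamma\mathcal F_\pm<0$ from $\partial_s\mathcal G_\pm>0$ (Lemma~\ref{lem:bcposneg}). Your exponent in the identity is in fact the correct one---the paper writes $-6Ly^{3}$ (and then $-6Ly^{2}$ for the derivative), which appears to be a typo, but this does not affect the sign argument either way.
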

\begin{proof}
For the existence and regularity of $\Gamma_{\pm}$, we only need to define $\Gamma_{\pm}(y;L)=y \gg_{\pm}\left (\frac{y}{L};L\right )$ and to apply Lemma~\ref{lem:bcposneg}. In addition, using~\eqref{equivFG_averaged}, we have that 
\[
\mathcal{F}_\pm(y,\Gamma;L)=\hat{\mathcal{F}}_\pm \left ( \frac{\Gamma}{y}, \frac{y}{L};L\right ) = -6 L y^3 \mathcal{G}_{\pm} \left ( \frac{\Gamma}{y}, \frac{y}{L};L\right )  
\]
and then 
\[
\partial_\Gamma \mathcal{F}_\pm(y,\Gamma;L) = -6Ly^2 \partial_\gg \mathcal{G}_{\pm} \left ( \frac{\Gamma}{y}, \frac{y}{L};L\right )<0,
\]
where we have used that, by Lemma~\ref{lem:bcposneg}, $\partial_\gg \mathcal{G}_\pm(\gg,\hat{y};L)>0$. 
\end{proof}

Corollary~\ref{prop:existencefixedpointselliptic} proves the existence of critical points. However, we can not elucidate neither the range of the parameters $\Gamma,L$ for which they exist nor how many critical points the system possesses for given values of the parameters. To do so, let us study the functions 
\[\hat{\Gamma}_\pm(\hat{y};L)  = \frac{\Gamma_\pm(L\hat{y}; L)}{L} = \hat{y} \gg_\pm(\hat{y};L),
\]
where $\gg_{\pm}(\hat{y};L)$ are defined implicitly by the equations $\mathcal{G}_{\pm}(\gg,\hat{y};L)=0$ (see Lemma~\ref{lem:existencefixedpointsG}). That is
 $\mathcal{G}_\pm(\gg_\pm(\hat{y};L),\hat{y};L)=0$. Therefore, writing $\gg_\pm=\gg_\pm(\hat{y};L)$ and taking the derivative with respect to $\hat{y}$ in the equation $\mathcal{G}_{\pm}(s,\hat{y};L)=0$, we have that
$$
\partial_{\hat{y}} \mathcal{G}_\pm(\gg_\pm,\hat{y};L) + \partial_{\gg} \mathcal{G}_\pm(\gg_\pm,\hat{y};L) \partial_{\hat{y}} \gg_\pm (\hat{y};L)=0 .
$$
We recall that by Lemma~\ref{lem:bcposneg}, $\partial_\gg \mathcal{G}_\pm >0$. This fact allows computing $\partial_{\hat{y}} \gg_{\pm} (\hat{y};L)$ in terms of $\gg_{\pm}$ and, consequently,  
\begin{align}\label{def:derGammahat}
\partial_{\hat y} \hat{\Gamma}_\pm (\hat{y};L) & = \gg_\pm(\hat{y};L) + \hat{y} \partial_{\hat{y}} \gg_\pm (\hat{y};L) \notag \\ &= \frac{1}{\partial_{\gg} \mathcal{G}_\pm(\gg_+,\hat{y};L)} \left [ \gg_\pm (\hat{y};L)\partial_{\gg} \mathcal{G}_\pm(\gg_\pm,\hat{y};L) - \hat{y} \partial_{\hat{y}} \mathcal{G}_\pm(\gg_\pm,\hat{y};L) \right].
\end{align}
We want to study the possible changes in the sign of $\partial_{\hat{y}} \hat{\Gamma}_\pm$. We notice that, following the same kind of computations, 
\begin{equation}\label{expr:partialLGammahat}
\partial_{L}\hat{\Gamma}_{\pm}(\hat{y};L) = -\frac{\partial_{L} \mathcal{G}_\pm(\gg_\pm,\hat{y};L) }{\partial_{\gg} \mathcal{G}_\pm(\gg_+,\hat{y};L)}  \hat{y}.
\end{equation} 

\begin{remark}\label{rmk:ymin} Notice that the value $\hat{y}_{\min}$ defined in Lemma~\ref{lem:existencefixedpointsG} satisfies that $\hat{y}_{\min} < \hat{y}_{\mathrm{col}}$, the \textit{collision} value $\hat{y}_{\mathrm{col}}$ in ~\eqref{remark_col}.
For that reason, from now on, we will restrict our analysis to values of $\hat{y} \in \left [\hat{y}_{\min} , \frac{1}{2}\right )$. 
\end{remark} 

As we will see below, it turns out that the analysis of the functions $\hat{\Gamma}_+$ and $\hat{\Gamma}_-$ are quite different, being the corresponding to $\hat{\Gamma}_-$ more involved. 

We start with the result for the fixed points of the form $(0,y)$, which corresponds with the $+$ case, in Section~\ref{subsec:fixedpoints_pos}, and we postpone the study of the fixed points of the form $(\pi,y)$ to Section~\ref{subsec:fixedpoints_neg}. 

\subsection{The fixed points of the form $(0,y)$}\label{subsec:fixedpoints_pos}
For any $L_{\min}\leq L \leq L_{\max}$ we define
\begin{align*}
\hat{\Gamma}_{\min}^+ &= \hat{y}_{\min} \sl = 0.013584391815073, \\ 
\hat{\Gamma}_{\max}^+(L) &= \hat{\Gamma}_+\left (\frac{1}{2}; L \right) = \frac{1}{2} \gg_+\left (\frac{1}{2};L\right ).
\end{align*} 
We emphasize that, by Lemma~\ref{lem:existencefixedpointsG},  $\gg_+\left (\frac{1}{2};L \right ) \in \left (\sl, \frac{3}{2}\sl\right )$ and therefore
\begin{equation}\label{intGammamaxpos}
\hat{\Gamma}_{\max}^+(L)    \in \left [\frac{1}{2}\sl,\frac{3}{4}\sl \right ]= [0.058257569495584,0.087386354243376].
\end{equation}
\begin{proposition}\label{prop:fixedpointscasepos}
The function 
$\hat{\Gamma}_{\max}^+(L)$ is strictly increasing on $[L_{\min},L_{\max}]$ and therefore
$$
\hat{\Gamma}_{\max}^+(L) \in [\hat{\Gamma}_{\max}^+(L_{\min}), \hat{\Gamma}_{\max}^+(L_{\max})]=[ 0.058270961487710, 0.086680627913961]
$$
where these values have been  computed numerically\footnote{ Notice that there is not a significant difference with the estimates in~\eqref{intGammamaxpos}.}. 
Moreover, 
$$
\hat{\Gamma}^+(\hat{y};L) \in [\hat{\Gamma}_{\min}^+, \hat{\Gamma}_{\max}^+(L)],\qquad \text{for all }\;\hat{y}_{\min} \leq \hat{y} \leq \frac{1}{2},
$$
and $\hat{\Gamma}^+(\cdot; L)$ is an injective increasing function. 

With respect to the equilibrium points, for any $L\in [L_{\min},L_{\max}]$:
 \begin{enumerate}
\item If $\hat{\Gamma} \notin [\hat{\Gamma}_{\min}^+, \hat{\Gamma}_{\max}^+(L)]$, system~\eqref{Eq:EMotMoy} has no fixed points of the form $(0,L\hat y)$ satisfying $\hat{y}_{\min} <\hat y\leq \frac{1}{2}$ and $\hat\Gamma\in \left (0,\frac{1}{2} \hat y \right )$. 
\item If $\hat{\Gamma}\in [\hat{\Gamma}_{\min}^+, \hat{\Gamma}_{\max}^+(L)]$, there exists a unique $\hat{y}_+=\hat{y}_+(\hat{\Gamma};L)$ such that $(0,L\hat{y}_+)$ is a fixed point of the system~\eqref{Eq:EMotMoy}. In addition, $\hat{y}_+(\cdot; L)$ is a strictly increasing function. 
\end{enumerate}
\end{proposition}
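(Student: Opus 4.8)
The plan is to reduce the whole proposition to a monotonicity analysis of the single function $\hat y\mapsto\hat\Gamma_+(\hat y;L)=\hat y\,\gg_+(\hat y;L)$ introduced just before~\eqref{def:derGammahat}, where $\gg_+(\hat y;L)$ is the unique zero of $\mathcal{G}_+(\,\cdot\,,\hat y;L)$ supplied by Lemma~\ref{lem:existencefixedpointsG}. By the equivalence~\eqref{equivFG_averaged} together with the uniqueness statement in Corollary~\ref{prop:existencefixedpointselliptic} (and recalling that, by~\eqref{Eq:condFP}, all fixed points of~\eqref{Eq:EMotMoy} have the form $(k\pi,y)$), a point $(0,L\hat y)$ with $\hat y_{\min}<\hat y\le\tfrac12$ and $\hat\Gamma\in(0,\hat y/2)$ is a fixed point of~\eqref{Eq:EMotMoy} if and only if $\hat\Gamma=\hat\Gamma_+(\hat y;L)$. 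So it suffices to show that $\hat\Gamma_+(\,\cdot\,;L)$ is a strictly increasing bijection of $[\hat y_{\min},\tfrac12]$ onto $[\hat\Gamma_{\min}^+,\hat\Gamma_{\max}^+(L)]$ and that $\hat\Gamma_{\max}^+(L)$ is strictly increasing in $L$. First I would record the two endpoint values: $\gg_+(\hat y_{\min};L)=\sl$ by Lemma~\ref{lem:existencefixedpointsG}, hence $\hat\Gamma_+(\hat y_{\min};L)=\hat y_{\min}\sl=\hat\Gamma_{\min}^+$ (note this is $L$‑independent, which is why $\hat\Gamma_{\min}^+$ carries no $L$); and $\hat\Gamma_+(\tfrac12;L)=\tfrac12\gg_+(\tfrac12;L)=\hat\Gamma_{\max}^+(L)$ with $\gg_+(\tfrac12;L)\in(\sl,\tfrac32\sl)$, which already yields the inclusion~\eqref{intGammamaxpos}.

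The crux is the monotonicity in $\hat y$. From~\eqref{def:derGammahat} and $\partial_\gg\mathcal{G}_+>0$ (Lemma~\ref{lem:bcposneg}), it is enough to prove that $N:=\gg_+\,\partial_\gg\mathcal{G}_+(\gg_+,\hat y;L)-\hat y\,\partial_{\hat y}\mathcal{G}_+(\gg_+,\hat y;L)>0$, and a term–by–term expansion of $N$ using~\eqref{expr:mathcalG} is not obviously sign–definite. The trick I would use is to exploit the defining constraint $\mathcal{G}_+=0$ through the reparametrization
$$
H_{\hat\Gamma,L}(\gg):=\gg^5\,\mathcal{G}_+\!\left(\gg,\tfrac{\hat\Gamma}{\gg};L\right)=\gg^5\mathbf{a}(\gg)+2\rho\alpha^3L^4\bigl[\mathbf{b}_+(\gg)\gg^2\hat\Gamma^3+\mathbf{c}_+(\gg)\hat\Gamma^5\bigr].
$$
A direct computation gives $H_{\hat\Gamma,L}'(\gg)=5\gg^4\,\mathcal{G}_+(\gg,\hat\Gamma/\gg;L)+\gg^4\bigl(\gg\,\partial_\gg\mathcal{G}_+-\hat y\,\partial_{\hat y}\mathcal{G}_+\bigr)$ with $\hat y=\hat\Gamma/\gg$, so on the solution set $\{\mathcal{G}_+=0\}$ one has $H_{\hat\Gamma,L}'(\gg_+)=\gg_+^4\,N$. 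On the other hand $H_{\hat\Gamma,L}'(\gg)=5\gg^4\mathbf{a}(\gg)+\gg^5\mathbf{a}'(\gg)+2\rho\alpha^3L^4\bigl[\hat\Gamma^3\bigl(\mathbf{b}_+'(\gg)\gg^2+2\gg\mathbf{b}_+(\gg)\bigr)+\hat\Gamma^5\mathbf{c}_+'(\gg)\bigr]$ is manifestly a sum of strictly positive terms for $\gg\in[\sl,\tfrac32\sl]$: indeed $\mathbf{a}(\gg)\ge0$ on that interval, $\mathbf{a}'(\gg)=8+10\gg>0$, and $\mathbf{b}_+,\mathbf{b}_+',\mathbf{c}_+'>0$ on $[0,\tfrac12]\supset[\sl,\tfrac32\sl]$ by Lemma~\ref{lem:bcposneg}. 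Hence $N>0$, so $\hat\Gamma_+(\,\cdot\,;L)$ is strictly increasing; being continuous with the endpoint values recorded above, it is a strictly increasing bijection onto $[\hat\Gamma_{\min}^+,\hat\Gamma_{\max}^+(L)]$. (Equivalently: since $H_{\hat\Gamma,L}$ is strictly monotone in $\gg$ it has at most one zero in $(\sl,\tfrac32\sl)$, so for each $\hat\Gamma$ there is at most one fixed point, i.e.\ $\hat\Gamma_+(\,\cdot\,;L)$ is injective, and a continuous injection on an interval is strictly monotone, increasing by comparing the endpoints; a crude direct estimate of $N$ from~\eqref{def:derGammahat} also works once one notes that the coefficient $\gg\mathbf{c}_+'(\gg)-5\mathbf{c}_+(\gg)$ of $\hat y^5$ is positive because $\mathbf{c}_+<0<\mathbf{c}_+'$.)

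For the dependence on $L$ I would apply~\eqref{expr:partialLGammahat} at $\hat y=\tfrac12$. Since $\mathcal{G}_+$ depends on $L$ only through $\alpha^3L^4=L^{10}/(\mu^3a_\rM^3)$, which is strictly increasing in $L=\sqrt{\mu a}$, the sign of $\partial_L\mathcal{G}_+\bigl(\gg_+(\tfrac12;L),\tfrac12;L\bigr)$ equals that of $D:=\mathbf{b}_+(\gg_+)(\tfrac12)^3+\mathbf{c}_+(\gg_+)(\tfrac12)^5$; but on $\{\mathcal{G}_+=0\}$ one has $2\rho\alpha^3L^4D=-\mathbf{a}\bigl(\gg_+(\tfrac12;L)\bigr)<0$, because $\gg_+(\tfrac12;L)>\sl$ (Lemma~\ref{lem:existencefixedpointsG}) forces $\mathbf{a}\bigl(\gg_+(\tfrac12;L)\bigr)>0$. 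Thus $\partial_L\mathcal{G}_+<0$, and since $\partial_\gg\mathcal{G}_+>0$ and $\hat y=\tfrac12>0$, \eqref{expr:partialLGammahat} gives $\partial_L\hat\Gamma_{\max}^+(L)>0$; the numerical endpoints $\hat\Gamma_{\max}^+(L_{\min}),\hat\Gamma_{\max}^+(L_{\max})$ are then obtained simply by solving $\mathcal{G}_+(\gg,\tfrac12;L_{\min})=0$ and $\mathcal{G}_+(\gg,\tfrac12;L_{\max})=0$. The fixed–point dichotomy now follows: if $\hat\Gamma\notin[\hat\Gamma_{\min}^+,\hat\Gamma_{\max}^+(L)]$ there is no $\hat y\in(\hat y_{\min},\tfrac12]$ with $\hat\Gamma_+(\hat y;L)=\hat\Gamma$, so no fixed point; if $\hat\Gamma$ lies in this interval, $\hat y_+(\hat\Gamma;L):=\bigl(\hat\Gamma_+(\,\cdot\,;L)\bigr)^{-1}(\hat\Gamma)$ is the unique such $\hat y$, and as the inverse of a strictly increasing function it is strictly increasing in $\hat\Gamma$. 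The main obstacle is precisely the sign of $N$ (equivalently the monotonicity of $\hat\Gamma_+$ in $\hat y$): the naive differentiation does not expose a definite sign, and one must either substitute the relation $\mathcal{G}_+=0$ into $N$ or pass to $H_{\hat\Gamma,L}$; everything else is routine, apart from the elementary numerical evaluations of the endpoint constants.
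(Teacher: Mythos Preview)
Your proof is correct and genuinely different from the paper's. The paper establishes the monotonicity in $\hat y$ by directly showing $\partial_{\hat y}\mathcal G_+(\gg,\hat y;L)\le 0$ for $\hat y\ge\hat y_{\min}$ and $\gg\in(\sl,\tfrac32\sl)$: using that $\mathbf b_+,\mathbf c_+$ are increasing, it bounds $3\mathbf b_+(\gg)+5\mathbf c_+(\gg)\hat y^2\le 3\mathbf b_+(\tfrac32\sl)+5\mathbf c_+(\tfrac32\sl)\hat y^2$, and then checks numerically (via~\eqref{value:bposm32}--\eqref{value:cposm32}) that the threshold $\sqrt{-3\mathbf b_+(\tfrac32\sl)/5\mathbf c_+(\tfrac32\sl)}\approx0.11657<\hat y_{\min}$. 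With $\partial_{\hat y}\mathcal G_+\le0$ and $\partial_\gg\mathcal G_+>0$ in hand, $N>0$ follows immediately from~\eqref{def:derGammahat}. Similarly, for $\partial_L\hat\Gamma_{\max}^+>0$ the paper bounds $\mathbf b_+(\gg_+)+\tfrac14\mathbf c_+(\gg_+)\le\mathbf b_+(\tfrac32\sl)+\tfrac14\mathbf c_+(\tfrac32\sl)<0$, again using the same numerical values.

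Your approach avoids these numerical checks entirely by substituting the constraint $\mathcal G_+=0$. The reparametrization $H_{\hat\Gamma,L}(\gg)=\gg^5\mathcal G_+(\gg,\hat\Gamma/\gg;L)$ is a neat device: after differentiation, the only sign information needed is $\mathbf a\ge0$ on $[\sl,\tfrac32\sl]$ together with $\mathbf a',\mathbf b_+,\mathbf b_+',\mathbf c_+'>0$, all of which are already supplied by Lemma~\ref{lem:bcposneg}. Likewise, for the $L$--monotonicity you read $D<0$ directly from $2\rho\alpha^3L^4D=-\mathbf a(\gg_+)<0$ (since $\gg_+(\tfrac12;L)>\sl$), bypassing the numerical bound. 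What the paper's route buys is the slightly stronger intermediate fact $\partial_{\hat y}\mathcal G_+\le0$ (not needed here), while your route is more self--contained and would survive small perturbations of the numerical constants. Your parenthetical ``crude direct estimate'' of $N$ is a bit loose --- the $\hat y^3$ coefficient $\gg\mathbf b_+'(\gg)-3\mathbf b_+(\gg)$ is not obviously positive, so one would still need either a bound or the constraint --- but this does not affect the main argument.
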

\begin{proof}
We start with the statement related to $\hat{\Gamma}_{\max}^+$. Using~\eqref{expr:partialLGammahat}, we have that
\[
\partial_L \hat{\Gamma}_{\max}^+(L)= \partial_L \hat{\Gamma}^+ \left (\frac{1}{2}; L\right )= -
\frac{\partial_{L} \mathcal{G}_+ \left (\gg_+,\frac{1}{2};L\right) }{2 \partial_{\gg} \mathcal{G}_+\left (\gg_+,\frac{1}{2};L\right )}.
\]
The sign of $\partial_{L} \mathcal{G}_+\left (\gg_+,\frac{1}{2};L\right)$ is the sign of $\mathbf{b}_+(\gg_+) + \frac{1}{4} \mathbf{c}_+(\gg_+)$ (see definition~\eqref{expr:mathcalG}). Then,  since by Lemma~\ref{lem:bcposneg}, $\mathbf{b}_+,\mathbf{c}_+$ are increasing functions and by Lemma~\ref{lem:existencefixedpointsG}, $\gg_+ \in \left (\sl,\frac{3}{2}\sl\right )$
\[
\mathbf{b}_+(\gg_+) + \frac{1}{4} \mathbf{c}_+(\gg_+)\leq \mathbf{b}_+\left (\frac{3}{2}\sl \right) + \frac{1}{4} \mathbf{c}_+ \left (\frac{3}{2}\sl\right )<0,
\]
where we have used~\eqref{value:bposm32} and~\eqref{value:cposm32} to estimate
$\mathbf{b}_+\left (\frac{3}{2}\sl \right) + \frac{1}{4} \mathbf{c}_+ \left (\frac{3}{2}\sl\right )$. Therefore, $\partial_L\mathcal{G}_+<0$ and, 
using that $\partial_s \mathcal{G}_+>0$ by Lemma~\ref{lem:bcposneg}, $\partial_L\Gam_{\max}^+>0$. 

Now we prove the rest of the properties. We write
\[
\partial_{\hat{y}} \mathcal{G}_+ ( \gg, \hat{y};L)= 2 \alpha^3 L^4 \rho \left [3\mathbf{b}_+(\gg) \hat{y}^2 + 5 \mathbf{c}_+(\gg) \hat{y}^4 \right ].
\]
By Lemma~\ref{lem:bcposneg}, $\mathbf{b}_+,\mathbf{c}_+$ are increasing functions and, by Lemma~\ref{lem:existencefixedpointsG}, $\gg_+ \in \left (\sl , \frac{3}{2} \sl\right )$.  Therefore, for $\gg \in \left (\sl , \frac{3}{2} \sl \right )$,
\[
\partial_{\hat{y}} \mathcal{G}_+ ( \gg, \hat{y};L) \leq  2 \alpha^3 L^4 \rho \left [3\mathbf{b}_+\left (\frac{3}{2} \sl  \right) \hat{y}^2 + 5 \mathbf{c}_+ \left (\frac{3}{2} \sl \right ) \hat{y}^4 \right ].
\]
Using the values of $\mathbf{b}_+\left (\frac{3}{2} \sl \right), \mathbf{c}_+\left (\frac{3}{2} \sl \right )$ in~\eqref{value:bposm32} and~\eqref{value:cposm32}, if 
\[
\hat{y} \geq  \sqrt{-\frac{3\mathbf{b}_+\left (\frac{3}{2} \sl\right )}{5 \mathbf{c}_+ \left (\frac{3}{2} \sl \right )}} = 0.116570155823739,
\]
we have then 
$\partial_{\hat{y}} \mathcal{G}_+(\gg,\hat{y};L) \leq 0$. In particular, the same happens if $\hat{y} \geq  \hat{y}_{\min}$ (see Lemma~\ref{lem:existencefixedpointsG} for the exact value of $\hat{y}_{\min}$). 
As a consequence, from~\eqref{def:derGammahat}, $\partial_{\hat{y}} \hat{\Gamma}_+(\hat{y};L)>0$ if $\hat{y}_{\min}  \leq  \hat{y}  \leq  \frac{1}{2}$ and $L_{\min}\leq L \leq L_{\max}$ so that $\hat{\Gamma}_+(\cdot; L)$ is an injective (strictly increasing) function for any fixed $L\in [L_{\min},L_{\max}]$. The range of values of $\hat{\Gamma}_+$ for a given $L$ is then 
\[
\hat{\Gamma}^+(\hat{y};L)\in \left [\hat{\Gamma}_+(\hat{y}_{\min};L), \hat{\Gamma}_{+}\left (\frac{1}{2}; L\right ) \right ].
\]
We notice that, since  $\mathcal{G}_+(\sl,\hat{y}_{\min};L)=0$ (see Lemma~\ref{lem:bcposneg}), 
\[
\hat{\Gamma}_+(\hat{y}_{\min};L) = \hat{y}_{\min} \gg_+(\hat{y}_{\min};L) = \hat{y}_{\min} \sl =   \hat{\Gamma}_{\min}^+.
\]
For a given $L$, let $\hat{y}_+(\hat{\Gamma};L)$ be such that 
\[
\hat{\Gamma}_+(\hat{y}_+(\hat{\Gamma};L);L)=\hat{\Gamma}.
\]
It is clear that $\hat{y}_+(\cdot; L)$ is an increasing function and it is defined  for $\hat{\Gamma} \in [\hat{\Gamma}_{\min}, \hat{\Gamma}_{\max}^+(L)]$.   
\end{proof}

\begin{remark}
If we want to control the range of $\hat{\Gamma}$ for $\hat{y}_{\mathrm{col}} \leq \hat{y} \leq \frac{1}{2}$, we notice that, since $\hat{\Gamma}_+(\cdot; L)$ is an increasing function, the ``new'' $\hat{\Gamma}_{\min}^+ = \hat{\Gamma}_{\min}^+(L) = \hat{\Gamma}_+(y_{\mathrm{col}};L)$. Then, 
since   $\hat{\Gamma}_+(\hat{y}_{\mathrm{col}}; L)= \hat{y}_{\mathrm{col}} \gg_+(\hat{y}_{\mathrm{col}};L)$,  
$$
\hat{\Gamma}_{\min}^+(L) \in \left [ \hat{y}_{\mathrm{col}}\sl , \hat{y}_{\mathrm{col}} \frac{3}{2} \sl \right ]=[0.035912783574701,0.053869175362051].
$$  
We emphasize that $ \hat{\Gamma}_+(\hat{y}_{\mathrm{col}};\cdot )$ is also an increasing function. Indeed, writing $\gg_+ = \gg_+(\hat{y}_{\mathrm{col}};L)$, the sign of $-\partial_L \hat{\Gamma}_+(\hat{y}_{\mathrm{col}};\cdot )$ is the sign of
$$
\mathbf{b}_{+}(\gg_+) + \hat{y}_{\mathrm{col}}^2 \mathbf{c}_+(\gg_+) \leq \mathbf{b}_{+}\left (\frac{3}{2} \sl\right) + \hat{y}_{\mathrm{col}}^2 \mathbf{c}_+ \left (\frac{3}{2}\sl \right ) = -5.701123730321832<0
$$
where, again, we have used~\eqref{value:bposm32} and~\eqref{value:cposm32} for $\mathbf{b}_+\left (\frac{3}{2}\sl\right )$ and $\mathbf{c}_+\left (\frac{3}{2}\sl \right )$. 
Therefore, we numerically obtain that 
$$
 \hat{\Gamma}_{\min}^+(L)    \in [\hat{\Gamma}_+(y_{\mathrm{col}};L_{\min}), \hat{\Gamma}_+(y_{\mathrm{col}};L_{\max}) ]= [0.035913449550478,0.037427016836718].
$$
We notice that in this case, the numerical computation induces a more accurate range of values of $\hat{\Gamma}_{\min}^+(L)$. 
\end{remark}

\subsection{The fixed points of the form 
$(\pi,y)$}\label{subsec:fixedpoints_neg}
Now we pay attention to the $-$ case. In this case, $\hat{\Gamma}_-$ is not an injective function and for that reason, for studying the behavior of $\hat{\Gamma}_-$, we must control the existence of critical points, namely the values of $\hat{y},L$ such that $\partial_{\hat{y}} \hat{\Gamma}_-(\hat{y};L)=0$. 
To this end, we introduce 
\begin{equation}\label{defmathcalH}
\mathcal{H}(\gg,\hat{y};L):=\gg \partial_{\gg} \mathcal{G}_-(\gg,\hat{y};L) - \hat{y} \partial_{\hat{y}}\mathcal{G}_-(\gg,\hat{y};L), \qquad \partial_{\hat{y}} \hat{\Gamma}_{-}(\hat{y};L)=
\frac{\mathcal{H} (\gg_-,\hat{y};L)}{\partial_{\gg} \mathcal{G}_{-} (\gg_{-},\hat{y};L)}
\end{equation}
with $\gg_-= \gg_-(\hat{y};L)$. By the uniqueness statement in Lemma~\ref{lem:bcposneg}, we have that 
$\mathcal{G}_-(\gg,\hat{y};L)=0$ if and only if $\gg= \gg_-(\hat{y};L)$. Moreover $\gg_-(\hat{y};L) \in \left (\frac{2}{5}\sl, \sl \right )$. Therefore, the equation
$$
\partial_{\hat{y}} \hat{\Gamma}_- (\hat{y};L)=0
$$
is equivalent to the existence of solutions of $$\mathcal{H}(\gg,\hat{y};L)=0, \qquad \mathcal{G}_-(\gg,\hat{y};L)=0
$$ 
under the restrictions $\gg\in \left (\frac{2}{5}\sl, \sl \right )$, $\hat{y}\in \left [\hat{y}_{\min}, \frac{1}{2}\right ]$ (see Remark~\ref{rmk:ymin}), $L\in [L_{\min},L_{\max}]$. 


From expression~\eqref{expr:mathcalG} of $\mathcal{G}_-$, we write the system in a more suitable way  
\begin{equation}\label{systemGammaneg}
\begin{aligned}
    \mathcal{H}(\gg,\hat{y};L)=\overline{\mathbf{a}}(\gg) + 2 \alpha^3 L^4 \rho \big [\overline{\mathbf{b}}(\gg) \hat{y}^3 + \overline{\mathbf{c}}(\gg) \hat{y}^5] =0\\ 
    \mathcal{G}_-(\gg,\hat{y};L)= \mathbf{a}(\gg) + 2 \alpha^3 L^4 \rho \big [\mathbf{b}_- (\gg)\hat{y}^3 + \mathbf{c}_-(\gg) \hat{y}^5  \big  ] =0
\end{aligned}
\end{equation}
with 
\begin{equation} \label{defbarabc}
\begin{aligned}
    \overline{\mathbf{a}}(\gg) & = \gg \partial_\gg \mathbf{a}(\gg) = 2 \gg (5\gg + 4) \\ 
    \overline{\mathbf{b}}(\gg) & = \gg \partial_\gg \mathbf{b}_-(\gg) - 3 \mathbf{b}_-(\gg)  
     = -10 U_2^{0,0} \gg(\gg + 2) - \frac{20 U_2^{1,0} 
   \gg^2(\gg^2 + 4\gg - 3)}{3(1-\gg) \sqrt{3 - 2 \gg -\gg^2}}
  \\ 
    \overline{\mathbf{c}}(\gg) & = \gg \partial_\gg \mathbf{c}_-(\gg) - 5 \mathbf{c}_-(\gg)  = 
    8 U_2^{0,0} (5 - 12\gg) - \frac{80 U_2^{1,0} (8\gg^3 + 5\gg^2 - 60\gg + 45)}{3(1- \gg) \sqrt{3 - 2 \gg -\gg^2 }}.
 \end{aligned}
 \end{equation}
Hence, in order to solve system~\eqref{systemGammaneg}, we write from the second equation  
$$
2\alpha^3 L^4 \rho = -\frac{\mathbf{a}(\gg)}{\mathbf{b}_-(\gg) \hat{y}^3 + \mathbf{c}_-(\gg) \hat{y}^{5}} 
$$
and substituting this value into the first equation, we obtain that 
\begin{equation*}
\mathcal{H}(\gg,\hat{y};L)= \frac{ \overline{\mathbf{a}}(\gg) \mathbf{b}_-(\gg) - \mathbf{a}(\gg) \overline{\mathbf{b}}(\gg) + \hat{y}^2 \big ( \overline{\mathbf{a}}(\gg ) \mathbf{c}_-(\gg) - \mathbf{a}(\gg) \overline{\mathbf{c}}(\gg) \big )}{ \mathbf{b}_-(\gg) + \mathbf{c}_-(\gg) \hat{y}^2} =0
\end{equation*}
so that 
$$
\hat{y}^2= \frac{\mathbf{a}(\gg) \overline{\mathbf{b}}(\gg)-\overline{\mathbf{a}}(\gg) \mathbf{b}_-(\gg)}{\overline{\mathbf{a}}(\gg ) \mathbf{c}_-(\gg) - \mathbf{a}(\gg) \overline{\mathbf{c}}(\gg)}.
$$
Therefore system~\eqref{systemGammaneg} has a unique solution for those values of $\gg\in \left (\frac{2}{5} \sl, \sl \right ) $ such that the following restrictions hold
\begin{equation}\label{exp:yabif}
\begin{aligned}
{\hat{y}_{\min}} \leq \hat{y}& =\hat{y}_*(\gg) := \left ( \frac{\mathbf{a}(\gg) \overline{\mathbf{b}}(\gg)-\overline{\mathbf{a}}(\gg) \mathbf{b}_-(\gg)}{\overline{\mathbf{a}}(\gg ) \mathbf{c}_-(\gg) - \mathbf{a}(\gg) \overline{\mathbf{c}}(\gg)}\right )^{\frac{1}{2}} \leq \frac{1}{2}, 
\\ a_{\min}\leq a&=a_*(\gg):= \left (-\frac{\mathbf{a}(\gg) a_M^3}{2 \mu^2 \rho (\mathbf{b}_-(\gg) \hat{y}_*^{3}(\gg)  + \mathbf{c}_-(\gg) \hat{y}_*^5 (\gg))}\right )^{\frac{1}{5}} \leq a_{\max}.
\end{aligned}
\end{equation} 


\begin{lemma}\label{lem:yaast}
For those values of $\gg$ such that $\hat{y}_{*}(\gg) \in \left (\hat{y}_{\min}, \frac{1}{2}\right ]$, $\hat{y}_*(\gg)$ is an increasing function and $a_*(\gg)$ is decreasing. Defining implicitly $\gg_{\min}$ and $\gg_{\max}$ such that 
\[
a_*(\gg_{\min})=a_{\max}, \qquad \hat{y}_*(\gg_{\max})=\frac{1}{2},
\]
we have that, for $\gg \in [\gg_{\min},\gg_{\max}]$,  $\hat{\Gamma}_-(\hat{y};L)$ has a unique critical point at $(\hat{y},L)= (\hat{y}_*(\gg), L_*(\gg))$ with $L_*(\gg)=\sqrt{\mu a_*(\gg)}$. The values of $\gg_{\min}$ and $\gg_{\max}$ and $L_*(\gg_{\max}), \hat{y}_*(\gg_{\min})$ can be computed numerically:
\[
{\gg_{\min} = 0.096577225237580,} \qquad \gg_{\max}= 0.096796816334740, 
\]
and then $a_*(\gg_{\max})=23893.56218133389 \, \mathrm{km}$, 
\[
  L_*(\gg_{\max})=97590.90325766560\, \mathrm{km^2/s}
 ,\qquad \hat{y}_*(\gg_{\min})= 0.397273020602216.
\] 
In addition, if $L\in [L_{\min},L_*(\gg_{\max})]$ there are no critical points of $\hat{\Gamma}_-(\cdot;L)$ belonging to $\left [\hat{y}_{\min},\frac{1}{2} \right )$. 
\end{lemma}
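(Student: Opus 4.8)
The plan is to build on the elimination carried out just above the statement. For a given $L\in(0,L_{\max}]$, a value $\hat{y}\in[\hat{y}_{\min},\frac12)$ is a critical point of $\hat{\Gamma}_-(\cdot;L)$ exactly when $\mathcal{H}(\gg_-,\hat{y};L)=0$ with $\gg_-=\gg_-(\hat{y};L)\in\left(\frac{2}{5}\sl,\sl\right)$ (using $\partial_\gg\mathcal{G}_->0$ from Lemma~\ref{lem:bcposneg}, so the denominator in~\eqref{defmathcalH} never vanishes). Together with the defining identity $\mathcal{G}_-(\gg_-,\hat{y};L)=0$ this is precisely system~\eqref{systemGammaneg}, which, after solving the second equation for $\alpha^3L^4$ and substituting, is equivalent to $\hat{y}=\hat{y}_*(\gg_-)$ and $a=a_*(\gg_-)$ as in~\eqref{exp:yabif}. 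Hence the entire critical set is the image of the explicit curve $\gg\mapsto(\hat{y}_*(\gg),a_*(\gg))$ intersected with the admissible box $\hat{y}_*(\gg)\in(\hat{y}_{\min},\frac12]$, $a_*(\gg)\in[a_{\min},a_{\max}]$, and everything reduces to analyzing this curve.

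First I would study $\hat{y}_*(\gg)=\big(N(\gg)/D(\gg)\big)^{1/2}$, where $N=\mathbf{a}\,\overline{\mathbf{b}}-\overline{\mathbf{a}}\,\mathbf{b}_-$ and $D=\overline{\mathbf{a}}\,\mathbf{c}_--\mathbf{a}\,\overline{\mathbf{c}}$, on the range $\gg\in\left(\frac25\sl,\sl\right)$ where $\gg_-$ takes its values (Lemma~\ref{lem:existencefixedpointsG}). Using the explicit formulas~\eqref{defbarabc} together with those for $\mathbf{a},\mathbf{b}_-,\mathbf{c}_-$, I would check that $N>0$, $D>0$ and $N'D-ND'>0$ there, so that $\hat{y}_*$ is well defined, smooth and strictly increasing. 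For $a_*(\gg)$ I observe that $-\mathbf{a}(\gg)>0$ is strictly decreasing (since $\partial_\gg\mathbf{a}=8+10\gg>0$ and $\mathbf{a}<0$ on $\left(\frac25\sl,\sl\right)$), while the denominator $\mathbf{b}_-(\gg)\hat{y}_*^{3}(\gg)+\mathbf{c}_-(\gg)\hat{y}_*^{5}(\gg)$ is strictly increasing and positive, because $\mathbf{b}_-,\mathbf{c}_-$ are positive and increasing on this interval by Lemma~\ref{lem:bcposneg} and~\eqref{value:bneg04m}, \eqref{value:cneg04m}, and $\hat{y}_*$ is increasing by the previous step. A positive decreasing function over a positive increasing one is strictly decreasing, so $a_*^{5}(\gg)$, hence $a_*(\gg)$ and $L_*(\gg)=\sqrt{\mu\,a_*(\gg)}$, is strictly decreasing.

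With these monotonicities in hand, $\gg_{\min}$ and $\gg_{\max}$ are well defined by $a_*(\gg_{\min})=a_{\max}$ and $\hat{y}_*(\gg_{\max})=\frac12$: each equation has a unique solution by strict monotonicity and continuity once the target value is bracketed, and the listed numerical values of $\gg_{\min},\gg_{\max},a_*(\gg_{\max}),L_*(\gg_{\max})$ and $\hat{y}_*(\gg_{\min})$ follow from a standard root-finding. For $\gg\in[\gg_{\min},\gg_{\max}]$ one then has $\hat{y}_*(\gg)\in[\hat{y}_*(\gg_{\min}),\frac12]$ with $\hat{y}_*(\gg_{\min})\approx0.3973>\hat{y}_{\min}$ and $a_*(\gg)\in[a_*(\gg_{\max}),a_{\max}]$ with $a_*(\gg_{\max})\approx23893\,\mathrm{km}>a_{\min}$, so the admissible box is met exactly on $[\gg_{\min},\gg_{\max}]$; each such $\gg$ produces exactly one critical point, located at $(\hat{y},L)=(\hat{y}_*(\gg),L_*(\gg))$, and no critical point arises for values of $\gg$ outside the admissible box. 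Finally, since $L_*(\cdot)$ is strictly decreasing, any critical point of $\hat{\Gamma}_-(\cdot;L)$ lying in $[\hat{y}_{\min},\frac12)$ must come from some $\gg\in[\gg_{\min},\gg_{\max})$ — the value $\hat{y}=\frac12$ at $\gg=\gg_{\max}$ being excluded — and therefore satisfies $L=L_*(\gg)>L_*(\gg_{\max})$; hence for $L\in[L_{\min},L_*(\gg_{\max})]$ there is none. The one genuinely laborious point is the sign analysis of $N$, $D$ and $N'D-ND'$: although elementary, these expressions involve $\sqrt{3-2\gg-\gg^2}$ and its reciprocal through $\overline{\mathbf{b}},\overline{\mathbf{c}}$, so the positivity of $N'D-ND'$ has to be extracted by careful explicit estimates — everything else is monotonicity bookkeeping resting on Lemma~\ref{lem:bcposneg} and the numerical constants already recorded.
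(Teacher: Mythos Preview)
The paper does not give an analytic argument here: it explicitly opts to plot $\hat{y}_*(\gg)$ and $a_*(\gg)$ and read the monotonicity off the graphs, remarking only that an analytic study ``can'' be done. Your outline is therefore more ambitious, and its architecture---reduce everything to the parametrized curve $\gg\mapsto(\hat{y}_*(\gg),a_*(\gg))$, establish monotonicity of each component, then do bookkeeping with $\gg_{\min},\gg_{\max}$ and $L_*$---is exactly right.

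The concrete gap is in your sign claims on the full interval $\left(\tfrac{2}{5}\sl,\sl\right)$: neither $N>0$ nor $D>0$ holds there. At $\gg=\sl$ one has $\mathbf{a}(\sl)=0$, hence $N(\sl)=-\overline{\mathbf{a}}(\sl)\,\mathbf{b}_-(\sl)<0$ since both factors are positive, so $N<0$ on a left neighborhood of $\sl$. At $\gg=\tfrac{2}{5}\sl$, using the recorded values $\mathbf{a}\approx-0.616$, $\overline{\mathbf{a}}\approx 0.394$, $\mathbf{c}_-\approx70.6$ and a direct evaluation $\overline{\mathbf{c}}\approx-352$, one gets $D=\overline{\mathbf{a}}\,\mathbf{c}_- - \mathbf{a}\,\overline{\mathbf{c}}\approx 27.8-217<0$. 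Thus $\hat{y}_*=\sqrt{N/D}$ is only real on a narrow subinterval of $\left(\tfrac{2}{5}\sl,\sl\right)$, numerically about $(0.0935,0.0968)$, which is precisely what the paper's figures display. Your sign and derivative checks for $N,D$ must be confined to that window, and isolating it rigorously (e.g.\ locating the zeros of $N$ and $D$, or bounding $N/D$ between $0$ and $\tfrac14$ on an explicit bracket containing $[\gg_{\min},\gg_{\max}]$) is part of the proof rather than a preliminary. Once the correct window is fixed, your argument for $a_*$ decreasing---positive decreasing $-\mathbf{a}$ over a positive increasing $\mathbf{b}_-\hat{y}_*^{3}+\mathbf{c}_-\hat{y}_*^{5}$, the latter relying on $\hat{y}_*$ increasing and Lemma~\ref{lem:bcposneg}---is valid, and the endpoint reasoning for $\gg_{\min},\gg_{\max}$ and the final ``no critical points for $L\le L_*(\gg_{\max})$'' claim are fine.
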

\begin{proof}
One can perform an analytic thorough study of $\hat{y}_*(\gg), a_*(\gg)$ as a function of $\gg$. However, we have preferred just to draw the explicit functions $\hat{y}_*(\gg), a_*(\gg)$ in order to convince the reader about this result. For those values of $\gg$ such that $\hat{y}_*(\gg) \in \mathbb{R}$ with $0<\hat{y}_*(\gg)\leq \frac{1}{2}$, we draw the corresponding semi-major axis $a$, see Figure~\ref{figyahatvss}.

%

\begin{figure}[h]
\centering
\begin{tabular}{cc} 
\includegraphics[width=6cm]{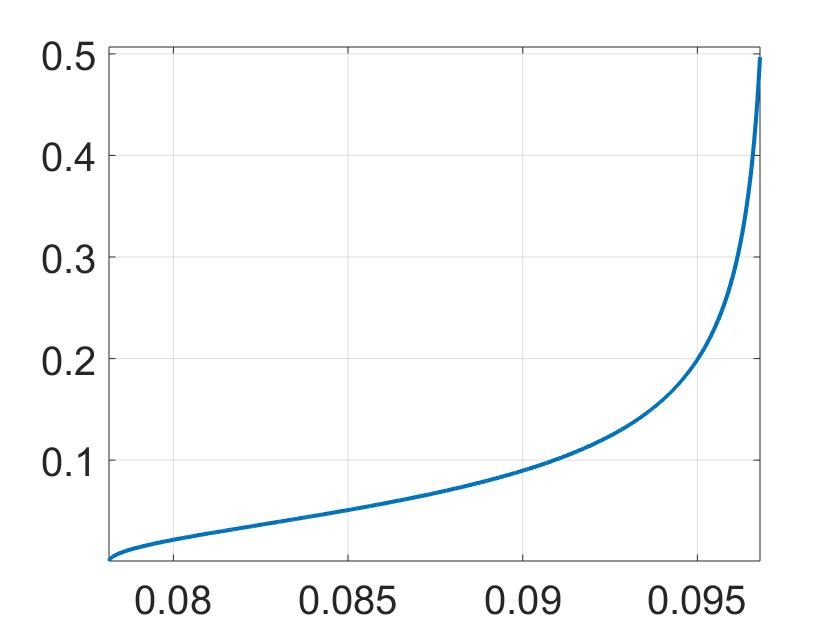} & 
\includegraphics[width=6cm]{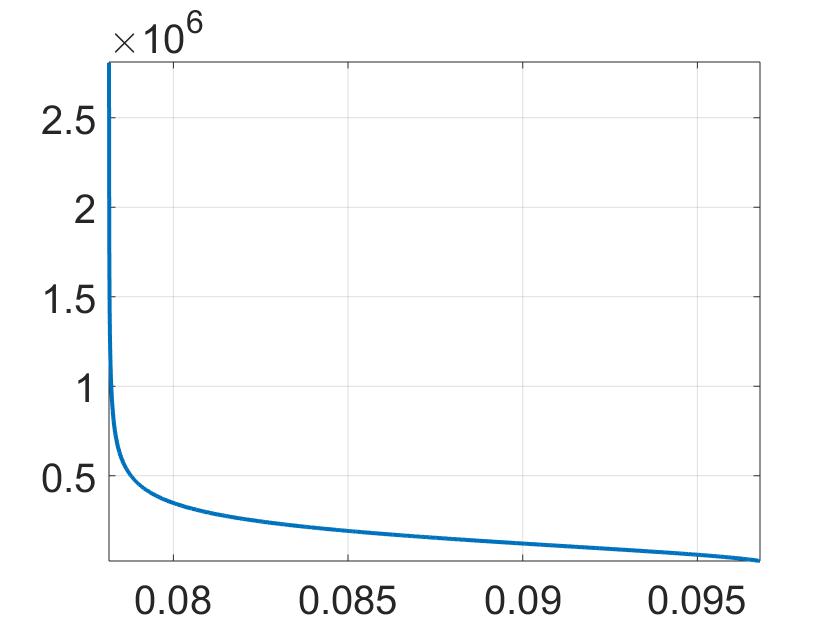}.
\end{tabular}
\caption{On the left, $\hat{y}_*(s)$ and on the right $a_*(s)$. Only the values of $s$ such that $\hat{y}_*(s) \in \left [0,\frac{1}{2}\right ]$ are considered. In this figure, the semi-major axis $a$ is measured in km.}
\label{figyahatvss}
\end{figure}

These figures illustrate that indeed $\hat{y}_*(\gg)$ is an increasing function of $\gg$ meanwhile $a_*(\gg)$ is decreasing. 

If we only consider the values of $\gg$ such that $\hat{y}_*(\gg) \in \left[\hat{y}_{\min}, \frac{1}{2}\right ]$ and $a_*(\gg) \in [a_{\min}, a_{\max}]$ (which are the ones we are interested in) we observe the behaviour in Figure~\ref{figyahatvssmin}.
\begin{figure}[h]
\centering
\begin{tabular}{cc} 
\includegraphics[width=6cm]{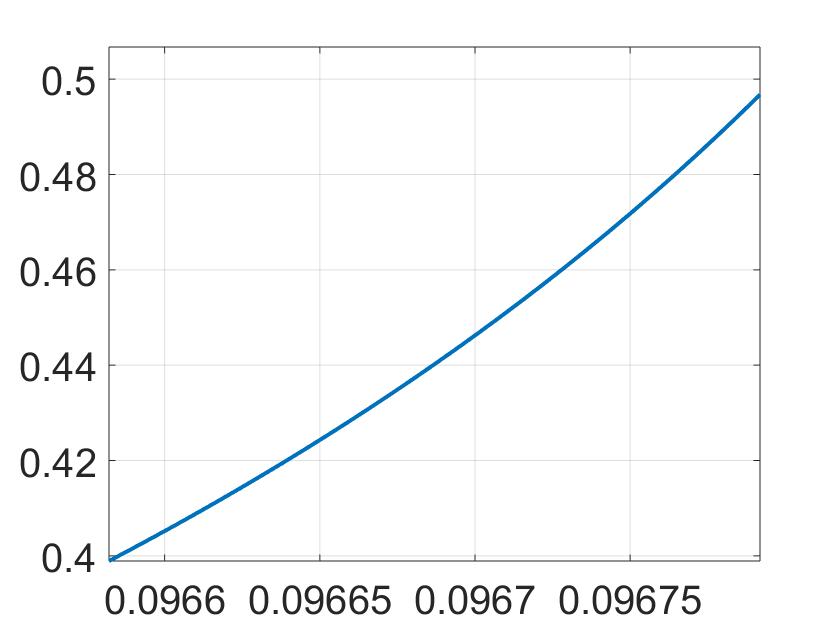} & 
\includegraphics[width=6cm]{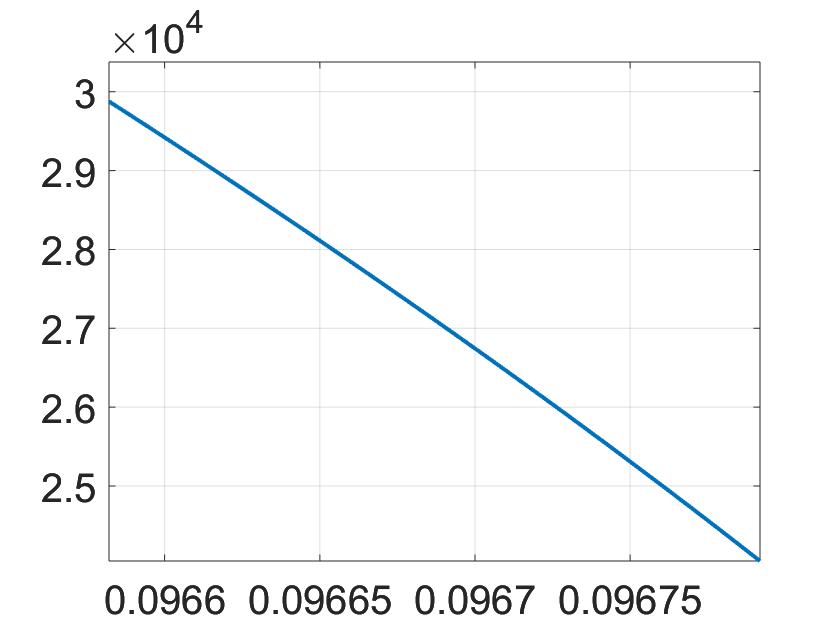}.
\end{tabular}
\caption{As in Figure~\ref{figyahatvss}, $\hat{y}_*(s)$ (left) and $a_*(s)$ (right). The values $s$ considered as the ones such that $\hat{y}_{\min}\leq \hat{y}_*(s) \leq \frac{1}{2}$ and $a_{\min}\leq a \leq a_{\max}$ in km.}
\label{figyahatvssmin}
\end{figure}

Summarizing, for the values of $\gg_{\min}, \gg_{\max} $ in the lemma we have that for any $\gg \in [\gg_{\min},\gg_{\max}]$,  
$$
\partial_{\hat{y}}\hat{\Gamma}_-\big(\hat{y}_*(\gg); L_*(\gg)\big )=0
$$
with $L_*(\gg)=\sqrt{\mu a_*(\gg)}$ and $\hat{y}_*(\gg), a_*(\gg)$ defined in~\eqref{exp:yabif}. In addition,
these are the only possible critical points of $\hat{\Gamma}_-$. 
\end{proof}

We introduce now the boundary values of $\hat{\Gamma}$. We first define, for $L\in [L_{*}(\gg_{\max}), L_{\max}]$,
$$
\hat{\Gamma}_{*}^-(L)= \hat{y}_*(\gg )\gg , \qquad \text{with} \quad  s=s(L) \quad \text{ such that} \quad  L=L_*(\gg).
$$
Then we introduce, for $L\in [L_{\min},L_{\max}]$
\begin{equation}\label{defGminmax-}
\hat{\Gamma}_{\min}^-(L)=\hat{\Gamma}_{-}(\hat{y}_{\min};L), \qquad \hat{\Gamma}_{\max}^-(L)= \hat{\Gamma}_-\left (\frac{1}{2} ; L\right )
\end{equation}
that, by Corollary~\ref{prop:existencefixedpointselliptic}, satisfy
\begin{equation}
\label{defGminmax-intervals}
\begin{aligned}
\hat{\Gamma}_{\min}^-(L)& \in \left [\hat{y}_{\min} \frac{2}{5} \sl, \hat{y}_{\min} \sl  \right ] =[0.005433756726029,0.013584391815074 ]\\ 
\hat{\Gamma}_{\max}^-(L)& \in \left [ \frac{1}{5} \sl, \frac{1}{2} \sl\right ] =[0.023303027798234,0.058257569495584]  .
\end{aligned}
\end{equation}

By Lemma~\ref{lem:yaast}, $\hat{y}_*(\gg)$ is an increasing function whereas $L_*(\gg)$ is decreasing. Let $L_*^{-1}$ be its inverse. We have that $\hat{\Gamma}_{*}^-(L)=\hat{y}_*(L^{-1}_*(L)) L^{-1}_*(L)$ is a decreasing function. Therefore 
$$
\hat{\Gamma}_{*}^-(L) \in [ \hat{\Gamma}_{*}^-(L_{\max}), \hat{\Gamma}_*^- ( L_*(\gg_{\max}))]
$$
and,  by definition of $\gg_{\min},\gg_{\max}$ in Lemma~\ref{lem:yaast},
$$
\hat{\Gamma}_{*}^-(L) \in \left [\hat{y}_*(\gg_{\min}) \gg_{\min},  \frac{1}{2} \gg_{\max} \right ] = [0.038367525991514, 0.048398408167370 ].
$$

Next lemma studies the monotonicity properties of the function $\hat \Gamma_-$.
\begin{lemma}\label{lem:Gammanegincreasing}
For $\hat{y}_{\min}\leq \hat{y} \leq \frac{1}{2}$, the function $\hat{\Gamma}_-(\hat{y};\cdot)$ is  strictly decreasing (with respect to $L$) on $[L_{\min},L_{\max}]$. Therefore, 
$\hat \Gamma_{\min,\max}^-(L)$ (see definition~\eqref{defGminmax-}) are also decreasing and moreover 
$$
\hat{\Gamma}_{\min}^-(L) \in [ \hat{\Gamma}_{\min}^-(L_{\max}),  \hat{\Gamma}_{\min}^-(L_{\min})]=[0.013575332545548,0.013584387878826]
$$ 
and 
$$
\hat{\Gamma}_{\max}^-(L) \in [ \hat{\Gamma}_{\max}^-(L_{\max}),  \hat{\Gamma}_{\max}^-(L_{\min})]=[0.027639200647529, 0.058244177051364]
$$ 
where the values have been computed numerically. 

The function $\hat{\Gamma}_-(\cdot; L)$ satisfies, for $\hat{y}_{\min}\leq \hat{y}\leq \frac{1}{2}$, that
\begin{enumerate}
    \item For $L\in [L_{\min}, L_{*}(\gg_{\max})]$, the function $\hat{\Gamma}_-(\cdot;L)$ is strictly increasing and  moreover
    $\hat{\Gamma}_-(\hat{y};L) \in [\hat{\Gamma}_{\min}^-(L), \hat{\Gamma}_{\max}^-(L)]$. 
\item For $L\in (L_*(\gg_{\max}),L_{\max}]$, $\hat{\Gamma}_-(\cdot, L)$ has a maximum at $\hat{y}_{\max}(L) = \hat{y}_*(L_*^{-1}(L))$. In this case, 
$$
\hat{\Gamma}_-(\hat{y};L) \in [\hat{\Gamma}_{\min}^-(L), \hat{\Gamma}_*^-(L)].
$$
\end{enumerate}
\end{lemma}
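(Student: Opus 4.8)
The plan is to prove Lemma~\ref{lem:Gammanegincreasing} in two stages: first establishing the monotonicity of $\hat\Gamma_-(\hat y;\cdot)$ in $L$ (equivalently in $\alpha^3 L^4$, hence in the parameter $\delta$) for fixed $\hat y$, and then combining this with the already-established location of the unique critical point of $\hat\Gamma_-(\cdot;L)$ from Lemma~\ref{lem:yaast} to describe the behaviour in $\hat y$ for fixed $L$.

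\textbf{Step 1: monotonicity in $L$.} From~\eqref{expr:partialLGammahat} we have
\[
\partial_L \hat\Gamma_-(\hat y;L) = -\frac{\partial_L \mathcal{G}_-(\gg_-,\hat y;L)}{\partial_\gg \mathcal{G}_-(\gg_-,\hat y;L)}\,\hat y,
\]
with $\gg_-=\gg_-(\hat y;L)$. By Lemma~\ref{lem:bcposneg} the denominator is positive, so the sign of $\partial_L\hat\Gamma_-$ is the opposite of the sign of $\partial_L\mathcal{G}_-(\gg_-,\hat y;L)$. From the expression~\eqref{expr:mathcalG} of $\mathcal{G}_-$, since only the factor $2\alpha^3 L^4\rho$ depends on $L$ and $\partial_L(\alpha^3 L^4)>0$, one gets $\partial_L\mathcal{G}_-(\gg_-,\hat y;L)$ proportional (with positive factor) to $\mathbf{b}_-(\gg_-)\hat y^3+\mathbf{c}_-(\gg_-)\hat y^5 = \hat y^3\big(\mathbf{b}_-(\gg_-)+\mathbf{c}_-(\gg_-)\hat y^2\big)$. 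I would then show $\mathbf{b}_-(\gg_-)+\mathbf{c}_-(\gg_-)\hat y^2>0$ for all admissible $(\hat y,L)$: by Lemma~\ref{lem:existencefixedpointsG}, $\gg_-\in(\tfrac{2}{5}\sl,\sl)$, and since by Lemma~\ref{lem:bcposneg} $\mathbf{b}_-,\mathbf{c}_-$ are increasing on $[0,\tfrac12]$, we can bound $\mathbf{b}_-(\gg_-)+\mathbf{c}_-(\gg_-)\hat y^2 \ge \mathbf{b}_-(\tfrac25\sl)+\mathbf{c}_-(\tfrac25\sl)\hat y^2$ if $\mathbf{c}_-(\tfrac25\sl)\ge0$, using the numerical values in~\eqref{value:bneg04m} and~\eqref{value:cneg04m} (both positive); this last quantity is then minimised at $\hat y=\hat y_{\min}$ but is clearly positive since all terms are positive. (Alternatively, since $\mathbf{c}_-$ is positive on the whole relevant range — one checks $\mathbf{c}_-(\sl)>0$ too — and $\mathbf{b}_-(\gg_-)>0$ from its formula, positivity is immediate.) Hence $\partial_L\hat\Gamma_-(\hat y;L)<0$, so $\hat\Gamma_-(\hat y;\cdot)$ is strictly decreasing on $[L_{\min},L_{\max}]$. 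In particular $\hat\Gamma_{\min}^-(L)=\hat\Gamma_-(\hat y_{\min};L)$ and $\hat\Gamma_{\max}^-(L)=\hat\Gamma_-(\tfrac12;L)$ are both strictly decreasing, and the stated numerical intervals follow by evaluating at $L=L_{\min}$ and $L=L_{\max}$.

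\textbf{Step 2: monotonicity in $\hat y$.} Here I invoke Lemma~\ref{lem:yaast}: the only critical points of $\hat\Gamma_-(\cdot;L)$ in $[\hat y_{\min},\tfrac12)$ occur at $(\hat y,L)=(\hat y_*(\gg),L_*(\gg))$ for $\gg\in[\gg_{\min},\gg_{\max}]$, and for $L\in[L_{\min},L_*(\gg_{\max}))$ there are none. Therefore, for $L\in[L_{\min},L_*(\gg_{\max})]$, $\partial_{\hat y}\hat\Gamma_-(\cdot;L)$ has a constant sign on $[\hat y_{\min},\tfrac12]$; to fix the sign I evaluate $\partial_{\hat y}\hat\Gamma_-$ at one convenient point (say $\hat y=\hat y_{\min}$, where $\gg_-=\sl$ by Lemma~\ref{lem:bcposneg}) using~\eqref{def:derGammahat} and the expression $\partial_{\hat y}\mathcal{G}_-(\sl,\hat y_{\min};L)=2\alpha^3L^4\rho\,\hat y_{\min}^2\big(3\mathbf{b}_-(\sl)+5\mathbf{c}_-(\sl)\hat y_{\min}^2\big)$ together with $\mathbf{a}(\sl)=0$, obtaining $\partial_{\hat y}\hat\Gamma_-(\hat y_{\min};L)>0$; hence $\hat\Gamma_-(\cdot;L)$ is strictly increasing, and its range is $[\hat\Gamma_{\min}^-(L),\hat\Gamma_{\max}^-(L)]$. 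For $L\in(L_*(\gg_{\max}),L_{\max}]$, the unique critical point $\hat y_{\max}(L):=\hat y_*(L_*^{-1}(L))$ lies in $(\hat y_{\min},\tfrac12)$; since $\partial_{\hat y}\hat\Gamma_-$ is positive just past $\hat y_{\min}$ (same computation as above, valid as long as $\hat y_{\min}$ is below the critical point, which holds because $\hat y_*$ is increasing and $\hat y_*(\gg_{\min})=0.397\ldots>\hat y_{\min}$) and changes sign only at $\hat y_{\max}(L)$, the function increases then decreases, so it attains a maximum there; combined with Step 1 ($\hat\Gamma_-(\hat y;L)\ge\hat\Gamma_-(\hat y_{\min};L)=\hat\Gamma_{\min}^-(L)$ at the left endpoint and the value at the right endpoint $\hat y=\tfrac12$ being $\hat\Gamma_{\max}^-(L)\le\hat\Gamma_*^-(L)$) the range is $[\hat\Gamma_{\min}^-(L),\hat\Gamma_*^-(L)]$, where $\hat\Gamma_*^-(L)=\hat\Gamma_-(\hat y_{\max}(L);L)=\hat y_*(\gg)\gg$ with $L=L_*(\gg)$.

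\textbf{Main obstacle.} The delicate point is Step 1, specifically verifying the sign of $\mathbf{b}_-(\gg_-)+\mathbf{c}_-(\gg_-)\hat y^2$ uniformly over the admissible region, since $\gg_-$ is only defined implicitly; but this reduces, via the monotonicity of $\mathbf{b}_-,\mathbf{c}_-$ from Lemma~\ref{lem:bcposneg} and the enclosure $\gg_-\in(\tfrac25\sl,\sl)$, to checking a finite list of signs of explicit constants — the same style of elementary numerical estimate used throughout Section~\ref{sec:eccentric_case}. A secondary subtlety is bookkeeping at the endpoint $\hat y=\tfrac12$ and ensuring the critical point $\hat y_{\max}(L)$ genuinely lies strictly inside $[\hat y_{\min},\tfrac12)$ for the relevant $L$, which is guaranteed by the monotonicity of $\hat y_*$ and the numerical value $\hat y_*(\gg_{\min})=0.397\ldots$ from Lemma~\ref{lem:yaast}.
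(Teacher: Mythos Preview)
Your Step~1 (monotonicity in $L$) is correct and essentially identical to the paper's argument. Step~2, however, contains two genuine problems.

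First, the claim $\gg_-(\hat y_{\min};L)=\sl$ is false: the value $\hat y_{\min}$ is defined in Lemma~\ref{lem:existencefixedpointsG} by $\mathcal G_+(\sl,\hat y_{\min};L)=0$, not $\mathcal G_-$. In fact $\mathcal G_-(\sl,\hat y_{\min};L)=2\alpha^3 L^4\rho\,\hat y_{\min}^3\big[\mathbf b_-(\sl)+\mathbf c_-(\sl)\hat y_{\min}^2\big]>0$ since $\mathbf b_-(\sl),\mathbf c_-(\sl)>0$, so $\gg_-(\hat y_{\min};L)<\sl$ strictly. Your evaluation of $\partial_{\hat y}\hat\Gamma_-$ at $\hat y_{\min}$ via~\eqref{def:derGammahat} therefore plugs in the wrong value of $\gg_-$ and is invalid. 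The paper avoids computing the derivative at a specific point altogether: since Lemma~\ref{lem:yaast} guarantees no critical point of $\hat\Gamma_-(\cdot;L)$ on $[\hat y_{\min},\tfrac12)$ for $L\le L_*(\gg_{\max})$, the function is monotone, and the a~priori bounds~\eqref{defGminmax-intervals} give $\hat\Gamma_-(\hat y_{\min};L)\le \hat y_{\min}\sl<\tfrac15\sl\le\hat\Gamma_-(\tfrac12;L)$, forcing it to be increasing.

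Second, for $L>L_*(\gg_{\max})$ your assertion that $\partial_{\hat y}\hat\Gamma_-$ ``changes sign only at $\hat y_{\max}(L)$'' is precisely what must be proved: a unique zero of $\partial_{\hat y}\hat\Gamma_-$ could a priori be a degenerate (even-order) zero, leaving $\hat\Gamma_-(\cdot;L)$ monotone rather than having an interior maximum. The paper closes this gap by computing the second derivative at the critical point: using $\mathcal H(\gg,\hat y_*(\gg);L_*(\gg))=0$ one obtains $\partial_{\hat y}^2\hat\Gamma_-=\partial_{\hat y}\mathcal H/\partial_\gg\mathcal G_-$ there, and then the identity $\hat y_*\,\partial_{\hat y}\mathcal H=-3\,\overline{\mathbf a}(\gg)+4\alpha^3 L^4\rho\,\overline{\mathbf c}(\gg)\,\hat y_*^5$ together with $\overline{\mathbf a}>0$ and an elementary upper bound $\overline{\mathbf c}(\gg)\le 8U_2^{0,0}(5-12\gg)$ yields $\partial_{\hat y}\mathcal H<0$, hence a strict local (and therefore global) maximum.
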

\begin{remark}
The statement of Lemma~\ref{lem:Gammanegincreasing} can be graphically represented as in Figure~\ref{fig:abstractGammaneg}.
\begin{figure}[h]
\centering
\includegraphics[width=7.2cm]{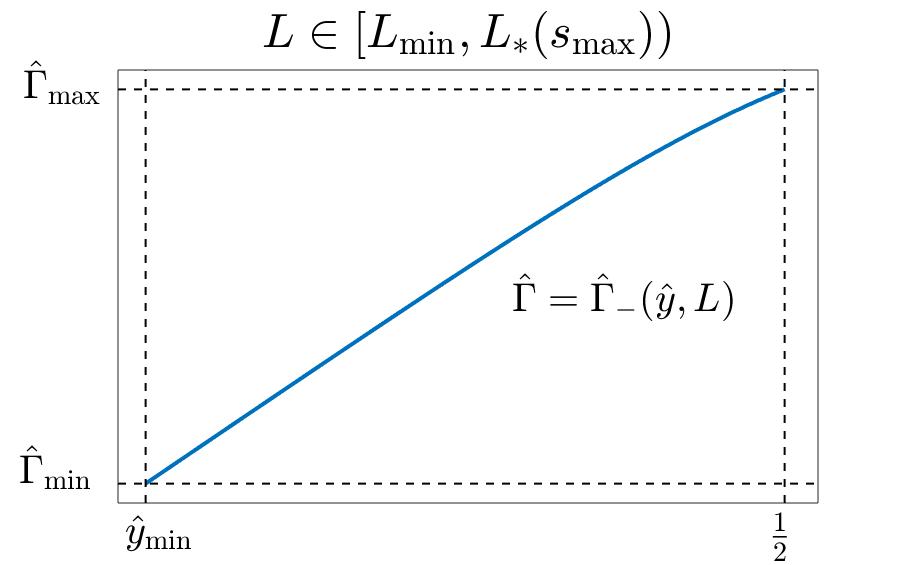}
\includegraphics[width=7.2cm]{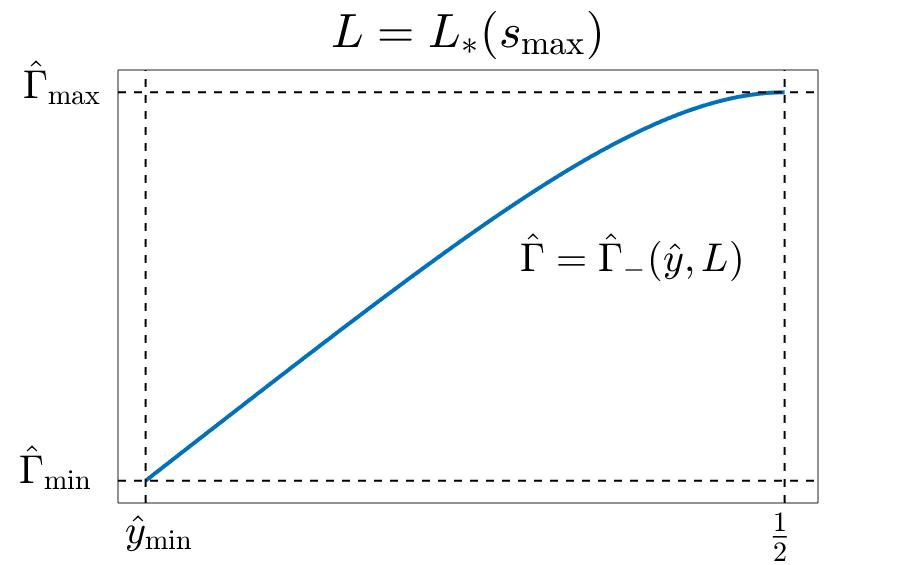}

\includegraphics[width=7.2cm]{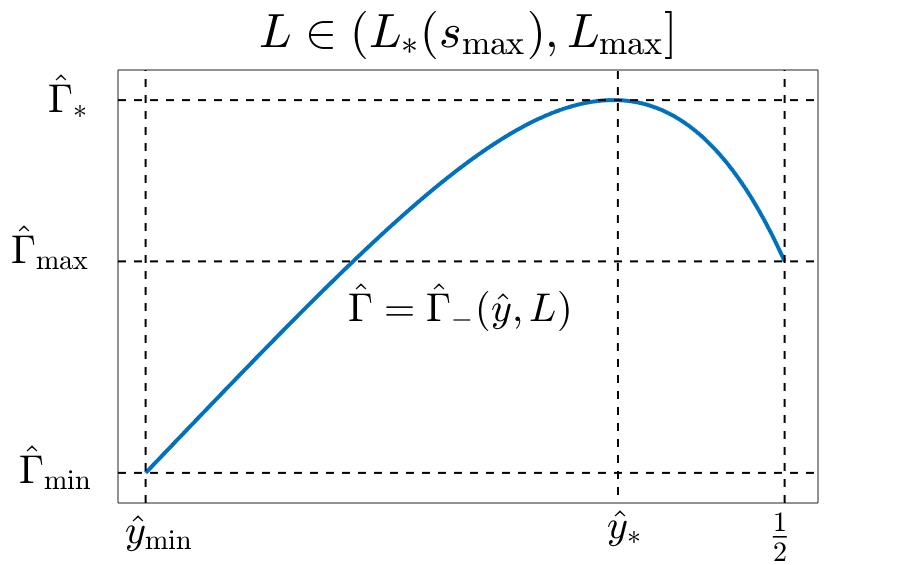}
 
    \caption{For a given value of $L$, the figures represent qualitatively the function $\hat{\Gamma}_-(\hat{y},L)$ (as a function of $\hat{y}$).
    There are depicted the three different behaviours depending on the value of $L$. See also Figure~\ref{figgammaneg} for numerical computations.} 
    \label{fig:abstractGammaneg}
\end{figure}
\end{remark}
\begin{proof} 
From 
formula~\eqref{expr:partialLGammahat} of $\partial_L \hat{\Gamma}_-$ and using that, by Lemma~\ref{lem:bcposneg}, $\partial_s \mathcal{G}>0$, we have that the sign of $\partial_L\hat{\Gamma}_-$, is the same as for 
$ 
-\partial_L\mathcal{G}_-(s_-,\hat{y},L) 
$ or, in other words, the same as 
$-\mathbf{b}_-(\gg_-) - \hat{y}^2 \mathbf{c}_-(\gg_-)$ with $\gg_-= \gg_-(\hat{y},L)$. Again, from Lemma~\ref{lem:bcposneg}, we have that $\mathbf{b}_-,\mathbf{c}_-$ are increasing functions and using their values~\eqref{value:bneg04m} and~\eqref{value:cneg04m} at $s=\frac{2}{5}\sl$, we conclude that $\mathbf{b}_-(\gg_-) + \hat{y}^2 \mathbf{c}_-(\gg_-)>0$ and so  $\hat{\Gamma}_{-}(\hat{y};\cdot)$ is decreasing.  

By Lemma~\ref{lem:yaast}, when $L\in [L_{\min},L_*(\gg_{\max})]$ there are no critical points of $\hat{\Gamma}_-$ in the  interval $\left [\hat{y}_{\min}, \frac{1}{2}\right )$. That implies that $\hat{\Gamma}_-(\cdot; L)$ is either increasing or decreasing. Notice that, by definition~\eqref{defGminmax-} of $\hat{\Gamma}_{\min,\max}^{-}$ and bounds~\eqref{defGminmax-intervals},
$$
\hat{\Gamma}_-(\hat{y}_{\min};L)=\hat{\Gamma}_{\min}(L)\leq \hat{\Gamma}_{\max}(L)
=
\hat{\Gamma}_-\left (\frac{1}{2}; L \right) 
$$
and therefore $\hat{\Gamma}_-(\cdot; L)$ is a strictly increasing function in $[L_{\min}, L_*(s_{\max}))$. 

When $L=L_*(\gg_{\max})$ we have that the corresponding critical point is $\hat{y}=y_*(\gg_{\max}) = \frac{1}{2}$. Therefore, the same argument as before allows to conclude that also in this case $\hat{\Gamma}_-(\cdot; L)$ is strictly increasing for $\hat{y}_{\min} \leq \hat{y} < \frac{1}{2}$.

Take now $L\in (L_*(\gg_{\max}), L_{\max}]$ and let $\gg =L_*^{-1}(L)\in [\gg_{\min}, \gg_{\max}]$ be such that $L=L_*(\gg)$. By Lemma~\ref{lem:yaast}, we already know that $\hat{y}_*(s)$ is a critical point of $\hat{\Gamma}_-(\cdot;L)$ and that $\hat{\Gamma}_-(\hat{y}_*(s); L_*(s))= \hat{\Gamma}_*^-(L)$. So we only need to check that 
$$
\partial^2_{\hat{y}} \hat{\Gamma}_-(\hat{y}_*(L^{-1}_*(L));L) = \partial^{2}_{\hat{y}} \hat{\Gamma}_-(\hat{y}_*(\gg);L_*(\gg))<0, \qquad s=L^{-1}_*(L).
$$
We first 
recall expression~\eqref{defmathcalH} of $\hat{\Gamma}_-$ in~\eqref{defmathcalH}:
$$
\partial_{\hat{y}} \hat{\Gamma}_-(\hat{y};L)= \frac{\mathcal{H}(s_-(\hat{y};L),\hat{y};L)}{\partial_s \mathcal{G}_-(s_-(\hat{y};L),\hat{y};L)}.  
$$
Then, using that, by construction, $\mathcal{H}(\gg, \hat{y}_*(\gg);L_{*}(\gg))=0$ and $s_-(\hat{y}_*(s);L_*(s))=s$ 
\[
    \partial_{\hat{y}}^2 \hat{\Gamma}_-(\hat{y}_*(\gg); L_*(\gg)) = 
    \frac{\partial_{\hat{y}} \mathcal{H}(s,\hat{y}_*(s);L_*(s)),   }{\partial_\gg \mathcal{G}_-(\gg, y_*(\gg);L_*(\gg))},
\]
and, since by Lemma~\ref{lem:bcposneg}, $\partial_s \mathcal{G}>0$, we need to compute the sign of $\partial_{\hat{y}} \mathcal{H} (s, \hat{y}_*(s); L_*(s))$. From~\eqref{systemGammaneg} we have that 
$$
\partial_{\hat{y}} \mathcal{H}(s,\hat{y};L)= 2\alpha^3L^4 \rho \big [3\overline{\mathbf{b}}(s) \hat{y}^2 + 5\overline{\mathbf{c}}(s) \hat{y}^4].
$$
Then, using that $\mathcal{H}(\gg, \hat{y}_*(\gg);L_{*}(\gg))=0$, we deduce that
$$
\partial_{\hat{y}} \mathcal{H} (s, \hat{y}_*(s); L_*(s)) = \frac{1}{\hat{y}_*(s)} \left ( -3 \overline{\mathbf{a}}(s) + 4 \alpha^3 L^4 \rho \overline{\mathbf{c}}(s) (\hat{y}_*(s))^5\right ).
$$
On the other hand, from expression~\eqref{defbarabc} for $\overline{\mathbf{c}}$,
$$
\overline{\mathbf{c}}(s) \leq 8U_2^{0,0} (5-12s)
$$
where we have used that $45-60s+5s^2 +8s^3 \geq 45-60s \geq 45-60\sl>0$. Therefore, using again that $s\in (\frac{2}{5}\sl,\sl)$, $0<\hat{y}_*(s) \leq \frac{1}{2}$ and $a_{\min}\leq a \leq a_{\max}$, we obtain
\begin{align*}
\hat{y}_*(s) \partial_{\hat{y}} \mathcal{H}(s,\hat{y};L) &\leq -6s(5s+4) + 32\alpha^3 L^4 \rho (\hat{y}_*(s) )^5 U_2^{0,0} (5-12s) \\
& \leq -6 \cdot \frac{2}{5}\sl (2\sl +4) +  \frac{a_{\max}^3}{a_\rM^3} L_{\max} ^4 \rho  U_2^{0,0} \left(5-\frac{24}{5} \sl\right) \\ & =-0.778057059724233<0.
\end{align*}
As a conclusion $\partial_{\hat{y}} \mathcal{H}(\gg_-,\hat{y};L)<0$ if $(\hat{y},L)=(\hat{y}_*(\gg),L_*(\gg))$ and that implies that $\partial_{\hat{y}}^2 \hat{\Gamma}_-(\hat{y}_*(\gg); L_*(\gg)) <0$ so that $\hat{y}_*(\gg)$ is a maximum of $\hat{\Gamma}(\cdot; L_*(\gg))$. 
\end{proof}

\begin{remark}
The values of $\hat{\Gamma}_-(\hat{y};L)$ can be numerically computed for any fixed $\hat{y},L$ as $\hat{\Gamma}_-(\hat{y};L)=s_-(\hat{y};L) \hat{y}$ with $s_-$ the zero of the function
$\mathcal{G}_-(s,\hat{y};L)$. In Figure~\ref{figgammaneg} we present some representative values of $a$ (recall that $L=\sqrt{\mu a})$ where we can find the different behaviour described in Lemma~\ref{lem:Gammanegincreasing}.
\begin{figure}[h]
\centering
    \begin{tabular}{cc} 
\includegraphics[width=6cm]{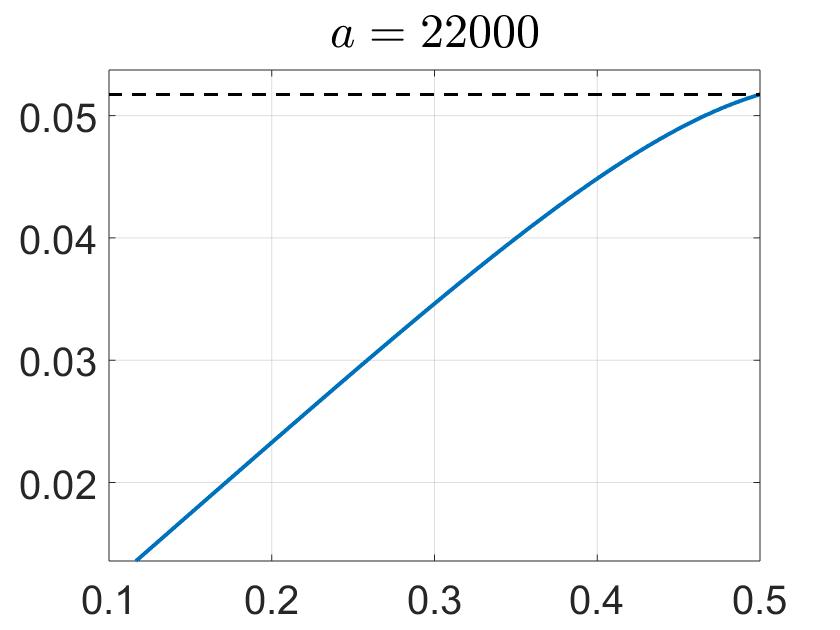}
 & 
\includegraphics[width=6cm]{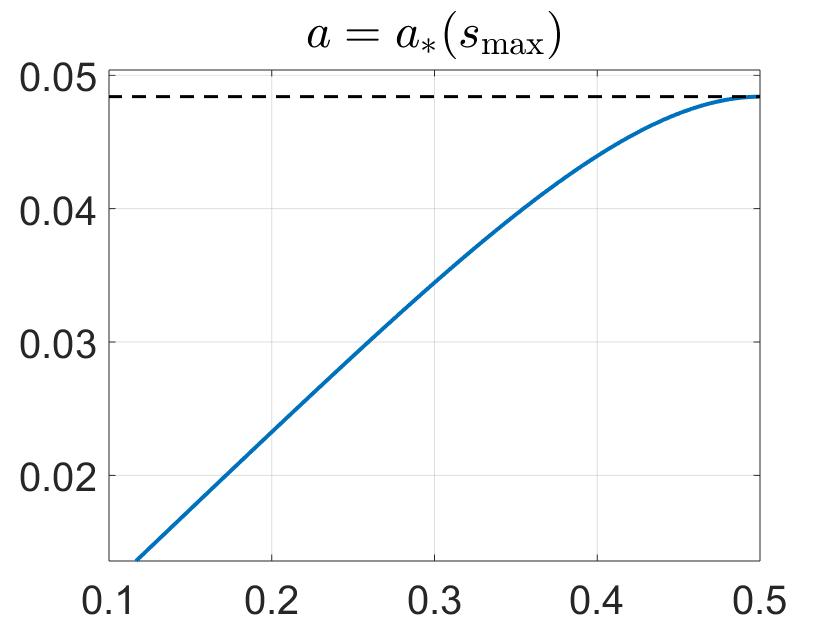}
\\
\includegraphics[width=6cm]{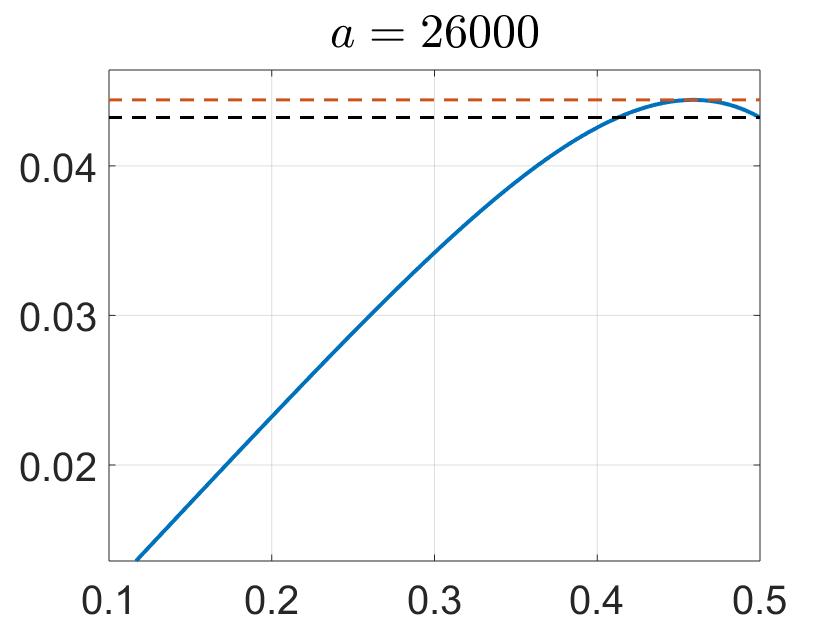}
&
\includegraphics[width=6cm]{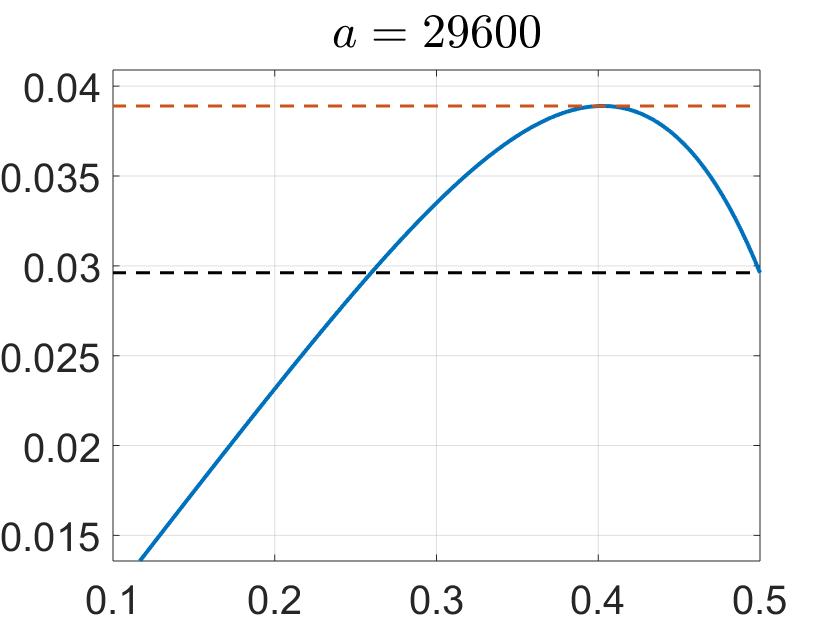}
\end{tabular} 
    \caption{The function $\hat{\Gamma}_-(\hat{y},L)$ for $a=22000, a_*(s_{\max}), 26000, 29600,\, \mathrm{km}$. Notice that for $a=22000, a_*(s_{\max})$ the function is strictly increasing but when $a=a_*(s_{\max})$ $\hat{\Gamma}_-(\hat{y}; L_*(s_{\max}))$ has a critical point at $\hat{y}=\frac{1}{2}$. The black dotted line is for $\hat{\Gamma}_{\max}^-(L)$ meanwhile the orange one is for $\hat{\Gamma}_*^-(L)$.} 
    \label{figgammaneg}
\end{figure}
\end{remark}
\begin{remark}\label{rmk:positionGammaast}  
We first notice that 
$
\hat{\Gamma}_*^-(L_*(s_{\max}))= \hat{\Gamma}_{\max}^-(L_*(s_{\max}) ).
$
Then, using that by Lemma~\ref{lem:Gammanegincreasing}, $\hat{\Gamma}_{\max}^-$ is decreasing (in its variable $L$), we also have that for $L\in (L_*(s_{\max}),L_{\max}]$, 
$$
\hat{\Gamma}^-_{\max}(L)< \hat{\Gamma}^-_{*}(L).
$$  
Moreover, since $s_{\max}= s_- \left ( \frac{1}{2}, L_*(s_{\max})\right )$, we have that $s_{\max} \leq \sl$. Therefore, using that by~\eqref{intGammamaxpos}, $\hat\Gamma^+_{\max}(L) \geq \frac{1}{2} \sl$, we have that
$$
\hat{\Gamma}^-_{*}(L) \leq \frac{1}{2} s_{\max} \leq \frac{1}{2} \sl \leq \hat{\Gamma}^+_{\max}(L).
$$
From~\eqref{defGminmax-intervals}
it is also clear that the constant $\hat \Gamma_{\min}^+=\hat y_{\min} \sl $ satisfies that 
$$
\hat{\Gamma}_{\min}^-(L) \leq \hat\Gamma_{\min}^{+} \leq \hat \Gamma_{\max}^-(L).
$$
\end{remark}

From this analysis, it is straightforward to deduce the following result about the existence of equilibrium points of the form $(\pi,y)$ (that is, the existence results of  Theorem~\ref{thm:AveragedHam_generalcase}). 

\begin{proposition}
Let $L\in [L_{\min},L_{\max}]$. If $\hat{\Gamma} \in [\hat{\Gamma}_{\min}^-(L), \hat{\Gamma}_{\max}^-(L)]$,  there exists a unique $\hat{y}_-(\hat{\Gamma};L)$ such that system~\eqref{Eq:EMotMoy} has a fixed point of the form $(\pi,y) = (\pi, L\hat{y}_-(\hat{\Gamma};L))$. The function $\hat{y}_{-}(\cdot; L)$ is strictly increasing.

In addition, for $L\in [L_{\min}, L_{*}(\gg_{\max})]$:
\begin{enumerate}
    \item If $\hat{\Gamma} \notin [\hat{\Gamma}_{\min}^-(L), \hat{\Gamma}_{\max}^-(L)]$,   system~\eqref{Eq:EMotMoy} has no fixed points of the form $(\pi ,L\hat y)$ with $\hat{y}_{\min} \leq  \hat y \leq \frac{1}{2}$ and $\hat\Gamma\in \left (0,\frac{1}{2} \hat y \right )$. 
\end{enumerate}
and when $L\in (L_*(\gg_{\max}), L_{\max}]$, we have that 
\begin{enumerate}
    \item If $\hat{\Gamma} \in [\hat{\Gamma}_{\max}^-(L), \hat{\Gamma}_*(L))$, there exist only two functions $\hat{y}_-^{1,2}(\hat{\Gamma};L)$ satisfying 
    $$
    \hat{y}_-^1 (\hat{\Gamma};L) \leq \hat{y}_{\max}(L) \leq \hat{y}_-^2 (\hat{\Gamma};L)
    $$
    with $\hat{y}_{\max}(L)$ defined in Lemma~\ref{lem:Gammanegincreasing}, 
    such that system~\eqref{Eq:EMotMoy} has two fixed point of the form $(\pi,y) = (\pi, L\hat{y}_-^{1,2}(\hat{\Gamma};L))$. In addition, 
    $\hat{y}^1_-(\cdot; L)$ is strictly increasing and $\hat{y}^2_-(\cdot;L)$ is strictly decreasing.  
    \item If $\hat \Gam = \hat \Gam_*(L)$, there exists only one fixed point $(\pi, \hat{y}_{\max}(L))$  with $\hat{y}_{\max} (L)$ a decreasing function.
    \item If $\hat{\Gamma} \notin [\hat{\Gamma}_{\min}^-(L), \hat{\Gamma}_*(L)]$,  system~\eqref{Eq:EMotMoy} has no fixed points of the form $(\pi ,L \hat y)$ with $\hat{y}_{\min} \leq \hat y \leq \frac{1}{2}$ and $\hat\Gamma\in \left (0,\frac{1}{2} \hat y \right )$.   
\end{enumerate}
\end{proposition}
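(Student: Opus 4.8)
The plan is to reduce the whole statement to the qualitative picture of the graph of the function $\hat\Gamma_-(\cdot;L)$ already established in Lemmas~\ref{lem:existencefixedpointsG}, \ref{lem:yaast} and~\ref{lem:Gammanegincreasing}. By~\eqref{Eq:condFP}, every critical point of~\eqref{Eq:EMotMoy} of the form $(\pi,y)$ has $y\in(0,L/2)$, and by~\eqref{equivFG_averaged} such a point at parameter value $\Gamma$ exists if and only if $\mathcal{G}_-(\Gamma/y,y/L;L)=0$; by Lemma~\ref{lem:existencefixedpointsG} this is equivalent to $\Gamma=y\,s_-(y/L;L)$, i.e., in the scaled variables $\hat y=y/L$, $\hat\Gamma=\Gamma/L$, to $\hat\Gamma=\hat\Gamma_-(\hat y;L)$. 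Hence, for fixed $L$, the critical points of the form $(\pi,L\hat y)$ with $\hat y\in[\hat y_{\min},1/2)$ and given $\hat\Gamma$ are in bijection with the solutions $\hat y$ of $\hat\Gamma_-(\hat y;L)=\hat\Gamma$, and all that remains is to count these solutions and read off their monotonicity in $\hat\Gamma$.

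First I would treat the range $L\in[L_{\min},L_*(\gg_{\max})]$. By Lemma~\ref{lem:Gammanegincreasing}(1), $\hat\Gamma_-(\cdot;L)$ is strictly increasing on $[\hat y_{\min},1/2)$, hence a homeomorphism onto $[\hat\Gamma_{\min}^-(L),\hat\Gamma_{\max}^-(L))$. This immediately yields: no critical point of this form when $\hat\Gamma\notin[\hat\Gamma_{\min}^-(L),\hat\Gamma_{\max}^-(L)]$, and exactly one, $\hat y_-(\hat\Gamma;L):=(\hat\Gamma_-(\cdot;L))^{-1}(\hat\Gamma)$, otherwise; being the inverse of a strictly increasing function, $\hat y_-(\cdot;L)$ is strictly increasing. (The value $\hat y=1/2$ achieving $\hat\Gamma=\hat\Gamma_{\max}^-(L)$ corresponds to $e=0$, i.e., to the circular equilibrium of Theorem~\ref{thm:origin_average}, and is not counted among the eccentric critical points.)

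Next, for $L\in(L_*(\gg_{\max}),L_{\max}]$, Lemma~\ref{lem:Gammanegincreasing}(2) gives that $\hat\Gamma_-(\cdot;L)$ is strictly increasing on $[\hat y_{\min},\hat y_{\max}(L)]$ and strictly decreasing on $[\hat y_{\max}(L),1/2]$, with $\hat\Gamma_-(\hat y_{\min};L)=\hat\Gamma_{\min}^-(L)$, $\hat\Gamma_-(\hat y_{\max}(L);L)=\hat\Gamma_*^-(L)$, $\hat\Gamma_-(1/2;L)=\hat\Gamma_{\max}^-(L)$, and, by Remark~\ref{rmk:positionGammaast}, $\hat\Gamma_{\min}^-(L)<\hat\Gamma_{\max}^-(L)<\hat\Gamma_*^-(L)$. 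A routine analysis of this ``tent''-shaped graph then gives: no solution if $\hat\Gamma\notin[\hat\Gamma_{\min}^-(L),\hat\Gamma_*^-(L)]$; a unique solution, on the increasing branch, if $\hat\Gamma\in[\hat\Gamma_{\min}^-(L),\hat\Gamma_{\max}^-(L)]$ (the decreasing branch only descends to $\hat\Gamma_{\max}^-(L)$, attained at the circular boundary $\hat y=1/2$); exactly two solutions $\hat y_-^1<\hat y_{\max}(L)<\hat y_-^2$, one on each branch, if $\hat\Gamma\in[\hat\Gamma_{\max}^-(L),\hat\Gamma_*^-(L))$; and the single solution $\hat y=\hat y_{\max}(L)$ if $\hat\Gamma=\hat\Gamma_*^-(L)$. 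Since $\hat\Gamma_-(\cdot;L)$ is strictly monotone on each branch, $\hat y_-^1(\cdot;L)$ inherits strict increase and $\hat y_-^2(\cdot;L)$ strict decrease; and $\hat y_{\max}(L)=\hat y_*(L_*^{-1}(L))$ is decreasing in $L$ because, by Lemma~\ref{lem:yaast}, $\hat y_*$ is increasing while $L_*$ (hence $L_*^{-1}$) is decreasing.

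The argument is essentially bookkeeping once the three cited lemmas are in place, so there is no genuinely hard step; the only point demanding care is the correct ordering of the three threshold functions $\hat\Gamma_{\min}^-(L)$, $\hat\Gamma_{\max}^-(L)$, $\hat\Gamma_*^-(L)$ along the graph of $\hat\Gamma_-(\cdot;L)$, together with the role of the boundary value $\hat y=1/2$ (which is the circular equilibrium of Theorem~\ref{thm:origin_average} and must not be double-counted), and for these Remark~\ref{rmk:positionGammaast} and Lemma~\ref{lem:Gammanegincreasing} provide exactly what is needed. Finally, combining this $(\pi,y)$ count with Proposition~\ref{prop:fixedpointscasepos} for the $(0,y)$ family, and inserting the inequalities $\hat\Gamma_{\min}^-(L)\le\hat\Gamma_{\min}^+\le\hat\Gamma_{\max}^-(L)$ and $\hat\Gamma_*^-(L)\le\hat\Gamma_{\max}^+(L)$ from Remark~\ref{rmk:positionGammaast}, will assemble the complete list of critical points claimed in Theorem~\ref{thm:AveragedHam_generalcase}.
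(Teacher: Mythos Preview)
Your proposal is correct and follows essentially the same approach as the paper: both treat the proposition as a direct read-off from the shape of the graph of $\hat\Gamma_-(\cdot;L)$ established in Lemmas~\ref{lem:existencefixedpointsG}, \ref{lem:yaast} and~\ref{lem:Gammanegincreasing} (the paper's own proof is literally the sentence ``From this analysis, it is straightforward to deduce the following result''). If anything, you have spelled out more carefully than the paper the bookkeeping around the boundary value $\hat y=1/2$ and the ordering of the thresholds via Remark~\ref{rmk:positionGammaast}.
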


 \subsection{Linearization around the critical points}
To complete the proof of Theorem~\ref{thm:AveragedHam_generalcase}, we are going to study the character of the fixed points of the form $(0,y), (\pi,y)$. Using the decomposition of $ \cB_{0,\alpha}$ in~\eqref{decompositioncalB0alpha}, we write system~\eqref{Eq:EMotMoy} as
\begin{align*}
\dot{x} &= -g(y;L)\big [X_1(y,\Gamma;L) +X_2(y,\Gamma;L) \cos x] \\
\dot{y}&= h(y,\Gamma;L) \sin x
\end{align*}
with 
\[
g(y;L)= \frac{1}{256} \frac{\rho_0}{L^4 y^6}, \qquad h(y,\Gamma;L)= \frac{1}{128} \frac{\rho \alpha^3}{L^4 y^5} \sqrt{P_1(y,\Gamma)} \mathcal{A}_1(y,\Gamma;L)
\]
and   
\[
X_1(y,\Gamma;L)=B_0(y,\Gamma;L) + \alpha^3 B_1(y,\Gamma;L), \qquad X_2(y,\Gamma;L) = \alpha^3 \frac{\cB_1(y,\Gamma;L)}{\sqrt{P_1(y,\Gamma)}}.
\]
By Corollary~\ref{prop:existencefixedpointselliptic},  we can characterize the fixed points as the sets
\[
\{(0,y,\Gamma_+(y;L),L)\}, \qquad \{(\pi,y,\Gamma_-(y;L);L)\}
\]
with $L\in [L_{\min},L_{\max}]$ and $\hat{y}_{\min} L \leq y <\frac{L}{2}$. They satisfy 
$
\mathcal{F}_\pm(y, \Gamma_{\pm}(y;L);L)=0,
$
where $\mathcal{F}_{\pm}$ are defined in~\eqref{exprmathcalFfixedpoints}, the sign $+$ corresponds to $(0,y)$ and $-$ otherwise. Equivalently we have that 
$$
\mathcal{F}_\pm(y, \Gamma_{\pm}(y;L);L)=X_1(y,\Gamma_{\pm}(y;L);L) \pm X_2(y,\Gamma_\pm(y;L);L)=0.
$$
The variational equation  at the fixed points $(0,y,\Gamma_+(y;L),L)$ or $(\pi,y,\Gamma_-(y;L),L)$, $\dot{z}=M^L(y) z$, is given by
\[
 M^{L}(y):=
\left (\begin{array}{cc} 0 & -g(y; L) \big [\partial_{y}X_1(y,\Gamma_\pm;L) \pm \partial_y X_2(y,\Gamma_\pm;L) \big ]  
 \\ \pm h(y,\Gamma_\pm;L) & 0 \end{array}\right ) ,
\]
where $\Gamma_\pm = \Gamma_\pm (y;L)$. 
The eigenvalues of $M^L$ then satisfy
\[
\lambda^2  = g(y; L)h(y,\Gamma_\pm;L)\big [\partial_y X_2(y,\Gamma_\pm;L) \pm \partial_y X_1(y,\Gamma_\pm;L)].
\] 
Since clearly $g(y;L) h(y,\Gamma;L)>0$, we need to study the sign of  
\begin{equation*}
\mathcal{E}_{\pm}(y;L) =\partial_y X_2(y,\Gamma_{\pm}(y;L);L) \pm \partial_y X_1(y,\Gamma_{\pm}(y;L);L)  
= \pm \partial_ y \mathcal{F}_\pm (y, \Gamma_\pm; L)
\end{equation*}
for $L \in [L_{\min},L_{\max}]$ and 
$\hat{y}_{\min} L \leq y \leq \frac{L}{2}$.  

On the other hand,  from $\mathcal{F}_{\pm}(y,\Gamma_{\pm}(y;L);L)=0$, we have that 
\[
\partial_y \mathcal{F}_\pm (y,\Gamma_\pm(y;L);L) + \partial_{\Gamma} \mathcal{F}_\pm (y,\Gamma_\pm(y;L);L) \partial_{y} \Gamma_\pm(y;L)=0 
\]
and then 
\begin{equation}\label{finexprmathcalE}
\mathcal{E}_{\pm } (y;L)= \mp \partial_\Gamma \mathcal{F}_{\pm}(y,\Gamma_\pm(y;L);L) \partial_y \Gamma_\pm(y;L).
\end{equation}
 
\begin{proposition}\label{prop:fixedpointscharacter}
We have that 
\begin{enumerate}
    \item For $L\in [L_{\min},L_{\max}]$ and 
    $\Gamma\in [L \hat{\Gamma}_{\min}^+, L \hat{\Gamma}_{\max}^+(L)]$, the unique fixed point of the form $(0,y)$
satisfies that 
    $y=y_+(\Gamma;L):= L \hat{y}_+ \left (\frac{\Gamma}{L};L \right )$ and it is a saddle.
    \item For $L\in [L_{\min}, L_{\max}]$ and $\Gamma\in [L \hat{\Gamma}_{\min}^-(L), L\hat{\Gamma}_{\max}^-(L))$, the unique fixed point of the form $(\pi,y)$
satisfies that 
    $y=y_-(\Gamma;L):= L \hat{y}_- \left (\frac{\Gamma}{L};L \right )$ and it is a center.
    \item For $L\in (L_*(\gg_{\max}),L_{\max}]$ and $\Gamma \in [L \hat{\Gamma}_{\max}^-(L), L \hat{\Gamma}_{*}(L))$ there are two fixed points of the form $(\pi,y)$ with $y=y_{-}^{1,2}(\Gamma;L) := L \hat{y}_{-}^{1,2} \left (\frac{\Gamma}{L};L\right )$. Assume that $y^1<y_{\max}<y^2$. Then $(\pi, y_-^1(\Gamma;L))$ is a center whereas $(\pi, y_-^2(\Gamma;L))$ is a saddle. 
    \item When $L\in [L_*(\gg_{\max}), L_{\max}]$ and $\Gamma= L \hat{\Gamma}_*^-(L)$ the unique fixed point $(\pi,y_-(\Gamma;L))$ is parabolic.  
\end{enumerate}
\end{proposition}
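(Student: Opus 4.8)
The plan is to read off the type of each critical point directly from the sign of the scalar $\mathcal{E}_\pm(y;L)$ that appears in the variational equation, exploiting the factorization in~\eqref{finexprmathcalE} together with the monotonicity properties of the branch functions $\hat{\Gamma}_\pm(\cdot;L)$ established earlier. First I would record that the eigenvalues of $M^L(y)$ satisfy $\lambda^2 = g(y;L)\,h(y,\Gamma_\pm(y;L);L)\,\mathcal{E}_\pm(y;L)$, and observe that the prefactor is strictly positive on the relevant domain: $g(y;L)>0$ is immediate from its definition, while $h(y,\Gamma;L)>0$ because $P_1(y,\Gamma_\pm)=(y-\Gamma_\pm)(3y+\Gamma_\pm)>0$ and $\mathcal{A}_1(y,\Gamma_\pm;L)=d_1 Q_1(y,\Gamma_\pm;L)>0$ when $0<y<L/2$ and $0<\Gamma_\pm<y/2$ (recall $d_1>0$ by~\eqref{def:d0d1} and $Q_1=(3y+\Gamma)(L-2y)(L+2y)y^3$). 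Consequently the character of the fixed point is governed entirely by $\operatorname{sgn}\mathcal{E}_\pm(y;L)$: positive gives a saddle, negative a center, and zero a parabolic (double-zero) eigenvalue.

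Next I would use~\eqref{finexprmathcalE}, $\mathcal{E}_{\pm}(y;L)= \mp\,\partial_\Gamma \mathcal{F}_{\pm}(y,\Gamma_\pm(y;L);L)\,\partial_y \Gamma_\pm(y;L)$, together with Corollary~\ref{prop:existencefixedpointselliptic}, which asserts $\partial_\Gamma \mathcal{F}_\pm(y,\Gamma;L)<0$ for $\Gamma\in(0,y/2)$ and in particular at $\Gamma=\Gamma_\pm(y;L)$. Hence $\operatorname{sgn}\mathcal{E}_\pm(y;L)=\operatorname{sgn}\bigl(\pm\,\partial_y\Gamma_\pm(y;L)\bigr)=\operatorname{sgn}\bigl(\pm\,\partial_{\hat{y}}\hat{\Gamma}_\pm(\hat{y};L)\bigr)$, since $\Gamma_\pm(y;L)=L\hat{\Gamma}_\pm(y/L;L)$. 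So the whole statement reduces to the already-known monotonicity of $\hat{\Gamma}_\pm(\cdot;L)$.

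For the $+$ case, Proposition~\ref{prop:fixedpointscasepos} gives that $\hat{\Gamma}_+(\cdot;L)$ is strictly increasing, so $\mathcal{E}_+>0$ and the unique fixed point $(0,y_+(\Gamma;L))$ is a saddle, which is item (1). For the $-$ case I would invoke Lemma~\ref{lem:Gammanegincreasing}: if $L\in[L_{\min},L_*(\gg_{\max})]$ then $\hat{\Gamma}_-(\cdot;L)$ is strictly increasing, so $\partial_{\hat y}\hat{\Gamma}_->0$, hence $\mathcal{E}_-<0$ and the unique fixed point is a center; if $L\in(L_*(\gg_{\max}),L_{\max}]$ then $\hat{\Gamma}_-(\cdot;L)$ increases on $[\hat{y}_{\min},\hat{y}_{\max}(L))$ and decreases on $(\hat{y}_{\max}(L),1/2)$. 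Using this together with the ordering $\hat{\Gamma}_{\max}^-(L)<\hat{\Gamma}_*^-(L)$ in this regime (Remark~\ref{rmk:positionGammaast}) and the counting of fixed points in the preceding existence proposition, I would place each fixed point on the correct branch: for $\Gamma/L\in[\hat{\Gamma}_{\min}^-(L),\hat{\Gamma}_{\max}^-(L))$ the unique point lies on the increasing branch, so $\mathcal{E}_-<0$, a center, completing item (2); for $\Gamma/L\in[\hat{\Gamma}_{\max}^-(L),\hat{\Gamma}_*^-(L))$ the point $y^1<y_{\max}$ lies on the increasing branch ($\mathcal{E}_-<0$, center) while $y^2>y_{\max}$ lies on the decreasing branch ($\mathcal{E}_->0$, saddle), giving item (3); and for $\Gamma/L=\hat{\Gamma}_*^-(L)$ the unique point is $y_{\max}(L)$, which is a critical point of $\hat{\Gamma}_-(\cdot;L)$ (also at the endpoint $\hat{y}=1/2$ when $L=L_*(\gg_{\max})$), so $\partial_y\Gamma_-=0$, $\mathcal{E}_-=0$, and $\lambda=0$, a parabolic point, which is item (4).

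The only genuinely delicate point is the bookkeeping in the regime $L\in(L_*(\gg_{\max}),L_{\max}]$: one must match each of the one or two fixed points produced by the preceding existence proposition to the increasing or the decreasing branch of $\hat{\Gamma}_-(\cdot;L)$, which relies on the precise shape of this function from Lemma~\ref{lem:Gammanegincreasing} and on the ordering of the boundary values $\hat{\Gamma}_{\min}^-(L)\le\hat{\Gamma}_{\min}^+\le\hat{\Gamma}_{\max}^-(L)\le\hat{\Gamma}_*^-(L)$ from Remark~\ref{rmk:positionGammaast}. No new estimates are needed; all the analytic work has been done in Corollary~\ref{prop:existencefixedpointselliptic}, Proposition~\ref{prop:fixedpointscasepos}, Lemma~\ref{lem:Gammanegincreasing} and the existence proposition preceding this subsection.
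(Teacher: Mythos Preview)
Your proposal is correct and follows essentially the same approach as the paper's proof: reduce the character of each fixed point to the sign of $\mathcal{E}_\pm$ via the positive prefactor $g\,h$, use~\eqref{finexprmathcalE} together with $\partial_\Gamma\mathcal{F}_\pm<0$ from Corollary~\ref{prop:existencefixedpointselliptic} to convert this into the sign of $\pm\,\partial_{\hat y}\hat\Gamma_\pm$, and then read off each item from the monotonicity results in Proposition~\ref{prop:fixedpointscasepos} and Lemma~\ref{lem:Gammanegincreasing}. The only difference is that you spell out in more detail why $h(y,\Gamma_\pm;L)>0$ (via the signs of $P_1$, $Q_1$ and $d_1$), whereas the paper simply asserts this; otherwise the arguments coincide.
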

\begin{proof}
It is clear that if $\mathcal{E}_\pm>0$, the corresponding fixed point is a saddle and if $\mathcal{E}_\pm<0$, it is a center. 

Note that~\eqref{finexprmathcalE} implies $\mathcal{E}_\pm= \mp \partial_\Gamma \mathcal{F}_\pm \partial_y \Gamma_\pm$ and Corollary~\ref{prop:existencefixedpointselliptic} implies $\partial_\Gamma \mathcal{F}_\pm<0$. Then, when $\partial_y \Gamma_{+} >0$ the fixed point will be a saddle and if $\partial_y \Gamma_{\color{red}+} <0$ the fixed point will be a center. Conversely, when $\partial_y \Gamma_{-} >0$ the fixed point will be a center and if $\partial_y \Gamma_{-} <0$ the fixed point will be a saddle.

From Proposition~\ref{prop:fixedpointscasepos}, we have that  $\partial_y \Gamma_+>0$ which proves the first item of the lemma. 

For the values of $L, \Gamma$ in the second item, by Lemma~\ref{lem:Gammanegincreasing} (see also Figure~\ref{fig:abstractGammaneg}), $\partial_y \Gamma_- = L \partial_{\hat{y}} \hat{\Gamma}_->0$ that implies that $\mathcal{E}_-<0$ and the result follows. 

With respect to the third item, we have that $\hat{\Gamma}_-(\cdot; L)$ has a maximum at $\hat{y}_{\max}$ so that $\partial_{\hat{y}} \hat{\Gamma}_-(\hat{y};L)>0$ for $\hat{y}< \hat{y}_{\max}$ and negative if $\hat{y}>\hat{y}_{\max}$ and the result holds true since
$\hat{y}^1_- < \hat{y}_{\max}$ and  $\hat{y}^2_->\hat{y}_{\max}$. 

The last item follows from the fact that, if $L=L_*(\gg)$ and $\hat{\Gamma}=\hat{\Gamma}_*(L) $, then $\partial_{\hat{y}}\hat{\Gamma}_-(\hat{y}_{*}(\gg);L_*(\gg))=0$ where we recall that $\hat{y}_*(\gg)=\hat{y}_{\max}$ and 
$\hat{\Gamma}_*^-(L)=\hat{y}_*(\gg) \gg$. 
\end{proof}

Theorem~\ref{thm:AveragedHam_generalcase} is a straightforward consequence of Proposition~\ref{prop:fixedpointscharacter} keeping track of the range of values of $\hat{\Gamma}_{\min,\max}^{\pm}$ and $\hat{\Gamma}_*$. Indeed, we only need to rename $\hat \Gamma_1 = \hat \Gamma_{\min}^-$, 
$\hat \Gamma_2 = \Gamma_{\max}^-$, $\hat \Gamma_0= \hat \Gamma_{\min}^+$, $\hat \Gamma_3 = \hat \Gamma_{\max}^+$, $\hat \Gam_* = \hat \Gam_*^-$, $L_*=L_*(s_{\max})$  and to rewrite Proposition~\ref{prop:fixedpointscharacter} in the terms of Theorem~\ref{thm:AveragedHam_generalcase}, that is fixing the values of $L$. Notice that by 
Remark~\ref{rmk:positionGammaast} we have that 
$$
\hat \Gamma_1(L) \leq \hat \Gamma_0 \leq \hat \Gamma_2 (L) \leq \hat \Gamma_*(L) \leq \Gamma_3(L), 
$$
with the convection that $\hat \Gamma_*(L)=\Gamma_2(L)$ when $L\in [L_{\min}, L_*(s_{\max})]$. To finish, we note that by Proposition~\ref{prop:fixedpointscasepos}, $\hat \Gamma_3$ is increasing and by Lemma~\ref{lem:Gammanegincreasing}
$\hat \Gamma_{1,2}$ are decreasing.

\section{Periodic orbits of the coplanar Hamiltonian}\label{appendix_monodromy} 

\subsection{Existence of periodic orbits. Proof of Theorem~\ref{thm:existence_periodic_orbits}} \label{sec:proof_periodicorbits}

Consider the coplanar Hamiltonian $\HH_\CP = \HH_0 + \alpha^3 \HH_{\CP,1}$. 
We first emphasize that, by expressions of $\HH_0$ in~\eqref{def:hamiltonianPoincareH0} and ${\HH}_{\CP,1}$ in~\eqref{eq:expressionPoincareHCP1} in Appendix \ref{app:tables}, $(\eta,\xi)=(0,0)$ is invariant by the flow of $\HH_\CP$ and every orbit of the form
$(0,\Gamma(t;L),0,h(t;L))$ has to 
be in an energy level
$\HH_\CP(0,\Gamma,0,h)=\mathbf{E}$ for some energy $\mathbf{E}$
with $\Gamma(t), h(t)$ satisfying the differential equations in Theorem~\ref{thm:existence_periodic_orbits}. 

We now prove that for a certain range of energies $E$ (to be determined), the solutions $(0,\Gamma(t;L), 0,h(t;L))$ are periodic orbits which are a graph over the variable $h$.

Fix $E \in \mathbb{R}$ and $L\in (0,L_{\max} ]$. Writing $\Gamma= L\hat{\Gamma}$,  
\begin{equation}\label{energylevelperiodic}
\begin{aligned}
E= \widehat{\HH}_\CP(0,\hat \Gam,0,h):=&\frac{\rho_0}{16L^6} (1- 12 \hat{\Gam} - 12 \hat{\Gam}^2) + 
\alpha^3 \frac{\rho_1 U_2^{0,0}}{32 L^2} (1- 12 \hat{\Gam} - 12 \hat{\Gam}^2 )\\
&-\alpha^3 \frac{\rho_1 U_2^{1,0}}{8L^2} \sqrt{(1- 2\hat\Gam)(3+2\hat\Gam)} (2\hat\Gam+1) \cos h \\
&-\alpha^3 \frac{\rho_1U_2^{2,0}}{32 L^2} (1- 2 \hat\Gam)(3 + 2\hat\Gam) \cos 2h. 
\end{aligned}
\end{equation}

We impose that $\partial_{\hat \Gam} \widehat{\HH}_\CP(0,\hat \Gam,0,h) \neq 0$ for all $h$, in other words $\dot{h} \neq 0$. This condition will give a set of possible values for $L,\hat{\Gam}$. From the differential equations in Theorem~\ref{thm:existence_periodic_orbits}, we need that
\begin{equation*}
\begin{aligned}
 \dot{h} = &-\frac{3\rho_0(1+2\hat \Gam)}{4 L^7}
- \al^3 \frac{3\rho_1}{8L^3} 
 U_2^{0,0}(1 + 2\hat \Gam) \\ &
- \alpha^3 \frac{3\rho_1}{24L^3} \Big(
4 U_2^{1,0}
\frac{1-4\hat \Gam- 4 \hat \Gam^2}{\sqrt{(1-2\hat \Gam)(3+2\hat \Gam)}} \cos h
  -  U_{2}^{2,0}(1+2\hat \Gam)\cos(2h) \Big ) <0
\end{aligned}
\end{equation*}
for all $h\in [0,2\pi]$ or equivalently
\begin{equation}\label{conditionpartialGamma}
    \begin{aligned}
        (1+2 \hat \Gam)+\frac{1}{2} \rho \alpha^3 L^4 U_2^{0,0} (1+ 2\hat \Gam) & +\frac{2}{3} \rho \alpha^3 L^4 U_2^{1,0} \frac{1- 4 \hat \Gam - 4 \hat \Gam^2}{\sqrt{(1-2 \hat \Gam)(3 + 2 \hat \Gam)}} \cos h \\
        &- \frac{1}{6} \rho \alpha^3 L^4 U_2^{2,0} (1+ 2\hat \Gam) \cos 2h>0,
    \end{aligned}
\end{equation}
where $\rho$ has been introduced in \eqref{def:rho}.

To avoid cumbersome notations, we introduce 
\begin{align*}
A=A(\hat{\Gam};L) & = (1+2 \hat \Gam) + \frac{1}{2} \rho \alpha^3 L^4 U_2^{0,0} (1+ 2\hat \Gam) \\ 
B=B(\hat\Gam;L) & =    \frac{2}{3} \rho \alpha^3 L^4 U_2^{1,0} \frac{1- 4 \hat \Gam - 4 \hat \Gam^2}{\sqrt{(1-2 \hat \Gam)(3 + 2 \hat \Gam)}} \\
C=C(\hat\Gam;L) & = -\frac{1}{6} \rho \alpha^3 L^4 U_2^{2,0} (1+ 2\hat \Gam),
\end{align*}
so that condition~\eqref{conditionpartialGamma} reads as
\[
f(h)=f(h;\hat \Gam, L):=A + B \cos h + C\cos 2h>0.
\]
With respect to $h$, $f(h)$ has its global minimum either at $h=0,\pi$ or, if $\left | \frac{B}{4C}\right |<1$ at $h_{1,2}$ satisfying
$$
\cos h_1 = \cos h_2 = -\frac{B}{4C}.
$$
Using that $\cos 2 h = 2 \cos^2 h -1$, we have that, 
$$
f(h)=A+B\cos h +C (2\cos^2 h -1)
$$
and then when $\left | \frac{B}{4C}\right |<1$
$$
f(h_1)=f(h_2)=A - \frac{B^2}{8C} -C.
$$
Since $A>0$ and $C<0$, $f(h_1), f(h_2)$ are positive. 
Therefore we only need to impose 
$f(0),f(\pi)>0$ 
Notice that 
$f(0)=A+B+C$ and $f(\pi)=A-B+C$ so that both conditions can be written as
$$
A+C>|B|.
$$
When $|B|< 4|C| $, this last condition is satisfied provided $a_{\max} = 30000\, \mathrm{km}$. Indeed, in this case 
$$
 {5|C|} \leq \frac{ {5}}{6} \rho \frac{a_{\max}^3}{a_{\rM}^3} L_{\max}^{4} U_2^{2,0} (1+ 2\hat \Gam) =  0.023691360650697 (1+ 2 \hat \Gam) < A
$$
and then $|B| <  4 |C| \leq 5|C| + C < A+ C $. 
Summarizing we only need to impose
\begin{equation}\label{conditionpartialGammaABC} 
        A+C- |B|>0 \quad  \text{if    }\quad  \left |\frac{B}{4C} \right | \geq 1  
\end{equation}
\begin{lemma} \label{lem:firstcondition_op} For $\hat \Gam \in [0,0.49]$ and $L\in [L_{\min},L_{\max}]$, 
we have that $\frac{5}{4} |B(\hat \Gam, L)| \leq 1 <A$. 
\end{lemma}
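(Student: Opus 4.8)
The plan is to establish the two inequalities $\tfrac54|B(\hat\Gam;L)|\le 1$ and $1<A(\hat\Gam;L)$ separately; both are elementary once the relevant quantities are bounded. The estimate $A>1$ needs no work: writing
$$
A(\hat\Gam;L)=(1+2\hat\Gam)\Bigl(1+\tfrac12\rho\al^3L^4U_2^{0,0}\Bigr),
$$
the first factor is $\ge1$ because $\hat\Gam\ge0$, and the second is $>1$ because $\rho,\al,L,U_2^{0,0}>0$; hence $A>1$ for all $\hat\Gam\in[0,0.49]$ and $L\in[L_{\min},L_{\max}]$.

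For the bound on $B$, the idea is to factor out the ``size'' $\rho\al^3L^4$ from the ``shape'' depending on $\hat\Gam$. Since $L=\sqrt{\mu a}$ one has $\rho\al^3L^4=\rho\mu^2a^5/a_{\rM}^3$, which is increasing in $a$; thus for $a\le a_{\max}=30000$ km it is bounded by $\delta_{\max}=\rho\mu^2a_{\max}^5/a_{\rM}^3$, and inserting the numerical values of $\rho$ (see~\eqref{def:rho}), $\mu$ and $a_{\rM}$ from Section~\ref{sec:secularHam} gives $\delta_{\max}<0.12$. For the shape factor
$$
g(\hat\Gam):=\frac{|1-4\hat\Gam-4\hat\Gam^2|}{\sqrt{(1-2\hat\Gam)(3+2\hat\Gam)}}
$$
I would use that both $\hat\Gam\mapsto1-4\hat\Gam-4\hat\Gam^2$ and $\hat\Gam\mapsto(1-2\hat\Gam)(3+2\hat\Gam)=3-4\hat\Gam-4\hat\Gam^2$ are strictly decreasing on $[0,\tfrac12)$. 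Hence on $[0,0.49]$ the numerator of $g$ satisfies $|1-4\hat\Gam-4\hat\Gam^2|\le\max\{1,\,4(0.49)^2+4(0.49)-1\}=1.9204$ and the radicand satisfies $(1-2\hat\Gam)(3+2\hat\Gam)\ge(1-0.98)(3+0.98)=0.0796$, so $g(\hat\Gam)\le1.9204/\sqrt{0.0796}<6.81$.

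Combining the two estimates with the definition of $B$ yields
$$
\tfrac54\,|B(\hat\Gam;L)|\le \tfrac54\cdot\tfrac23\,\delta_{\max}\,U_2^{1,0}\sup_{[0,0.49]}g<\tfrac56\cdot0.12\cdot0.547442\cdot6.81<1,
$$
which is the claim. The only delicate point --- the ``main obstacle'', such as it is --- is the behaviour of $g$ as $\hat\Gam\to0.49$, where the denominator approaches the pole at $\hat\Gam=\tfrac12$; this is handled precisely by the explicit monotone lower bound $\sqrt{(1-2\hat\Gam)(3+2\hat\Gam)}\ge\sqrt{0.0796}$, and it is the reason the lemma is stated for $\hat\Gam\in[0,0.49]$ rather than on all of $[0,\tfrac12)$. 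Everything else is routine arithmetic with the constants recorded in Section~\ref{sec:secularHam}.
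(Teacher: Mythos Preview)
Your proof is correct and follows essentially the same approach as the paper's own proof: both factor $|B|$ into the size $\rho\alpha^3L^4\le\delta_{\max}$ and a shape function in $\hat\Gam$, then bound the shape crudely by separate numerator and denominator estimates near the singular endpoint $\hat\Gam=0.49$. Your bounds on the shape factor are slightly sharper than the paper's (you use $|1-4\hat\Gam-4\hat\Gam^2|\le1.9204$ and $(1-2\hat\Gam)(3+2\hat\Gam)\ge0.0796$, where the paper uses $2$ and $3\cdot0.02$), but the argument is otherwise identical.
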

\begin{proof}
   It is clear that 
   $$
   \frac{5}{4} |B(\hat \Gam,L)|\leq \frac{5}{6} \rho \frac{a_{\max}^3 }{a_{\rM}^3} L_{\max}^4 U_2^{1,0} \frac{|1- 4 \hat \Gam- 4 \hat \Gam^2|}{\sqrt{(1- 2 \hat \Gam)(3 + 2 \hat \Gam)}} \leq 
   \frac{5}{3} \rho \frac{a_{\max}^3 }{a_{\rM}^3} L_{\max}^4 U_2^{1,0} \frac{1}{\sqrt{3(1-2 \cdot 0.49)}}.
    $$
Computing this value, we have that 
$|B(\hat \Gam,L)| \leq 0.446157052927936$ and we are done. 
\end{proof}

Lemma~\ref{lem:firstcondition_op} implies that condition $A+C-|B|>0$ is always satisfied  if $\hat \Gam \in [0,0.49]$ and $|B|\geq 4 |C|$ because
$
|B|-C \leq \frac54 |B| < A.
$

As a consequence of the previous analysis, 
\[
\partial_{\hat \Gamma}\widehat{\HH}_\CP\neq 0 \qquad\text{for}\quad L\in[L_{\min},L_{\max}], \,h\in[0,2\pi],\,\hat\Gamma\in [0,0.49].
\]
Therefore, any energy level $\widehat{\HH}_\CP=E$ which belongs to the cylinder $h\in [0,2\pi], \hat \Gamma\in[0,0.49]$ is a closed curve, which is a graph over $h$ and moreover its dynamics is periodic. Then, it only remains to characterize such  energy levels.
 
We notice that, since $L\hat{\Gamma}(t;L), h(t;L)$ satisfy the differential equations in Theorem~\ref{thm:existence_periodic_orbits}, one deduces that $\dot{\hat \Gam} =0$ if and only if $h=0,\pi, 2\pi$. Indeed, $\dot{\hat \Gam}=0$ is equivalent to
$$
\sin h \left (U_2^{1,0} (2\hat \Gam + 1) + U_2^{2,0} \sqrt{(1-2\hat \Gam) ( 3 + 2 \hat \Gam) }\cos h \right )=0 
$$
and since 
$$
- \frac{U_2^{1,0}}{U_2^{2,0}} \frac{2\hat \Gam +1}{\sqrt{(1-2\hat \Gam) ( 3 + 2 \hat \Gam) }} \leq 
-\frac{U_2^{1,0}}{U_2^{2,0} \sqrt{3}} = -1.331627517097770
<-1
$$
the conclusion is obvious. Clearly $h=\pi$ corresponds to a maximum and $h=0$ corresponds to a minimum of $\hat \Gam$ as function of $h$. 


We denote by
$
\hat\Gam_{\min}(E), \hat \Gam_{\max}(E)
$ the values of $\hat\Gam$ such that 
\[
\widehat{\HH}_\CP(0,\hat \Gam_{\min}(E),0,0)=E, \qquad \widehat{\HH}_\CP(0,\hat \Gam_{\max}(E),0,\pi)=E.
\]
The level curves $E= \widehat{\HH}_\CP (0, \hat \Gam, 0, h)$ cannot intersect and, since $\partial_{\hat \Gam} \widehat{\HH}_\CP \neq 0$, we have that 
$\partial_E \hat \Gam_{\min}(E), \partial_E \hat \Gam_{\max}(E) \neq 0$. Therefore, in order to check the range of $E$ allowed in our analysis, we  compute 
\[
\mathbf{E}_{\max} = \mathbf{E}_{\max}(L)=\widehat{\HH}_\CP (0,0,0,0), \qquad \mathbf{E}_{\min}=  \mathbf{E}_{\min}=(L)\widehat{\HH}_\CP(0,0.49,0,\pi)
\]
and Theorem~\ref{thm:existence_periodic_orbits} is proven, taking into account that, from~\eqref{energylevelperiodic}, we easily deduce that  
$$
\mathbf{E}_{\max}(L)= \frac{\rho_0}{16L^6} \widehat{\mathbf{E}}_{\max}(\delta), \qquad \mathbf{E}_{\min}(L)= \frac{\rho_0}{16L^6} \widehat{\mathbf{E}}_{\min}(\delta)
$$
for some function $\widehat{\mathbf{E}}_{\min,\max}$ defined for $\delta= \rho \alpha^3 L^4 \in [0, \delta_{\max}]$ (recall that $\delta_{\max}$ is defined in~\eqref{def:Lalfadelta_max}).  

%

To obtain the values in Remark \ref{rmk:energyvalues}, it is enough to recall that, for Galileo,  $a=29600\, \mathrm{km}$ (equivalently $L=1$),  and therefore 
$\mathbf{E}_{\min,\max}(1)$ with 
\[
\mathbf{E}_{\max}(1)= 2.477266122798186 \cdot 10^{-6}, \qquad \mathbf{E}_{\min}(1)= -2.558100888960067 \cdot 10^{-5}.
\]


\subsubsection{Proof of Corollary~\ref{cor:gammaCPAV}} \label{proof:corollary_periodic}
    We fix $\mathbf{E}$, a given energy level, and let $\Gam_\CP^{\mathbf{E}}$ be such that $\HH_{\CP}(0,\Gam_\CP^{\mathbf{E}}, 0,0)=\mathbf{E}$. Then, using formula~\eqref{energylevelperiodic} for $\widehat{\HH}_{\CP} (0, \hat \Gam, 0 ,h)= \HH_{\CP} (0, L \hat \Gam,0,h)$ one has that 
$$ 
\frac{16 L^6}{\rho_0} \mathbf{E} = \widehat{\HH}_{\AV} (0,\hat \Gam_\CP^{\mathbf{E}} ,0) + \delta \widehat{\HH}^*_{\CP,1}(0,\hat \Gam_\CP^{\mathbf{E}} , 0, 0), \qquad \hat \Gam_\CP^{\mathbf{E}} = L^{-1} 
\Gam_\CP^{\mathbf{E}},
$$
with $\delta =\rho \alpha^3 L^4$, 
\begin{align*}
\widehat{\HH}_{\AV} (0,\hat \Gam,0) &= 
 { \frac{16 L^6}{\rho_0} \HH_{\AV}(0,L\hat\Gam,0) }
=
(1-12 \hat \Gam -12 \hat \Gam^2) \left (1+  {\delta} \frac{U_{2}^{0,0}}{2} \right ), 
\\
\widehat{\HH}_{\CP,1}^{ {*}}(0,\hat \Gam, 0,0 ) 
&=  {\frac{16 L^2}{\rho_1} \left(
\HH_{\CP,1}(0,L\hat\Gam,0,0)-\HH_{\AV,1}(0,L\hat\Gam,0) \right)}
\\
&= - 2   U_2^{1,0}\sqrt{(1- 2\hat\Gam)(3+2\hat\Gam)} (2\hat\Gam+1)  - \frac{U_2^{2,0}}{2} (1- 2 \hat\Gam)(3 + 2\hat\Gam)  . 
\end{align*} 
Let $\hat \Gam_\AV^{\mathbf{E}}$ be such that 
$$
\frac{16 L^6}{\rho_0}\mathbf{E} = \widehat{\HH}_{\AV} (0,\hat \Gam_\AV^{\mathbf{E}} ,0).
$$
One has, from definition~\eqref{defhata0c0} of $\hat a_0,\hat c_0$, that
$$
\hat \Gam_\CP^{\mathbf{E}} - \hat \Gam_\AV^{\mathbf{E}} = -\delta \frac{ \widehat{\HH}^*_{\CP,1} (0, \hat \Gam_{\CP}^{\mathbf{E}},0,  0)}{\partial_{\hat \Gam} \widehat{\HH}_{\AV}  (0,\hat \Gam_\AV^{\mathbf{E}},0)} + \mathcal{O}(\delta^2) = -\delta \hat c_0 \big (\hat \Gam_{\CP}^{\mathbf{E}};\delta  \big ) + \mathcal{O}(\delta^2 )
$$
and the proof is finished.
  
\subsection{Perturbative analysis. } 
We fix $L\in [L_{\min},L_{\max}]$ and we consider the Hamiltonian $\HH_\CP$ in~\eqref{def:Ham:Coplanar}. Let $(\eta,\Gam,\xi,h)=(0,\Gam(t;L),0,h(t;L))$ be a periodic orbit satisfying $\Gam(0;L)= \Gam_0$ and $h(0;L)=0$ (see Theorem~\ref{thm:existence_periodic_orbits}). We recall that in~\eqref{def:delta:mainresults} we have introduced the parameter $\delta = \alpha^3 L^4 \rho$. 

\subsubsection{Proof of Proposition~\ref{prop:firstaproximation_periodic}}\label{sec:perturbative analysis_periodic}

The proof of Proposition~\ref{prop:firstaproximation_periodic} relies on a simple perturbative analysis. In order to make our analysis independent on $L$, we perform the change    
$$
\hat \Gam = \Gam/L, \qquad s= 3\rho_0 t / (4L^7)
$$
to the differential equation in Theorem~\ref{thm:existence_periodic_orbits} and we obtain the new system 
\begin{align} \label{diff_eq_hatGammah}
\hat {h}' =  
& \, a_0(\hat \Gam;L)+ \delta   a_1(\hat \Gam, \hat h;L) \notag\\ :=  &-1-2\hat \Gam
- \al^3 L^4  \frac{ \rho }{2 } 
 U_2^{0,0}(1 + 2 \hat \Gam)
+ \delta \frac{1}{6} \Big(
4 U_2^{1,0}
\frac{1-4 \hat \Gam-4 \hat \Gam^2}{\sqrt{(1-2\hat \Gam)(3+2 \hat \Gam)}} \cos \hat h
\notag \\
& -  U_{2}^{2,0}(1+2\hat \Gam)\cos(2\hat h) \Big)\\
\hat \Gam' =& -\delta \frac{1}{12}
\Big( 
2 U_2^{1,0}
\sqrt{(1-2\hat \Gam)(3+2 \hat \Gam)}(2\hat \Gam+1) 
\sin \hat h 
 + U_2^{2,0} 
(1-2\hat \Gam)(3+2 \hat \Gam)\sin(2 \hat h)
\Big) \notag.
\end{align}
By the analyticity  of system~\eqref{diff_eq_hatGammah} with respect to $\delta$, we have that 
\[
\hat \Gamma(s) = \hat \Gamma_0 + \delta \hat \Gamma_1(s) +   \delta^2 \hat \Gam_{2}(s,\de) , \qquad \hat h(s) = \hat h_0(s) + \delta \hat h_1(s) + \delta^2 \hat h_2(s, \de)
\]
with $\hat \Gam_0 \in \mathbb{R}$ the initial condition and
\begin{equation*}
\begin{aligned}
\hat h'_0 = &  a_0( \hat \Gam_0;L), \\ 
\hat \Gam'_1 = &  - \frac{1}{12}
\Big(
2 U_2^{1,0}
\sqrt{(1-2\hat \Gam_0)(3+2 \hat \Gam_0)}(2\hat \Gam_0+1) 
\sin \hat h_0(s)
\\
&+ U_2^{2,0} 
(1-2\hat \Gam_0)(3 +2\hat \Gam_0)\sin(2 \hat h_0(s))
\Big), \\ 
\hat h'_1 =  &\partial_{\hat \Gam }  a_0(\hat \Gam_0;L) \hat \Gam_1(s) +  a_1(\hat \Gam_0, \hat h_0(s);L).
\end{aligned}
\end{equation*}
Then, imposing that $\hat h(0)=0$ and $\hat \Gam(0)= \hat \Gamma_0$, that is $\hat \Gam_1(0)=0$, we have that
\begin{align*}
h_0(s) = &  a_0(\hat \Gam_0;L) s \\
\hat \Gam_1(s) = &     c_0(\hat \Gam_0;L) +   c_1( \hat \Gam_0;L) \cos \big (   a_0( \hat \Gam_0;L) s \big ) +   c_2(\hat \Gam_0;L) \cos \big ( 2 a_0( \hat \Gam_0;L) s \big ) \\ 
h_1(s)  =&  \partial_{\hat \Gam}  a_0(\hat \Gam_0;L)   c_0(\hat \Gamma_0;L) s+ d_1(\hat \Gam_0;L) \sin \big (  a_0(\hat \Gam_0;L) s \big ) +  d_2(\hat \Gam_0;L) \sin \big ( 2  a_0(\hat \Gam_0;L) s \big )
\end{align*}
where it is straightforward to check that the constants $a_0,c_{0,1,2}, d_{ 1,2}$ correspond to $\hat a_0,\hat{c}_{0,1,2}, \hat{d}_{1,2}$, the ones defined in~\eqref{defhata0c0} and~\eqref{defhata1c1d1}, taking into account that the dependence with respect to the parameter $L$ comes from the fact that 
$$
a_0(\hat \Gam_0;L)= \hat {a}_0\left (\hat \Gam_0; \rho \alpha^3 L^4  \right ).
$$
With respect to the period of the periodic orbit, if $\hat h(\widehat{\mathcal{T}}(\hat \Gam_0; \delta)) = 2\pi$, that is 
$\widehat{\mathcal{T}}(\hat{\Gam}_0; \delta )$ is the period of the periodic orbit, then
$$ 
\widehat{\mathcal{T}} (\hat \Gam_0;\delta )
= \frac{2\pi + \mathcal{O}(\delta^2) }{ |\hat a_0 (\hat \Gam_0;\delta ) + \delta \partial_{\hat \Gam} \hat a_0(\hat \Gam_0;\delta ) \hat c_0 (\hat \Gam_0;\delta )|}= 
\frac{2\pi + \mathcal{O}(\delta)}{ |\hat a_0(\hat \Gam_0;\delta ) |}.
$$ 
Undoing the change of coordinates we have that 
$\mathcal{T}(\Gam_0;L)= \widehat{\mathcal{T}}(\Gam_0 L^{-1}; \delta)  {\frac{4L^7}{3\rho_0}}$ and 
$$
\Gam (t;L)= L \hat \Gam\left ( {\frac{3\rho_0}{4L^7}} t ;\rho \alpha^3 L^4\right ),
\qquad 
h(t;L)= h \left ( {\frac{3\rho_0}{4L^7}} t ;\rho \alpha^3 L^4\right ).
$$
This completes the proof of Proposition~\ref{prop:firstaproximation_periodic} .

\subsubsection{Proof of Theorem~\ref{prop:eigenvalues_monodromy}}\label{sec:sub_eigenvalues_perturbative}

Along this section, whenever there is no danger of confusion we will omit the dependence on $L\in [L_{\min},L_{\max}]$. 

It is convenient to introduce the (small) parameter $\epsilon = \alpha ^3 $ and 
$X(\xi,\eta, \Gam, h;\epsilon)$\footnote{With this new order in the variables the linearized vector field around the periodic because has block structure.}, the vector field associated to $\HH_\epsilon:=\HH_\AV  + \epsilon(\HH_{\CP,1} - \HH_{\AV,1} )$.
 
The first step is to characterize the variational equation of $X$ around the periodic orbit, $(0,0,\Gam(t),h(t))$.   
Note that  
$$
DX(0,0, \Gam(t),h(t);\epsilon)= \left (\begin{array}{cc} \partial_{\xi,\eta} X_{\xi,\eta}(0,0,  \Gam(t), h(t);\epsilon) & 0_{2\times 2} \\
0_{2\times 2} & \partial_{\hat \Gam,h} X_{  \Gam,h} (0,0,  \Gam(t), h(t);\epsilon) \end{array} \right ).
$$
Then, if $\Phi(t;\epsilon)$ is the fundamental matrix of the linearized system
$$
\dot{z} = DX(0,0,  \Gam(t),h(t);\epsilon)z
$$
satisfying $\Phi(0;\epsilon)=\mathrm{Id}$, it has also the block form
\begin{equation}\label{def:fundamentalmatrix_1}
\Phi(t;\epsilon) = \left (\begin{array}{cc} \Psi(t;\epsilon) & 0_{2\times 2} \\ 0_{2\times 2} & \widetilde {\Psi}(t;\epsilon) \end{array} \right ), \qquad \Psi(0;\epsilon)=\widetilde{\Psi}(0;\epsilon)=\mathrm{Id}.
\end{equation}
Since the system is Hamiltonian, $\mathrm{det} \Phi(t;\epsilon)=1$ for all $t$ and therefore we also have that $\mathrm{det} \Psi(t;\epsilon)= \mathrm{det} \widetilde{\Psi}(t;\epsilon)=1$. 

The character of the periodic orbit $(0,0,\Gam(t),h(t))$, is determined by the eigenvalues of the monodromy matrix $\Psi(\mathcal{T}; \epsilon)$, with $\mathcal{T} = \mathcal{T}(\Gam(0);L)$ the period of the periodic orbit. To study the monodromy matrix, we first notice that 
\[
\partial_{\xi,\eta} X_{\xi,\eta} (0,0,\Gam(t) , h(t); \epsilon)=
\partial_{\xi,\eta} X_{\AV} (0,0,\Gam(t)) + \epsilon \frac{\rho_1}{L^2} \tilde{B}(t)
\] 
where $X_{\AV}(\xi,\eta,\Gam)$ is the vector field associated
to the $h$-averaged system $\HH_\AV$ and,  using formulas~\eqref{def:Ham:AV1} and~\eqref{eq:expressionPoincareHCP1} of $\HH_{\AV,1}$ and $\HH_{\CP,1}$ respectively, it is not difficult to see that the matrix $\tilde{B}$ has the form
\[
\tilde{B}(t) = \left ( \begin{array}{cc} \sum_{j=1}^3  \tilde{b}_{11}^j(\Gamma(t),L) \sin j h(t) & \sum_{j=1}^3 \tilde{b}_{12}^j( \Gam(t),L) \cos j h(t) \\
\sum_{j=1}^3 \tilde{b}_{21}^j(\Gam(t),L) \cos j h(t) & - \sum_{j=1}^3  \tilde b_{11}^j(\Gamma(t),L) \sin j h(t) \end{array} 
\right ).
\]
The elements $\tilde{b}_{ij}$ depend on $L,\Gam$ through the functions $\mathcal{D}_{l,k} (0, \Gam,0;L)$ (see Table~\ref{tab:functionsPoincare}). For future computations, we note that $\tilde{b}_{ik}^j$ are linear functions on $\mathcal{D}_{i,k} (0,\Gam,0;L)$, namely
\begin {equation}\label{tildebij}
\tilde{b}_{ik}^j (\Gamma,L)= \mathbf{b}_{ik}^j \big ( \{\mathcal{D}_{i,k}(0,\Gam,0;L) \}_{i=0,1,2,\, k=0,2}, L \{\mathcal{D}_{i,1}(0,\Gam,0;L)\}_{i=1,2}  \big),  
\end{equation}
with $\mathbf{b}_{ik}^j$ linear functions. Therefore the matrix $\Psi(t;\epsilon)$ (see~\eqref{def:fundamentalmatrix_1}) is the fundamental matrix of the two dimensional linear system
\begin{equation} \label{first_variational_epsilon}
\dot{\zeta} =  \left [\partial_{\xi,\eta} X_{\AV} (0,0, \Gam(t)) + \epsilon \frac{\rho_1}{L^2} \tilde{B}(t) \right ] \zeta, \;\;\text{such that 
}\;\;\Psi(0,\epsilon)=\mathrm{Id}, \;\mathrm{det}\, \Psi(t,\epsilon)= 1.
\end{equation}


\begin{lemma}\label{lem:scaled_eigenvalues}
Let $(0,0,\Gam(t),h(t))$ be a $\mathcal{T}$-periodic orbit of the coplanar Hamiltonian $\HH_\CP$ with 
$\Gam(0)=\Gam_0, h(0)=0$. 

Consider the new variable $\hat \Gam = \Gam /L$, the new independent time $s= 3\rho_0 t / (4L^7) $, the initial condition $\hat \Gam_0 = \Gam_0 /L$ and the parameters $\delta$ and $\sigma_0$ defined by 
\begin{equation*}
\delta = \epsilon L^4 \rho = \alpha^3 L^4 \frac{\rho_1}{\rho_0},\qquad \sigma_0=\sigma_0(\hat \Gam_0;\delta) = \hat \Gam_0 + \delta \hat c_0(\hat \Gam_0;\delta)
\end{equation*}
with $\hat c_0$ introduced in~\eqref{defhata0c0}. Then
\begin{enumerate}
\item In the new variables, the variational equation~\eqref{first_variational_epsilon} becomes
\begin{equation}\label{final_variational_delta}
\zeta' = \left [ \mathbf{X}(\sigma_0 )   + \delta \widehat{B}(s) + \mathcal{O}(\delta^2)\right ]\zeta, \qquad 
\int_{0}^{\widehat{\mathcal{T}}} \widehat{B}(u)\, du=\mathcal{O}(\delta), 
\end{equation}
where $\mathbf{X}(\sigma_0)$ is defined as
\begin{equation} \label{expression_variational_periodic}
\begin{aligned}
\mathbf{X}(\sigma_0 ) := &\frac{4 L^7}{3 \rho_0} \partial_{\xi,\eta} X_{\AV} (0,0, L\sigma_0) \\ 
= &\frac{1}{4} \left (\begin{array} {cc} 0 & -(\mathbf{a} (\sigma_0) + \delta \mathbf{c}_-(\sigma_0) )\\  \mathbf{a} (\sigma_0) + \delta \mathbf{c}_+(\sigma_0) & 0 \end{array}\right )
\end{aligned}
\end{equation}
where $\mathbf{a}(\sigma) = 1-16 \sigma -20 \sigma^2$ has been defined previously in~\eqref{defaborigin_critical} and $\mathbf{c}_\pm$ have been introduced in Remark~\eqref{rmk:first_expression_eigenvalues}:
\begin{equation}\label{def:mathbfc_theorem}
\mathbf{c}_\pm (\sigma)= \frac{U_2^{0,0}}{2}\mathbf{a} (\sigma) \pm  \frac{5 U_{2}^{1,0}}{3} \mathbf{b}(\sigma), \qquad \mathbf{b}(\sigma)=(2\sigma +3) \sqrt{3-4\sigma -4 \sigma^2}. 
\end{equation}
\item The fundamental matrix $\widehat \Psi (s,\delta)$ of~\eqref{final_variational_delta} satisfies that 
$$
\widehat{\Psi} (s, \delta) = \Psi \left ( \frac{3 \rho_0}{4L^7} s, \frac{\delta \rho_0}{\rho_1 L^4} \right ), \qquad 
\widehat{\Psi}(0,\delta) = \mathrm{Id}, \qquad \mathrm{det} \widehat{\Psi}(s,\delta)=1.
$$
\end{enumerate}
\end{lemma}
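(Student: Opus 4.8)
Both claims of the lemma are a matter of rescaling together with a first--order perturbative expansion, the only inputs being Proposition~\ref{prop:firstaproximation_periodic}, the expressions \eqref{origin_critical_variational}--\eqref{Xxietasigma} for the averaged linearization at the origin, and the structural formula \eqref{tildebij}. The plan for claim~(1) is as follows. In \eqref{first_variational_epsilon} I would introduce $\hat\Gam=\Gam/L$ and the new time $s=3\rho_0 t/(4L^7)$, which multiplies the right--hand side by $4L^7/(3\rho_0)$, and replace $\epsilon$ by $\delta/(\rho L^4)$. Along the periodic orbit, Proposition~\ref{prop:firstaproximation_periodic} gives $\Gam(t;L)=L\hat\Gam(s;\delta)$ with $\hat\Gam(s;\delta)=\hat\Gam_0+\delta\hat\Gam_1(s;\delta)+\delta^2\hat\Gam_2(s;\delta)$, $\hat\Gam_1(s;\delta)=\hat c_0(\hat\Gam_0;\delta)+\hat c_1(\hat\Gam_0;\delta)\cos(\hat a_0 s)+\hat c_2(\hat\Gam_0;\delta)\cos(2\hat a_0 s)$, $|\hat\Gam_2|,|\hat h_2|\le C_*$, and $\hat h(s)=\hat a_0(\hat\Gam_0;\delta)s+\mathcal O(\delta)$. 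Since $\HH_{\AV}$ is $h$--independent, $\Gam$ is a constant of the averaged flow, so $M(\Gam;L)$ in \eqref{origin_critical_variational} is exactly $\partial_{\xi,\eta}X_{\AV}(0,0,\Gam)$; and because $\sigma_0=\hat\Gam_0+\delta\hat c_0(\hat\Gam_0;\delta)$ is precisely $\hat\Gam(s;\delta)$ up to its zero--mean oscillatory part plus $\mathcal O(\delta^2)$, Taylor--expanding $M(L\hat\Gam(s;\delta);L)$ around $\Gam=L\sigma_0$ produces $M(L\sigma_0;L)$, plus a term of size $\delta$ proportional to $\hat c_1\cos(\hat a_0 s)+\hat c_2\cos(2\hat a_0 s)$, plus an $\mathcal O(\delta^2)$ remainder that is uniform in $s$ over one period thanks to the bounds in Proposition~\ref{prop:firstaproximation_periodic}.

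For the perturbation term I would compute $\tfrac{4L^7}{3\rho_0}\cdot\epsilon\tfrac{\rho_1}{L^2}\tilde B(t)=\tfrac{4L}{3}\,\delta\,\tilde B(t)$, and then, using \eqref{tildebij} together with the homogeneity of the functions $\mathcal D_{i,k}(0,\Gam,0;L)$ read off from Table~\ref{tab:functionsPoincare}, check that $L\,\tilde B(t)$ becomes an $L$--independent matrix once $\Gam$ is replaced by $L\hat\Gam(s;\delta)$; collecting it with the $\partial_\Gam$--correction of $M$ above defines $\widehat B(s)$ and turns \eqref{first_variational_epsilon} into \eqref{final_variational_delta} with $\mathbf X(\sigma_0)=\tfrac{4L^7}{3\rho_0}\partial_{\xi,\eta}X_{\AV}(0,0,L\sigma_0)$. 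The explicit form \eqref{expression_variational_periodic} then follows by plugging $\Gam=L\sigma_0$ into \eqref{Xxietasigma}: with $d_0=2U_2^{0,0}\rho$, $d_1=20U_2^{1,0}\rho$ from \eqref{def:d0d1} one has $\beta(L)=1+\tfrac12 U_2^{0,0}\delta$ and $\tfrac1{12}d_1\alpha^3L^4=\tfrac53 U_2^{1,0}\delta$, hence $X^{\xi,\eta}(L\sigma_0;L)=\tfrac32 L^3\big(\mathbf a(\sigma_0)+\delta\,\mathbf c_\pm(\sigma_0)\big)$ with $\mathbf c_\pm$ as in \eqref{def:mathbfc_theorem}, and since $g(L)=\rho_0/(8L^{10})$ this gives $\tfrac{4L^7}{3\rho_0}M(L\sigma_0;L)=\tfrac14\begin{pmatrix}0&-(\mathbf a(\sigma_0)+\delta\mathbf c_-(\sigma_0))\\\mathbf a(\sigma_0)+\delta\mathbf c_+(\sigma_0)&0\end{pmatrix}$. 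Finally, for $\int_0^{\widehat{\mathcal T}}\widehat B(u)\,du=\mathcal O(\delta)$ I would note that each entry of $\widehat B$ is, modulo $\mathcal O(\delta)$, a finite linear combination with constant coefficients of $\cos(jh_0(s))$, $\sin(jh_0(s))$ with $j\in\{1,2,3\}$, $h_0(s)=\hat a_0(\hat\Gam_0;\delta)s$, and no $j=0$ term; since $\widehat{\mathcal T}=2\pi/|\hat a_0|+\mathcal O(\delta)$ by Proposition~\ref{prop:firstaproximation_periodic} and Remark~\ref{rem:period}, so that $j\hat a_0\widehat{\mathcal T}=\pm 2\pi j+\mathcal O(\delta)$, each such integral over $[0,\widehat{\mathcal T}]$ equals $\mathcal O(\delta)$.

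For claim~(2), note that \eqref{final_variational_delta} is nothing but \eqref{first_variational_epsilon} after the change of independent variable $s=3\rho_0 t/(4L^7)$ and the reparametrization $\delta=\epsilon\rho L^4$; hence its fundamental matrix normalized at $s=0$ is obtained from $\Psi$ by the inverse substitutions, and in particular $\widehat\Psi(0,\delta)=\Psi(0,\epsilon)=\mathrm{Id}$ by \eqref{def:fundamentalmatrix_1}. Moreover the coefficient matrix of \eqref{final_variational_delta} is, up to the positive scalar $4L^7/(3\rho_0)$, the $(\xi,\eta)$--block $\partial_{\xi,\eta}X_{\xi,\eta}$ of $DX$ along the orbit; since $\HH_\epsilon$ is Hamiltonian and $(\xi,\eta)$ a symplectic pair, $\partial_\xi X_\xi+\partial_\eta X_\eta=0$, so Liouville's formula gives $\det\widehat\Psi(s,\delta)\equiv\det\widehat\Psi(0,\delta)=1$ (equivalently, $\det\Psi\equiv 1$, as recorded just after \eqref{def:fundamentalmatrix_1}).

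The main obstacle I foresee is the bookkeeping in the first two paragraphs: reading off from \eqref{tildebij} and Table~\ref{tab:functionsPoincare} the exact power of $L$ carried by each $\tilde b^{\,j}_{ik}$, and confirming that after the rescaling $t\mapsto s$ and the substitution $\Gam\mapsto L\hat\Gam(s;\delta)$ the perturbation is genuinely of the form $\delta\widehat B(s)+\mathcal O(\delta^2)$ with $\widehat B$ independent of $L$ and with vanishing constant Fourier mode in $h$. The remaining ingredients --- the Taylor expansion of $M(\Gam;L)$ around $\Gam=L\sigma_0$, the algebraic reduction to $\mathbf c_\pm$, and the integration of the trigonometric polynomials over a near--full period --- are routine.
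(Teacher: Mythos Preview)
Your proposal is correct and follows essentially the same route as the paper: rescale time and $\Gam$, use the homogeneity of the $\mathcal D_{i,k}$ from Table~\ref{tab:functionsPoincare} together with~\eqref{tildebij} to absorb the $L$--powers, compute $\mathbf X(\sigma_0)$ from~\eqref{Xxietasigma} via $\beta(L)=1+\tfrac12 U_2^{0,0}\delta$ and $\tfrac1{12}d_1\alpha^3L^4=\tfrac53 U_2^{1,0}\delta$, and then Taylor--expand around $\sigma_0$ using Proposition~\ref{prop:firstaproximation_periodic}. Your justification of $\int_0^{\widehat{\mathcal T}}\widehat B=\mathcal O(\delta)$ via the absence of a zero harmonic in $h_0$ and $\widehat{\mathcal T}\hat a_0=\pm 2\pi+\mathcal O(\delta)$ is in fact slightly more explicit than the paper's, which packages the same observation through the zero--mean part $b(s)$ of $\hat\Gam_1$.
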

\begin{remark}The change of variables, presented in the previous lemma, allows to keep our perturbative analysis uniform in $L\in [L_{\min}, L_{\max}]$.
\end{remark}  
\begin{proof} 
Performing the change $\hat \Gam = \Gam/L$ and $s= 4L^7 t /(3\rho_0)$,   system~\eqref{first_variational_epsilon} becomes
\begin{equation} \label{second_variational_delta}
\zeta' =    \frac{4 L^7}{3 \rho_0} \partial_{\xi,\eta} X_{\AV} (0,0, L \hat \Gam(s)) \zeta  + \delta \frac{4 L}{3 } \tilde{B}\left ( \frac{4 L^7}{3 \rho_0} s \right )  \zeta .
\end{equation} 
Using Table~\ref{tab:functionsPoincare} for the explicit expressions of $\mathcal{D}_{l,k}$, we obtain that 
\begin{align*}
\mathcal{D}_{l,k}(0,L\hat \Gam, 0;L) &= \frac{1}{L} \mathbf{d}_{l,k} (\hat \Gam) , & \qquad &  l=0,1,2,\,k=0,2 \\ 
\mathcal{D}_{l,1}(0,L\hat \Gam, 0;L) &= \frac{1}{L^2} \mathbf{d}_{l,1} (\hat \Gam), & 
\qquad & l=0,1,2,
\end{align*}
for smooth functions $\mathbf{d}_{l,k}$. 
Then, by formula~\eqref{tildebij} of $\tilde{b}_{ij}$ (we recall that the functions $\mathbf{b}_{ij}$ are linear), 
we conclude that the equation~\eqref{second_variational_delta} can be rewritten as 
\begin{equation} \label{third_variational_delta}
\zeta' =    \frac{4 L^7}{3 \rho_0} \partial_{\xi,\eta} X_{\AV} (0,0, L \hat \Gam(s)) \zeta  + \delta \mathcal{ B} (s)    \zeta 
\end{equation}
with   
$$
\mathcal{B}(s) = \left ( \begin{array}{cc} \sum_{j=1}^3  {b}_{11}^j(\hat \Gam(s)) \sin j \hat h(s) & \sum_{j=1}^3 {b}_{12}^j( \hat \Gam(s)) \cos j \hat h(s) \\
\sum_{j=1}^3  {b}_{21}^j(\hat \Gam(s)) \cos j \hat h(s) & - \sum_{j=1}^3  b_{11}^j(\hat \Gamma(s)) \sin j \hat h(s) \end{array} 
\right ).
$$
As a consequence, the fundamental matrix $\widehat \Psi (s,\delta)$ of~\eqref{third_variational_delta} satisfies that 
$$
\widehat{\Psi} (s, \delta) = \Psi \left ( \frac{4L^7}{3 \rho_0} s, \frac{\delta \rho_0}{\rho_1 L^4} \right ), \qquad 
\widehat{\Psi}(0,\delta) = \mathrm{Id}, \qquad \mathrm{det} \widehat{\Psi}(s,\delta)=1.
$$
This proves the second item in the lemma, provided that system~\eqref{third_variational_delta} coincides with~\eqref{final_variational_delta}.

Now we study $\partial_{\xi,\eta} X_{\AV}$. By~\eqref{origin_critical_variational},
$$
\partial_{\xi,\eta} X_{\AV} (0,0,L \hat \Gam) = \frac{\rho_0}{8 L^{10}} \left ( \begin{array}{cc} 0 & -X^{\eta} (L \hat \Gam; L) \\ X^{\xi} (L \hat \Gam;L) & 0 \end{array}\right ).
$$
Then, using the expressions~\eqref{Xxietasigma} for $X^{\xi,\eta} (L\hat \Gam ;L)$ and the formulas for $d_0,d_1$ in~\eqref{def:d0d1}, we conclude that 
\begin{equation}\label{expXAVhatGamma} 
 \frac{4 L^7}{3 \rho_0} \partial_{\xi,\eta} X_{\AV} (0,0, L  \hat \Gam) 
= \frac{1}{4} \left (\begin{array} {cc} 0 & -(\mathbf{a} ( L  \hat \Gam) + \delta \mathbf{c}_-( L  \hat \Gam ) )\\  \mathbf{a} ( L  \hat \Gam )+\delta \mathbf{c}_+( L  \hat \Gam ) & 0 \end{array}\right )
\end{equation}
with $\mathbf{a}(\sigma) = 1-16 \sigma -20 \sigma^2$ (and $\mathbf{b}(\sigma) =(2\sigma +3) \sqrt{3-4\sigma -4 \sigma^2}$)  defined in~\eqref{defaborigin_critical} and
$$
\mathbf{c}_\pm (\sigma)= \frac{U_2^{0,0}}{2}\mathbf{a} (\sigma) \pm  \frac{5 U_{2}^{1,0}}{3} \mathbf{b}(\sigma). 
$$
On the other hand, since Proposition~\ref{prop:firstaproximation_periodic} implies that $\hat \Gam(s) =  \hat \Gam_0 +  \delta \hat c_0(\hat \Gam_0)  + \delta b(s) + \mathcal{O}(\delta^2) $, 
with 
\[
\int_{0}^{\widehat{\mathcal{T}}} b(s) \, ds= \int_{0}^{\frac{2\pi}{|\hat{a}_0|}} b(s)\, ds + \mathcal{O}(\delta)= \mathcal{O}(\delta),
\]  
 we have that the variational equation for $\widehat{\Psi}$ in~\eqref{third_variational_delta} is of the form  
\[
\zeta' =  \left [\frac{4 L^7}{3 \rho_0} \partial_{\xi,\eta} X_{\AV} (0,0, L (\hat \Gam_0 + \delta \hat{c}_0(\hat \Gam_0)))      + \delta \widehat{B}(s) + \mathcal{O}(\delta^2) \right  ] \zeta, \qquad \int_{0}^{\widehat{\mathcal{T}}} \widehat{B}(s) \,dt = \mathcal{O}(\delta) 
\]
where $\partial_{\xi,\eta} X_\AV (0,0,L (\hat \Gam_0+ \delta \hat c_0(\hat \Gam_0)))$ is a constant matrix (see~\eqref{expXAVhatGamma}) studied in Section~\ref{sec:circular_case_Appendix}, and the $\mathcal{O}(\delta^k)$ are uniform in $\hat \Gam_0, L$. Taking $
\sigma_0 = \sigma_0(\hat \Gam_0;\delta)= \hat \Gam_0 + \delta \hat c_0(\hat \Gam_0)
$ and using again~\eqref{expXAVhatGamma} the result holds true.  

\end{proof}
We write the fundamental matrix $ \widehat{\Psi}$  as
$$
 \widehat{\Psi} ( s;\delta) =  \widehat{\Psi}_0(s;\sigma_0) + \delta \widehat{\Psi}_1(s;\sigma_0) + \delta^2\widehat{\Psi}_2(s;\sigma_0,\de),  
$$
where, $\widehat{\Psi}_0(s;\sigma_0), \widehat{\Psi}_1(s;\sigma_0)$ depend on $\delta$ through the parameter $\sigma_0$ and $\widehat{\Psi}_2$ is bounded. 

The matrix $\widehat{\Psi}_0(s;\sigma_0)$ is the fundamental matrix of the constant coefficients linear system 
$$
\zeta'=  \mathbf{X}(\sigma_0) \zeta, \qquad \text{with} \qquad \widehat{\Psi}_0(0;\sigma_0)=\mathrm{Id},
$$
with eigenvalues $e^{\pm \hat{\lambda}_0(\sigma_0)}$ satisfying, from expression~\eqref{expression_variational_periodic} of $\mathbf{X}(\sigma_0)$,  
\begin{equation}\label{expression_new_eigenvalues}
|\hat \lambda_0 (\sigma_0)|  = \frac{1}{4}\sqrt{ \big |\mathbf{a} (\sigma_0) + \delta \mathbf{c}_+(\sigma_0) \big |
\big |\mathbf{a} (\sigma_0) + \delta \mathbf{c}_-(\sigma_0) \big |}.
\end{equation}
The matrix  $\widehat{\Psi}_1(s;\sigma_0)$ satisfies 
$$
 \widehat{\Psi}_1 '(s;\sigma_0)=\mathbf{X}(\sigma_0) \widehat{\Psi}_1(s;\sigma_0) + \widehat B(s)  \widehat{\Psi}_0(s;\sigma_0),\qquad \widehat{\Psi}_1(0;\sigma_0)=0 
$$  
and therefore  
\begin{equation}\label{formula_hat_Psi1}
 \widehat{\Psi}_1(s;\sigma_0) =  \widehat{\Psi}_0(s;\sigma_0) \int_{0}^s 
( \widehat{\Psi}_0(u;\sigma_0 )^{-1} \widehat{B}(u)   \widehat{\Psi}_0(u;\sigma_0) du.
\end{equation}

Whenever there is no danger of confusion, we will omit the dependence on $\hat \Gam_0,\sigma_0,\delta$. In particular, we write  $\widehat{\mathcal{T}} = \widehat{\mathcal{T}}(\hat \Gam_0,\delta)$, $\widehat{\Psi}_0=\widehat{\Psi}_0(\widehat{\mathcal{T}}, \sigma_0)$, $ \widehat{\Psi}_1=\widehat{\Psi}_1(\widehat{\mathcal{T}}, \sigma_0)$ and $\widehat{\Psi}=\widehat{\Psi}(\widehat{\mathcal{T}},\delta)$. In addition, we will denote by $M$ a generic constant, independent on $\hat \Gam_0, \delta$ that, along the proof, can change its value. 

In order to prove Theorem~\ref{prop:eigenvalues_monodromy}, we need to compare the eigenvalues of $\widehat \Psi  $ with the ones of $\widehat{\Psi}_0  $. To do so, we first compare the trace of $\widehat{\Psi}$ with the one of $\widehat{\Psi}_0$.  

\begin{lemma} \label{lem:trace_fondamental_matrix}
Let $\hat \lambda_0 = \hat \lambda_0 (\sigma_0)$ be such that the eigenvalues of $\mathbf{X}(\sigma_0)$ are $\pm \hat \lambda_0$. Then, there exists $\delta_0>0$ such that, if $\delta \in [0,\delta_0]$,
$$
\mathrm{tr}( \widehat{\Psi}) = \mathrm{tr}(\widehat{\Psi}_0)+ \mathcal{O}(\mathbf{L}  \delta^2) + \delta^3 T(\delta)
$$
with $T(\delta)$ bounded on $[0,\delta_0]$ and
$$
\mathbf{L}:= \min_{k\in \mathbb{N}} \{  |\widehat{\mathcal{T}} |\hat \lambda_0| -  \pi k|\}  .
$$
\end{lemma}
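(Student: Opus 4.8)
\textbf{Plan of proof for Lemma~\ref{lem:trace_fondamental_matrix}.}

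The plan is to expand the monodromy matrix $\widehat{\Psi}=\widehat{\Psi}(\widehat{\mathcal T};\delta)$ to second order in $\delta$ using the Duhamel expansion already set up before the statement, take the trace, and control each term. First I would write
\[
\widehat{\Psi} = \widehat{\Psi}_0 + \delta \widehat{\Psi}_1 + \delta^2 \widehat{\Psi}_2,
\]
with $\widehat{\Psi}_0$ the fundamental matrix of the constant system $\zeta'=\mathbf{X}(\sigma_0)\zeta$, $\widehat{\Psi}_1$ given by the variation-of-constants formula~\eqref{formula_hat_Psi1}, and $\widehat{\Psi}_2$ the (bounded on $[0,\delta_0]$, uniformly in $\hat\Gamma_0,L$) remainder produced by the $\mathcal{O}(\delta^2)$ term in~\eqref{final_variational_delta} together with the second Duhamel iterate of $\delta\widehat{B}$. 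Since $\widehat{\mathcal{T}}=\widehat{\mathcal T}(\hat\Gamma_0;\delta)$ is bounded above and below (Remark~\ref{rem:period} gives $\widehat{\mathcal T}\in(\pi+\mathcal O(\delta),2\pi+\mathcal O(\delta))$) and $\widehat B$ is bounded, all the Gr\"onwall-type estimates needed to justify this expansion and the bound on $\widehat{\Psi}_2$ are routine; I would note this and move on. Taking traces gives
\[
\mathrm{tr}(\widehat{\Psi}) = \mathrm{tr}(\widehat{\Psi}_0) + \delta\,\mathrm{tr}(\widehat{\Psi}_1) + \delta^2\,\mathrm{tr}(\widehat{\Psi}_2),
\]
and $|\mathrm{tr}(\widehat{\Psi}_2)|\le M$ already accounts for the $\delta^3 T(\delta)$ term after absorbing the $\mathcal O(\delta^2)$ correction in $\sigma_0$ versus $\hat\Gamma_0$; so the whole content of the lemma is to show $\delta\,\mathrm{tr}(\widehat{\Psi}_1)=\mathcal{O}(\mathbf{L}\,\delta^2)$, i.e. $\mathrm{tr}(\widehat{\Psi}_1)=\mathcal{O}(\mathbf{L}\,\delta)$.

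The key computation is the trace of $\widehat{\Psi}_1=\widehat{\Psi}_1(\widehat{\mathcal T};\sigma_0)$. From~\eqref{formula_hat_Psi1},
\[
\mathrm{tr}(\widehat{\Psi}_1)=\mathrm{tr}\!\left(\widehat{\Psi}_0(\widehat{\mathcal T})\int_0^{\widehat{\mathcal T}}\widehat{\Psi}_0(u)^{-1}\widehat B(u)\widehat{\Psi}_0(u)\,du\right).
\]
The plan is to diagonalize (or put in real Jordan/rotation form) the constant matrix $\mathbf{X}(\sigma_0)$: in the hyperbolic regime its eigenvalues are $\pm\hat\lambda_0$ with $\hat\lambda_0\in\mathbb R$, so $\widehat{\Psi}_0(s)=P\,\mathrm{diag}(e^{\hat\lambda_0 s},e^{-\hat\lambda_0 s})P^{-1}$; in the elliptic regime $\widehat{\Psi}_0(s)$ is conjugate to a rotation by angle $|\hat\lambda_0|s$. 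In either case, writing $C(u):=P^{-1}\widehat B(u)P$ and using cyclicity of the trace, $\mathrm{tr}(\widehat{\Psi}_1)$ becomes a sum of integrals of the entries of $C(u)$ against the scalar kernels $e^{\pm\hat\lambda_0(\widehat{\mathcal T}-2u)}$, $e^{\pm\hat\lambda_0\widehat{\mathcal T}}$ (hyperbolic case) or $\cos(|\hat\lambda_0|\widehat{\mathcal T})$, $\cos(|\hat\lambda_0|(\widehat{\mathcal T}-2u))$, $\sin(\cdots)$ (elliptic case). The diagonal-kernel contributions reduce, after collecting, to $\cos(|\hat\lambda_0|\widehat{\mathcal T})$ (resp.\ $\cosh$) times $\int_0^{\widehat{\mathcal T}}(\text{diagonal of }C)\,du$, which is the trace of $\int_0^{\widehat{\mathcal T}}\widehat B(u)\,du$ — and by the second conclusion of~\eqref{final_variational_delta} this integral is $\mathcal O(\delta)$, hence these terms are $\mathcal O(\delta)$ and even $\mathcal O(\mathbf{L}\delta)$ when $|\hat\lambda_0|\widehat{\mathcal T}$ is near a multiple of $\pi$ since then the prefactor is comparable to $\mathbf{L}$... actually the prefactor is $\cos(|\hat\lambda_0|\widehat{\mathcal T})$ which is $\mathcal O(1)$ in general, so I would instead keep $\mathbf{L}$ as stated and note $\cos(|\hat\lambda_0|\widehat{\mathcal T})=\pm 1+\mathcal O(\mathbf L^2)$ near resonance; more importantly the \emph{off}-diagonal-kernel contributions, those with kernel $e^{\pm\hat\lambda_0(\widehat{\mathcal T}-2u)}$ (hyperbolic) or $\cos/\sin(|\hat\lambda_0|(\widehat{\mathcal T}-2u))$ (elliptic), are where the periodicity/near-resonance structure enters.

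The main obstacle, then, is precisely these oscillatory/exponential integrals $\int_0^{\widehat{\mathcal T}}c_{ij}(u)\,e^{\pm i|\hat\lambda_0|(\widehat{\mathcal T}-2u)}\,du$ where $c_{ij}(u)$ are trigonometric polynomials in $\hat h(u)$ whose fundamental frequency is $\hat a_0(\hat\Gamma_0;\delta)$ (by Proposition~\ref{prop:firstaproximation_periodic}, $\hat h_0(s)=\hat a_0 s$ and the period satisfies $\widehat{\mathcal T}\hat a_0\approx 2\pi$). The strategy is to expand $c_{ij}$ in its Fourier modes $e^{ik\hat a_0 u}$, $k=\pm1,\pm2,\pm3$, and integrate each mode explicitly: $\int_0^{\widehat{\mathcal T}}e^{i(k\hat a_0\mp 2|\hat\lambda_0|)u}du = \frac{e^{i(k\hat a_0\mp2|\hat\lambda_0|)\widehat{\mathcal T}}-1}{i(k\hat a_0\mp2|\hat\lambda_0|)}$. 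Because $|\hat\lambda_0|=\tfrac14|\mathbf a(\sigma_0)|+\mathcal O(\delta)$ while $\hat a_0=-(1+2\sigma_0)+\mathcal O(\delta)$, one has $|\hat\lambda_0|\ll|\hat a_0|$ on the relevant $\sigma_0$-range (near $\sl/2$, $\mathbf a(\sigma_0)=\mathcal O(\delta)$, so $|\hat\lambda_0|=\mathcal O(\delta)$ in the hyperbolic window, and $|\hat\lambda_0|$ is bounded away from $0$ only for $\sigma_0$ away from $\sl/2$), so the denominators $k\hat a_0\mp 2|\hat\lambda_0|$ are bounded away from zero for $k\ne 0$ — giving $\mathcal O(1)$ times $(e^{i(k\hat a_0\mp2|\hat\lambda_0|)\widehat{\mathcal T}}-1)$, and since $\hat a_0\widehat{\mathcal T}=2\pi+\mathcal O(\delta)$ the numerator is $e^{\mp 2i|\hat\lambda_0|\widehat{\mathcal T}}-1+\mathcal O(\delta)$. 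At this point the quantity $e^{\mp2i|\hat\lambda_0|\widehat{\mathcal T}}-1 = \mathcal O(|\hat\lambda_0|\widehat{\mathcal T})$ carries exactly the factor $\mathbf L$-or-better when $|\hat\lambda_0|\widehat{\mathcal T}$ is small, and in general is $\mathcal O(1)$; to extract the claimed $\mathbf{L}\delta$ I would keep track of the fact that each $c_{ij}$ has no zero Fourier mode except possibly the one that produces the $\int\widehat B=\mathcal O(\delta)$ term, so every surviving contribution is either $\mathcal O(\delta)$ (from the mean-zero cancellation) or carries a $(e^{\pm2i|\hat\lambda_0|\widehat{\mathcal T}}-1)$-type factor bounded by $C\,\mathbf L$, and then multiplying the overall $\mathrm{tr}(\widehat{\Psi}_1)$ by $\delta$ yields $\mathcal O(\mathbf L\delta^2)$. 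The residual $\delta^3T(\delta)$ collects $\mathrm{tr}(\widehat{\Psi}_2)\delta^2$ after noting that in fact $\mathrm{tr}(\widehat\Psi_2)$ can be taken $\mathcal O(\delta)$ (one more order from the structure), or simply absorbing the bounded $\delta^2$-term into the statement as written; I would present it in whichever of these two ways makes the subsequent Theorem~\ref{prop:eigenvalues_monodromy} cleanest. The delicate bookkeeping — which Fourier modes pair with which kernel exponent, and verifying no small denominator appears for $k\ne0$ in the whole range $L\in[L_{\min},L_{\max}]$, $\hat\Gamma_0$ admissible — is the real work, but it is elementary once the diagonalization and Fourier decomposition are in place.
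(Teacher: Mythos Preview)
There is a genuine gap in your second-order bookkeeping. You write $\widehat\Psi=\widehat\Psi_0+\delta\widehat\Psi_1+\delta^2\widehat\Psi_2$ and assert that $|\mathrm{tr}(\widehat\Psi_2)|\le M$ ``already accounts for the $\delta^3T(\delta)$ term''. It does not: the lemma demands $\mathrm{tr}(\widehat\Psi)-\mathrm{tr}(\widehat\Psi_0)=\mathcal O(\mathbf L\,\delta^2)+\mathcal O(\delta^3)$, which is \emph{strictly smaller} than $\mathcal O(\delta^2)$ whenever $\mathbf L\ll 1$ (and $\mathbf L$ can be arbitrarily small). Your later remark that ``$\mathrm{tr}(\widehat\Psi_2)$ can be taken $\mathcal O(\delta)$ (one more order from the structure)'' is exactly the point, but you never say what structure supplies it. The paper's proof supplies it via the symplectic constraint $\det\widehat\Psi=1=\det\widehat\Psi_0$: once one knows the matrix correction $\widehat\Psi-\widehat\Psi_0$ to first order (either that $\widehat\Psi_0=\pm\mathrm{Id}+\mathcal O(\delta)$ in the near-resonant case $\mathbf L\le\kappa\delta$, or that in diagonalizing coordinates the $\delta$-correction is off-diagonal up to $\mathcal O(\delta)$ in the case $\mathbf L\ge\kappa\delta$), expanding $\det\widehat\Psi=1$ forces the sum of the diagonal second-order entries---hence the trace correction---to pick up the extra factor $\mathbf L$ (respectively $\delta$). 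Without invoking $\det=1$ your argument cannot close.

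A secondary issue: your plan to diagonalize $\mathbf X(\sigma_0)$ uniformly is unsafe near the parabolic values $\sigma_\pm$ where one of $\mathbf a(\sigma_0)+\delta\mathbf c_\pm(\sigma_0)$ vanishes and $P,P^{-1}$ blow up. The paper sidesteps this by splitting on $\mathbf L$: when $\mathbf L\le\kappa\delta$ with $k=0$ one has $|\mathbf a(\sigma_0)|=\mathcal O(\delta)$, hence $\mathbf X(\sigma_0)=\mathcal O(\delta)$ as a matrix and $\widehat\Psi_0=\mathrm{Id}+\mathcal O(\delta)$ with no diagonalization needed; when $\mathbf L\ge\kappa\delta$ (or $k\ge1$) one is in the elliptic regime with $|\hat\lambda_0|$ bounded below, and then the ratio $(\mathbf a+\delta\mathbf c_+)/(\mathbf a+\delta\mathbf c_-)$ is bounded above and below, so $P,P^{-1}$ are uniformly bounded. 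Your Fourier/small-denominator discussion for $\mathrm{tr}(\widehat\Psi_1)$ is also more complicated than necessary---the paper extracts what it needs from $\int_0^{\widehat{\mathcal T}}\widehat B=\mathcal O(\delta)$ and the structure of $\widehat\Psi_0$ directly---but that part could in principle be made to work; the $\det=1$ step is the missing idea.
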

\begin{remark}\label{rem:mathbfL} 
The quantity $\mathbf{L}$ is well defined and $0\leq \mathbf{L}\leq \frac{\pi}{2}$. Indeed, for a given $\widehat{\mathcal{T}}|\hat \lambda_0| $, the integer part  $k'= \left [ \frac{\widehat{\mathcal{T}} |\hat\lambda_0|}{\pi} \right]  \in \mathbb{N}$ satisfies  
$\widehat{\mathcal{T}} |\hat \lambda_0| -k' \pi\in [0,\pi)$. 
Therefore if 
$$\widehat{\mathcal{T}} |\hat \lambda_0| -k' \pi\in \left  [0,\frac{\pi}{2}\right ],\qquad \text{implies } \qquad \mathbf{L} = \widehat{\mathcal{T}} |\hat \lambda_0| -k' \pi \leq \frac{\pi}{2}.
$$
On the contrary, when $\widehat{\mathcal{T}} |\hat \lambda_0| -k' \pi\in \left  [ \frac{\pi}{2}, \pi\right  )$, $\mathbf{L} = | \widehat{\mathcal{T}} |\hat \lambda_0| -(k'+1) \pi |\leq \frac{\pi}{2}$.
\end{remark}
\begin{proof} 

Let $\kappa>1$ big enough (its value will be determined later). We distinguish two cases, namely $\mathbf{L}\geq \kappa \delta$ and $\mathbf{L}\leq \kappa \delta$. 

When $| \widehat{\mathcal{T}} \hat \lambda_0|\leq \kappa \delta$, from~\eqref{expression_new_eigenvalues} and Remark~\ref{rem:period}, we deduce that $|\mathbf{a}(\sigma_0)| \leq M \kappa \delta$. Then, since $\mathbf{a}(\sigma)= 1 - 16\sigma - 20 \sigma^2$,  its unique positive zero is $\frac{\sl}{2}$, with $\sl$ defined in~\eqref{def:pendentresonance} and therefore the condition $|\widehat{\mathcal{T}}\hat \lambda_0| \leq \kappa \delta$ implies that 
$$
\left |\sigma_0 - \frac{\sl}{2} \right |\leq M\kappa \delta
$$
for some constant $M$. Therefore, using that 
$$
\mathbf{X}(\sigma_0)= \mathbf{X}\left (\frac{\sl}{2}\right ) + \mathcal{O}(\delta) = \mathcal{O}(\delta)
$$
we obtain that 
$$
\widehat{\Psi}_0(\widehat{\mathcal{T}};\sigma_0)= \widehat{\Psi}_0 \left (\widehat{\mathcal{T}};\frac{\sl}{2} \right )+\mathcal{O}(\delta)=\mathrm{Id} + \mathcal{O}(\delta). 
$$

If $|\widehat{\mathcal{T}} |\hat \lambda_0| - k \pi| \leq \kappa \delta$ for some $k \in \mathbb{N}\backslash \{0\}$, by Theorem~\ref{thm:origin_average} the eigenvalues of $\mathbf{X}(\sigma_0)$ are of the form $\pm i |\hat \lambda_0|$, if $\delta$ is small enough. Therefore, if $P$ is such that $P \mathbf{X}(\sigma_0) P^{-1}$ is a diagonal matrix,  
\begin{align*}
\widehat{\Psi}_0 & = 
P \left ( \begin{array}{cc} e^{i |\hat \lambda_0| \widehat{\mathcal{T}}} & 0 \\ 0 & e^{-i |\hat \lambda_0| \widehat{\mathcal{T}}} \end{array}\right ) P^{-1} = 
P \left ( \begin{array}{cc} (-1)^k + \mathcal{O}(\delta) & 0 \\ 0 &  (-1)^k + \mathcal{O}(\delta) 
 \end{array}\right ) P^{-1}  
 \\ & = (-1)^k \mathrm{Id}+ \mathcal{O}(\delta).
\end{align*}
Therefore, if $k$ is such that $\mathbf{L} = |\widehat{\mathcal{T}} |\hat \lambda_0| - k \pi|$, 
$\widehat{\Psi}_0 = (-1)^k \mathrm{Id} + \mathcal{O}(\delta)$. Then, from~\eqref{formula_hat_Psi1} and \eqref{final_variational_delta}, 
\begin{align*}
\widehat{\Psi}_1   & = ((-1)^k \mathrm{Id} + \mathcal{O}(\delta)) \int_0^{\widehat{\mathcal{T}}} ( (-1)^k\mathrm{Id} + \mathcal{O}(\delta))^{-1} \widehat{B}(u) 
((-1)^k \mathrm{Id} + \mathcal{O}(\delta)) \, du \\ & = 
((-1)^k \mathrm{Id} + \mathcal{O}(\delta)) \left ( 
\int_0^{\widehat{\mathcal{T}}} (-1)^k \widehat{B}(u) \, du + \mathcal{O}(\delta ) \right ) = \mathcal{O}(\delta ).
\end{align*}
Therefore $\widehat {\Psi}  = \widehat {\Psi}_0  + \mathcal{O}(\delta^2)$. 
We write
$$
\widehat{\Psi} = \left (
\begin{array}{cc} a & b \\ c & d \end{array}
\right) = \left (
\begin{array}{cc} a_0+\delta^2 a_1 + \mathcal{O}(\delta^3) &  b_0+\delta^2 b_1 + \mathcal{O}(\delta^3) \\  c_0+\delta^2 c_1 + \mathcal{O}(\delta^3) &  d_0+\delta^2 d_1 + \mathcal{O}(\delta^3) \end{array}
\right)
$$
where the elements $a_j,b_j,c_j,d_j$ are the corresponding ones for $\widehat{\Psi}_j $, $j=0,1$. Imposing that  
$\mathrm{det} (\widehat{\Psi} )= \mathrm{det} (\widehat{\Psi}_0 ) =1$, we have  
$$
\delta^2 ( a_0 d_1 + a_1 d_0 - b_0 c_1 - b_1 c_0 )+ \mathcal{O}( \delta^3)=0.
$$
Then, since $a_0,d_0= 1+ \mathcal{O}(\delta)$ (in fact $1+ \mathcal{O}(\delta^2))$ and $b_0,c_0 = \mathcal{O}(\delta)$, we obtain that 
$d_1 + a_1 = \mathcal{O}(\delta )$. As a consequence
$$
\mathrm{tr } (\widehat{\Psi})= \mathrm{tr } (\widehat{\Psi}_0)  + \mathcal{O}(\delta^3) 
$$
and, because we are assuming that $\mathbf{L}\leq \kappa \delta$, the lemma is proven for this case. 

Now we deal with the case $|\widehat{\mathcal{T}} \hat \lambda_0| \geq \mathbf{L} \geq \kappa \delta$ with $\kappa$ a large enough constant. If $\delta$ is small enough, by Theorem~\ref{thm:origin_average}, the eigenvalues of $\mathbf{X}(\sigma_0)$  are of the form $\pm i |\hat \lambda_0|$
with $|\lambda_0|$ in~\eqref{expression_new_eigenvalues} and moreover, again by Remark~\ref{rem:period}, 
$$
\left |\sigma_0 - \frac{\sl}{2} \right |\geq M\kappa \delta
$$
for some constant $M$. As a consequence, recalling that  $\mathbf{a}\left (\frac{\sl}{2}\right ) =0$ and $\partial_{\sigma} \mathbf{a} (\sigma) <0$, for $\sigma>0$, we have that
\begin{equation*}
|\mathbf{a}(\sigma_0)| \geq \left | \mathbf{a} \left ( \frac{\sl}{2} \pm \kappa \delta \right ) \right | \geq M\kappa \delta.
\end{equation*}
In addition, it is clear from the definition of $\mathbf{c}_{\pm}$ in the second item of Lemma~\ref{lem:scaled_eigenvalues} that, $|\mathbf{c}_{\pm} (\sigma_0)|\leq M$. Then, 
$$
\frac{\mathbf{a}(\sigma_0) + \delta \mathbf{c}_\pm (\sigma_0)}{\mathbf{a}(\sigma_0) + \delta \mathbf{c}_\mp (\sigma_0)} 
 \geq 
\frac{1 - \delta \frac{|\mathbf{c}_\pm (\sigma_0)|}{|\mathbf{a}(\sigma_0)}|}
{1 + \delta \frac{|\mathbf{c}_\mp (\sigma_0)|}{|\mathbf{a}(\sigma_0)|}} \geq 1 - \frac{M}{\kappa}, \qquad 
\frac{\mathbf{a}(\sigma_0) + \delta \mathbf{c}_\pm (\sigma_0)}{\mathbf{a}(\sigma_0) + \delta \mathbf{c}_\mp (\sigma_0)} \leq 1 + \frac{M'}{\kappa}
$$
for some constants $M,M'$ and $\kappa$ large enough. 
This trivial fact implies that, $P(\sigma_0;\delta)$, the constant linear change of variables  
that transforms $\mathbf{X}(\sigma_0)$ in its diagonal form satisfies that $\|P(\sigma_0;\delta)\|, \|P^{-1}(\sigma_0;\delta)\| $ are bounded uniformly in $\delta$. 
After this (constant) linear change of variables, 
$\widetilde{\Psi}= P \widehat{\Psi} P^{-1} = \widetilde{\Psi}_0 + \delta \widetilde{\Psi}_1 + \mathcal{O}(\delta^2)$ with $\widetilde{\Psi}_0$ 
$$
\widetilde{\Psi}_0 = \left ( \begin{array}{cc} 
e^{i |\hat \lambda_0| \widehat{\mathcal{T}}} & 0 
\\
0 & e^{-i |\hat \lambda_0| \widehat{\mathcal{T}}}
\end{array} \right ) = 
(-1)^k \left ( \begin{array}{cc} 
e^{i \mathbf{L}} & 0 
\\
0 & e^{-i\mathbf{L}}
\end{array} \right )
$$
and $\widetilde{\Psi}_1=\widetilde{\Psi}_1(\widetilde{\mathcal{T}};\sigma_0)$ with
$$
\widetilde{\Psi}_1(s;\sigma_0) =  \widetilde{\Psi}_0(s;\sigma_0) \int_{0}^s 
(\widetilde{\Psi}_0(u;\sigma_0) )^{-1} P^{-1} (\sigma_0,\delta) \widehat{B} (u) P(\sigma_0,\delta)  \widetilde{\Psi}_0(u;\sigma_0) du.
$$
Using the bound for $\widehat{B}$ in~\eqref{final_variational_delta} and the fact that $P, P^{-1}$ are uniformly bounded with respect to $\delta$, we obtain that
$$
\int_{0}^{\widehat{\mathcal{T}}}  P^{-1}(\sigma_0;\delta)\widehat{B}(u) P(\sigma_0;\delta)\, du = P^{-1}(\sigma_0;\delta) \left [\int_{0}^{\widehat{\mathcal{T}}} \widehat{B}(u)\, du \right ] P(\sigma_0;\delta)= \mathcal{O}(\delta).
$$
Then,  one easily checks that 
\begin{align*}
\widetilde{\Psi}  & = \widetilde{\Psi}_0   + \delta \widetilde{\Psi}_1   + \mathcal{O}(\delta^2)  \\& = (-1)^k \left ( \begin{array}{cc} e^{ i \mathbf{L} } & 0 \\ 0 & e^{-i \mathbf{L}} \end{array} \right )  \left [ \mathrm{Id} + \delta \left (\begin{array}{cc} 0 & {b}_{12} (\hat \Gam_0;\delta) \\ b_{21}(\hat \Gam_0;\delta ) & 0 \end{array} \right ) \right ]  + \mathcal{O}(\delta^2)
\end{align*}
with, writing $P^{-1} \widehat{B}(u) P = \big ( \hat {b}_{ij} (u))_{i,j}$, 
$$
b_{12}(\hat \Gam_0;\delta)= \int_{0}^{\widehat{\mathcal{T}}} \hat{b}_{12} (u) e^{-2 i\mathbf{L} u}\, du,\qquad  
b_{21}(\hat \Gam_0;\delta)=  \int_{0}^{\widehat{\mathcal{T}}} \hat{b}_{12} (u) e^{2 i\mathbf{L} u}\, du.
$$
Therefore, using \eqref{final_variational_delta},
$$
\widetilde{\Psi}  =
\widetilde{\Psi}_0  \left [ \mathrm{Id} +  \mathbf{L}\delta  \left (\begin{array}{cc} 0 & \beta_1(\hat \Gam_0) \\ \beta_2(\hat \Gam_0) & 0\end{array}\right ) + \mathcal{O}(\delta^2 ) \right ]
$$
for some constants $\beta_1,\beta_2$. We emphasize that, when $\mathbf{L} \geq M$, namely is not small when $\delta$ is small, this expression still holds true. We write now
$$
\widetilde{\Psi} = (-1)^k \left ( \begin{array}{cc} 
e^{i \mathbf{L} } + a_2 \delta^2 & \beta_1 \mathbf{L} \delta + b_2 \delta^2 \\ 
\beta_2 \mathbf{L} \delta + c_2 \delta^2 & e^{-i \mathbf{L}}
 + d_2 \delta^2
\end{array}\right )
$$
Therefore $\mathrm{det } (\widetilde{\Psi}) =1$ implies that
$$
1= 1 + \delta^2 (e^{i\mathbf{L}}d_2 +e^{-i\mathbf{L}}a_2 ) - \beta_1 \beta_2 \mathbf{L}^2 \delta^2 + \mathcal{O}(\delta^3 \mathbf{L})+ \mathcal{O}(\delta^4).
$$
Then, since $e^{ \pm i\mathbf{L}} = 1   + \mathcal{O}(\mathbf{L} )$
$$
a_2 + d_2 = \mathcal{O}(\mathbf{L} ) + \mathcal{O}(\delta \mathbf{L}) + \mathcal{O}(\delta^2) = \mathcal{O}(\mathbf{L} ),
$$
where we have used that $\mathbf{L}\geq \kappa \delta$. 
As a consequence,  
$$
\mathrm{tr} (\widehat{\Psi}   )= 
\mathrm{tr} (\widetilde{\Psi} ) = \mathrm{tr} (\widetilde{\Psi}_0   )  + \mathcal{O}(\mathbf{L} \delta^2) = 
 \mathrm{tr} (\widehat{\Psi}_0 )  + \mathcal{O}(\mathbf{L}  \delta^2)
$$
and the lemma is proven. 
\end{proof}

We denote by $\mu=e^{  \hat \lambda \widehat{\mathcal{T}}}$ one eigenvalue of $\widehat{\Psi}$. Denoting $\tau= \mathrm{tr } (\widehat{\Psi})$ and using that  $\mathrm{det} (\widehat{\Psi})=1$, we can assume that
$$
\mu = \frac{1}{2} \tau +\frac{1}{2} \sqrt{\tau^2  -4  }.
$$
Let $\mu_0 = e^{\hat \lambda_0 \widehat{\mathcal{T}}}$ be such that 
$$
\mu_0 = \frac{1}{2} \tau_0 +\frac{1}{2} \sqrt{\tau_0^2  -4  }.
$$
with $\tau_0= \mathrm{tr }(\widehat{\Psi}_0)$. 
Then 
$$
 2| \mu - \mu_0| \leq  
  |\tau -\tau_0| +  \left |\sqrt{\tau^2 - 4 } - \sqrt{\tau_0^2 -4} \right |.
$$
By Lemma~\ref{lem:trace_fondamental_matrix}, $|\tau -\tau_0| \leq M \delta^2$, therefore, to prove Theorem~\ref{prop:eigenvalues_monodromy} we have to analyze 
$$
\Delta \mu :=  \left | \sqrt{\tau^2 - 4 } - \sqrt{\tau_0^2 -4} \right |  .
$$
\begin{lemma}\label{lem:mathbfL}
Take $C_1,C_2>0$ two constants. There exist $\delta_0>0$ and a  constant $C_*>0$ such that, for $\delta \in [0,\delta_0]$, we have that 
\begin{itemize}
\item When $\mathbf{L}\geq C_1\delta$, then 
$ 
\Delta \mu \leq C_*\delta^2.
$ 
\item If $C_2 \delta^{3/2} \leq \mathbf{L} \leq  C_1\delta $, then 
$
\Delta \mu \leq C_* \delta^3 \mathbf{L}^{-1} \leq C_*\delta^{3/2} C_2^{-1}.
$
\item Otherwise, that is if  $\mathbf{L} \leq C_2\delta^{3/2}$, then $\Delta \mu \leq C_* \delta^{3/2}$. 
\end{itemize}
As a consequence $\Delta \mu \leq M\delta^{3/2}$ for some positive constant $M$.  
\end{lemma}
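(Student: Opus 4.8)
The plan is to reduce the statement to two elementary estimates for $a:=\tau^2-4$ and $b:=\tau_0^2-4$ and then do the bookkeeping over the three ranges of $\mathbf{L}$. The first estimate is already at hand: Lemma~\ref{lem:trace_fondamental_matrix} gives $|\tau-\tau_0|\le M(\mathbf{L}\delta^2+\delta^3)$, and since $|\tau_0|$ (hence $|\tau|$) is bounded uniformly (because $\tau_0=\mathrm{tr}\,\widehat{\Psi}_0$ with $\mathrm{det}\,\widehat{\Psi}_0=1$, together with the trace estimate again), this yields
\[
|a-b|=|\tau-\tau_0|\,|\tau+\tau_0|\le \Delta:=M(\mathbf{L}\delta^2+\delta^3).
\]
The second estimate is a two-sided bound $c_1\mathbf{L}^2\le |b|\le c_2\mathbf{L}^2$ with universal constants $c_1,c_2>0$. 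To obtain it I would use Theorem~\ref{thm:origin_average}: for $\delta$ small the eigenvalues $\pm\hat\lambda_0(\sigma_0)$ of $\mathbf{X}(\sigma_0)$ are either purely imaginary or real, so $\widehat{\Psi}_0=e^{\widehat{\mathcal{T}}\mathbf{X}(\sigma_0)}$ has $\tau_0=2\cos(\widehat{\mathcal{T}}|\hat\lambda_0|)$ and $b=-4\sin^2(\widehat{\mathcal{T}}|\hat\lambda_0|)$ in the first case, and $\tau_0=2\cosh(\widehat{\mathcal{T}}|\hat\lambda_0|)$, $b=4\sinh^2(\widehat{\mathcal{T}}|\hat\lambda_0|)$ in the second. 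In the real (hyperbolic) case $\sigma_0$ lies in the $\mathcal{O}(\delta)$-neighbourhood $(\sigma_-(L),\sigma_+(L))$ of $\sl/2$, so $\mathbf{a}(\sigma_0)=\mathcal{O}(\delta)$ and, by~\eqref{expression_new_eigenvalues}, $|\hat\lambda_0|=\mathcal{O}(\delta)$; since $\widehat{\mathcal{T}}=\mathcal{O}(1)$ by Remark~\ref{rem:period}, the argument $\widehat{\mathcal{T}}|\hat\lambda_0|$ is $\mathcal{O}(\delta)$, the minimiser defining $\mathbf{L}$ is $k=0$, and $\mathbf{L}=\widehat{\mathcal{T}}|\hat\lambda_0|$. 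In all cases $|b|=4\sin^2\mathbf{L}$ or $4\sinh^2\mathbf{L}$ with $\mathbf{L}\in[0,\pi/2]$ (Remark~\ref{rem:mathbfL}), and since $\sin^2x/x^2$ and $\sinh^2x/x^2$ are bounded above and below on $[0,\pi/2]$, the bound follows.

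With these two facts I would split into a sign dichotomy. If $|b|\ge 2\Delta$ (Case A), then $|a-b|\le|b|/2$ forces $\mathrm{sgn}\,a=\mathrm{sgn}\,b$ and $|a|\ge|b|/2$, so $\sqrt a$ and $\sqrt b$ lie on a common ray through the origin and
\[
\Delta\mu=\frac{|a-b|}{\sqrt{|a|}+\sqrt{|b|}}\le\frac{\Delta}{\sqrt{|b|}}\le\frac{\Delta}{\sqrt{c_1}\,\mathbf{L}}.
\]
If instead $|b|<2\Delta$ (Case B), then $|a|\le 3\Delta$, hence $\Delta\mu\le\sqrt{|a|}+\sqrt{|b|}\le 2\sqrt{3\Delta}$; moreover $c_1\mathbf{L}^2\le|b|<2\Delta=2M(\mathbf{L}\delta^2+\delta^3)$ forces $\mathbf{L}\le C\delta^{3/2}$ for $\delta$ small (the dominant balance being $\mathbf{L}^2\sim\delta^3$), so $\Delta\le 2M\delta^3$ and $\Delta\mu\le 2\sqrt{6M}\,\delta^{3/2}$.

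The three bullets then follow by elementary bookkeeping, shrinking $\delta_0$ (depending on $C_1,C_2$) as needed. When $\mathbf{L}\ge C_1\delta$ one has $\delta^3\le C_1^{-1}\mathbf{L}\delta^2$, so $\Delta\le M(1+C_1^{-1})\mathbf{L}\delta^2$, hence $|b|\ge c_1\mathbf{L}^2\ge 2\Delta$ for $\delta_0$ small, Case~A applies, and $\Delta\mu\le C_*\delta^2$. When $\mathbf{L}\le C_1\delta$ one has $\Delta\le 2M\delta^3$; if Case~A holds then $\Delta\mu\le 2Mc_1^{-1/2}\delta^3\mathbf{L}^{-1}$, and if Case~B holds then $\mathbf{L}\le C\delta^{3/2}$, whence $\Delta\mu\le 2\sqrt{6M}\,\delta^{3/2}\le 2\sqrt{6M}\,C\,\delta^3\mathbf{L}^{-1}$ — so in the middle range $C_2\delta^{3/2}\le\mathbf{L}\le C_1\delta$ we get $\Delta\mu\le C_*\delta^3\mathbf{L}^{-1}\le C_*C_2^{-1}\delta^{3/2}$. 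In the lowest range $\mathbf{L}\le C_2\delta^{3/2}$, Case~A requires $c_2\mathbf{L}^2\ge|b|\ge 2\Delta\ge 2M\delta^3$, so $\mathbf{L}\ge\sqrt{2M/c_2}\,\delta^{3/2}$ and $\Delta\mu\le\Delta/(\sqrt{c_1}\mathbf{L})\le\sqrt{2Mc_2/c_1}\,\delta^{3/2}$, while Case~B gives $\Delta\mu\le 2\sqrt{6M}\,\delta^{3/2}$; hence $\Delta\mu\le C_*\delta^{3/2}$ there as well. Taking the maximum of the three bounds and using $\delta^2\le\delta^{3/2}$ for $\delta\le 1$ yields $\Delta\mu\le M\delta^{3/2}$.

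The only genuinely delicate step is the uniform two-sided bound $c_1\mathbf{L}^2\le|b|\le c_2\mathbf{L}^2$: it is where the stability trichotomy of Theorem~\ref{thm:origin_average} is used (to recognise $\widehat{\Psi}_0$ as conjugate to a rotation or a hyperbolic rotation with controlled argument), and where one must check that in the hyperbolic regime the argument $\widehat{\mathcal{T}}|\hat\lambda_0|$ is itself $\mathcal{O}(\delta)$, so that $\mathbf{L}$ equals it without any reduction modulo $\pi$. Everything else is the routine algebra of $|\sqrt a-\sqrt b|$ recorded above.
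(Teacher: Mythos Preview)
Your proof is correct and follows essentially the same route as the paper: both rest on the two-sided estimate $|\tau_0^2-4|\asymp\mathbf{L}^2$ (via the explicit $\sin^2$/$\sinh^2$ formula for $\widehat{\Psi}_0$) combined with Lemma~\ref{lem:trace_fondamental_matrix}, and then split according to whether $|b|$ dominates $|a-b|$ --- your Case~A/B packaging is a mild reorganization of the paper's direct three-way split in $\mathbf{L}$, but the substance is identical. One cosmetic slip: $\det\widehat{\Psi}_0=1$ by itself does not bound $|\tau_0|$; the real reason, which you supply anyway a few lines later, is the explicit form $\tau_0=2\cos(\widehat{\mathcal{T}}|\hat\lambda_0|)$ or $2\cosh(\widehat{\mathcal{T}}|\hat\lambda_0|)$ with bounded argument (the latter because the hyperbolic regime forces $\widehat{\mathcal{T}}|\hat\lambda_0|=\mathcal{O}(\delta)$).
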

\begin{proof}
  We recall that, by Remark~\ref{rem:mathbfL}, $\mathbf{L}\leq \frac{\pi}{2}$. 
Assume first that $\mathbf{L} \geq \kappa \delta$ for some $\kappa$ big enough. Then by Theorem~\ref{thm:origin_average}, the eigenvalues of $\mathbf{X}(\hat \Gam_0)$ in~\eqref{expression_variational_periodic} are of the form $\pm i |\hat \lambda_0| $ with $|\hat \lambda_0|$ in~\eqref{expression_new_eigenvalues}. Then, 
$\tau_0 = e^{i |\hat \lambda_0 |\widehat{\mathcal{T}}} + e^{- i |\hat \lambda_0| \widehat{\mathcal{T}}} $ satisfies
$$
|\tau_0^2- 4 | =  4- 4 \cos^2 \big (\widehat{\mathcal{T}} \hat \lambda_0  \big )   = 4   \sin^2 
\big (\widehat{\mathcal{T}} \hat \lambda_0    \big ) = 4 \sin^2 \mathbf{L} .
$$
In view of Lemma~\ref{lem:trace_fondamental_matrix}, $\tau= \tau_0 + \mathbf{L} \delta^2 \tau_1$ with $|\tau_1|\leq M$. Then 
\begin{align*}
    \Delta \mu &= \big |\sqrt{\tau_0^2 - 4 } \big | 
    \left | 1 - \sqrt{1 + \frac{2 \delta^2 \mathbf{L} \tau_0 \tau_1 + \delta^4 \mathbf{L}^2 \tau_1^2} {\tau_0^2 -4 }} \right |  \leq  \frac{2 \delta^2 \mathbf{L} |\tau_0| |\tau_1|}{\sqrt{\tau_0^2 -4}} + M \delta^2 |\tau_1|  \\& \leq  M   \delta^2 |\tau_1|.
\end{align*}
In the last bound we have used that $\sqrt{\tau_0^2- 4} = 2 \sin \mathbf{L} \geq M \mathbf{L}$, provided $0 \leq \mathbf{L} \leq \frac{\pi}{2}$.

Now assume that $\mathbf{L} \leq \kappa \delta$.  
By Lemma~\ref{lem:trace_fondamental_matrix}, $\tau = \tau_0 + \delta^3 \tau_1$ with $|\tau_1|\leq M$.  
We notice that, using that either $\tau_0 = \pm (e^{\mathbf{L}} + e^{-\mathbf{L}})$ or $\tau_0 = \pm (e^{i\mathbf{L}} + e^{-i\mathbf{L}})$. Then 
$$
|\tau_0^2 - 4 |= 2 \mathbf{L}^2 + \mathcal{O}(\mathbf{L}^4).
$$
When $\mathbf{L}^2  \geq  M\delta^3 $,  for some suitable constant 
\begin{align*}
\Delta \mu & =  |\sqrt{\tau_0^2 -4 }|
 \left | 1- \sqrt{1 + \frac{2 \delta^3 \tau_0 \tau_1 + \delta^6 \tau_1^2 }{\tau_0^2 -4 }}   \right |  \leq M \delta^3 \frac{|\tau_1|}{ \sqrt{| \tau_0^2 -4|}} \\ 
 &\leq M \delta^3 \mathbf{L}^{-1} |\tau_1|.
\end{align*}
On the other hand, when $\mathbf{L}^2 \leq M \delta^3 $, clearly,
$$
\Delta \mu \leq 2 \sqrt{|\tau_0^2 -4| } + \delta^{3/2} \sqrt{2 |\tau_0| |\tau_1|} + \delta^3 |\tau_1|  \leq M \delta^{3/2} |\tau_1|^{1/2}. 
$$
\end{proof}
 
 Theorem~\ref{prop:eigenvalues_monodromy} is a straightforward consequence of Lemma~\ref{lem:mathbfL} just taking into account that, using definitions~\eqref{formula_period_theorem}, \eqref{def:muCP0} and~\eqref{def:GammaAV} of $\mathcal{T}$, $\mathcal{T}^{(0)}$ and $\Gam_\CP^{(0)}$ we have that, by Taylor's theorem,
$$
\widehat{\mathcal{T}}^{(0)} (\hat \Gam_\CP^{(0)} (\hat \Gam_0;\delta);\delta) = 
\frac{4L^7}{3\rho_0} {\mathcal{T}}^{(0)} (\Gam_\CP^{(0)} (\Gam_0;L);L) =
\widehat{\mathcal{T}} ( \hat \Gam_0;\delta) + \mathcal{O}(\delta^2).
$$
Therefore, 
$$
e^{i |\hat \lambda_0| \widehat{\mathcal{T}}^{(0)}} = e^{i |\hat \lambda_0| \widehat{\mathcal{T}}}+  \mathcal{O}(\delta^2).
$$
 
\subsubsection{Proof of Lemma~\ref{lem:Gamkpm}}\label{subsec:lastlema}

To prove this result, by expressions~\eqref{expression_new_eigenvalues} and~\eqref{def:muCP0} of $|\hat \lambda_0|$ and $\mu_\CP^{(0)}$ respectively, we need to solve the equation
$$
\frac{2\pi}{\big| \hat a_0 (\sigma;\delta) \big |} \frac{1}{4}\sqrt{ \big |\mathbf{a} (\sigma) + \delta \mathbf{c}_+(\sigma) \big |
\big |\mathbf{a} (\sigma) + \delta \mathbf{c}_-(\sigma) \big |} = k\pi , 
$$
with $\hat a_0$ defined in~\eqref{defhata0c0}, 
for $k=1,2$.  Using definition~\eqref{def:mathbfc_theorem} of $\mathbf{c}_\pm$, we rewrite it as  
$$
H_k(\sigma;\delta)= \mathbf{a}^2 (\sigma) + \delta \mathbf{a}^2 (\sigma) U_2^{0,0}
+ \delta^2 \mathbf{c}_+(\sigma) \mathbf{c}_-(\sigma)  - 4 k^2 \hat a_0^2(\sigma;\delta).
$$
Let $\hat \Gam_k^{(0)}$ be the values such that this equation holds true for $\delta=0$ (see Table~\ref{table_sigma} for the exact values of $\Gam_k^{(0)}$). 
One can easily check that $\partial_\sigma H_k (\hat \Gam_k^{(0)};\delta)\neq 0$ 
 and therefore, by the implicit function theorem, there exists $\hat \Gam_k(\delta)$ such that $H_k(\hat \Gam_k(\delta);\delta)=0$ and the lemma holds true.  In addition a simply computation proves that
$$
\partial_\delta \hat \Gam_k(0)= - \frac{\partial_\delta H_k (\hat \Gam_k^{(0)};0)}{\partial_\sigma H_k(\hat \Gam_k^{(0)};0)}=0
$$
and therefore $\hat \Gam_k(\delta)= \hat \Gam^{(0)}_k + \mathcal{O}(\delta^2)$. 
 
\subsubsection{Proof of Theorem~\ref{thm:eigenvalues_monodromy_more}} \label{subsec:lasttheorem}  

The proof of Theorem~\ref{thm:eigenvalues_monodromy_more} is a nontrivial consequence of  Lemma~\ref{lem:mathbfL}. Fix $\delta_0>0$ small enough (to be determined later) and consider $\delta \in [0,\delta_0]$. 
By Theorem~\ref{thm:origin_average} and Lemma~\ref{lem:Gamkpm}, there exist $\sigma_{\pm}$ and $\sigma_k$, $k=1,2$ such that  
$$
\hat \lambda_0(\sigma_\pm)=0,\qquad 
\widehat{\mathcal{T}}^{(0)}(\sigma_k;\delta) |\hat \lambda_0(\sigma_k) | = \frac{2\pi |\hat \lambda_0(\sigma_k) |}{|\hat a_0(\sigma_k;\delta)| } = k\pi
,\; k=1,2 
$$
with $\hat a_0$ defined in~\eqref{defhata0c0}. Notice also that 
$$
\sigma_\pm = \frac{\sl}{2} \pm  \delta C_0 + \mathcal{O}(\delta^2),\qquad \sigma_k = \hat \Gam_k^{(0)} + C_k \delta + \mathcal{O}(\delta^2).
$$
Therefore, since $1\leq \nu\leq 2$, we can assume that the parameter $\sigma$ satisfies, taking $\delta$ small enough, 
$$
|\sigma - \sigma_\pm | \geq C \delta^{\nu}, \qquad |\sigma - \sigma_k |\geq C \delta^{\frac{1+\nu}{2}}.
$$
 
We now observe that, using the definition of $ \mathbf{c}_\pm$ in~\eqref{def:mathbfc_theorem} 
\begin{equation}\label{expr_lambda0_proof}
16 |\hat \lambda_0(\sigma)|^2 = \mathbf{a}^2 (\sigma) (1+ \delta U_2^{0,0} ) + \mathcal{O}(\delta^2)
\end{equation}
with $\mathbf{a}= 1 - 16 \sigma - 20 \sigma^2$. In addition 
\begin{equation}\label{defT0hat_proof}
    \widehat{\mathcal{T}}^{(0)}(\sigma;\delta) := \frac{2\pi}{|\hat a_0(\sigma;\delta)|} = \widehat{\mathcal{T}}(\sigma;\delta) + \mathcal{O}(\delta^2).
\end{equation}

We distinguish the cases $|\mathbf{a}(\sigma)|\geq \delta |\log \delta|$ and $|\mathbf{a}(\sigma)|< \delta |\log \delta|$. We recall that $\mathbf{a} \left (\frac{\sl}{2}\right ) =0$. Then, the cases considered are equivalent to $\left |\sigma - \frac{\sl}{2}\right | \geq \kappa \delta |\log \delta|$ and $\left |\sigma - \frac{\sl}{2}\right | <\kappa \delta |\log \delta|$ for some constant $\kappa$.

We start with $\left |\sigma - \frac{\sl}{2}\right |\geq \kappa \delta |\log \delta|$. From~\eqref{expr_lambda0_proof} we have that  
\begin{equation}\label{expr_lambda0_proof_bis}
4 |\hat \lambda_0(\sigma)|  = |\mathbf{a} (\sigma)|  \big (1 + \mathcal{O}(|\log \delta|^{-1})  \big ). 
\end{equation}
\begin{itemize}
    \item If $\sigma \in \left (0, \frac{\sl}{2} - \kappa \delta |\log \delta| \right )$, by Proposition~\ref{prop:origin_modulus_eigenvalue}, the function $| \hat \lambda_0|$ is decreasing. Therefore, taking into account~\eqref{defT0hat_proof} and definition~\ref{defhata0c0} of $\hat a_0$, if $\delta$ is small enough, 
    $$
    \frac{1}{2} \left |\hat \lambda_0 \left (\frac{\sl}{2} - \delta |\log \delta| \right ) \right |\widehat{\mathcal{T}}\left (\frac{\sl}{2} - \delta |\log \delta| ;\delta \right )\leq  |\hat \lambda_0 (\sigma) |\widehat{\mathcal{T}}(\sigma;\delta)  \leq \frac{3}{2} |\hat \lambda_0(0)| \widehat{\mathcal{T}}(0;\delta) 
    $$
    Then $|\hat \lambda_0 (\sigma) |\widehat{\mathcal{T}}(\sigma;\delta)| \leq \frac{3\pi}{4} (1+ M\delta)$ and $|\hat \lambda_0 (\sigma) |\widehat{\mathcal{T}}(\sigma;\delta) \geq M \delta |\log \delta|$. Those bounds imply that, for all $k$ 
    $$
    \big | |\hat \lambda_0 (\sigma) |\widehat{\mathcal{T}}(\sigma;\delta) - k \pi \big | \geq M \delta |\log \delta| 
    $$
    and, in conclusion, 
    $$
    \mathbf{L} \geq M \delta |\log \delta|.
    $$
    By Lemma~\ref{lem:mathbfL}, $|\mu - \mu_0| \leq M \delta^2$ and the result follows for this case.  
    \item If $\sigma > \frac{\sl}{2} + \delta |\log \delta|$, we notice that, using~\eqref{expr_lambda0_proof_bis} and~\eqref{defT0hat_proof},
    \begin{align*}
\widehat{\mathcal{T}}(\sigma;\delta) |\hat \lambda_0(\sigma)| -k\pi = & \widehat{\mathcal{T}}(\sigma;\delta) |\hat \lambda_0 (\sigma)| -\widehat{\mathcal{T}}(\sigma_k;\delta) |\hat \lambda_0(\sigma_k)| \\ = & \left [\partial_\sigma \left (
\frac{1}{4}\frac{|\mathbf{a}(s)|}{ |\hat a_0 (s;0) | } \right ) + \mathcal{O}(|\log \delta|^{-1}) \right ] |\sigma - \sigma_k| \\ 
\widehat{\mathcal{T}}(\sigma;\delta) |\hat \lambda_0(\sigma)| = & \left [\partial_\sigma \left (
\frac{1}{4}\frac{|\mathbf{a}(t)|}{ |\hat a_0 (t;0) | } \right ) + \mathcal{O}(|\log \delta|^{-1}) \right ] |\sigma - \sigma_+| ,
\end{align*}
with $s\in \overline{\sigma,\sigma_k}$, $t\in \overline{\sigma,\sigma_\pm}$ and $\hat a_0$ defined in~\eqref{defhata0c0}. 
Since $\sigma, \sigma_k,\sigma_+ >\frac{\sl}{2}$ (see Lemma~\ref{lem:Gamkpm}), 
$$
\partial_\sigma 
\left (
\frac{1}{4}\frac{|\mathbf{a}(s)|}{ |\hat a_0 (s;0) | } \right ) = \frac{1}{4|\hat a_0(s;0)|^2 } (40s^2 + 40 s +18) \geq M.
$$
Therefore $\mathbf{L}\geq M \min \left \{\delta |\log \delta | , \delta^{\frac{1+\nu}{2}} \right \} = M\delta^{\frac{1+\nu}{2}} \geq M \delta^{\frac{3}{2}}$. Therefore, Lemma~\ref{lem:mathbfL} assures that
$$
|\mu - \mu_0| \leq M \delta^{3} \mathbf{L}^{-1} \leq M \delta^{\frac{5-\nu}{2}}.  
$$
\end{itemize}

Now we deal with the case $\left |\sigma - \frac{\sl}{2} \right | \leq \kappa \delta |\log \delta|$, which in particular implies that $|\hat \lambda_0(\sigma)| \leq M \delta |\log \delta|$. Then, by continuity, in this case $\mathbf{L} = \widehat{\mathcal{T}} |\hat \lambda_0(\sigma)|$, if $\delta$ is small enough. We first observe that, $\sigma_\pm= \Gam_\pm L^{-1}$ with $\Gam_\pm$ defined in Theorem~\ref{thm:origin_average} and $\sigma_\pm = \frac{\sl}{2} \pm C_0 \delta + \mathcal{O}(\delta^2)$ for some constant $C_0$ (see~\eqref{def:Cm}). Therefore $|\sigma - \sigma_\pm |\leq \kappa \delta |\log \delta|$ for a (different) constant $\kappa$. 

Assume now that  $C\delta^\nu \leq |\sigma- \sigma_-| \leq |\sigma-\sigma_+| \leq \kappa \delta |\log \delta|$. 
We have that (see Theorem~\ref{thm:origin_average})
$$
\mathbf{a}(\sigma_-) + \delta \mathbf{c}_-(\sigma_-) =  0 .
$$
Taking $\Delta \sigma = \sigma  - \sigma_- = \mathcal{O}(\delta |\log \delta|)$, we have that  
\begin{align*}
|\hat \lambda_0 (\sigma)|^2 = & \left | \partial_\sigma \mathbf{a}(\sigma_-) \Delta \sigma +  \Delta \sigma \mathcal{O} (\delta)   \right | \left | \partial_\sigma \mathbf{a}(\sigma_-) \Delta \sigma+ \delta \big ( \mathbf{c}_+ (\sigma_-) - \mathbf{c}_-(\sigma_-)\big ) + \Delta \sigma \mathcal{O} (\delta) \right |\\ 
=& \left |\left [ \partial_\sigma \mathbf{a} (\sigma_-) \right ]^2 (\Delta \sigma )^2 + \partial_\sigma \mathbf{a}(\sigma_-) \big ( \mathbf{c}_+ (\sigma_-) - \mathbf{c}_-(\sigma_-)\big ) \Delta \sigma \delta +  \Delta \sigma \mathcal{O}(\delta^2|\log \delta|^2) \right |.
\end{align*}
We note that $|\partial_\sigma \mathbf{a} (\sigma)| = 16 + 40 \sigma > 16$ and, therefore, we can write $|\hat \lambda_0 (\sigma)|^2$ as 
\[
|\hat \lambda_0 (\sigma)|^2=  \left [ \partial_\sigma \mathbf{a} (\sigma_-) \right ]^2  \Lambda
\]
where  $\Lambda$,   using  formula~\eqref{def:mathbfc_theorem} for $\mathbf{c}_\pm$, formula~\eqref{def:Cm} of $C_0$ and that $\sigma_- = \frac{\sl}{2} + \mathcal{O}(\delta)$, is given by
$$
\Lambda =   \left |(\Delta \sigma)^2 - 2 \delta \Delta \sigma  C_0 + \Delta \sigma \mathcal{O}(\delta^2|\log \delta|^2) 
\right |=  |\Delta \sigma| \left | \Delta \sigma  - 2 \delta C_0 + \mathcal{O}(\delta^2 |\log \delta|^2) \right |.
$$
Since $|\sigma - \sigma_-|\leq |\sigma - \sigma_+ |$ and 
$\sigma_\pm = \frac{\sl}{2} \pm C_0 \delta + \mathcal{O}(\delta^2|\log \delta|^2)$, $\Delta \sigma < 2 \delta C_0$ and then, using that $|\Delta \sigma|\geq C \delta^\nu$ with $C<C_0$, for $\delta $ small enough 
$$
\Lambda = |\Delta \sigma| (2\delta C_0 - \Delta \sigma + + \mathcal{O}(\delta^2|\log \delta|^2) )\geq C\delta ^\nu (2\delta C_0 - C \delta^\nu + \mathcal{O}(\delta^2|\log \delta|^2) )\geq  M \delta^{1+\nu} 
$$
for some constant $M>0$. Therefore, since $1+\nu \leq 3$, $\mathbf{L} = \widehat{\mathcal{T}} |\hat \lambda_0| \geq M \delta^{\frac{1+\nu}{2}} \geq M \delta^{3/2}$ and by Lemma~\ref{lem:mathbfL}, 
\begin{equation}\label{boundmu_proof}
|\mu - \mu_0| \leq C_* \delta^{\frac{5- \nu}{2}}. 
\end{equation}
An analogous analysis can be done if $C\delta^\nu \leq |\sigma - \sigma_+|\leq |\sigma - \sigma_-| \leq \kappa \delta |\log \delta | $ and, as a consequence, \eqref{boundmu_proof} holds true for
$C\delta^\nu \leq |\sigma-\sigma_\pm|\leq \kappa \delta |\log \delta| $.

\appendix 
\section{Expression of the Hamiltonian}\label{app:tables}

This appendix is devoted to obtain the expressions for the coplanar and averaged Hamiltonians introduced in Section~\ref{section:hierarchy}.
		
\paragraph{Slow-fast Delaunay coordinates $(y,x)$.}
\label{appendix:HamiltonianDelaunay}
	
Analogously to the definition of the coplanar Hamiltonian $\HH_{\CP,1}$ in Poincar\'e coordinates (see~\eqref{def:Hcoplanar:perturb}), we define
\[
	\rH_{\CP, 1}(y,\Gam,x,h) = \rH_1(y,\Gam,x,h,\OmegaM;0).
\]
Recall that, by Remark~\ref{rem:auton}, $\rH_{1}$ is independent of $\OmegaM$ when $\iM=0$.
Then, by~\eqref{def:hamiltonianDelaunayH1} and~\eqref{def:H1tilde}, one has that
\begin{align*}
	\rH_{\CP,1}(y,\Gam, x, h)
	&=	\frac{\rho_1}{L^2}					
	\sum_{m=0}^2
	\sum_{p=0}^2 
	\hat{c}_m U_2^{m,0}
	D_{m,p}(y,\Gam)
	\cos\Big(\psi_{m, p, 0}(x,h)	\Big),
\end{align*}
where $D_{m,p} = \Dt_{m,p} \circ \UDelaunay$,
 $\psi_{m,p,0}=\psit_{m,p,0} \circ \UDelaunay$, with $\Dt_{m,p}$ and $\psit_{m.p,0}$ defined in~\eqref{def:Dtilde} and~\eqref{def:psitilde} 
 respectively, and
\begin{align*}
	\hat{c}_0 = c_{0,s} = \frac12, \qquad
	\hat{c}_1 = c_{1,s} = \frac13, \qquad
	\hat{c}_2 = c_{2,s} = -\frac1{12}.
\end{align*}
Notice that these last definitions are possible since, for $s,m\in\{0,1,2\}$, the constants $c_{m,s}$ as defined in~\eqref{def:auxiliaryFunctionsOriginal} do not depend on $s$.
In addition, $U_2^{m, 0}(\epsilon)$ is the Giacaglia function given in Table~\ref{tab:U}.
\begin{table}[!t]
	\begin{center}
		\begin{tabular}{lcc}
			\hline	
			$m$ 		&	$U_2^{m,0}(\epsilon)$ 									&	$\simeq$\\
			\hline 	
			0 		 		& 	$1-6 C^{2}+6 C^{4}$ 							&	0.762646\\
			1 			&  $-3 C S^{-1}\left(2 C^{4}-3 C^{2}+1\right)$	 	&	0.547442\\
			2 			&  $6 C^{2} S^{-2}\left(C^{2}-1\right)^{2}$	 		&	0.237353 \\
			\hline 
			\phantom{a} & & 
		\end{tabular}
		\caption{The Giacaglia function $U_2^{m, 0}(\epsilon)$ for the moon perturbation (see~\cite{G1974}) where $C=\cos\frac{\epsilon}2$ and $S=\sin\frac{\epsilon}2$ and its value for $\epsilon = 23.44$\degre.}
		\label{tab:U}
	\end{center}
\end{table}

Applying the slow-fast change of coordinates, one obtains that
\begin{align*}
	\psi_{m,p, 0}(x,h) &= (1 -p)x -(1 -p -m) h.
\end{align*}
See Table~\ref{tab:functionsDelaunay} for the explicit expressions of $\psi_{m, p, 0}(x,h)$ and $D_{m,p}(y,\Gam)$.

	\begin{table}[h]
		\begin{tabular}{cclc}
			\hline	
			$m$ 	&	$p$		& \multicolumn{1}{c}{$D_{m,p}(y,\Gam)$}	& $\psi_{m,p,0}$	\\
			\hline 	
			0 		& 	0 		&	
			$-\frac{15}{64}(Ly)^{-2}(y - \Gam)(3y + \Gam)\left(L - 2 y\right)\left(L + 2 y\right)$
			& $x-h$
			\\
			0 		& 	1 		&
			$\phantom{-}\frac{1}{32}(Ly)^{-2}( y^2 - 6y\Gam- 3\Gam^2)\left(5L^2	-	12 y^2\right)$
			& $0$
			\\
			0 		& 	2 		& 	
			$-\frac{15}{64}(Ly)^{-2}(y - \Gam)(3y + \Gam)\left(L - 2 y\right)\left(L + 2 y\right)$
			& $-x-h$
			\\
			1 		& 	0 		& 
			$\phantom{-}\frac{15}{32}(Ly)^{-2}\sqrt{(y - \Gam)(3y + \Gam)}\left(3y + \Gam\right)\left(L - 2 y\right)\left(L + 2 y\right)$	
			& $x$	 
			\\
			1 		& 	1 		& 
			$-\frac{3}{16}(Ly)^{-2}\sqrt{(y - \Gam)(3y + \Gam)}(\Gam + y)\left(5L^2	-	12y^2\right)$
			& $h$		 
			\\
			1 		& 	2 		& 	
			$-\frac{15}{32}(Ly)^{-2}\sqrt{(y - \Gam)(3y + \Gam)}\left(y-\Gam\right)\left(L - 2 y\right)\left(L + 2 y\right)$  
			& $-x+2h$
			\\
			2 		& 	0 		& 
			$\phantom{-}\frac{15}{32}(Ly)^{-2}\left(3y +\Gam\right)^{2}\left(L - 2 y\right)\left(L + 2 y\right)$
			& $x+h$
			\\
			2 		& 	1 		& 	
			$\phantom{-}\frac{3}{16}(Ly)^{-2}(y - \Gam)(3y + \Gam)\left(5L^2	-	12y^2\right)$
			& $2h$
			\\
			2 		& 	2 		& 	
			$\phantom{-}\frac{15}{32}(Ly)^{-2}\left(y-\Gam\right)^{2}\left(L - 2 y\right)\left(L + 2 y\right)$
			& $-x+3h$
			\\
			\hline
		\end{tabular}
		\caption{Computation of the functions $(D_{m,p})_{m,p\in\{0,1,2\}}$ and  $(\psi_{m,p, 0})_{m,p\in\{0,1,2\}}$ for the prograde case.}
		\label{tab:functionsDelaunay} 
	\end{table}
	
	Now, we consider the $h$-average of $\rH_{\CP,1}$,
	\begin{equation*}
		\rH_{\AV,1}(y,\Gam, x)	=	\frac{1}{2\pi} \int_0^{2\pi}\rH_{\CP, 1}	(y,\Gam, x, h)\rd h
	\end{equation*}	
	The definition of $\psi_{m,p,0}(x,h)$ implies that the only terms independent of $h$ are for the couples $(m,p)=(0,1)$ and $(m,p) = (1,0)$ (see Table~\ref{tab:functionsDelaunay}).
	As a consequence,  we obtain the expression of the averaged Hamiltonian presented in \eqref{eq:expansionsH1averagedDelaunay}.
	
	\paragraph{Poincar\'e coordinates $(\eta,\xi)$.}
	Taking into account that the Poincar\'e change of coordinates satisfies that
	\begin{align*}
		y = \frac{2L-\xi^2-\eta^2}4,
		\qquad
		(L-2y)\cos x = \frac{\xi^2-\eta^2}2,
		\qquad
		(L-2y)\sin x= \xi \eta,
	\end{align*}
	one has that
	\begin{equation}\label{eq:expressionPoincareHCP1}
		\begin{aligned}
			\HH_{\CP, 1}
			=& \,	\frac{\rho_1}{L^2}					
			\sum_{m=0}^2 \hat{c}_m U_2^{m,0} \cdot \\
			&\Big[
			\DDD_{m,0}(\eta,\Gam,\xi) \left(
			\frac{\xi^2-\eta^2}2 \cos\big((1-m)h\big)
			+
			\xi\eta \sin\big((1-m)h\big) \right)
			\\
			&+ 
			\DDD_{m,1}(\eta,\Gam,\xi)
			\left(8L^2+12 L (\xi^2+\eta^2)-3(\xi^2+\eta^2)^2\right)
			\cos(mh)
			\\
			&+ \DDD_{m,2}(\eta,\Gam,\xi) \left(
			\frac{\xi^2-\eta^2}2 \cos\big((1+m)h\big)
			+
			\xi\eta \sin\big((1+m)h\big) \right)
			\Big],
		\end{aligned}
	\end{equation}
	where the functions $(\DDD_{m,p})_{m,p\in\{0,1,2\}}$ are given in Table~\ref{tab:functionsPoincare}.
	%
	%
	
	\begin{table}[h]
		\begin{equation*}
			\def\arraystretch{1.3}
			\begin{array}{ccl}
				\hline
				m & p & \multicolumn{1}{c}{\DDD_{m,p}(\eta,\Gam,\xi)} \\
				\hline 
				0 		& 	0 		&	
				-\frac{15}{128} {L^{-2} (2L-M)^{-2}} 
				{(2L-M-4\Gam)(6L-3M+4\Gam)(4L-M)}
				\\
				\hline
				0 		& 	1 		&
				\phantom{-}\frac1{128}
				{L^{-2} (2L-M)^{-2}}((2L-M)^2-24(2L-M)\Gam -48\Gam^2)
				\\
				\hline
				0 		& 	2 		& 	
				-\frac{15}{128}
				{L^{-2} (2L-M)^{-2}}
				{(2L-M-4\Gam)(6L-3M+4\Gam)(4L-M) }
				\\
				\hline
				1 & 0 &
				\phantom{-}\frac{15}{64}
				{L^{-2} (2L-M)^{-2}}
				{\sqrt{(2L-M-4\Gam)}(6L-3M+4\Gam)^{3/2}(4L-M)}
				\\
				\hline
				1 & 1 &
				-\frac{3}{64}
				{L^{-2} (2L-M)^{-2}}
				{\sqrt{(2L-M-4\Gam)(6L-3M+4\Gam)}(2L-M+4\Gam)} 
				\\
				\hline
				1 & 2 &
				-\frac{15}{64}
				{L^{-2} (2L-M)^{-2}}
				{(2L-M-4\Gam)^{3/2}\sqrt{6L-3M+4\Gam}(4L-M)}
				\\
				\hline
				2 & 0 &
				\phantom{-}\frac{15}{64}
				{L^{-2} (2L-M)^{-2}}
				{(6L-3M+4\Gam)^{2}(4L-M)}
				\\
				\hline
				2 & 1 &
				\phantom{-}\frac{3}{64}
				{L^{-2} (2L-M)^{-2}}
				{(2L-M-4\Gam)(6L-3M+4\Gam)}
				\\
				\hline
				2 & 2 &
				\phantom{-}\frac{15}{64}
				{L^{-2} (2L-M)^{-2}}
				{(2L-M-4\Gam)^{2}(4L-M) }
				\\
				\hline
			\end{array}
		\end{equation*}
		\caption{Computation of the functions $(\DDD_{m,p})_{m,p\in\{0,1,2\}}$ with $M:=\xi^2+\eta^2$.}
		\label{tab:functionsPoincare}
	\end{table}

Let us now consider the $h$-averaged Hamiltonian $\HH_{\AV,1}$ as defined in \eqref{def:HamAV}.
Analogously to $\rH_{\AV,1}$, one sees that the only terms independent of $h$ are for the couples $(m,p)=(0,1)$ and $(1,0)$. 
As a consequence, one obtains the expression in~\eqref{def:Ham:AV1}.
\bibliographystyle{abbrv}
\bibliography{references_av} 

\end{document}